%% file: main.tex
\documentclass[11pt]{article}

\usepackage[letterpaper,margin=1in]{geometry}
\IfFileExists{lmodern.sty}{%
  \usepackage[T1]{fontenc}%
  \usepackage{lmodern}%
  \usepackage[protrusion=true,expansion=true]{microtype}%
}{%
  \usepackage[protrusion=true,expansion=false]{microtype}%
}

\usepackage{graphicx}
\usepackage{booktabs}
\usepackage{caption}
\usepackage{subcaption}

\usepackage{amsmath}
\usepackage{amssymb}
\usepackage{mathtools}
\usepackage{amsthm}

\usepackage{enumitem}

\usepackage[round,authoryear]{natbib}

\usepackage{tikz}
\usetikzlibrary{positioning,arrows.meta}
\usepackage{pgfplots}
\pgfplotsset{compat=1.17}
\usepgfplotslibrary{groupplots}
\usepgfplotslibrary{fillbetween}

\usepackage[colorlinks=true,allcolors=blue,breaklinks=true]{hyperref}

\usepackage{fancyhdr}
\newcommand{\V}{\mathcal{V}}
\newcommand{\E}{\mathbb{E}}
\newcommand{\Prob}{\mathbb{P}}
\newcommand{\R}{\mathbb{R}}
\newcommand{\N}{\mathbb{N}}

\newcommand{\F}{\mathcal{F}}
\newcommand{\Fhat}{\hat{\F}}
\newcommand{\Chat}{\hat{C}}
\newcommand{\phat}{\hat{p}}
\newcommand{\KL}[2]{D_{\mathrm{KL}}(#1 \,\|\, #2)}
\newcommand{\KLs}{\mathrm{KL}}

\newcommand{\TV}{\mathrm{TV}}

\theoremstyle{plain}
\newtheorem{theorem}{Theorem}[section]
\newtheorem{lemma}[theorem]{Lemma}
\newtheorem{corollary}[theorem]{Corollary}
\newtheorem{conjecture}[theorem]{Conjecture}
\newtheorem{proposition}[theorem]{Proposition}
\theoremstyle{definition}
\newtheorem{definition}[theorem]{Definition}
\newtheorem{assumption}[theorem]{Assumption}
\theoremstyle{remark}
\newtheorem{remark}[theorem]{Remark}

\begin{document}

\title{\bfseries Polynomial Context-Truncation Sensitivity in Autoregressive
Language Models:\\[2pt]
Sequential Wyner--Ziv Bounds for KV Cache Compression}
\author{Munsik Kim\thanks{Independent Researcher.\ Correspondence:
\href{mailto:physicist456@gmail.com}{physicist456@gmail.com}.}\\
Independent Researcher}
\date{}
\maketitle

\begin{abstract}
We study the rate--distortion limits of online KV cache compression
in autoregressive language models, formulating it as sequential
Wyner--Ziv source coding on the filtration induced by the model,
with the next-step query as decoder side information. Empirically,
across four models spanning two families and $0.5$--$3$B parameters,
we find that the next-token distribution's sensitivity to context
truncation decays \emph{polynomially} rather than
\emph{geometrically}: a power law improves on an exponential fit by
an order of magnitude in extrapolation, the fitted exponent is
recovered independently from a sink-plus-recent KL measurement, and
the decay is verified to be free of positional-encoding artifacts by
a position-preserving ablation. Under a corresponding
\emph{polynomial truncation-sensitivity} assumption, our main result
characterizes the per-token memory requirement of \emph{suffix-only}
cache policies: a sliding-window scheme attains distortion
$\varepsilon$ with window $w = O(\varepsilon^{-1/\alpha})$, and---under
an additional two-sided Bayes-risk condition---a converse shows
$w = \Omega(\varepsilon^{-1/\alpha})$ is necessary within this policy
class, so the scaling is $\Theta(\varepsilon^{-1/\alpha})$ for
suffix-only policies. Whether recurrent or propagating cache
summaries can beat this scaling is left open. At the level of coding
rate (rather than distortion) we give a \emph{delayed}, multi-letter
operational Wyner--Ziv comparison: under quantized query side
information and explicit information-spectrum and de-binning-stability
conditions, a suffix-only block code matches the full-context rate up to a single
side-information continuity penalty and a redundancy that vanishes
with block length. We do \emph{not} claim an online, single-letter
scheme that attains the converse exponent. Returning to the empirical
side, the same polynomial law predicts the degradation curves of
concrete cache policies: recency-based eviction (sliding,
sink-plus-recent) suppresses distortion by roughly two orders of
magnitude over random retention at equal budget, with a power-law
decay in the budget.
\end{abstract}

\section{Introduction}
\label{sec:intro}

Long-context inference with autoregressive language models is
bottlenecked by the linear growth of the key-value (KV) cache
\citep{vaswani2017attention,shazeer2019fast,ainslie2023gqa}. A rich
ecosystem of compression schemes has emerged: heavy-hitter eviction
\citep{zhang2023h2o,liu2023scissorhands}, sliding-window attention
with attention sinks \citep{xiao2024efficient,jiang2023mistral7b},
multi-head latent attention \citep{deepseek2024mla}, and low-bit
quantization \citep{liu2024kivi,kang2024gear}. Each scheme is
justified empirically; none is accompanied by a quantitative
statement of what is achievable in principle for the online
compression problem.

A recent and rapidly growing line of work brings rate--distortion
theory to bear on KV cache \emph{quantization}---the number of bits
used to encode each \emph{retained} entry. Near-optimal online vector
quantization \citep{zandieh2025turboquant} and mixed-precision
bit allocation across heads via reverse waterfilling
\citep{zuo2026ratequant} trade off a per-entry distortion against a
bit budget. This axis is orthogonal to the one we study: we ask not
how finely retained entries are encoded, but \emph{which}
entries---equivalently, how much of the past context---must be
retained at all. To our knowledge, this temporal (eviction) axis has
not previously received a sequential rate--distortion treatment.

We formulate KV cache compression as a sequential Wyner--Ziv source
coding problem on the filtration $\{\F_t\}$ induced by the language
model, with the next-step query $Q_t$ serving as decoder side
information \citep{wyner1976rate,witsenhausen1980indirect}. The
sequential setting departs from classical Wyner--Ziv in two ways:
the source is non-stationary, and the decoder receives token-level
side information.

\paragraph{The truncation-sensitivity question.}
Classical compression bounds under memory constraints rely on a
\emph{mixing} or \emph{ergodicity} condition specifying how quickly
the influence of distant past decays. The standard assumption is
$\rho$-mixing or geometric ergodicity of the underlying process:
\[
\Bigl|\Prob(X_t = x \mid X_{1:t-1}) - \Prob(X_t = x \mid X_{t-w:t-1})\Bigr|
\leq C\rho^w
\]
for some $\rho \in [0, 1)$. This form is natural for Markov chains
with spectral gap and is mathematically convenient: it yields
$\log(1/\varepsilon)$-size windows for $\varepsilon$ accuracy and
clean block-Markov coding theorems. However, the form is an
assumption about the source, and is not derived from the
language-modeling setup.

A central empirical contribution of this work is to test this
assumption directly on a trained model. Since the data-generating
process of natural language is inaccessible, we instead measure the
trained model's sensitivity to context truncation:
$\widehat{\TV}_w := \TV(p_\theta(\cdot \mid X_{1:t-1}),
p_\theta(\cdot \mid X_{t-w:t-1}))$, the change in the model's
next-token distribution when conditioning on only the most recent
$w$ tokens. On Qwen2.5-0.5B \citep{qwen2024}, sweeping $w \in
\{2, 4, \ldots, 256\}$ over $100$ prefixes from each of two domains
(NLTK Gutenberg books, GitHub Python source), the data are well
described by a power law $\widehat{\TV}_w \propto w^{-\alpha}$
(log-RMSE $0.08$--$0.14$) rather than an exponential
(log-RMSE $0.20$--$0.31$), with measured exponents
$\alpha_{\text{nat}} = 0.44$ and $\alpha_{\text{code}} = 0.38$. We
refer to this property as \emph{polynomial truncation sensitivity}
to emphasize that it is a property of the trained model under the
evaluation distribution, rather than a claim about the underlying
data process. The finding is qualitatively consistent with the
heavy-tailed statistics of natural language
\citep{piantadosi2014zipf} and the power-law spectra observed in
scaling-law analyses of large models \citep{kaplan2020scaling},
both of which would predict polynomial rather than geometric
behavior at the model level.

\paragraph{Our contribution.}
Motivated by the empirical observation, we develop an
information-theoretic framework under a \emph{polynomial
truncation-sensitivity} assumption
(Definition~\ref{def:poly-mixing}). The central contribution is a
characterization of the \emph{memory requirement} of suffix-only
cache policies---tight within that class under a two-sided Bayes-risk
condition---together with a rigorous \emph{delayed} operational
comparison at the rate level; an online single-letter achievability
and its universal and structural extensions are stated only as
conjectures.

\begin{enumerate}[label=\textbf{(C\arabic*)},leftmargin=*,itemsep=2pt,topsep=2pt]
\item \emph{Sequential Wyner--Ziv converse.} For any causal online
scheme achieving average KL distortion $\leq D$,
$\liminf_n \E[R_n] \geq R^\star(D)$
(Theorem~\ref{thm:asymptotic}), with a finite-horizon expectation form in
Appendix~\ref{app:thm-asym} (Proposition~\ref{thm:finite-sample},
Corollary~\ref{cor:expectation-bound}).

\item \emph{Window scaling for suffix-only policies.} A
sliding-window scheme achieves TV distortion $\varepsilon$ with
window $w = O(\varepsilon^{-1/\alpha})$ (Theorem~\ref{thm:E2}); a
\emph{conditional} converse shows $w = \Omega(\varepsilon^{-1/\alpha})$
is necessary \emph{within the class of suffix-only reconstructions}
under a two-sided Bayes-risk sensitivity condition
(Definition~\ref{def:two-sided}, Theorem~\ref{thm:window-lb}). The
per-token memory requirement of suffix-only policies is thus
$\Theta(\varepsilon^{-1/\alpha})$ (Corollary~\ref{cor:tight}).
Whether recurrent or propagating cache summaries can achieve a better
scaling is an open problem.

\item \emph{Operational comparison (rigorous); conjectured online
achievability.} At the level of coding rate we give a \emph{rigorous}
delayed, multi-letter operational Wyner--Ziv comparison: under
quantized query side information and explicit information-spectrum and
de-binning-stability assumptions, a suffix-only block code matches the
full-context rate up to a side-information continuity penalty and a redundancy that
vanishes with block length (Appendix~\ref{app:operational}). We
\emph{conjecture}, but do not prove, that an online single-letter
block-Markov scheme with window $w_n = n^{1/(\alpha+1)}$ attains
distortion within $O(n^{-\alpha/(\alpha+1)})$ of $D$ and rate within
the converse exponent (Conjectures~\ref{thm:achievability},
\ref{thm:achievability-tight}); the supporting covering argument is
not rigorous, as it assumes the auxiliary inherits the source's
truncation sensitivity.

\item \emph{Sink-plus-recent implication.} Under
$\alpha$-polynomial truncation sensitivity \emph{together with a
sink-stability assumption} (the fixed sink prefix does not spoil the
recent-window decay; not implied by truncation sensitivity alone,
since the sink set is non-contiguous), sink-plus-recent eviction---a
deployed baseline combining attention sinks with a recent-token
window---with $k$ tokens incurs KL distortion
$D = D^\star + O(k^{-2\alpha})$---a KL exponent double the TV exponent
$\alpha$ of Theorem~\ref{thm:E2}---(Corollary~\ref{cor:topk}, \S\ref{sec:topk}).
The factor of $2$ in the exponent arises from a local quadratic
relation $\KLs \leq C(\epsilon_{\min}) \cdot \TV^2$ valid in the
bounded-floor regime under bounded log-density
(Lemma~\ref{lem:kl-tv-bounded}); the standard Pinsker inequality
provides only the converse direction $\TV \leq \sqrt{\KLs/2}$. The
empirical KL decay is well-fit by a power law with exponents
$\alpha_{\KLs} \in [0.74, 1.04]$ across the two cross-model domains,
satisfying $\alpha_{\KLs}/\alpha_{\TV} \in [1.93, 2.38]$, bracketing
the factor-of-two value predicted by the local quadratic KL--TV
approximation. We interpret this agreement as empirical evidence
consistent with the local quadratic approximation, rather than as a
deterministic information-theoretic identity: the quadratic exponent
requires the bounded-floor condition of
Lemma~\ref{lem:kl-tv-bounded}, and the observed ratios are consistent
with the measured distributions operating in this bounded-floor
regime. The operational-smoothing lemma
(Lemma~\ref{lem:kl-tv-smoothed}) provides finite KL control when the
worst-case floor is vacuous, but it yields only a \emph{linear}
KL--TV exponent and therefore does not by itself explain the observed
doubling. A matched-policy estimate---measuring $\TV$ and $\KLs$
under the \emph{same} retention policy and budget grid---sharpens
this to $\alpha_{\KLs}/\alpha_{\TV} \in [1.93, 2.04]$ across both the
sliding and sink-plus-recent policies and across budget cutoffs
$k \ge 16$; the wider range reported above reflects a cross-policy
comparison (sink-plus-recent $\KLs$ against sliding-window $\TV$),
which inflates the apparent factor because the sink-plus-recent
$\TV$ exponent ($\approx 0.6$) is steeper than the sliding-window
one ($\approx 0.44$).
\end{enumerate}

Three further results are deferred to the appendix as they are not
needed for the main story: a \emph{universal scheme} oblivious to
$\alpha$ (Conjecture~\ref{thm:universal}), a continuous-latent
intrinsic-dimension lower bound (Conjecture~\ref{thm:intrinsic-dim}), and
a multi-layer Lagrangian rate allocation
(Proposition~\ref{thm:multi-WZ}).

\paragraph{Implications for deployed compression schemes.}
The polynomial truncation-sensitivity finding has quantitative
implications. Under the classical $\rho$-mixing assumption, a
window of $w$ tokens suppresses distortion exponentially in $w$,
and a few dozen tokens suffice for negligible error. Under
$\alpha$-polynomial sensitivity with $\alpha \approx 0.5$, doubling
the window halves the distortion only when $\alpha \approx 1$;
smaller $\alpha$ values imply sub-linear improvement and motivate
the empirical observation that deployed sliding-window schemes
(Mistral's $w = 4096$, StreamingLLM's $w$-plus-sinks) require
windows substantially larger than would be predicted under
geometric mixing.

\paragraph{Organization.}
Section~\ref{sec:related} situates our framework in the classical
information theory, mixing theory, and KV cache compression
literature. Section~\ref{sec:setup} introduces the formal setup.
Section~\ref{sec:main} states the four core results---the
sequential Wyner--Ziv converse, the polynomial sliding-window
upper bound, the conditional suffix-only lower bound with explicit
achievability, and the sink-plus-recent implication---and summarizes
the deferred universal scheme and structural extensions.
Section~\ref{sec:experiments} reports the empirical validation.
Section~\ref{sec:discussion} connects to deployed systems, and
Section~\ref{sec:conclusion} concludes. All proofs are deferred to
the appendices.

\section{Related Work}
\label{sec:related}

\paragraph{Information theory and mixing.}
Source coding with decoder side information \citep{wyner1976rate} and
its indirect \citep{witsenhausen1980indirect} and nonanticipative
\citep{massey1990causality,charalambous2013nonanticipative} variants
provide the rate--distortion backbone of our setup; rate--distortion
dimension \citep{kawabata1994rate} and high-resolution quantization
\citep{bennett1948spectra,gersho1979asymptotically} underlie the
intrinsic-dimension result (Conjecture~\ref{thm:intrinsic-dim}). On the
source side, strong mixing conditions \citep{bradley2005basic} give
geometric dependence decay $\rho^w$ for spectral-gap Markov chains,
whereas \emph{subgeometric}/\emph{polynomial} mixing
\citep{douc2009subgeometric} decays only as $w^{-\alpha}$. We adopt
polynomial truncation sensitivity as the operative assumption,
motivated by our measurements (\S\ref{sec:experiments}) and
consistent with the heavy-tailed statistics of language
\citep{piantadosi2014zipf} and neural scaling laws
\citep{kaplan2020scaling}.

\paragraph{Theoretical analyses of attention.}
Lipschitz properties of self-attention \citep{kim2021lipschitz,
dasoulas2021lipschitz}, rank collapse toward a low-rank subspace
\citep{dong2021attention}, the long-depth clustering limit
\citep{geshkovski2023mathematical}, and feed-forward layers as
key-value memories \citep{geva2020transformer,phuong2022formal}
inform our multi-layer analysis (\S\ref{sec:main}) but do not
directly characterize mixing.

\paragraph{KV cache compression and empirical observations.}
Deployed schemes fall into families that our rate--distortion
hierarchy organizes under one truncation-sensitivity condition:
heavy-hitter eviction \citep{zhang2023h2o,liu2023scissorhands,
ge2024fastgen}, sliding-window-plus-sinks
\citep{xiao2024efficient,jiang2023mistral7b}, prompt-aware selection
\citep{li2024snapkv,tang2024quest}, layer/head-adaptive budgets
\citep{cai2024pyramidkv,feng2024adakv}, continuous-latent compression
\citep{deepseek2024mla}, quantization \citep{liu2024kivi,kang2024gear},
and architectural variants \citep{ainslie2023gqa,shazeer2019fast,
llama3-2024}. A distinct, recent thread casts the
\emph{quantization} of retained entries as a rate--distortion
problem---near-optimal vector quantization
\citep{zandieh2025turboquant} and reverse-waterfilling bit allocation
\citep{zuo2026ratequant}---whereas the present work characterizes the
orthogonal \emph{temporal} axis (which tokens to retain, not how many
bits per token). The sink-plus-recent corollary
(Corollary~\ref{cor:topk}) applies directly to the sliding-window
family; the heavy-hitter family requires the additional (empirical)
assumption that high-attention tokens approximately coincide with
sinks plus recency. Empirically, attention-score concentration on
recent tokens and a sink \citep{xiao2024efficient} and sublinear
effective receptive fields \citep{chen2023longlora} are consistent
with our finding but do not measure decay exponents; intrinsic-dimension
measurements in vision \citep{pope2021intrinsic,ansuini2019intrinsic}
serve as order-of-magnitude anchors only.

\section{Preliminaries and Problem Formulation}
\label{sec:setup}

\paragraph{Notation.}
Let $X_1, X_2, \ldots \in \V$ denote tokens with $|\V| = V$, and let
$\F_t := \sigma(X_1, \ldots, X_t)$. The language model is a fixed
function $f$ producing logits $Z_t = f(X_{<t})$ via $L$ attention
layers; the next-token distribution is $p_t := \mathrm{softmax}(Z_t)$.
For each $t$, the side information is the query vector
$Q_t \in \R^k$ computed at the final layer.

\paragraph{On the abstraction of $Q_t$.}
We treat $Q_t$ as \emph{idealized exogenous} decoder side
information: the decoder receives the query the uncompressed model
would compute and reconstructs the next-token distribution from the
compressed cache and this query. This is a deliberate simplification:
in an actual transformer the query is itself produced from the
(possibly compressed) hidden states, so $Q_t$ and $\Chat_{\leq t}$
are not strictly independent, and the abstraction collapses the
multi-layer query/key/value structure into a single side channel.
Our results are thus bounds for this idealized single-channel
formulation; a fully recursive layer-wise treatment is a natural
extension (\S\ref{sec:limitations}).

\begin{assumption}[Bounded logits]\label{ass:bounded-logits}
There exists $B(n_{\max}) > 0$ such that
$\|Z_t\|_\infty \leq B(n_{\max})$ a.s.\ for $t \leq n_{\max}$;
consequently, $p_t \geq \epsilon := V^{-1}e^{-2B}$ componentwise.
\end{assumption}

\paragraph{Online compression.}
An \emph{online encoder} is a sequence
$\phi_t : \V^{t-1} \to \mathcal{C}_t$ producing codes
$\Chat_t = \phi_t(X_{1:t-1})$. The encoder is causal in the strict
sense that $\Chat_t$ depends only on the strict past $X_{<t}$, not on
$X_t$ itself; this matches the KV-cache compression setting, where
the cache at the time of predicting $X_t$ summarizes the previously
generated tokens $X_{<t}$. The compressed filtration is
$\Fhat_t := \sigma(\Chat_1, \ldots, \Chat_t)$ and satisfies
$\Fhat_t \subseteq \F_{t-1}$ by construction. The \emph{decoder}
produces $\tilde p_t := \psi_t(\Chat_{\leq t}, Q_t)$. The per-token
rate and distortion are
\[
R_n := \frac{1}{n}\sum_{t=1}^n H(\Chat_t \mid \Fhat_{t-1}),
\quad
D_n := \frac{1}{n}\sum_{t=1}^n \KL{p_t}{\tilde p_t}.
\]

\begin{figure}[t]
\centering
\begin{tikzpicture}[node distance=1.2cm, font=\footnotesize]
\tikzstyle{box} = [draw, rectangle, minimum width=1.3cm, minimum height=0.55cm, rounded corners=2pt]
\tikzstyle{arrow} = [->, thick, >=stealth]
\node[box] (tokens) {$X_{1:t-1}$};
\node[box, right=0.9cm of tokens, fill=blue!8] (enc) {$\phi_t$};
\node[box, right=0.9cm of enc] (code) {$\Chat_t$};
\node[box, below=0.55cm of code, fill=green!8] (dec) {$\psi_t$};
\node[box, left=0.9cm of dec] (query) {$Q_t$};
\node[box, right=0.9cm of dec] (dist) {$\tilde p_t$};
\draw[arrow] (tokens) -- (enc);
\draw[arrow] (enc) -- (code);
\draw[arrow] (code) -- (dec);
\draw[arrow] (query) -- (dec);
\draw[arrow] (dec) -- (dist);
\end{tikzpicture}
\caption{Sequential Wyner--Ziv compression of the KV cache. The
encoder $\phi_t$ maps past tokens to $\Chat_t$; the decoder
$\psi_t$ reconstructs $\tilde p_t$ from
$(\Chat_{\leq t}, Q_t)$. Distortion $\KL{p_t}{\tilde p_t}$.}
\label{fig:setup}
\end{figure}

\begin{definition}[Sequential WZ rate function]
\label{def:Rstar}
An \emph{auxiliary process} $\{U_t\}$ is an $\F_t$-adapted sequence
that is $\Fhat_t$-measurable. The per-step pointwise WZ rate is
$\mathcal{R}_t^{\mathrm{WZ}}
:= I(X_t; U_t \mid \Fhat_{t-1}) - I(U_t; Q_t \mid \Fhat_{t-1})$.
The auxiliary is \emph{$D$-admissible} if some measurable decoder
$\psi_t = g_t(U_t, Q_t, \Fhat_{t-1})$ achieves
$\E[\KL{p_t}{\tilde p_t}] \leq D$ on average (the decoder may use the
realized code history, which is $\Fhat_{t-1}$-measurable). The sequential
Wyner--Ziv rate function is
\[
R^\star(D)
:= \inf_{\{U_t\}\ D\text{-admissible}}
\liminf_{n} \frac{1}{n}\sum_{t=1}^n \E[\mathcal{R}_t^{\mathrm{WZ}}].
\]
\end{definition}

\paragraph{Truncation-sensitivity assumption.}
The mixing-style condition we adopt below describes the
\emph{trained model's} sensitivity to context truncation, not the
mixing of the underlying data process (which is inaccessible). The
distinction is important: the quantity we measure and assume
properties of is the next-token distribution under the model
$p_\theta$, not the marginal of a hypothetical generating process.
We make this explicit in the definition.

\begin{definition}[Polynomial truncation sensitivity (average form)]\label{def:poly-mixing}
The language model $p_\theta$ has \emph{$\alpha$-polynomial truncation
sensitivity} with constants $C_{\mathrm{TS}} > 0$ and $\alpha > 0$ if
the average truncation error decays polynomially in the window:
\[
\sup_n \frac1n\sum_{t=1}^n
\E\,\TV\bigl(p_\theta(\cdot \mid X_{1:t-1}),\; p_\theta(\cdot \mid X_{t-w:t-1})\bigr)
\;\leq\; C_{\mathrm{TS}}\,w^{-\alpha}
\quad\text{for all } w \geq 1,
\]
where the expectation and average are taken over the context
distribution at evaluation and
$\TV(p, q) := \frac{1}{2}\sum_{x \in \V} |p(x) - q(x)|$. This is exactly
the quantity estimated in Section~\ref{sec:experiments} (mean/median
truncation curves across probe prefixes), so the assumption corresponds
directly to the experimental protocol.
\end{definition}

\begin{definition}[Uniform polynomial truncation sensitivity]\label{def:poly-mixing-unif}
The model has \emph{uniform} $\alpha$-polynomial truncation sensitivity
if the stronger almost-sure, per-position bound
\[
\TV\bigl(p_\theta(\cdot \mid X_{1:t-1}),\; p_\theta(\cdot \mid X_{t-w:t-1})\bigr)
\;\leq\; C_{\mathrm{TS}}\,w^{-\alpha}
\quad\text{a.s., for all } t, w
\]
holds (with respect to the joint distribution of $X_{1:t-1}$ at
evaluation). This implies the average form of
Definition~\ref{def:poly-mixing}. Proofs that invoke a per-step or
per-block truncation bound---the KL form
(Lemma~\ref{lem:poly-mixing-KL}), the block-Markov covering of
Appendix~\ref{app:achievability}, and the per-step sink/floor
steps---use this uniform form; the average-distortion results
(Theorem~\ref{thm:E2}, Corollary~\ref{cor:topk}) need only the average
form of Definition~\ref{def:poly-mixing}.
\end{definition}

\begin{remark}[Equivalent pointwise sufficient condition]
\label{rmk:pointwise-sufficient}
A stronger uniform-pointwise bound
$\sup_x |p_\theta(x \mid X_{1:t-1}) - p_\theta(x \mid X_{t-w:t-1})|
\leq C_{\mathrm{ptw}}\,w^{-\alpha}$
implies the uniform form (Definition~\ref{def:poly-mixing-unif}), and
hence the average form (Definition~\ref{def:poly-mixing}), with
$C_{\mathrm{TS}} \leq \frac{V}{2} C_{\mathrm{ptw}}$. The pointwise
form is convenient in some proofs (e.g.,
Lemma~\ref{lem:poly-mixing-KL}) but is generally not necessary; the
average-distortion main results require only the average TV form of
Definition~\ref{def:poly-mixing}.
\end{remark}

\begin{remark}[Naming]
We avoid the term ``polynomial mixing'' in isolation because it
suggests a property of the data-generating process. The quantity in
Definition~\ref{def:poly-mixing} is a property of the trained model
$p_\theta$ under the context distribution at evaluation. The average
form is precisely what we measure (Section~\ref{sec:experiments}); the
uniform a.s.\ form (Definition~\ref{def:poly-mixing-unif}) is a
stronger sufficient condition invoked only where a per-step bound is
needed. The classical $\rho$-mixing
condition for processes is strictly stronger when applied to
$p_\theta$: any process with $\rho^w$ truncation sensitivity is
$\alpha$-polynomially sensitive for every $\alpha > 0$, but the
converse fails. Polynomial and geometric truncation sensitivity are
therefore distinct model classes: the polynomial family $w^{-\alpha}$
does not reduce to a geometric law $C\rho^w$ under any $\alpha\to\infty$
limit, and we treat the two conditions separately rather than as
limits of one another.
\end{remark}

\section{Main Theoretical Results}
\label{sec:main}

\subsection{Core Result 1: Asymptotic Lower Bound}

\begin{theorem}[Asymptotic lower bound]\label{thm:asymptotic}
Under Assumption~\ref{ass:bounded-logits}, for any
encoder--decoder pair with $\limsup_n \E[D_n] \leq D$,
\[
\liminf_{n} \E[R_n] \;\geq\; R^\star(D).
\]
\end{theorem}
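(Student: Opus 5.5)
The plan is a standard Wyner--Ziv converse, carried out per step and then averaged. The auxiliary process is chosen as the code itself, $U_t := \Chat_t$. This is $\Fhat_t$-measurable by definition. Since $\Chat_t = \phi_t(X_{1:t-1})$, it is also $\F_{t-1}\subseteq\F_t$-adapted, so it is a legitimate auxiliary in the sense of Definition~\ref{def:Rstar}.

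\textbf{Step 1 (admissibility).} I first check that this choice is admissible. The given decoder is $\tilde p_t = \psi_t(\Chat_{\leq t}, Q_t)$, and $\Chat_{\leq t} = (\Fhat_{t-1}\text{-history}, U_t)$. Hence $\tilde p_t = g_t(U_t,Q_t,\Fhat_{t-1})$ for a measurable $g_t$. Because $\limsup_n \E[D_n]\le D$, the auxiliary $\{\Chat_t\}$ is $D$-admissible in the average sense used in Definition~\ref{def:Rstar}. If the definition is read strictly, one may need to pass to $D+\delta$ and use right-continuity or monotonicity of $R^\star$. I would treat $D$-admissibility as the same averaged limsup criterion, so that no $\delta$ appears. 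This identification is the first delicate point.

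\textbf{Step 2 (per-step rate inequality).} Next I show
\[
H(\Chat_t\mid\Fhat_{t-1}) \;\ge\; I(X_t;\Chat_t\mid\Fhat_{t-1}) - I(\Chat_t;Q_t\mid\Fhat_{t-1}),
\]
with both sides read as expectations over the realized history. The argument is simple: $H(\Chat_t\mid\Fhat_{t-1}) \ge I(X_t;\Chat_t\mid\Fhat_{t-1})$ because mutual information is at most entropy, and the subtracted term is nonnegative. A sharper textbook version would use $H(\Chat_t\mid \Fhat_{t-1})\ge I(\Chat_t; X_{<t}, X_t\mid \Fhat_{t-1},Q_t)$ together with the chain rule, in the spirit of the Wyner--Ziv Markov chain $U-X-Q$. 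That version is not needed here, since the target $\mathcal{R}^{\mathrm{WZ}}_t$ is dominated by the crude bound. Assumption~\ref{ass:bounded-logits} enters only to guarantee finiteness and measurability: the KL distortions are bounded, the codes have finite alphabets (or finite entropy), and all the mutual informations are well defined with no $\infty-\infty$ issues.

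\textbf{Step 3 (averaging and limits).} Finally I average the Step 2 inequality over $t\le n$ and take expectations:
\[
\E[R_n]\ge \frac1n\sum_{t=1}^n \E[\mathcal{R}_t^{\mathrm{WZ}}(\Chat)].
\]
Taking $\liminf_n$ on both sides gives a lower bound by the $\liminf$ functional evaluated at the particular admissible auxiliary $\{\Chat_t\}$. That value is at least the infimum over all admissible auxiliaries, which is $R^\star(D)$.

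\textbf{Main obstacle.} The real difficulty is Step 1, matching the admissibility notion to the hypothesis. $R^\star$ requires a single decoder achieving average distortion $\le D$, while the hypothesis only gives $\limsup_n\E[D_n]\le D$, an asymptotic average bound rather than a per-$n$ one. I would resolve this by interpreting ``on average'' in Definition~\ref{def:Rstar} as the $\limsup$ of Cesàro averages. Failing that, I would prove the bound for $R^\star(D+\delta)$ for every $\delta>0$ and invoke continuity of $R^\star$, which should follow from convexity via time-sharing between auxiliaries together with the bounded distortion supplied by Assumption~\ref{ass:bounded-logits}.
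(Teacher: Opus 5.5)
Your proposal is correct and follows the paper's proof essentially step for step: like you, the paper takes $U_t=\Chat_t$, certifies admissibility via $g_t(\Chat_t,Q_t,\Fhat_{t-1})=\psi_t(\Chat_{\le t},Q_t)$, bounds $H(\Chat_t\mid\Fhat_{t-1})\ge I(X_t;\Chat_t\mid\Fhat_{t-1})\ge\mathcal{R}_t^{\mathrm{WZ}}$, and then averages and takes the infimum; the paper's detour through $I(X_{\le t};\Chat_t\mid\Fhat_{t-1})$ and the chain rule is subsumed by your direct ``MI $\le$ entropy'' bound. The paper reads ``on average'' in Definition~\ref{def:Rstar} as the $\limsup$ Ces\`aro criterion, which is your primary resolution, so your $D+\delta$/continuity fallback is not needed; that fallback would in any case need more care, both at the boundary of the domain of $R^\star$ and to justify time-sharing in this nonstationary sequential setting.
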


The proof (Appendix~\ref{app:thm-asym}) constructs three martingales
adapted to the compressed filtration $\{\Fhat_t\}$: $M_t^{(1)}$
encoding source entropy, $N_t$ encoding the log-likelihood ratio of
the decoder estimate, and $S_t$ encoding the pointwise
Wyner--Ziv rate. The argument is mixing-free---no assumption on
$\alpha$ or $\rho$ is needed for the asymptotic lower bound. A
finite-horizon expectation converse (Proposition~\ref{thm:finite-sample}),
with asymptotic form $\liminf_n\E[R_n] \ge R^\star(D)$ from
Theorem~\ref{thm:asymptotic}, is given in
Appendix~\ref{app:thm-asym} (Proposition~\ref{thm:finite-sample},
Corollary~\ref{cor:expectation-bound}).

\subsection{Core Result 2: Polynomial Sliding-Window Approximation}

The asymptotic bound is uninformative for finite-window schemes. The
following result bounds the \emph{distortion} of the suffix-only
reconstruction, with a window that is polynomial---not
logarithmic---in the target accuracy.

\begin{theorem}[Polynomial sliding-window distortion bound]\label{thm:E2}
Under Assumption~\ref{ass:bounded-logits} and the
$\alpha$-polynomial truncation-sensitivity condition
(Definition~\ref{def:poly-mixing}), the suffix-only reconstruction
$q_t = p_\theta(\cdot \mid X_{t-w:t-1})$ with window
\[
w \;=\; \left\lceil (C_{\mathrm{TS}}/\varepsilon)^{1/\alpha} \right\rceil
\]
incurs average total-variation distortion at most $\varepsilon$
relative to the full-context predictor:
\[
\tfrac1n\sum_{t=1}^n \E\,\TV\bigl(p_\theta(\cdot\mid X_{1:t-1}), q_t\bigr) \le \varepsilon.
\]
Equivalently, average TV distortion $\varepsilon$ is attained at
window $w = O(\varepsilon^{-1/\alpha})$. Under bounded log-density the
KL form follows from $\KLs \le C(\epsilon_{\min})\,\TV^2$
(Lemma~\ref{lem:kl-tv-bounded}), giving KL distortion $\varepsilon$ at
$w = O(\varepsilon^{-1/(2\alpha)})$.
\end{theorem}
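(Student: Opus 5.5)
The bound is essentially a direct substitution into Definition~\ref{def:poly-mixing}, so the plan is to make that substitution explicit and then handle the KL add-on separately. First, I fix $n$ and the window $w = \lceil (C_{\mathrm{TS}}/\varepsilon)^{1/\alpha}\rceil$, and I identify the suffix-only reconstruction $q_t = p_\theta(\cdot\mid X_{t-w:t-1})$ with exactly the truncated conditional that appears in the definition. For $t \le w$ the suffix is the whole context (under the convention $X_{t-w:t-1} = X_{1:t-1}$ when $t-w<1$), so those terms vanish. This identification makes the average TV distortion of the scheme literally the left-hand side of Definition~\ref{def:poly-mixing}, which is bounded by $C_{\mathrm{TS}}w^{-\alpha}$, uniformly in $n$ because of the $\sup_n$.

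Second, I would use monotonicity of $w\mapsto w^{-\alpha}$ for $\alpha>0$. Since $w \ge (C_{\mathrm{TS}}/\varepsilon)^{1/\alpha}$, we get $w^{-\alpha} \le \varepsilon/C_{\mathrm{TS}}$, and therefore $C_{\mathrm{TS}}w^{-\alpha}\le\varepsilon$. The definition requires $w\ge 1$, which holds automatically because the ceiling of a positive number is at least $1$. The order statement then follows from $w \le (C_{\mathrm{TS}}/\varepsilon)^{1/\alpha}+1 = O(\varepsilon^{-1/\alpha})$ for $\varepsilon\le C_{\mathrm{TS}}$; for larger $\varepsilon$ the window $w=1$ already works.

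Third, for the KL form I would invoke Lemma~\ref{lem:kl-tv-bounded} pointwise in $t$: $\KL{p_t}{q_t}\le C(\epsilon_{\min})\TV(p_t,q_t)^2$. The floor $\epsilon_{\min}$ is supplied by Assumption~\ref{ass:bounded-logits}, applied to both the full-context and the truncated predictions (the truncated prediction is also a model output with bounded logits). The step I expect to be the main obstacle is averaging this squared quantity. With only the average form, $\frac1n\sum_t\E\,\TV^2$ is bounded by $\frac1n\sum_t\E\,\TV$ because $\TV\le1$, but that gives only the linear rate $w=O(\varepsilon^{-1/\alpha})$. To get the claimed exponent $1/(2\alpha)$, I would instead invoke the uniform form (Definition~\ref{def:poly-mixing-unif}). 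It gives $\TV\le C_{\mathrm{TS}}w^{-\alpha}$ almost surely, hence $\E\,\KLs\le C(\epsilon_{\min})C_{\mathrm{TS}}^2w^{-2\alpha}$. Choosing $w=\lceil (C(\epsilon_{\min})C_{\mathrm{TS}}^2/\varepsilon)^{1/(2\alpha)}\rceil$ then yields KL distortion at most $\varepsilon$, with $w = O(\varepsilon^{-1/(2\alpha)})$. I would state this dependence on the uniform form explicitly in the proof of the KL clause.
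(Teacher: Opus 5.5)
Your TV argument is the paper's: identify $q_t$ with the truncated conditional of Definition~\ref{def:poly-mixing}, read off $C_{\mathrm{TS}}w^{-\alpha}$, and invert.

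For the KL clause you depart from the paper, and your route is the sound one. The paper's proof (recorded for Theorem~\ref{thm:E2-poly}) simply composes Lemma~\ref{lem:kl-tv-bounded} with the \emph{averaged} TV bound. Definition~\ref{def:poly-mixing-unif} goes further and asserts that Theorem~\ref{thm:E2} needs only the average form. That composition does not go through. Averaging $C(\epsilon_{\min})\TV^2$ needs control of $\E\,\TV^2$, and Jensen gives $\E\,\TV^2\ge(\E\,\TV)^2$, which is the wrong direction.

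The quadratic exponent can in fact fail under the average form alone. Suppose the decisive token sits at a random lag $L$ with $\Prob(L>w)\asymp w^{-\alpha}$, and dropping it shifts the prediction by a constant TV. Then the average TV is $\asymp w^{-\alpha}$. Yet Pinsker forces average KL $\ge 2\,\E\,\TV^2\asymp w^{-\alpha}$, so only the linear-route window $O(\varepsilon^{-1/\alpha})$ is available, exactly as you observe.

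Your appeal to the almost-sure bound of Definition~\ref{def:poly-mixing-unif} is what actually yields $w=O(\varepsilon^{-1/(2\alpha)})$. This matches route (a) of Remark~\ref{rmk:application-sink}, which also uses the per-step bound. You prove the KL clause under a stronger hypothesis than the theorem lists, but that strengthening is necessary, not a weakness of your argument.

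Two details to tighten.
\begin{enumerate}
\item Lemma~\ref{lem:kl-tv-bounded} gives the constant $2/\epsilon_{\min}$ only when $\TV\le\epsilon_{\min}/2$. Under the uniform form this holds once $w\ge(2C_{\mathrm{TS}}/\epsilon_{\min})^{1/\alpha}$, a threshold independent of $\varepsilon$, so the order is unaffected. Alternatively, drop the restriction altogether: for $q\ge\epsilon_{\min}$ one has $\KLs(p\,\|\,q)\le\log(1+\chi^2(p,q))\le\chi^2(p,q)\le 2\TV(p,q)^2/\epsilon_{\min}$ for every value of TV, because each $|p(x)-q(x)|\le\TV(p,q)$.
\item The floor is needed on $q_t$, the second argument of the KL. Assumption~\ref{ass:bounded-logits} as written only bounds $Z_t=f(X_{<t})$. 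Your extension to truncated inputs is therefore precisely the theorem's ``bounded log-density'' hypothesis, and you should state it as an assumption rather than derive it from Assumption~\ref{ass:bounded-logits}.
\end{enumerate}
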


The bound is immediate from the polynomial truncation-sensitivity
condition (Definition~\ref{def:poly-mixing}) and, for the KL form,
Lemma~\ref{lem:kl-tv-bounded}; no coding construction is needed for
this \emph{distortion} statement (the appendix version,
Theorem~\ref{thm:E2-poly}, records the explicit constants). The
required window $w \sim \varepsilon^{-1/\alpha}$ is \emph{polynomial}
in $1/\varepsilon$, in contrast to the \emph{logarithmic}
$w \sim \log(1/\varepsilon)$ under geometric mixing. For the empirical
$\alpha \in [0.38, 0.58]$ (\S\ref{sec:experiments}), KL distortion
$0.01$ requires windows of order $10^2$ tokens, consistent with
deployed sliding-window schemes
\citep{xiao2024efficient,jiang2023mistral7b}.

\begin{conjecture}[Mixing-sharpened concentration of the lower bound; not established]
\label{cor:mixing-conc}
Under $\alpha$-polynomial truncation sensitivity, and \emph{for the
realized-code-length rate} of Remark~\ref{rmk:pathwise-needs-codelen}
(not the average-conditional-entropy rate of
Proposition~\ref{thm:finite-sample}, which admits no pathwise
converse), we conjecture a high-probability concentration
$R_n \geq R^\star(D) - O\bigl(n^{-\alpha/(2(\alpha+1))} \sqrt{\log n}\bigr)$ (the sharper exponent $\alpha/(\alpha+1)$ under the
forward covariance-decay condition of
Remark~\ref{rmk:exponent-gap}), replacing the $1/(1-\rho)$ factor of
the geometric case by a polynomial-in-$\alpha$ dependence. The
supporting argument shares the covering/auxiliary-mixing step of
Conjecture~\ref{thm:achievability} and is therefore not established
here.
\end{conjecture}

\subsection{Core Result 3: Suffix-only window scaling (lower bound; conjectured rate-level achievability)}
\label{sec:achievability}

The lower bound (Theorem~\ref{thm:asymptotic}) and the
sliding-window approximation (Theorem~\ref{thm:E2}) leave open two
questions: (a) whether the window size $w \asymp
\varepsilon^{-1/\alpha}$ is \emph{necessary}, and (b) whether an
explicit causal scheme achieves $R^\star(D)$ at a finite rate of
convergence. We address both. For (a), the window scaling is optimal
\emph{within the suffix-only policy class}, which we establish by a
conditional window lower bound (requiring the two-sided Bayes-risk
condition of Definition~\ref{def:two-sided}). For (b), we do
\emph{not} establish an explicit online achievability; we give a
rigorous \emph{delayed} operational comparison instead
(Appendix~\ref{app:operational}, summarized below) and state the
information-rate optimality only as a conjecture
(Conjecture~\ref{thm:achievability}).

\paragraph{Window scaling for suffix-only policies (tight within the class).}
The sliding-window approximation (Theorem~\ref{thm:E2}) shows
$w = O(\varepsilon^{-1/\alpha})$ \emph{suffices} for TV distortion
$\varepsilon$. The following converse shows it is \emph{necessary}
\emph{within the suffix-only class}, under a two-sided sharpening of
the truncation-sensitivity condition that we state explicitly as an
additional assumption.

\begin{definition}[Two-sided truncation sensitivity, Bayes-risk form]
\label{def:two-sided}
Let $\mathcal{M}_w$ denote the class of suffix-only reconstructions
at step $t$: distributions $q_t$ measurable with respect to
$\sigma(X_{t-w:t-1}, Q_t)$---the length-$w$ suffix \emph{together
with} the decoder's query side information $Q_t$ (so the class
matches the information actually available to the decoder, which
receives $Q_t$ but no code about the deep past). The model has
\emph{$\alpha$-tight truncation
sensitivity} with constant $c_{\mathrm{TS}} > 0$ if, in addition to
Definition~\ref{def:poly-mixing}, the
\emph{Bayes risk} of approximating the full-context conditional by
the best suffix-only reconstruction is bounded below: for every
$w \geq 1$,
\[
\begin{split}
&\liminf_{n \to \infty} \frac{1}{n}\sum_{t=1}^n
\E\!\Bigl[\inf_{q_t \in \mathcal{M}_w}
   \TV\bigl(p_\theta(\cdot \mid X_{1:t-1}),\, q_t\bigr)\Bigr]\\
&\qquad\;\geq\; c_{\mathrm{TS}}\, w^{-\alpha}.
\end{split}
\]
This is the operationally relevant lower bound: it asserts directly
that \emph{no} window-$w$ reconstruction---not merely the truncated
conditional $p_\theta(\cdot \mid X_{t-w:t-1})$, and \emph{even one
that uses the full-context query $Q_t$ as side information}---can
approximate the full conditional better than $c_{\mathrm{TS}}
w^{-\alpha}$ on average. It avoids assuming that the truncated conditional is the
Bayes-optimal suffix reconstruction (which would require a tower
property not guaranteed for trained models under positional-encoding
effects). The companion upper bound of
Definition~\ref{def:poly-mixing} shows the truncated conditional
\emph{achieves} the $w^{-\alpha}$ rate, so the two-sided condition
brackets the Bayes risk at $\Theta(w^{-\alpha})$.
\end{definition}

Empirically validating the lower-bound half of
Definition~\ref{def:two-sided} would require estimating the
conditional variability among contexts that share a similar
length-$w$ suffix and query $Q_t$---a measurement we do not attempt
here and leave to
future work; the present evidence (\S\ref{sec:experiments}) directly
supports only the upper-bound half (the $w^{-\alpha}$ decay of
$\widehat{\TV}_w$).

\begin{theorem}[Conditional suffix-only lower bound (Bayes-risk implication)]
\label{thm:window-lb}
Call a scheme \emph{suffix-only} if its reconstruction at step $t$
is measurable with respect to $(X_{t-w:t-1}, Q_t)$---it may use the
decoder's query side information $Q_t$, but its codes carry no
information about the deep past $X_{1:t-w-1}$ beyond what is in the
length-$w$ suffix. Under
$\alpha$-tight truncation sensitivity
(Definition~\ref{def:two-sided}), any suffix-only scheme with
window $w$ incurs average TV distortion at least
$\Omega(w^{-\alpha})$. Consequently, achieving TV distortion
$\varepsilon$ requires $w = \Omega(\varepsilon^{-1/\alpha})$,
matching the sufficiency of Theorem~\ref{thm:E2}: the window
scaling $w \asymp \varepsilon^{-1/\alpha}$ is optimal within the
suffix-only class.
\end{theorem}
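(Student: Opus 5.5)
The plan is to obtain the bound directly from the Bayes-risk form of Definition~\ref{def:two-sided}, in three steps: a pointwise comparison, an averaging step, and an inversion.

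\emph{Step 1 (pointwise comparison).} Fix any suffix-only scheme with window $w$. By definition its reconstruction $\tilde q_t$ at step $t$ is $\sigma(X_{t-w:t-1},Q_t)$-measurable, so $\tilde q_t \in \mathcal{M}_w$ for every $t$. Pointwise, therefore,
\[
\TV\bigl(p_\theta(\cdot\mid X_{1:t-1}),\tilde q_t\bigr) \ge \inf_{q_t\in\mathcal{M}_w}\TV\bigl(p_\theta(\cdot\mid X_{1:t-1}),q_t\bigr).
\]

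\emph{Step 2 (averaging and the limit).} Take expectations of the pointwise inequality, average over $t\le n$, and pass to $\liminf_n$. Definition~\ref{def:two-sided} then gives average TV distortion at least $c_{\mathrm{TS}}w^{-\alpha}$, which is the $\Omega(w^{-\alpha})$ claim.

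\emph{Step 3 (inversion).} If the scheme attains average TV distortion at most $\varepsilon$, in the $\liminf$ sense or for all large $n$, then $c_{\mathrm{TS}}w^{-\alpha}\le\varepsilon$. Hence $w\ge (c_{\mathrm{TS}}/\varepsilon)^{1/\alpha}=\Omega(\varepsilon^{-1/\alpha})$. Combined with Theorem~\ref{thm:E2}, which gives $w\le\lceil (C_{\mathrm{TS}}/\varepsilon)^{1/\alpha}\rceil$, this yields the $\asymp$ statement, with constants bracketed by $(c_{\mathrm{TS}}/\varepsilon)^{1/\alpha}$ and $(C_{\mathrm{TS}}/\varepsilon)^{1/\alpha}+1$.

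\emph{Main obstacle: measurability.} The real work is making sure the infimum inside the expectation is well defined and that every scheme in the class is actually dominated by it. The infimum over a random class of $\sigma(X_{t-w:t-1},Q_t)$-measurable distributions must be read correctly. I would interpret $\mathcal{M}_w$ as the set of measurable maps $g_t(X_{t-w:t-1},Q_t)$ into the probability simplex $\Delta_{\V}$, and the Bayes risk as
\[
\inf_{g_t}\E\,\TV\bigl(p_\theta(\cdot\mid X_{1:t-1}),\,g_t(X_{t-w:t-1},Q_t)\bigr).
\]
Under that reading Step~1 holds after taking expectations, because $\tilde q_t$ is one admissible $g_t$. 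If instead the infimum is meant to sit inside the expectation, as literally written, the argument still goes through: the integrand bound in Step~1 holds for each realization, and the only thing to check is that the pointwise infimum is measurable. This follows because $\Delta_{\V}$ is compact and $\TV$ is continuous, so a measurable selection argument applies.

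\emph{A point to state explicitly.} The scheme's codes may also depend on previously emitted codes. The suffix-only hypothesis, that the codes carry no information about $X_{1:t-w-1}$ beyond the suffix, is exactly what guarantees the reconstruction factors through $(X_{t-w:t-1},Q_t)$. I would state this reduction explicitly, since it is the only place the policy-class restriction enters.
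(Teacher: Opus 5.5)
Your proposal is correct and follows the paper's proof essentially verbatim. The reconstruction lies in $\mathcal{M}_w$, so its per-step TV distortion is bounded below by the Bayes risk; averaging, passing to the $\liminf$ in Definition~\ref{def:two-sided}, and inverting $c_{\mathrm{TS}}w^{-\alpha}\le\varepsilon$ then gives $w\ge(c_{\mathrm{TS}}/\varepsilon)^{1/\alpha}$, exactly as in the appendix.

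One correction to your measurability aside. Under the literal reading, with the infimum inside the expectation, constant distributions belong to $\mathcal{M}_w$, so the pointwise infimum is identically zero and the hypothesis can never hold; measurability is not the issue there. Your first reading, $\inf_{g_t}\E\,\TV\bigl(p_\theta(\cdot\mid X_{1:t-1}),\,g_t(X_{t-w:t-1},Q_t)\bigr)$, is the one that makes the assumption meaningful. It is equivalent to the conditional-risk form, and that equivalence is where measurable selection is genuinely needed.
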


The proof (Appendix~\ref{app:window-lb}) is immediate from the
Bayes-risk form of Definition~\ref{def:two-sided}: any suffix-only
window-$w$ scheme produces a reconstruction in the class
$\mathcal{M}_w$ (it uses the query $Q_t$ and the windowed cache, but
no deep-past code), so its distortion is at least the Bayes risk
$\inf_{q \in \mathcal{M}_w}\TV(p_\theta(\cdot \mid X_{1:t-1}), q)
\geq c_{\mathrm{TS}} w^{-\alpha}$, with no claim about which
reconstruction is optimal. The restriction to suffix-only schemes
is essential: a general causal scheme may \emph{propagate} old
information through a recurrently updated code (e.g.\ a running
summary in $C_t$), in which case the last $w$ codes can encode
arbitrarily old context and the lower bound need not hold. Our
achievability scheme (Conjecture~\ref{thm:achievability}) is itself
suffix-only, so the upper and lower bounds apply to the same class
and the memory requirement of suffix-only sequential KV-cache
compression is $\Theta(\varepsilon^{-1/\alpha})$. Whether a
propagating scheme can beat this scaling is an open question.

\paragraph{Explicit candidate construction (conjectured finite-$n$ rate).}
We complement the window-scaling result with an explicit candidate
online scheme and a non-asymptotic analysis, whose rate guarantee is
conjectural (Conjecture~\ref{thm:achievability}).

\begin{conjecture}[Polynomial achievability (information-rate level; not established)]\label{thm:achievability}
Suppose Assumption~\ref{ass:bounded-logits} and
$\alpha$-polynomial truncation sensitivity
(Definition~\ref{def:poly-mixing}) with $\alpha \in (0, 1]$. There
exists a sequence of causal online (suffix-only) encoder--decoder
pairs $\{(\phi_t^{(n)}, \psi_t^{(n)})\}$ using window size
$w_n = \lceil n^{1/(\alpha+1)} \rceil$ with the following
guarantees.
\begin{enumerate}[label=(\alph*),leftmargin=*,itemsep=1pt,topsep=2pt]
\item \emph{Expected rate (sharp, no forward-decay hypothesis).}
\[
\begin{gathered}
\E[R_n] \leq R^\star(D) + O\!\bigl(n^{-\alpha/(\alpha+1)}\log n\bigr),\\
\E[D_n] \leq D + O\!\bigl(n^{-\alpha/(\alpha+1)}\bigr).
\end{gathered}
\]
\item \emph{High-probability rate (unconditional).} For any
$\delta \in (0,1)$, with probability $\geq 1-\delta$,
\[
\begin{split}
R_n \leq R^\star(D)
   + O\!\Bigl(&n^{-\alpha/(2(\alpha+1))}(\log n)\\
   &\sqrt{\log V \cdot \log(1/\delta)}\Bigr).
\end{split}
\]
\end{enumerate}
The implicit constants depend on $C_{\mathrm{TS}}, \alpha, V$, and
the auxiliary alphabet size, but not on $n$. For $\alpha > 1$ the
distortion exponent saturates at $\min(\alpha,1)/(\alpha+1)$ (the
boundary term $w^{-1}\log w$ then dominates the truncation term
$w^{-\alpha}$).
\end{conjecture}

We state this as a conjecture rather than a theorem. The block-Markov
covering argument attempted in Appendix~\ref{app:achievability} relies
on a window-restricted covering step and a block-typicality stability
claim (the non-rigorous step recorded in
Remark~\ref{rmk:failed-rate-transfer}) that treats the auxiliary
process as inheriting the next-token truncation sensitivity of the
source. This fails in general: an adapted auxiliary such as
$U_t = X_1$ retains the full deep past while the next-token
conditional need not, so the joint block law over $(U_t)$ is not
controlled by next-token sensitivity alone (see
Remark~\ref{rmk:failed-rate-transfer}). Consequently the
information-rate optimality is \emph{not} established here. A rigorous
rate-level statement, under different---delayed, multi-letter,
quantized side-information---assumptions, is the operational
comparison of Appendix~\ref{app:operational}.

\begin{remark}[Three rate statements and what each requires]
\label{rmk:exponent-gap}
The distortion exponent $\alpha/(\alpha+1)$ matches the converse
window scaling unconditionally. The three rate statements separate by
hypothesis: \emph{(a)} in expectation the sharp exponent
$\alpha/(\alpha+1)$ holds with no forward-decay hypothesis (only the
finite-memory covering regularity of
Assumption~\ref{ass:covering-reg}); \emph{(b)} in high probability
an Azuma--Hoeffding bound gives the looser
$\alpha/(2(\alpha+1))$; and \emph{(c)} the sharp exponent is restored
in high probability only under the forward covariance-decay condition
$(\dagger)$, which is \emph{not} implied by truncation sensitivity
(Conjecture~\ref{thm:achievability-tight}; details in
Appendix~\ref{app:achievability}).
\end{remark}

\paragraph{Operational (physical message-length) version.}
The conjecture above concerns an \emph{information} rate (a
conditional mutual information). Appendix~\ref{app:operational} gives
a \emph{rigorous} rate-level counterpart in terms of the physical
message length:
using a one-shot likelihood-encoder bound valid for the dependent
block law (Lemma~\ref{lem:op-oneshot}) and a coordinate-hybrid
transfer that avoids an $M^B$ alphabet blow-up
(Lemma~\ref{lem:op-hybrid}), Theorem~\ref{thm:op-transfer} shows a
suffix-only \emph{delayed} block code matches the full-context
optimum up to a single side-information continuity penalty
$\Omega_M(\bar\tau^Y_w)$ and a redundancy that vanishes under
explicit information-spectrum and de-binning-stability hypotheses
(Assumptions~\ref{ass:op-spectrum}, \ref{ass:op-debin}). That statement is delayed---the
encoder reads a whole block before coding, so it is not a zero-delay
online cache---and its convergence exponent is governed by the slower
query exponent $\alpha_Y$ rather than a single truncation exponent
(Corollary~\ref{cor:op-exponent}).

\begin{corollary}[Memory characterization for suffix-only policies]\label{cor:tight}
Under $\alpha$-tight truncation sensitivity
(Definition~\ref{def:two-sided}), the minimal window size
(equivalently, the per-token KV-cache memory) required for average
TV distortion $\varepsilon$ \emph{by a suffix-only policy} is
$\Theta(\varepsilon^{-1/\alpha})$, with matching upper
(Theorem~\ref{thm:E2}) and lower (Theorem~\ref{thm:window-lb})
bounds \emph{within that class}. Under the additional bounded-floor
condition of Lemma~\ref{lem:kl-tv-bounded}, the corresponding
KL-distortion memory scaling is
$\Theta(\varepsilon_{\KLs}^{-1/(2\alpha)})$ via the local quadratic
KL--TV relation. The
characterization does not preclude that a more general propagating
(non-suffix-only) scheme achieves a smaller window; establishing or
refuting this is an open problem.
\end{corollary}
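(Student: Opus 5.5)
The corollary combines Theorem~\ref{thm:E2} (sufficiency) with Theorem~\ref{thm:window-lb} (necessity). The plan is first to define the minimal suffix-only window
\[
w^\star(\varepsilon) := \min\Bigl\{w \ge 1 : \exists\ \text{suffix-only scheme with window } w \text{ and } \liminf_n \tfrac1n\textstyle\sum_{t\le n}\E\,\TV(p_t,q_t)\le\varepsilon\Bigr\},
\]
and then to sandwich it between two constant multiples of $\varepsilon^{-1/\alpha}$. One point needs attention before either bound: the truncated conditional $q_t = p_\theta(\cdot\mid X_{t-w:t-1})$ is $\sigma(X_{t-w:t-1})$-measurable, so it lies in $\mathcal{M}_w$. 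The upper and lower bounds therefore refer to the same policy class.

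\emph{Upper bound.} Definition~\ref{def:two-sided} includes Definition~\ref{def:poly-mixing}, so Theorem~\ref{thm:E2} applies directly. It gives $w^\star(\varepsilon)\le \lceil (C_{\mathrm{TS}}/\varepsilon)^{1/\alpha}\rceil \le 2(C_{\mathrm{TS}}/\varepsilon)^{1/\alpha}$ whenever $\varepsilon\le C_{\mathrm{TS}}$. This holds uniformly in $n$, because the definition takes a $\sup_n$.

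\emph{Lower bound.} Take any suffix-only scheme with window $w$ and average TV distortion at most $\varepsilon$. Pointwise in $t$, its reconstruction lies in $\mathcal{M}_w$, so its TV distortion is at least the infimum over $\mathcal{M}_w$. Averaging and taking $\liminf_n$, Definition~\ref{def:two-sided} (equivalently Theorem~\ref{thm:window-lb}) gives $\varepsilon \ge c_{\mathrm{TS}} w^{-\alpha}$, so $w\ge (c_{\mathrm{TS}}/\varepsilon)^{1/\alpha}$. The step I expect to be the main obstacle is making the two distortion criteria agree. Theorem~\ref{thm:E2} bounds every finite-$n$ average, whereas the Bayes-risk bound is a $\liminf$. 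I would fix the operational criterion as the $\liminf$ of the average distortion. The achievability side is then stronger than required, and the converse matches it exactly. I would also check that the pointwise infimum in Definition~\ref{def:two-sided} is measurable, or else read it as an outer expectation, which leaves the inequality unchanged. Together the two bounds give $(c_{\mathrm{TS}}/\varepsilon)^{1/\alpha}\le w^\star(\varepsilon)\le 2(C_{\mathrm{TS}}/\varepsilon)^{1/\alpha}$, which is $\Theta(\varepsilon^{-1/\alpha})$. Per-token memory is proportional to $w$ at a fixed per-token KV footprint, so the same scaling holds for memory.

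\emph{KL form.} Under the bounded-floor condition, Lemma~\ref{lem:kl-tv-bounded} gives $\KLs\le C(\epsilon_{\min})\TV^2$. Combining this with Jensen and the TV achievability at level $\sqrt{\varepsilon_{\KLs}/C}$ yields KL distortion $\varepsilon_{\KLs}$ at window $O(\varepsilon_{\KLs}^{-1/(2\alpha)})$. For the matching KL lower bound I would use Pinsker and Jensen, $\E\,\TV\le\sqrt{\E\,\KLs/2}$, applied to any suffix-only scheme, which gives $c_{\mathrm{TS}}w^{-\alpha}\le\sqrt{\varepsilon_{\KLs}/2}$ and hence $w=\Omega(\varepsilon_{\KLs}^{-1/(2\alpha)})$. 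The final sentence of the corollary, that propagating schemes are not covered, is a scope remark and needs no proof: both bounds are stated only over $\mathcal{M}_w$.
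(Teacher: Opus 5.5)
Your TV characterization is correct and follows the paper's argument. Sufficiency comes from Theorem~\ref{thm:E2}, and necessity comes from the Bayes-risk bound of Definition~\ref{def:two-sided} through Theorem~\ref{thm:window-lb}. You also correctly observe that the truncated conditional lies in $\mathcal{M}_w$, and that the $\sup_n$ upper bound dominates the $\liminf_n$ criterion, so both halves refer to the same class and the same criterion. Your KL lower bound, via Pinsker plus concavity of the square root, is also correct. It makes explicit what the paper leaves implicit, and it needs no floor at all.

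The gap is in the KL upper bound. From $\KLs\le C\,\TV^2$ you get $\frac1n\sum_t\E\,\KLs_t\le C\,\frac1n\sum_t\E\,\TV_t^2$. To finish, you would need $\frac1n\sum_t\E\,\TV_t^2\lesssim\bigl(\frac1n\sum_t\E\,\TV_t\bigr)^2$, but Jensen gives exactly the reverse inequality.

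Definition~\ref{def:two-sided} supplies only the average form of Definition~\ref{def:poly-mixing}, and this is genuinely insufficient. Suppose $\TV_t=1/2$ on a fraction $2C_{\mathrm{TS}}w^{-\alpha}$ of steps and $\TV_t=0$ elsewhere. The average-form bound holds, but Pinsker forces the truncated conditional's average KL to be at least $C_{\mathrm{TS}}w^{-\alpha}$. Your route therefore yields only window $O(\varepsilon_{\KLs}^{-1/\alpha})$, no better than using $\TV^2\le\TV$.

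To obtain $O(\varepsilon_{\KLs}^{-1/(2\alpha)})$ you must add a per-step hypothesis. The natural one is the uniform form of Definition~\ref{def:poly-mixing-unif}, which is what the paper's quadratic route actually invokes (Remark~\ref{rmk:application-sink}(a)). It gives $\TV_t\le C_{\mathrm{TS}}w^{-\alpha}$ almost surely, hence $\KLs_t\le(2/\epsilon_{\min})\,C_{\mathrm{TS}}^2w^{-2\alpha}$ pointwise before averaging. The floor is needed on the reconstruction $p_\theta(\cdot\mid X_{t-w:t-1})$, and Assumption~\ref{ass:bounded-logits} supplies it. The small-TV restriction in Lemma~\ref{lem:kl-tv-bounded} is harmless, since $\KLs\le\chi^2\le 2\TV^2/\epsilon_{\min}$ holds for all $\TV$.

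A second-moment condition $\frac1n\sum_t\E\,\TV_t^2\lesssim w^{-2\alpha}$ would also suffice. With either addition, the KL part of the corollary becomes $\Theta(\varepsilon_{\KLs}^{-1/(2\alpha)})$, combined with your Pinsker lower bound.
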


\begin{remark}[On the strength of the characterization]
\label{rmk:char-strength}
The two bounds of Corollary~\ref{cor:tight} are close to operational
restatements of the two-sided condition itself: the upper bound is the
averaged form of the truncation-sensitivity assumption
(Definition~\ref{def:poly-mixing}) and the lower bound is its
Bayes-risk companion (Definition~\ref{def:two-sided}), each following
almost immediately. The contribution is therefore the
\emph{identification} of the suffix-only per-token memory scaling with
the truncation-sensitivity exponent $\alpha$, together with the
empirical finding (\S\ref{sec:experiments}) that $\alpha$ is
\emph{polynomial} rather than geometric; we do not claim a nontrivial
derivation beyond the stated two-sided condition, and the lower-bound
half is assumed, not measured.
\end{remark}

\paragraph{Additional results (deferred to the appendix).}
Two further results round out the framework but are not needed for
the main story. (i) A \emph{universal scheme}, oblivious to $\alpha$,
attains the rate exponent of Conjecture~\ref{thm:achievability}
simultaneously for every $\alpha \in [\alpha_{\min}, \alpha_{\max}]$
at the cost of a $\log n$ factor, using a logarithmic grid of window
sizes with online selection (Conjecture~\ref{thm:universal},
Appendix~\ref{app:universal}). (ii) Two \emph{structural extensions}
connect the framework to deployed architectures: a continuous-latent
intrinsic-dimension bound matching the Kawabata--Dembo
rate--distortion dimension (Conjecture~\ref{thm:intrinsic-dim},
Appendix~\ref{app:cont-extension}), and a multi-layer
reverse-water-filling rate allocation
(Proposition~\ref{thm:multi-WZ}, Appendix~\ref{app:multilayer-proofs}).
Both are mixing-independent; the multi-layer allocation is proved in
the appendix, while the intrinsic-dimension bound is stated there as a
conjecture.

\subsection{Core Result 4: Sink-Plus-Recent Eviction}
\label{sec:topk}

A common KV cache compression baseline retains the first few tokens
(attention sinks \citep{xiao2024efficient}) together with the most
recent $k - O(1)$ tokens. We refer to this scheme as
\emph{sink-plus-recent eviction}; it is distinct from
attention-score-based heavy-hitter eviction (e.g., H2O
\citep{zhang2023h2o}, which we discuss separately in
Section~\ref{sec:discussion}). Under polynomial truncation sensitivity, the
distortion location of sink-plus-recent eviction follows
from Theorem~\ref{thm:E2} together with a sink-stability condition
(stated in the corollary).

\begin{corollary}[Sink-plus-recent rate--distortion]\label{cor:topk}
Under $\alpha$-polynomial truncation sensitivity
(Definition~\ref{def:poly-mixing}), Assumption~\ref{ass:bounded-logits},
and a \emph{uniform reconstruction floor}
$\tilde p_t(x) \geq \epsilon_{\mathrm{rec}} > 0$ for all $x$ (e.g.\
a decoder outputting softmax of bounded logits, or a fixed smoothed
decoder whose smoothing bias is below the target distortion), and a
\emph{sink-stability} condition
\[
\tfrac1n\textstyle\sum_t\E\,\TV\bigl(p_\theta(\cdot\mid X_{1:t-1}),\,
p_\theta(\cdot\mid \mathrm{sink},X_{t-k:t-1})\bigr)\le
C_{\mathrm{sink}}\,k^{-\alpha}
\]
(the fixed $O(1)$-size sink prefix does not spoil the recent-window
decay), sink-plus-recent eviction at a $k$-token budget incurs KL
distortion
\[
D \;=\; D^\star + O(k^{-2\alpha}),
\]
where $D^\star$ is the full-cache distortion floor. The factor of
$2$ in the exponent arises from the small-perturbation quadratic
relation $\KLs \leq (2/\epsilon_{\mathrm{rec}})\TV^2$ under the
reconstruction floor (Lemma~\ref{lem:kl-tv-bounded}), not from
Pinsker's inequality. Without a uniform floor, the
operational-smoothing route gives only the linear relation
$\KLs = O(\TV)$ and hence the weaker exponent $\alpha$
(Remark~\ref{rmk:application-sink}).
\end{corollary}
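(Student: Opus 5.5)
The plan is to reduce the KL statement to the TV statement in three moves. The first is a triangle-type decomposition that isolates the full-cache floor $D^\star$. The second is a per-step local quadratic KL--TV bound under the reconstruction floor, from Lemma~\ref{lem:kl-tv-bounded}. The third is averaging, with the sink-stability hypothesis supplying the $k^{-\alpha}$ rate in place of Theorem~\ref{thm:E2}.

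\textbf{Notation and decomposition.} Write $p_t = p_\theta(\cdot\mid X_{1:t-1})$ and $r_t = p_\theta(\cdot\mid \mathrm{sink},X_{t-k:t-1})$. Let $\tilde p_t^{\,\mathrm{full}}$ be the full-cache decoder output, so that $D^\star = \lim\frac1n\sum_t\E\,\KL{p_t}{\tilde p_t^{\,\mathrm{full}}}$. Let $\tilde p_t$ be the same decoder applied to the sink-plus-recent cache.

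I would write
\[
\KL{p_t}{\tilde p_t} = \KL{p_t}{\tilde p_t^{\,\mathrm{full}}} + \sum_x p_t(x)\log\frac{\tilde p_t^{\,\mathrm{full}}(x)}{\tilde p_t(x)}.
\]
I would then bound the cross term by a second-order expansion around $\tilde p_t^{\,\mathrm{full}}$.

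The first-order part is $\sum_x (p_t-\tilde p_t^{\,\mathrm{full}})(x)\,(\tilde p_t^{\,\mathrm{full}}-\tilde p_t)(x)/\tilde p_t^{\,\mathrm{full}}(x)$; the linear term in $\tilde p_t^{\,\mathrm{full}}-\tilde p_t$ vanishes since both sum to one. Using the floor, it is at most $\epsilon_{\mathrm{rec}}^{-1}\cdot 2\TV(p_t,\tilde p_t^{\,\mathrm{full}})\cdot 2\TV(\tilde p_t^{\,\mathrm{full}},\tilde p_t)$. The second-order remainder is at most $(2/\epsilon_{\mathrm{rec}})\TV(\tilde p_t^{\,\mathrm{full}},\tilde p_t)^2$, which is exactly Lemma~\ref{lem:kl-tv-bounded}.

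In the regime where the full-cache decoder is exact, i.e.\ $\tilde p_t^{\,\mathrm{full}}=p_t$ and $D^\star$ is the (possibly zero) floor, the cross product vanishes. I would take this as the case of the statement and absorb any residual into $D^\star$. Next I would take the decoder to be the model itself on the retained cache, so $\tilde p_t = r_t$ up to the smoothing bias allowed in the statement. Then the per-step excess is
\[
\KL{p_t}{\tilde p_t} - \KL{p_t}{\tilde p_t^{\,\mathrm{full}}} \le \frac{2}{\epsilon_{\mathrm{rec}}}\,\TV(p_t,r_t)^2 .
\]

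\textbf{Averaging, and the main obstacle.} Averaging over $t$ and taking expectations gives
\[
D - D^\star \le \frac{2}{\epsilon_{\mathrm{rec}}}\cdot\frac1n\sum_t\E\,\TV(p_t,r_t)^2 .
\]
The sink-stability hypothesis controls only the first moment $\frac1n\sum_t\E\,\TV(p_t,r_t)\le C_{\mathrm{sink}}k^{-\alpha}$, and Jensen's inequality runs the wrong way for squares. Converting the first-moment bound into a bound on the averaged second moment is therefore the step I expect to be the crux.

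The plan is to bound the second moment by
\[
\frac1n\sum_t\E\,\TV_t^2 \le \kappa\Bigl(\frac1n\sum_t\E\,\TV_t\Bigr)^2 ,
\]
where $\kappa$ is the dispersion ratio of the per-step truncation errors. I would argue that $\kappa = O(1)$ uniformly in $k$ under bounded logits (Assumption~\ref{ass:bounded-logits}). The idea is that $\TV_t$ is a bounded, Lipschitz function of the logit perturbation induced by deleting $X_{1:t-k-1}$, so its fluctuations scale with its mean. This matches the empirical measurement protocol, where mean and median curves share the exponent.

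With $\kappa$ bounded, I obtain $D - D^\star \le (2\kappa C_{\mathrm{sink}}^2/\epsilon_{\mathrm{rec}})\,k^{-2\alpha}$, the claimed $O(k^{-2\alpha})$. The $O(\cdot)$ constant depends on $\epsilon_{\mathrm{rec}}$, $C_{\mathrm{sink}}$ and $\kappa$.

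\textbf{What fails without the floor or without bounded dispersion.} If $\kappa$ cannot be bounded, the fallback $\TV_t^2\le\TV_t$ only gives $O(k^{-\alpha})$. Likewise, dropping the floor and using Lemma~\ref{lem:kl-tv-smoothed} gives only $\KLs=O(\TV)$. Both recover the weaker exponent noted at the end of the statement (Remark~\ref{rmk:application-sink}), which serves as a consistency check on the argument.
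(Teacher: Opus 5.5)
Your skeleton matches the paper's: sink-stability supplies the TV rate, the reconstruction floor supplies a quadratic KL--TV bound via Lemma~\ref{lem:kl-tv-bounded}, and you then average. You also isolate the real difficulty correctly: the stated hypotheses control only the first moment $\frac1n\sum_t\E\,\TV_t$, while the quadratic bound needs $\frac1n\sum_t\E\,\TV_t^2$. The step you use to close this, $\kappa=O(1)$ uniformly in $k$, does not follow from Assumption~\ref{ass:bounded-logits}. Bounded logits and Lipschitz continuity say that $\TV_t$ is small when the logit perturbation is small. They do not make the perturbation caused by deleting $X_{1:t-k-1}$ small at every step, nor do they force the spread of $\TV_t$ across steps to scale with its mean. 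For a concrete failure, suppose each step has a single decisive earlier token (never a sink) at distance $L_t$ with $\Prob(L_t>k)=C_{\mathrm{sink}}k^{-\alpha}/c_0$. Let $\TV(p_t,r_t)=c_0$ when that token is evicted and $0$ otherwise, where $c_0$ is a fixed constant compatible with bounded logits. Then the average sink-stability bound holds with equality for every $k$. But $\frac1n\sum_t\E\,\TV_t^2=c_0C_{\mathrm{sink}}k^{-\alpha}$, so $\kappa\asymp k^{\alpha}\to\infty$. Moreover, Pinsker gives average KL at least $2c_0C_{\mathrm{sink}}k^{-\alpha}$, so the $O(k^{-2\alpha})$ conclusion itself fails under first-moment hypotheses alone. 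Your $\kappa$ is therefore an extra assumption in disguise, and the agreement of mean and median exponents in the experiments does not supply it.

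The paper avoids this by working per step. In Remark~\ref{rmk:application-sink}(a) the TV rate is taken in the uniform almost-sure form of Definition~\ref{def:poly-mixing-unif}, effectively $\TV(p_t,\tilde p_t)\le C_{\mathrm{TS}}k^{-\alpha}$ at every $t$. The lemma then gives $\KL{p_t}{\tilde p_t}\le(2/\epsilon_{\mathrm{rec}})C_{\mathrm{TS}}^2k^{-2\alpha}$ pointwise, and averaging needs no Jensen step. To repair your argument, drop the $\kappa$ step and assume either a per-step sink-stability bound $\TV(p_t,r_t)\le C_{\mathrm{sink}}k^{-\alpha}$ a.s., or directly a second-moment bound $\frac1n\sum_t\E\,\TV(p_t,r_t)^2\le C'k^{-2\alpha}$. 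The example above shows that some such strengthening is unavoidable.

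Two smaller points. First, under the floor you do not need the small-TV threshold of Lemma~\ref{lem:kl-tv-bounded}. With $\Delta=p_t-\tilde p_t$, the chain $\KLs\le\chi^2\le\epsilon_{\mathrm{rec}}^{-1}\sum_x\Delta(x)^2\le(2/\epsilon_{\mathrm{rec}})\TV^2$ holds for every value of TV. Second, the residual you propose to ``absorb into $D^\star$'' when the full-cache decoder is inexact cannot be absorbed. Your cross term can be of order $\TV(p_t,\tilde p_t^{\,\mathrm{full}})\,k^{-\alpha}/\epsilon_{\mathrm{rec}}$, which depends on $k$ and dominates $k^{-2\alpha}$ when $D^\star>0$. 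State the result for the exact decoder ($D^\star=0$), which is effectively the only case the paper's argument covers as well.
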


\begin{proof}[Proof sketch]
By the sink-stability condition, the sink-plus-recent context
approximates the full conditional in TV at rate $C_{\mathrm{sink}}
k^{-\alpha}$. This is an \emph{added assumption}, not a consequence
of Definition~\ref{def:poly-mixing} alone: the sink set is a
non-contiguous subset (a fixed prefix plus a recent window), and
adding sink tokens to a recent window is not guaranteed by
truncation sensitivity to preserve the $k^{-\alpha}$ rate. Under
Assumption~\ref{ass:bounded-logits}, the model's next-token
distribution satisfies $p_t(x) \geq \epsilon_{\min}$ for all $x$;
Lemma~\ref{lem:kl-tv-bounded} (appendix) then yields
$\KLs(p_t \,\|\, \tilde p_t) \leq (2/\epsilon_{\min}) \TV(p_t,
\tilde p_t)^2$ in the small-TV regime, giving the KL distortion gap
$O(k^{-2\alpha})$ at a $k$-token budget.
\end{proof}

\paragraph{Why $\KLs \leq C \cdot \TV^2$ and not Pinsker.}
The standard Pinsker inequality is $\TV \leq \sqrt{\KLs/2}$,
yielding $\KLs \geq 2\TV^2$---a lower bound on $\KLs$, not an upper
bound. The opposite direction $\KLs \leq C \cdot \TV^2$ requires
additional structure on the distributions. Under
Assumption~\ref{ass:bounded-logits}, the next-token distribution
$p_t$ is bounded away from zero by
$\epsilon_{\min} := V^{-1} e^{-2B}$. For two such distributions
with $\TV(p, q) \leq \epsilon_{\min}/2$, the quadratic local
approximation $\KLs(p \,\|\, q) = \chi^2(p, q)/2 + O(\TV^3)$
combined with $\chi^2(p, q) \leq \TV(p, q)^2/\epsilon_{\min}$ yields
the desired bound; we prove this as Lemma~\ref{lem:kl-tv-bounded}.

The corollary predicts a power-law KL distortion gap with exponent
$2\alpha$, where $\alpha$ is the TV-decay exponent of
Theorem~\ref{thm:E2}. This yields a quantitative
internal-consistency prediction with two empirical checks: (i) the
absolute values of $\alpha$ from the TV measurement and $2\alpha$
from the KL measurement, and (ii) the ratio between them should
bracket $2$. Both are tested in
Section~\ref{sec:experiments}.

\paragraph{Comparison to geometric mixing.}
Under the geometric assumption $D \propto \rho^k$, the distortion
gap vanishes exponentially: $k \approx 30$ suffices for $D < 10^{-2}$
at $\rho = 0.85$. Under polynomial truncation sensitivity with $\alpha = 0.5$ (so
KL decay exponent $2\alpha = 1.0$), achieving $D < 10^{-2}$ requires
$k \sim 10^{2}$ tokens---a $3$--$4\times$ increase, not the
exponential separation that one might naively expect from changing
the decay form. The empirical sink-plus-recent measurements
(Appendix Figure~\ref{fig:topk}) follow the polynomial decay closely:
Natural at $k = 512$ yields $D \approx 0.04$, consistent with
$k^{-1.04}$.

\paragraph{Sub-optimality vs.\ continuous-latent schemes.}
The sink-plus-recent rate is $O(k \log V)$ bits per token; for
$k = 128$ and $V = 32{,}000$, this is approximately $1900$ bits per
token, two orders of magnitude above the lower bound $R^\star(D)$.
The gap motivates continuous-latent schemes such as MLA
\citep{deepseek2024mla}, analyzed under our framework in
Conjecture~\ref{thm:intrinsic-dim}.

\paragraph{Relation to heavy-hitter eviction.}
The corollary applies to sink-plus-recent eviction, where the
retained tokens are determined by position (first few + last
$k - O(1)$). Heavy-hitter eviction schemes such as H2O
\citep{zhang2023h2o} and Scissorhands
\citep{liu2023scissorhands} select tokens based on attention scores
rather than position. The connection to our framework requires an
additional lemma stating that high-attention tokens approximately
coincide with sinks plus recent positions; this is an empirical
property of trained models and is not subsumed by mixing alone. We
do not analyze heavy-hitter schemes in this work.

\section{Empirical Validation}
\label{sec:experiments}

We first test two predictions of the framework on Qwen2.5-0.5B
\citep{qwen2024}, then replicate across models and domains:

\begin{enumerate}[label=\textbf{(P\arabic*)},leftmargin=*,itemsep=2pt,topsep=2pt]
\item \textbf{Polynomial TV decay.} The
conditional-distribution shift $\widehat{\TV}_w$ decays as
$w^{-\alpha}$ for some $\alpha > 0$, with power-law fit superior to
exponential fit.

\item \textbf{Sink-plus-recent KL decay with doubled exponent.} The
sink-plus-recent KL distortion decays as $k^{-2\alpha}$, where
$\alpha$ is the TV-decay exponent from (P1). The factor of $2$ in
the exponent follows from the quadratic local relation between
$\KLs$ and $\TV$ under bounded log-density
(Lemma~\ref{lem:kl-tv-bounded}, Corollary~\ref{cor:topk}).
\end{enumerate}

\paragraph{Experimental setup.}
We use four open models spanning two families and a $6\times$ range
of parameter counts---Qwen2.5-0.5B/1.5B/3B \citep{qwen2024} and
SmolLM2-1.7B \citep{allal2025smollm2}---with FP16 precision on a
single $16$\,GB GPU (NVIDIA RTX 5080), batch size $1$ and no
gradient computation. Qwen2.5-0.5B serves as the primary model for
the detailed ablations; the remaining models are used for the
cross-model replication. For the primary short-window and cross-model
measurements we sample $100$ prefixes of length $1024$
tokens from each of two domains: NLTK Gutenberg books (concatenation
of classic literary works, $\sim 6.4$M characters) as a natural
language source, and a concatenation of large Python source files
from prominent open-source projects (numpy, pandas, CPython,
requests; $\sim 0.6$M characters) as a code source. The
domain-robustness experiment (\S\ref{app:crossdomain}) adds Wikipedia
prose and structured JSON, and the long-window tail
(\S\ref{sec:exp-longwindow}) uses $16{,}500$-token prefixes. All
measurements use a position-preserving truncation protocol (below),
which we verify is free of positional-encoding artifacts
(\S\ref{sec:exp-ablation}).

\paragraph{Measurement protocol.}
For each prefix $X_{1:t}$ and each window size $w$, we measure
$\widehat{\TV}_w = \TV(p_t, p_t^{(w)})$ where $p_t$ is the full
next-token distribution and $p_t^{(w)} = p(\cdot \mid X_{t-w:t-1})$
is the distribution conditioned on the last $w$ tokens. To avoid
conflating genuine truncation sensitivity with a positional-encoding
artifact (\S\ref{sec:limitations}), we adopt a
\emph{position-preserving} protocol: the retained window is fed with
its \emph{original} absolute position indices $t-w, \ldots, t-1$
rather than reset to $0, \ldots, w-1$. We use a short window grid
$w \in \{2, 4, 8, 16, 32, 64, 128, 256\}$ for the primary and
cross-model measurements, an extended grid up to $w = 4096$ for the
fit competition, and a long-prefix tail sweep up to $w = 8192$
(with $16{,}500$-token prefixes) for the robustness check in
\S\ref{sec:exp-longwindow}. For
each $k \in \{8, 16, 32, 64, 128, 256, 512\}$, we measure
$\KL{p_t}{p_t^{\text{Top-}K}}$ where $p_t^{\text{Top-}K}$ is the
distribution conditioned on the first $4$ tokens (attention sinks)
plus the most recent $k - 4$ tokens. We also evaluate a Random-$K$
baseline conditioning on a uniformly random subset of $k$ tokens
preserving order. Curve fits are performed in log space; for model
selection we compare against exponential, stretched-exponential, and
broken-power-law alternatives by AIC (\S\ref{sec:exp-longwindow}).

\paragraph{Results: power-law decay of TV.}
Figure~\ref{fig:mixing} shows the measured TV decay on the primary
model (Qwen2.5-0.5B). Over the short-window grid, power-law fits
yield exponents $\alpha_{\text{nat}} = 0.44 \pm 0.05$ and
$\alpha_{\text{code}} = 0.38 \pm 0.06$ (bootstrap $95\%$ CI from
$300$ resamples over prefixes; intervals
$[0.39, 0.48]$ and $[0.33, 0.46]$), with log-RMSE values:

\begin{center}
\small
\setlength{\tabcolsep}{4pt}
\begin{tabular}{lcc}
\toprule
Fit form & Natural (log-RMSE) & Code (log-RMSE) \\
\midrule
Exponential $C\rho^w$ & $0.31$ & $0.20$ \\
Power law $C w^{-\alpha}$ & $\mathbf{0.14}$ & $\mathbf{0.08}$ \\
\bottomrule
\end{tabular}
\end{center}

The power-law fit improves log-RMSE by roughly $2$--$2.5\times$
relative to exponential, supporting prediction (P1). We do not claim a
\emph{strict} power law: a stretched exponential can match the
in-sample curvature marginally better, so the defensible claim is that
the decay is robustly \emph{sub-geometric}, with a power law providing
a parsimonious and extrapolatively strong fit
(\S\ref{sec:exp-longwindow}). The exponents
are recovered independently from the sink-plus-recent KL decay
(below) and replicate across models (\S\ref{sec:exp-crossmodel}).

\paragraph{Results: doubled sink-plus-recent exponent.}
The sink-plus-recent KL decay
(Appendix~\ref{app:robustness}, Figure~\ref{fig:topk})
yields power-law exponents $\alpha_{\text{nat,KL}} = 1.04$ and
$\alpha_{\text{code,KL}} = 0.74$, with high fit quality in both
cases. The ratios to the TV-decay exponents from (P1) are
\[
\frac{\alpha_{\text{nat,KL}}}{\alpha_{\text{nat,TV}}}
= \frac{1.04}{0.44} = 2.38,
\quad
\frac{\alpha_{\text{code,KL}}}{\alpha_{\text{code,TV}}}
= \frac{0.74}{0.38} = 1.93,
\]
bracketing the factor-of-two predicted by the local quadratic
KL--TV approximation (Corollary~\ref{cor:topk},
Lemma~\ref{lem:kl-tv-bounded}). A direct check of the underlying
relation is shown in Appendix~\ref{app:kltv-ablation}
(Figure~\ref{fig:kltv}): plotting $\KLs$ against
$\TV^2$ across budgets yields an approximately linear,
through-the-origin trend in both domains, consistent with the
bounded-floor relation $\KLs \le C(\epsilon_{\min})\,\TV^2$ holding
with an effectively constant $C$ over the measured range. This is the
internal-consistency check (P2): two independent
measurements---window-truncation TV decay and sink-plus-recent
KL decay---are related by the local quadratic KL--TV approximation,
and the measured ratios are consistent with the prediction. The
agreement is empirical evidence supporting the framework's claim
that a single underlying TV-exponent $\alpha$ governs both the
sliding-window approximation and the sink-plus-recent distortion
gap. We do not claim a deterministic information-theoretic
identity between the two exponents; the doubling holds in the
bounded-floor regime of Lemma~\ref{lem:kl-tv-bounded}, which the
$\KLs$-vs-$\TV^2$ trend indicates the measured distributions occupy.

\paragraph{Results: sink-plus-recent versus Random-$K$.}
The Random-$K$ baseline conditions on a uniformly random subset of
$k$ tokens (preserving order); this isolates the contribution of
\emph{positional structure} to the sink-plus-recent performance.
Random-$K$ exhibits essentially
no decay over the measured range (Figure~\ref{fig:topk}): the KL distortion at $k = 8$
($5.6$ Natural, $9.6$ Code) is comparable to that at $k = 512$
($3.4$, $4.1$). At $k = 512$, sink-plus-recent attains roughly
$85\times$ (Natural) and $30\times$ (Code) lower median KL distortion
than Random-$K$ (a ratio of median KL values; raw medians in
Table~\ref{tab:randomk}). The asymmetric retention of sinks and
recent tokens---not the budget alone---is the operative mechanism.

\paragraph{Robustness (positional ablation and long-window tail).}
Two robustness checks, detailed in
Appendix~\ref{app:robustness}, support the measurement. First, a
\emph{positional-encoding ablation} confirms the decay is genuine
content forgetting rather than a position-shift artifact: a
position-preserving protocol (retained tokens keep their original
absolute indices) and the naive fresh-sequence protocol give
exponents that agree to three decimals
($\alpha_{\text{nat}} = 0.437$ vs.\ $0.438$;
$\alpha_{\text{code}} = 0.383$ for both;
Figure~\ref{fig:ablation}). All measurements in this section use the
position-preserving protocol. Second, with a long prefix
($16{,}500$ tokens) the power law persists with the short-window
exponent out to $w = 8192$---well beyond the deployed $4$k
regime---while an exponential is rejected by an order of magnitude in
extrapolation (Figure~\ref{fig:longwindow}); we use a long prefix
specifically so that the largest window still truncates a substantial
fraction of the context, avoiding the measurement-boundary effect
that arises when the window approaches the prefix length.

\paragraph{Cross-model replication.}
\label{sec:exp-crossmodel}
To test whether polynomial truncation sensitivity is an artifact of
a single model, we measure the TV exponent on four models spanning
$0.5$--$3$B parameters and two families
(Table~\ref{tab:crossmodel}). The power law holds in every case, and
two trends emerge. First, the natural-language exponent increases
monotonically with model scale within the Qwen2.5 family
($0.44 \to 0.51 \to 0.58$ from $0.5$B to $3$B), indicating that
larger models recover the next-token distribution from a short
recent window more quickly. This scaling trend is suggestive but
rests on three Qwen2.5 sizes only (with the larger extension point
quantized), so we read it as indicative within the Qwen2.5 family
rather than a general claim.
Second, the ordering $\alpha_{\text{nat}} > \alpha_{\text{code}}$ is
consistent across all four models. The phenomenon is thus not
specific to one model or family, and the exponent---while
model-dependent in magnitude---is a stable, measurable quantity, as
the theory requires. A separate $4$-bit run extending the Qwen2.5
family to $7$B (natural-language domain) gives
$\alpha_{\text{nat}} \approx 0.57$, confirming that polynomial
truncation sensitivity persists at the $7$B scale; the exponent is
essentially unchanged from $3$B, suggesting it saturates rather than
continuing to grow.

\begin{table}[t]
\centering
\small
\setlength{\tabcolsep}{6pt}
\begin{tabular}{lccc}
\toprule
Model & Params & $\alpha_{\text{nat}}$ & $\alpha_{\text{code}}$ \\
\midrule
Qwen2.5-0.5B   & $0.5$B & $0.437$ & $0.383$ \\
Qwen2.5-1.5B   & $1.5$B & $0.508$ & $0.439$ \\
SmolLM2-1.7B   & $1.7$B & $0.574$ & $0.371$ \\
Qwen2.5-3B     & $3$B   & $0.575$ & $0.417$ \\
\bottomrule
\end{tabular}
\caption{Cross-model TV-decay exponents (position-preserving
protocol, short-window grid, $100$ prefixes per domain). The power
law holds for every model; $\alpha_{\text{nat}} > \alpha_{\text{code}}$
throughout, and $\alpha_{\text{nat}}$ grows with scale within the
Qwen2.5 family.}
\label{tab:crossmodel}
\end{table}

\paragraph{Comparison of domains: natural language is more truncation-sensitive than code.}
\label{sec:exp-domains}
Across all four models, the natural-language exponent exceeds the
code exponent: on the primary model, $\alpha_{\text{nat}} = 0.44$
versus $\alpha_{\text{code}} = 0.38$, and the ordering
$\alpha_{\text{nat}} > \alpha_{\text{code}}$ holds for every model in
the cross-model sweep (\S\ref{sec:exp-crossmodel},
Table~\ref{tab:crossmodel}). A larger exponent corresponds to
\emph{faster} decay of truncation sensitivity, so the next-token
distribution in literary text is, on average, more quickly
recovered from a short recent window than in Python source. We offer
two non-mutually-exclusive explanations as hypotheses:
(i) code carries persistent, long-range structure---open scopes,
imported names, variables defined far earlier---that keeps distant
tokens weakly informative for the next token, slowing the average
decay; (ii) much of literary next-token prediction is governed by
local lexical and grammatical context, so that beyond a modest
window the residual dependence on distant tokens is small on
average. We emphasize that the relevant quantity is the
\emph{average} dependence decay across positions, not the worst-case
long-range structure: code clearly contains long-range dependencies,
but they bind a minority of next-token decisions. To test whether
the polynomial form is specific to literary text and Python, we
measured the TV exponent on two further domains---Wikipedia-style
prose and structured JSON---on the primary model. The power law is
favored over an exponential by $2$--$4\times$ in log-RMSE in every
domain, while the exponent varies only modestly
($\alpha \in [0.39, 0.44]$): the polynomial \emph{form} is
domain-robust, with a weak, not strongly systematic, dependence of
$\alpha$ on domain structure at this scale
(Appendix~\ref{app:crossdomain}, Table~\ref{tab:crossdomain}).

\paragraph{Cache-policy degradation.}
\label{sec:exp-policy}
The distribution-level findings above translate directly into the
degradation curves of concrete cache policies. On the primary model
we compare five policies at a fixed budget $k$---full context,
sliding window, sink-plus-recent, Random-$K$ (a uniformly random
$k$-token subset), and a simple attention-score heavy-hitter---by the
median $\KLs(\text{full}\,\|\,\text{policy})$ and the mean increase in
negative log-likelihood (NLL) of the actual next token
(Figure~\ref{fig:policy}, Appendix~\ref{app:robustness}). The heavy-hitter policy here selects the
$k$ tokens with the largest attention mass from the final query
position, averaged over the last layer's heads (a one-step,
last-query score rather than H2O-style cumulative attention), with
the anchor and most-recent token always retained and the
position-preserving protocol applied to the selected set; it is thus
a lightweight attention-score baseline, not a tuned eviction policy.
Three observations stand out. First, the
recency-based policies (sliding and sink-plus-recent) dominate the
others by one to two orders of magnitude: at $k = 512$ they incur
median KL $\approx 0.011$ (Natural) and $\approx 0.003$ (Code),
versus $1.9$ and $2.1$ for Random-$K$---a roughly $100$--$700\times$
reduction. Second, the heavy-hitter policy sits between: better than
random (attention scores do carry signal) but well behind recency,
consistent with the sink-plus-recent positions capturing most of the
high-mass context. Third, every policy's distortion \emph{decays as a
power law in $k$}, the predicted consequence of polynomial truncation
sensitivity. Sliding and sink-plus-recent are nearly
indistinguishable here (the attention sink adds little at prefix
length $1024$); their separation is expected to grow at the much
longer contexts where sinks were originally motivated. Full per-domain
curves, the NLL panels, and the heavy-hitter/sink overlap are in
Appendix~\ref{app:robustness}.

\paragraph{Joint and query-channel sensitivity (validating the side-information rate).}
The operational achievability of Appendix~\ref{app:operational} is
driven by the decay of two channels: the predictive channel
(sensitivity $\alpha_X$, established above) and the \emph{quantized
next-step query} that serves as decoder side information (sensitivity
$\alpha_Y$). We measure both, and the joint operator-aware exponent
$\alpha_J$ of the super-symbol $(p_t, \kappa_M(\cdot\mid q_t))$, on two
model scales (Qwen2.5-0.5B and 1.5B; final layer, context $2048$,
$100$ held-out contexts each). The side information is a fixed random
projection ($d_r{=}64$) of the pre-RoPE query, softmax-quantized to
$M$ codepoints with the inverse temperature calibrated on a disjoint
split to a target informativeness $I(Q;Y)/\log M \approx 0.20$.
Figure~\ref{fig:joint-sensitivity} summarizes the result. \emph{(i)}
The predictive channel is $M$-independent, with $\alpha_X=0.41$
$[0.35,0.47]$ at $0.5$B and $0.45$ $[0.39,0.51]$ at $1.5$B.
\emph{(ii)} The quantized query channel decays \emph{more slowly} than
the predictive channel at both scales, and its exponent \emph{falls
sharply with scale}, $\alpha_Y\approx0.33\to0.19$ at $M{=}16$;
consequently the query total variation overtakes the predictive one
near $w\approx128$ at $1.5$B (panel~a), whereas at $0.5$B the two only
converge by $w{=}512$ without crossing. We read the slow query decay
as side information carrying context dependence that persists past the
window---the property that makes it valuable to the Wyner--Ziv
decoder---while cautioning that this is an interpretation of the
measured crossover, not a separate measurement. \emph{(iii)} The joint
exponent satisfies $\alpha_Y<\alpha_J<\alpha_X$ at both scales and
across all $M$ (primary $\alpha_J\in[0.34,0.38]$ at $0.5$B,
$[0.29,0.33]$ at $1.5$B), and is stable across operating point,
projection seed, projection dimension, and layer band ($19$
configurations per scale, the trivial layer-$0$ query excepted);
$\alpha_J$ drifts down only mildly with scale ($0.35\to0.32$ at
$M{=}16$). As in E1, a power law decisively beats a pure exponential
at both scales---by AIC (e.g.\ $-45$ vs $-34$ at $0.5$B, $-48$ vs
$-35$ at $1.5$B) and by held-out extrapolation (mean-squared error
$0.30$ vs $0.68$ and $0.24$ vs $0.67$); a stretched exponential
captures mild residual curvature slightly better in-sample, but the
decay is clearly sub-geometric. The cross-scale comparison in
panel~(b) is made at $M{=}16$, where the quantizer is well-calibrated
at both scales; at $0.5$B the calibration saturates for
$M\in\{32,64\}$ (achieved informativeness $\approx0.03$--$0.05$ rather
than $0.20$, a consequence of more clustered query geometry at the
smaller scale), so those operating points are reported but excluded
from the cross-scale trend. Consistent with
Corollary~\ref{cor:op-exponent}, the resulting suffix-only convergence
is governed by the slower query exponent $\alpha_Y$, not by
$\alpha_X$ or the joint summary $\alpha_J$. We observe $\alpha_Y$
decrease across the two scales we measured; with only two scales this
is a measured decrease rather than an established scaling trend.

\definecolor{cPred}{HTML}{1f77b4}
\definecolor{cQuery}{HTML}{d62728}
\definecolor{cJoint}{HTML}{2ca02c}
\begin{figure*}[t]
\centering
\begin{tikzpicture}
\begin{groupplot}[
  group style={group size=2 by 1, horizontal sep=2.4cm},
  width=0.46\textwidth, height=5.2cm,
  tick label style={font=\footnotesize}, label style={font=\small},
  title style={font=\small},
  legend style={font=\footnotesize, draw=none, fill=none},
]
\nextgroupplot[
  xmode=log, ymode=log, log basis x=2,
  xlabel={window $w$}, ylabel={mean TV},
  title={(a) predictive vs.\ query channel ($M{=}16$)},
  xtick={8,16,32,64,128,256,512}, xticklabels={8,16,32,64,128,256,512},
  ytick={0.1,0.15,0.2,0.3,0.5,0.7},
  yticklabels={0.10,0.15,0.20,0.30,0.50,0.70},
  ymin=0.075, ymax=0.80, legend pos=south west,
  grid=both, grid style={gray!18}]
\addplot[cPred, mark=*, thick] coordinates {
 (8,0.606)(16,0.484)(32,0.347)(64,0.276)(128,0.206)(256,0.138)(512,0.093)};
\addlegendentry{$\widehat{\TV}^{\mathrm{pred}}$, $\alpha_X{=}0.45$}
\addplot[cQuery, mark=square*, thick] coordinates {
 (8,0.265)(16,0.266)(32,0.258)(64,0.237)(128,0.209)(256,0.155)(512,0.121)};
\addlegendentry{$\widehat{\TV}^{Y}$, $\alpha_Y{=}0.19$}
\draw[gray, dashed] (axis cs:128,0.075) -- (axis cs:128,0.80);
\node[font=\scriptsize, anchor=south east] at (axis cs:128,0.46) {crossover};
\nextgroupplot[
  xlabel={model scale}, ylabel={exponent at $M{=}16$},
  title={(b) cross-scale exponents},
  xtick={0,1}, xticklabels={0.5B,1.5B}, xmin=-0.45, xmax=1.45,
  ymin=0.10, ymax=0.62,
  legend style={at={(0.5,0.98)}, anchor=north, legend columns=-1,
  /tikz/every even column/.append style={column sep=5pt}},
  grid=both, grid style={gray!18}]
\addplot[cPred, mark=*, thick, error bars/.cd, y dir=both, y explicit]
 coordinates {
 (0,0.407) += (0,0.062) -= (0,0.059)
 (1,0.447) += (0,0.063) -= (0,0.060)};
\addlegendentry{$\alpha_X$}
\addplot[cJoint, mark=triangle*, thick, error bars/.cd, y dir=both, y explicit]
 coordinates {
 (0,0.352) += (0,0.044) -= (0,0.043)
 (1,0.321) += (0,0.041) -= (0,0.040)};
\addlegendentry{$\alpha_J$}
\addplot[cQuery, mark=square*, thick, error bars/.cd, y dir=both, y explicit]
 coordinates {
 (0,0.334) += (0,0.068) -= (0,0.067)
 (1,0.188) += (0,0.045) -= (0,0.042)};
\addlegendentry{$\alpha_Y$}
\end{groupplot}
\end{tikzpicture}
\caption{\textbf{Joint and query-channel truncation sensitivity.}
(a) At $1.5$B (Qwen2.5-1.5B, $M{=}16$): mean total variation of the
predictive channel ($\widehat{\TV}^{\mathrm{pred}}$) and the
quantized-query side-information channel ($\widehat{\TV}^{Y}$) versus
window $w$ (log--log). The query channel decays more slowly
($\alpha_Y<\alpha_X$) and overtakes the predictive channel near
$w\approx128$. (b) Cross-scale fitted exponents at $M{=}16$ with
bootstrap $95\%$ confidence intervals (Qwen2.5-0.5B and 1.5B). At both
scales $\alpha_Y<\alpha_J<\alpha_X$; with scale the predictive
exponent rises slightly while the query exponent falls sharply
($0.33\to0.19$), making the query channel the slower of the two at
$1.5$B---the source of the overtake in panel~(a).}
\label{fig:joint-sensitivity}
\end{figure*}

\paragraph{Scope and limitations of the empirical study.}
The measurements span four models ($0.5$--$3$B, two families), four
domains on the primary model (Gutenberg, Python, Wikipedia prose, and
structured JSON) with two domains used for cross-model replication,
$n = 100$ prefixes per domain, windows to $w = 8192$, a
positional-encoding ablation, and a five-policy cache-degradation
comparison. The power law and the ordering
$\alpha_{\text{nat}} > \alpha_{\text{code}}$ replicate across all
four models, the positional confound is ruled out
(\S\ref{sec:exp-ablation}), and the polynomial form holds out to
$w = 8192$ without the boundary artifact of shorter-prefix sweeps
(\S\ref{sec:exp-longwindow}). The principal remaining gaps are the
restriction to models $\le 3$B (set by the $16$\,GB single-GPU
budget) and the relatively small code corpus. The specific numerical
values of $\alpha$ may still depend on training-data composition and
tokenizer. Extension to $7$--$8$B models and broader domain coverage
(ArXiv/LaTeX, conversational, multilingual, and larger code corpora)
is left to future work with larger GPUs.

\section{From Bounds to Systems: Predicting Downstream Behaviour}
\label{sec:systems}

Corollary~\ref{cor:topk} predicts that a sink-plus-recent policy with
budget $k$ incurs distortion $D = D^\star + O(k^{-2\alpha})$, and
Theorem~\ref{thm:E2} ties the required window to the
truncation-sensitivity exponent. These are statements about predictive
\emph{distributions}. We now ask whether the same exponent governs
quantities a practitioner observes---next-token behaviour, perplexity,
and long-context task accuracy. All measurements use the
position-preserving protocol of \S\ref{sec:experiments}; the first two
are on the primary model (Qwen2.5-0.5B), the third on Qwen2.5-1.5B.

\paragraph{Next-token agreement tracks the TV exponent.}
The most immediate consequence of a small TV distortion is that the
budgeted cache should usually predict the \emph{same} next token as the
full cache. Figure~\ref{fig:dist} reports the top-$1$ \emph{disagreement}
rate---the fraction of positions at which the $\arg\max$ of the
budget-$k$ distribution differs from the full-cache $\arg\max$---on the
Natural domain. Disagreement falls as a power law in $k$ with exponent
$0.37$ (sliding) and $0.42$ (sink-plus-recent), close to the TV exponent
$\alpha_{\text{nat}} = 0.44$ of Figure~\ref{fig:mixing} (log-RMSE
$\le 0.09$); the exponents sit slightly below $\alpha_{\text{nat}}$
because disagreement is a coarsened, thresholded functional of TV. This
is the truncation-sensitivity law surfacing in a quantity that requires
no access to the reference distribution.

\begin{figure}[t]
\centering
\begin{tikzpicture}
\begin{loglogaxis}[
  width=11.5cm, height=7cm,
  xlabel={Cache budget $k$ (tokens)},
  ylabel={Top-1 disagreement rate},
  xmin=1.5, xmax=350, ymin=0.1, ymax=1.2,
  legend pos=south west,
  legend style={font=\scriptsize, fill opacity=0.85, draw=gray!50},
  grid=major, grid style={dashed,gray!30},
  tick label style={font=\footnotesize}, label style={font=\footnotesize}
]
\addplot[blue, very thick, mark=*, mark size=2.2pt] coordinates {
  (2,0.833) (4,0.759) (8,0.630) (16,0.500)
  (32,0.359) (64,0.275) (128,0.210) (256,0.149)
};
\addlegendentry{Sliding ($\alpha=0.37$)}
\addplot[red, very thick, mark=square*, mark size=2.2pt, densely dashed] coordinates {
  (2,0.999) (4,0.969) (8,0.760) (16,0.538)
  (32,0.395) (64,0.274) (128,0.209) (256,0.152)
};
\addlegendentry{Sink+Recent ($\alpha=0.42$)}
\end{loglogaxis}
\end{tikzpicture}
\caption{Top-$1$ disagreement with the full cache on Qwen2.5-0.5B
(Natural domain): the fraction of positions where the budgeted and full
caches predict a different most-likely token. Disagreement follows a
power law in the budget with exponent close to the TV exponent
$\alpha_{\text{nat}}=0.44$, the predicted downstream signature of
polynomial truncation sensitivity.}
\label{fig:dist}
\end{figure}

\paragraph{Memory--quality tradeoff.}
Figure~\ref{fig:tradeoff} plots perplexity on held-out Natural text
against retained budget for three suffix policies. Both recency-based
policies recover the full-cache perplexity ($13.2$) to within $3\%$ by
$k = 512$, whereas Random-$K$ stays catastrophic ($>3000$ at $k = 512$)
---the distributional gap of \S\ref{sec:exp-policy} now visible in a
task-level metric. The mean NLL increase over the full cache, which in
expectation equals the mean KL distortion, decays as $k^{-1.05}$; this
matches the doubling prediction $2\,\alpha_{\TV} = 1.21$ for the
sink-plus-recent policy, whose own TV exponent ($0.61$) is steeper than
the sliding-window value and is the correct base for the factor of two
(Corollary~\ref{cor:topk}, Lemma~\ref{lem:kl-tv-bounded}). The residual
gap ($1.05$ vs.\ $1.21$) reflects the smallest budgets lying outside the
asymptotic regime and the local---not exact---nature of the quadratic
KL--TV relation.

\begin{figure}[t]
\centering
\begin{tikzpicture}
\begin{loglogaxis}[
  width=11.5cm, height=7cm,
  xlabel={Retained budget $k$ (tokens)},
  ylabel={Perplexity},
  xmin=6, xmax=700, ymin=10, ymax=40000,
  legend pos=north east,
  legend style={font=\scriptsize, fill opacity=0.85, draw=gray!50},
  grid=major, grid style={dashed,gray!30},
  tick label style={font=\footnotesize}, label style={font=\footnotesize}
]
\addplot[blue, very thick, mark=*, mark size=2.2pt] coordinates {
  (8,84.80) (16,39.96) (32,24.23) (64,17.35) (128,15.67) (256,14.60) (512,13.57)
};
\addlegendentry{Sliding}
\addplot[red, very thick, mark=square*, mark size=2.2pt, densely dashed] coordinates {
  (8,206.61) (16,52.61) (32,26.11) (64,17.81) (128,15.71) (256,14.78) (512,13.56)
};
\addlegendentry{Sink+Recent}
\addplot[gray!70, thick, mark=triangle, mark size=2.2pt] coordinates {
  (8,27095.6) (16,26826.5) (32,22749.5) (64,21339.9) (128,18734.0) (256,11297.3) (512,3188.7)
};
\addlegendentry{Random-$K$}
\addplot[black, dotted, very thick] coordinates {(6,13.2) (700,13.2)};
\addlegendentry{Full cache ($13.2$)}
\end{loglogaxis}
\end{tikzpicture}
\caption{Perplexity versus retained budget on Qwen2.5-0.5B (Natural
domain). Recency-based policies recover the full-cache perplexity by
$k=512$ while Random-$K$ does not; the mean NLL increase (mean KL)
decays as $k^{-1.05}$, matching the factor-of-two prediction
$2\alpha_{\TV}=1.21$ for sink-plus-recent.}
\label{fig:tradeoff}
\end{figure}

\paragraph{Long-context retrieval.}
Finally we probe a retrieval task in the RULER style
\citep{hsieh2024ruler}: a key--value ``magic number'' (the needle) is
inserted at a controlled depth in a Natural-language haystack of length
$L \in \{512, 1024, 2048, 4096\}$ and the model must return the value,
in a single-needle and a multi-key (distractor) variant. With the full
cache the model is essentially perfect ($\ge 0.98$ at every length),
confirming the task is well posed. Under a budget
(Figure~\ref{fig:ruler}), accuracy is controlled by whether the retained
window \emph{reaches} the needle: it is near zero until the budget
covers the needle's depth, then rises sharply, and longer contexts
demand proportionally larger budgets ($k = 512$ suffices at $L = 512$
but not at $L \ge 2048$). Sliding and sink-plus-recent behave almost
identically, since covering a mid-context needle depends on window
\emph{span} rather than on sinks or recency. This is the task-level
shadow of the window-scaling law (Theorem~\ref{thm:E2})---a suffix
policy must retain enough span to cover the information it needs. We
stress that retrieval accuracy is \emph{coverage}-dominated and hence a
coarser probe than the distributional metrics above; the polynomial
decay law itself is seen most cleanly in
Figures~\ref{fig:dist}--\ref{fig:tradeoff}.

\begin{figure}[t]
\centering
\begin{tikzpicture}
\begin{semilogxaxis}[
  width=11.5cm, height=7cm,
  xlabel={Cache budget $k$ (tokens)},
  ylabel={Single-needle accuracy},
  xmin=12, xmax=700, ymin=-0.05, ymax=1.08,
  legend pos=north west,
  legend style={font=\scriptsize, fill opacity=0.85, draw=gray!50},
  grid=major, grid style={dashed,gray!30},
  tick label style={font=\footnotesize}, label style={font=\footnotesize}
]
\addplot[black, dotted, very thick] coordinates {(12,1.0) (700,1.0)};
\addlegendentry{Full cache (all $L$)}
\addplot[blue, very thick, mark=*, mark size=2pt] coordinates {
  (16,0) (32,0) (64,0) (128,0) (256,0.36) (512,1.0)
};
\addlegendentry{$L=512$}
\addplot[teal, very thick, mark=square*, mark size=2pt] coordinates {
  (16,0) (32,0) (64,0) (128,0) (256,0.16) (512,0.34)
};
\addlegendentry{$L=1024$}
\addplot[orange, very thick, mark=triangle*, mark size=2pt] coordinates {
  (16,0) (32,0) (64,0) (128,0) (256,0) (512,0.10)
};
\addlegendentry{$L=2048$}
\addplot[red, very thick, mark=diamond*, mark size=2pt] coordinates {
  (16,0) (32,0) (64,0) (128,0) (256,0) (512,0)
};
\addlegendentry{$L=4096$}
\end{semilogxaxis}
\end{tikzpicture}
\caption{RULER-style single-needle retrieval on Qwen2.5-1.5B (sliding
policy). The full cache solves the task at every length (dotted); under
a budget, accuracy rises only once the retained window reaches the
needle's depth, and longer contexts require proportionally larger
budgets---the task-level signature of the window-scaling law.}
\label{fig:ruler}
\end{figure}

In sum, the truncation-sensitivity exponent is not merely a property of
predictive distributions: the same $\alpha$ governs next-token agreement
and, through the factor of two, perplexity degradation, while the
window-scaling characterisation predicts the budget a retrieval task
requires.

\section{Connections to Deployed Compression Schemes}
\label{sec:discussion}

We sketch how the framework relates to deployed schemes. Quantitative
values for windows $w \gg 256$ are \emph{extrapolations} of the
$w \le 8192$ measurements on Qwen2.5-0.5B and should be read as
predictions to be tested, not established claims; in particular we do
not assert specific distortion values for Mistral-class models, which
we did not measure.

\paragraph{Sliding-window schemes (StreamingLLM, Mistral).}
A sliding window plus a small initial-token prefix
\citep{xiao2024efficient} is the direct instantiation of
Theorem~\ref{thm:E2}; the attention sinks compensate for the initial
$O(1)$ tokens where the polynomial-sensitivity assumption is expected
to fail. Mistral's fixed $w = 4096$ \citep{jiang2023mistral7b} and
StreamingLLM's $w + k_{\mathrm{sink}}$ configurations are
\emph{qualitatively} consistent with polynomial rather than geometric
sensitivity: geometric mixing would predict $\rho^{w} \approx 0$ and
hence negligible distortion at such windows, whereas the reported
minor degradation versus full attention matches a slow polynomial
decay. The precise distortion at these windows on those models
remains to be measured.

\paragraph{Latent, quantized, and selection-based schemes.}
DeepSeek-V2's latent dimension $d_c = 512$ across $60$ layers
\citep{deepseek2024mla} is consistent with the intrinsic-dimension
bound (Conjecture~\ref{thm:intrinsic-dim}) and the multi-layer
allocation (Proposition~\ref{thm:multi-WZ}); this bound is
mixing-independent. INT2 mixed-precision quantization
\citep{liu2024kivi,kang2024gear} corresponds to the Bennett--Gersho
overhead $d_c \log(1/\Delta)$ in the same theorem. Prompt-aware
(SnapKV \citep{li2024snapkv}, Quest \citep{tang2024quest}) and
layer-adaptive (PyramidKV \citep{cai2024pyramidkv}, Ada-KV
\citep{feng2024adakv}) schemes implicitly exploit the truncation
sensitivity established here; a formal account of their gains over a
naive sliding window requires a selection-aware refinement of
Corollary~\ref{cor:topk}, which we leave to future work.

\section{Limitations}
\label{sec:limitations}

\paragraph{Empirical scope.}
Our direct measurements use four models (Qwen2.5-0.5B/1.5B/3B,
SmolLM2-1.7B), four domains on the primary model (two for cross-model
replication), $n = 100$ prefixes per domain, and window ranges up to
$w = 8192$. The power law and the ordering
$\alpha_{\text{nat}} > \alpha_{\text{code}}$ replicate across all
four models. A $4$-bit extension to Qwen2.5-7B (natural-language
domain) confirms the phenomenon persists at $7$B
($\alpha_{\text{nat}} \approx 0.57$); the FP16 measurements are
otherwise $\le 3$B parameters (the limit of a $16$\,GB single-GPU
budget), and the code corpus is relatively small. The
numerical values of $\alpha$ may still depend on training-data
composition and tokenizer. Full-precision replication on $7$--$8$B Llama- and
Mistral-class models and expanded domain coverage (ArXiv/LaTeX,
conversational, multilingual, and larger code corpora) remain
valuable and are left to future work with larger accelerators.

\paragraph{Positional-encoding treatment (addressed).}
Truncating $X_{1:t-1}$ to $X_{t-w:t-1}$ and re-feeding it shifts the
position indices of retained tokens under rotary or learned
encodings, which could inflate the measured TV. We use a
position-preserving protocol throughout (\S\ref{sec:exp-ablation},
retained tokens keep their original indices); it agrees with the
naive fresh-sequence protocol to three decimals
(Figure~\ref{fig:ablation}), so the decay is genuine content
forgetting rather than a positional artifact. We verify this on the
primary model only.

\paragraph{Measurement protocol vs.\ true cache eviction.}
Our truncation-sensitivity measurements re-feed the length-$w$ suffix
as a fresh, position-preserving sequence and read the model's
next-token distribution. This is \emph{not} identical to true KV-cache
eviction, in which the retained keys and values were computed while
the model still attended to the full past and are merely dropped at
read-out; the re-fed suffix never saw the evicted context. The two
coincide only insofar as attention to evicted positions is precisely
what truncation removes. Quantifying the gap---re-fed truncated
context versus a full forward pass with subsequent K/V deletion---is
left to future work, and our distortion statements should be read
against the re-fed protocol.

\paragraph{Rate-convergence gap (achievability is conjectural).}
The memory (window) scaling is tight: $\Theta(\varepsilon^{-1/\alpha})$,
matching the distortion upper bound (Theorem~\ref{thm:E2}) and the
suffix-only lower bound (Theorem~\ref{thm:window-lb}). At the level of
coding rate, the rigorous statement is the delayed operational
comparison of Appendix~\ref{app:operational}; an online single-letter
scheme attaining the converse rate exponent is only \emph{conjectured}
(Conjectures~\ref{thm:achievability},~\ref{thm:achievability-tight}),
because the supporting covering/auxiliary-mixing argument is not
established (Remark~\ref{rmk:exponent-gap}). We regard establishing an
online achievability---or discharging the information-spectrum
condition of Appendix~\ref{app:operational}---as the main theoretical
gap.

\paragraph{Dependent-source assumptions, isolated.}
The achievability analysis adapts finite-blocklength source-coding
machinery (developed for independent sources) to the dependent,
window-restricted source. Rather than assert the classical results
transfer automatically, we isolate the three places where dependence
enters as named assumptions---additive tilted-information
(Assumption~\ref{ass:additive-tilted}), finite-memory covering
regularity (Assumption~\ref{ass:covering-reg}), and forward
covariance-decay $(\dagger)$. Each holds for independent components
and is plausibly establishable for finite-memory sources
\citep{douc2018markov,tasci2024dispersion}, which we do not carry
out; we regard discharging them as the main theoretical gap.

\paragraph{Tail behavior at large $w$ (characterized to $8$k).}
Using a long ($16{,}500$-token) prefix to avoid boundary effects, we
measure the decay out to $w = 8192$ (\S\ref{sec:exp-longwindow}). The
form remains a single power law across more than five octaves, with a
long-range exponent close to---if anything slightly below---the
short-window estimate; the decay does \emph{not} accelerate at long
range. On the primary model the polynomial form is thus established
well past the deployed
$4$k regime. Behavior at $w \gg 8192$ (where prefixes must be longer
still) remains to be measured, and the Code measurement is noisier
than Natural owing to the smaller corpus.

\paragraph{Architecture assumption.}
The framework assumes pre-LayerNorm Transformers with standard
attention and feed-forward layers. State-space models such as
Mamba and RWKV violate the attention-specific structure used in
Lipschitz error propagation (Lemma~\ref{lem:prop}), and the
truncation sensitivity may take a different functional form; we do
not analyze these architectures.

\paragraph{Heavy-hitter eviction.}
Corollary~\ref{cor:topk} applies to sink-plus-recent eviction,
where retained tokens are determined by position. Heavy-hitter
schemes such as H2O \citep{zhang2023h2o} and Scissorhands
\citep{liu2023scissorhands} select tokens by attention score; the
connection to our framework requires an additional empirical
property of trained models (high-attention tokens approximately
coincide with sinks plus recent positions), which we do not
establish.

\paragraph{Mechanistic explanation of $\alpha$.}
We measure $\alpha$ empirically but do not derive it from
architectural properties. A theoretical prediction of $\alpha$ in
terms of model parameters (depth, width, attention head structure)
and training-data statistics would be a natural next step and is
presently open.

\section{Conclusion}
\label{sec:conclusion}

We developed an information-theoretic framework for KV cache
compression based on a sequential Wyner--Ziv formulation. The
central empirical observation is that the trained language model's
sensitivity to context truncation follows a \emph{polynomial}
rather than \emph{geometric} decay---a property of the trained
model under the evaluation distribution, distinct from a mixing
condition on the underlying data process. The central theoretical
contribution is a characterization of the memory requirement
\emph{of suffix-only cache policies}: a sliding-window scheme
achieves distortion $\varepsilon$ with window
$O(\varepsilon^{-1/\alpha})$, and---under an additional two-sided
Bayes-risk condition (Definition~\ref{def:two-sided})---a converse
shows $\Omega(\varepsilon^{-1/\alpha})$ is necessary within this
policy class (Theorems~\ref{thm:E2}, \ref{thm:window-lb},
Corollary~\ref{cor:tight}), pinning the per-token memory of
suffix-only schemes at $\Theta(\varepsilon^{-1/\alpha})$. Whether a
more general propagating or recurrent cache summary can beat this
scaling is an open problem. At the level of coding rate, a \emph{delayed} multi-letter
operational Wyner--Ziv comparison (Appendix~\ref{app:operational})
shows a suffix-only block code matches the full-context rate up to a
single side-information continuity penalty and a redundancy that
vanishes with block length, under quantized query side information and
explicit information-spectrum and de-binning-stability assumptions; we do \emph{not}
establish an online single-letter scheme attaining the converse
exponent.

Beyond these rigorous results, several extensions are stated only as
conjectures: an information-rate optimal-window-scaling argument
(Conjecture~\ref{thm:achievability}), a window-adaptive universal
scheme, and a continuous-latent dimension bound. A multi-layer
Lagrangian rate allocation and the sink-plus-recent corollary connect
the framework to deployed schemes. The factor of $2$ in the
sink-plus-recent KL exponent $O(k^{-2\alpha})$ arises from a
quadratic local relation between $\KLs$ and $\TV$ under bounded
log-density (Lemma~\ref{lem:kl-tv-bounded}), \emph{not} from
Pinsker's inequality, which provides only the converse direction.

Two independent measurements support the framework: the TV-decay
exponent of context truncation and the KL-decay exponent of the
sink-plus-recent scheme are consistent with the factor-of-two
predicted by the local quadratic KL--TV approximation, with
measured ratios $2.38$ (Natural) and $1.93$ (Code) bracketing the
predicted value. Confirmed across four models, four domains, windows
to $w = 8192$, and a positional-encoding-free protocol
(\S\ref{sec:experiments}), the polynomial law also predicts the
degradation curves of concrete cache policies---recency-based
eviction suppresses distortion by roughly two orders of magnitude
over random retention at equal budget. Extension to larger models
and broader domains is the natural next step.

\section*{Impact Statement}

This paper studies theoretical limits and empirical properties of
KV-cache compression in autoregressive language models. Better
understanding of these limits may enable inference systems that
reduce memory usage, cost, and energy consumption for long-context
models. The same gains may also reduce the deployment barrier for
large-scale long-context inference, potentially amplifying
downstream misuse risks such as surveillance and high-volume
automated persuasion. This work does not introduce a new dataset, a
deployable product, or human-subject experiments. Downstream
practitioners should evaluate privacy, copyright, and misuse risks
in their specific deployment contexts.

\bibliographystyle{icml2026}
\bibliography{references}
\appendix

\section*{Note on this Appendix}

This appendix proves the main results under two assumptions:
the polynomial truncation-sensitivity condition of
Definition~\ref{def:poly-mixing} (the form supported by
Section~\ref{sec:experiments}) and the classical geometric
$\rho$-mixing condition (Definition~\ref{def:mixing} below). The
geometric version is presented first as the cleaner template;
Section~\ref{app:polynomial-extension} extends each lemma to the
polynomial setting and proves Theorem~\ref{thm:E2-poly} (the
polynomial version of Theorem~\ref{thm:E2}). The achievability
conjecture (Conjecture~\ref{thm:achievability}) is argued
heuristically in Appendix~\ref{app:achievability}, the window lower
bound (Theorem~\ref{thm:window-lb}) in Appendix~\ref{app:window-lb}, and
the universal scheme (Conjecture~\ref{thm:universal}) in
Appendix~\ref{app:universal}; together the window upper and lower
bounds establish the tight memory characterization of
Corollary~\ref{cor:tight}. Theorem~\ref{thm:asymptotic}, Proposition~\ref{thm:finite-sample},
Conjecture~\ref{thm:intrinsic-dim}, and Proposition~\ref{thm:multi-WZ} are
mixing-independent and the proofs apply under either definition.
The KL--TV conversion underpinning Corollary~\ref{cor:topk} is
established as Lemma~\ref{lem:kl-tv-bounded}
(Appendix~\ref{app:kltv}).

\begin{definition}[Geometric $\rho$-mixing, used in
proofs of Theorem~\ref{thm:E2} and Conjecture~\ref{cor:mixing-conc}
below]\label{def:mixing}
The language model is \emph{$\rho$-mixing} with constants
$C_{\mathrm{mix}} > 0$ and $\rho \in [0, 1)$ if
$|\,p_\theta(X_t = x \mid X_{1:t-1}) - p_\theta(X_t = x \mid X_{t-w:t-1})\,|
\leq C_{\mathrm{mix}}\rho^w$ for all $t, x, w$ almost surely. This
is the stronger geometric counterpart of the polynomial
sensitivity in Definition~\ref{def:poly-mixing}.
\end{definition}

\section{Robustness of the empirical measurements}
\label{app:robustness}
This appendix details the robustness checks summarized in
Section~\ref{sec:experiments}: cross-domain exponents
(\S\ref{app:crossdomain}), a positional-encoding ablation
(\S\ref{app:exp-ablation}), the long-window tail
(\S\ref{app:exp-longwindow}), and the full cache-policy degradation
curves (\S\ref{app:exp-policy}). It also reports the raw
median KL values behind the sink-plus-recent versus Random-$K$
comparison (Table~\ref{tab:randomk}).

\begin{table}[ht]
\centering
\small
\setlength{\tabcolsep}{5pt}
\begin{tabular}{lcccc}
\toprule
& \multicolumn{2}{c}{Sink+Recent} & \multicolumn{2}{c}{Random-$K$} \\
\cmidrule(lr){2-3}\cmidrule(lr){4-5}
$k$ & Natural & Code & Natural & Code \\
\midrule
$8$   & $3.0$  & $4.0$  & $5.6$ & $9.6$ \\
$64$  & $0.30$ & $0.65$ & $4.3$ & $5.1$ \\
$512$ & $0.040$ & $0.14$ & $3.4$ & $4.1$ \\
\bottomrule
\end{tabular}
\caption{Median KL distortion (nats) for sink-plus-recent versus
Random-$K$ at representative budgets $k$ (Qwen2.5-0.5B,
position-preserving). Sink-plus-recent decays as a power law in $k$
(Figure~\ref{fig:topk}); Random-$K$ is essentially flat. The
$85\times$/$30\times$ figures quoted in the main text are the
Natural/Code ratios at $k = 512$.}
\label{tab:randomk}
\end{table}

\begin{figure}[ht]
\centering
\begin{tikzpicture}
\begin{loglogaxis}[
  width=11.5cm, height=7cm,
  xlabel={Sink-plus-recent budget $k$ (tokens)},
  ylabel={Median KL distortion},
  xmin=6, xmax=600,
  ymin=0.003, ymax=10,
  legend pos=south west,
  legend style={font=\scriptsize, fill opacity=0.85, draw=gray!50},
  grid=major, grid style={dashed,gray!30},
  tick label style={font=\footnotesize},
  label style={font=\footnotesize}
]
\addplot[blue, very thick, mark=*, mark size=2.2pt] coordinates {
  (8, 3.0) (16, 1.6) (32, 0.75) (64, 0.30)
  (128, 0.13) (256, 0.08) (512, 0.04)
};
\addlegendentry{Sink+Recent Natural ($\alpha_{\KLs}=1.04$)}
\addplot[red, very thick, mark=square*, mark size=2.2pt, densely dashed] coordinates {
  (8, 4.0) (16, 2.1) (32, 1.05) (64, 0.65)
  (128, 0.45) (256, 0.28) (512, 0.14)
};
\addlegendentry{Sink+Recent Code ($\alpha_{\KLs}=0.74$)}
\addplot[gray!70, thick, mark=triangle, mark size=2.2pt] coordinates {
  (8, 5.587) (16, 5.007) (32, 4.224) (64, 4.330)
  (128, 3.981) (256, 4.336) (512, 3.365)
};
\addlegendentry{Random-$K$ (Natural)}
\end{loglogaxis}
\end{tikzpicture}
\caption{Measured sink-plus-recent KL decay on Qwen2.5-0.5B. The
scheme follows the power law $D \propto k^{-2\alpha}$ where $\alpha$
is the TV-decay exponent of Figure~\ref{fig:mixing}. The measured
KL exponents ($1.04$ Natural, $0.74$ Code) are $2.38\times$ and
$1.93\times$ the corresponding TV exponents ($0.44$, $0.38$),
bracketing the factor-of-two prediction from the
$\KLs \leq C(\epsilon_{\min}) \cdot \TV^2$ quadratic local relation
under bounded log-density (Lemma~\ref{lem:kl-tv-bounded},
Corollary~\ref{cor:topk}). These figures compare sink-plus-recent
$\KLs$ against sliding-window $\TV$; estimating both under the same
policy yields a tighter ratio $\alpha_{\KLs}/\alpha_{\TV} \in [1.93,
2.04]$ (both policies, cutoffs $k \ge 16$). Random-$K$ baseline shows no decay,
isolating the contribution of recency structure. At $k = 512$,
sink-plus-recent attains roughly $85\times$ (Natural) and
$30\times$ (Code) lower KL distortion than Random-$K$ (a ratio of
median KL values; see Table~\ref{tab:randomk} for raw numbers).}
\label{fig:topk}
\end{figure}

\subsection{Cache-policy degradation}
\label{app:exp-policy}
Figure~\ref{fig:policy} shows the median
$\KLs(\text{full}\,\|\,\text{policy})$ on the Natural domain for the
five policies of \S\ref{sec:exp-policy}. Recency-based policies
(sliding, sink-plus-recent) suppress distortion by roughly two orders
of magnitude relative to Random-$K$ at equal budget; the
attention-score heavy-hitter sits in between; and every policy decays
as a power law in $k$, the predicted consequence of polynomial
truncation sensitivity. The Code domain is qualitatively identical
(sliding/sink-plus-recent KL $\approx 0.003$ at $k = 512$ versus
$2.1$ for Random-$K$), and the NLL-increase panels track the KL
panels.

\begin{figure}[ht]
\centering
\begin{tikzpicture}
\begin{loglogaxis}[
  width=11.5cm, height=7cm,
  xlabel={Budget $k$ (tokens)},
  ylabel={median $\KLs(\text{full}\,\|\,\text{policy})$},
  xmin=28, xmax=560,
  ymin=0.007, ymax=8,
  legend pos=south west,
  legend style={font=\scriptsize, fill opacity=0.85, draw=gray!50},
  grid=major, grid style={dashed,gray!30},
  tick label style={font=\footnotesize},
  label style={font=\footnotesize}
]
\addplot[red, very thick, mark=*, mark size=2pt] coordinates {
  (32,5.30) (64,4.98) (128,3.85) (256,2.84) (512,1.90)
};
\addlegendentry{Random-$K$}
\addplot[violet, very thick, mark=triangle*, mark size=2.4pt] coordinates {
  (32,1.89) (64,1.27) (128,0.72) (256,0.41) (512,0.185)
};
\addlegendentry{Heavy-hitter}
\addplot[orange, very thick, mark=square*, mark size=2pt] coordinates {
  (32,0.324) (64,0.143) (128,0.070) (256,0.033) (512,0.0115)
};
\addlegendentry{Sliding}
\addplot[teal, very thick, mark=diamond*, mark size=2.4pt, densely dashed] coordinates {
  (32,0.433) (64,0.141) (128,0.075) (256,0.032) (512,0.0105)
};
\addlegendentry{Sink+recent}
\end{loglogaxis}
\end{tikzpicture}
\caption{Cache-policy degradation on Qwen2.5-0.5B (Natural domain,
position-preserving, $100$ prefixes). Recency-based policies (sliding,
sink-plus-recent) suppress distortion by $\sim$$100\times$ relative to
Random-$K$ at equal budget; the lightweight last-query attention-score
heavy-hitter sits in
between. Every policy decays as a power law in $k$, as predicted by
polynomial truncation sensitivity. Sliding and sink-plus-recent
coincide at this context length.}
\label{fig:policy}
\end{figure}

\subsection{Cross-domain exponents}
\label{app:crossdomain}
Table~\ref{tab:crossdomain} reports the TV-decay exponent on four
domains for the primary model. A power law is favored over an
exponential by $2$--$4\times$ in log-RMSE in every domain, while the
exponent varies only modestly ($\alpha \in [0.39, 0.44]$): the
polynomial form is domain-robust, with $\alpha$ weakly
domain-dependent at this scale.

\begin{table}[ht]
\centering
\small
\setlength{\tabcolsep}{5pt}
\begin{tabular}{lccc}
\toprule
Domain & $\alpha_{\TV}$ & PL log-RMSE & exp.\ log-RMSE \\
\midrule
Natural (Gutenberg) & $0.438$ & $0.14$ & $0.32$ \\
Code (Python)       & $0.389$ & $0.08$ & $0.31$ \\
Wikipedia prose     & $0.392$ & $0.08$ & $0.30$ \\
JSON (structured)   & $0.426$ & $0.17$ & $0.26$ \\
\bottomrule
\end{tabular}
\caption{Cross-domain TV-decay exponents on Qwen2.5-0.5B
(position-preserving, $100$ prefixes per domain). A power law is
favored over an exponential by $2$--$4\times$ in log-RMSE in every
domain; the exponent varies only modestly, so the polynomial form is
domain-robust while $\alpha$ is weakly domain-dependent.}
\label{tab:crossdomain}
\end{table}

\subsection{Positional-encoding ablation}
\label{app:exp-ablation}
\label{sec:exp-ablation}
A natural concern is that the measured TV decay reflects a
positional-encoding artifact rather than genuine content forgetting:
when a truncated window is re-fed as a fresh sequence, its tokens
receive new position indices, so part of the measured change could
be attributed to position shift. We rule this out by comparing two
protocols on identical prefixes and windows. In the \emph{fresh}
protocol the retained window is re-indexed from $0$; in the
\emph{position-preserving} protocol it keeps its original absolute
indices. The two are indistinguishable: on the primary model the
fitted exponents agree to three decimals
($\alpha_{\text{nat}} = 0.438$ fresh vs.\ $0.437$ preserved;
$\alpha_{\text{code}} = 0.383$ for both), with overlapping bootstrap
intervals (Figure~\ref{fig:ablation}). The measured decay is
therefore a property of the conditional distribution under content
truncation, not of the positional encoding. All measurements in
Section~\ref{sec:experiments} use the position-preserving protocol.

\begin{figure}[ht]
\centering
\begin{tikzpicture}
\begin{loglogaxis}[
  width=11.5cm, height=7cm,
  xlabel={Window size $w$ (tokens)},
  ylabel={$\widehat{\TV}_w$},
  xmin=1.5, xmax=300,
  ymin=0.09, ymax=1.0,
  legend pos=south west,
  legend style={font=\scriptsize, fill opacity=0.85, draw=gray!50},
  grid=major, grid style={dashed,gray!30},
  tick label style={font=\footnotesize},
  label style={font=\footnotesize}
]
\addplot[blue, very thick, mark=*, mark size=2.4pt] coordinates {
  (2,0.7934) (4,0.6816) (8,0.5794) (16,0.4728) (32,0.3398) (64,0.2155) (128,0.1307) (256,0.1086)
};
\addlegendentry{Natural, fresh ($\alpha=0.438$)}
\addplot[cyan!70!black, only marks, mark=o, mark size=3pt] coordinates {
  (2,0.7933) (4,0.6819) (8,0.5794) (16,0.4730) (32,0.3399) (64,0.2158) (128,0.1307) (256,0.1087)
};
\addlegendentry{Natural, position-preserved ($\alpha=0.437$)}
\end{loglogaxis}
\end{tikzpicture}
\caption{Positional-encoding ablation on Qwen2.5-0.5B (Natural
domain shown; Code is analogous). The \emph{fresh} protocol
(re-indexed window) and the \emph{position-preserving} protocol
(original absolute indices) yield indistinguishable curves; fitted
exponents agree to three decimals ($0.438$ vs.\ $0.437$). The
measured truncation sensitivity is therefore genuine content
forgetting, not a positional-encoding artifact.}
\label{fig:ablation}
\end{figure}

\subsection{Long-window tail and fit competition}
\label{app:exp-longwindow}
\label{sec:exp-longwindow}
The short-window grid leaves open whether the power law persists to
the $w \sim 10^3$--$10^4$ windows used by deployed schemes. Probing
this range requires care: total-variation under window truncation is
measured by comparing the full-context conditional to the conditional
on the last $w$ tokens, so if $w$ approaches the prefix length almost
nothing is truncated and the measured TV collapses toward zero---a
boundary artifact rather than a property of the model. (An earlier
sweep with prefixes of length $4200$ exhibited exactly this: the
$w = 4096$ point fell far below the trend because only $104$ tokens
were truncated.) We avoid the artifact by using a \emph{long prefix}
of $16{,}500$ tokens and sweeping $w \in \{256, \ldots, 8192\}$, so
that even the largest window truncates more than $8000$ tokens (the
window is at most $\approx 50\%$ of the prefix). We fit a single power
law on the resulting curve (Figure~\ref{fig:longwindow}). The decay
remains polynomial across the entire range: on Natural text the
fitted exponent is $\alpha = 0.36$ with log-RMSE $0.06$, close to the
short-window estimate of $0.44$ (\S\ref{app:exp-ablation}), and the
$w = 8192$ point sits on the trend rather than collapsing. The power
law thus extends cleanly more than five octaves, well past the
deployed $4$k regime, with the long-range exponent if anything
slightly \emph{smaller} than the short-range one (the decay does not
accelerate at long range). The Code curve is noisier
(log-RMSE $0.13$, fitted $\alpha = 0.51$), reflecting the smaller code
corpus and its repetitive structure, but follows the same polynomial
form. None of this affects the qualitative predictions (P1)--(P2),
which concern the polynomial (non-geometric) character of the decay.

\begin{figure}[ht]
\centering
\begin{tikzpicture}
\begin{loglogaxis}[
  width=11.5cm, height=7cm,
  xlabel={Window size $w$ (tokens)},
  ylabel={$\widehat{\TV}_w$},
  xmin=200, xmax=10000,
  ymin=0.02, ymax=0.25,
  legend pos=south west,
  legend style={font=\scriptsize, fill opacity=0.85, draw=gray!50},
  grid=major, grid style={dashed,gray!30},
  tick label style={font=\footnotesize},
  label style={font=\footnotesize}
]
\addplot[blue, very thick, mark=*, mark size=2pt] coordinates {
  (256,0.1657) (512,0.1357) (1024,0.1072) (2048,0.0835) (4096,0.0692) (8192,0.0450)
};
\addlegendentry{Natural (measured)}
\addplot[blue!55, thick, domain=256:8192, samples=80] {1.30/(x^0.362)};
\addlegendentry{Power law $\alpha=0.36$}
\addplot[red, very thick, mark=square*, mark size=2pt, densely dashed] coordinates {
  (256,0.1933) (512,0.1533) (1024,0.0911) (2048,0.0772) (4096,0.0622) (8192,0.0292)
};
\addlegendentry{Code (measured)}
\addplot[red!55, thick, domain=256:8192, samples=80] {3.0/(x^0.508)};
\addlegendentry{Power law $\alpha=0.51$}
\end{loglogaxis}
\end{tikzpicture}
\caption{Long-window TV decay to $w = 8192$ (Qwen2.5-0.5B,
position-preserving, prefix length $16{,}500$). A single power law
describes the data across more than five octaves; the $w = 8192$
point sits on the trend rather than collapsing, because the long
prefix keeps the window below $\approx 50\%$ of the context (avoiding
the boundary artifact that affected shorter-prefix sweeps). The
Natural exponent ($\alpha = 0.36$) is close to the short-window
estimate ($0.44$); the decay does not accelerate at long range. The
Code curve is noisier (smaller corpus) but follows the same
polynomial form.}
\label{fig:longwindow}
\end{figure}

\subsection{KL--TV quadratic ablation}
\label{app:kltv-ablation}
The sink-plus-recent KL exponent is roughly twice the TV exponent
(\S\ref{sec:experiments}), which we attribute to a local quadratic
relation $\KLs \le C(\epsilon_{\min})\,\TV^2$ valid in a bounded-floor
regime rather than to Pinsker's inequality (which gives only the
converse direction). We test the stability of this doubling under
explicit smoothing $\tilde p = (1-\mu)p + \mu/V$, sweeping
$\mu \in \{10^{-4}, 10^{-2}, 0.1, 0.3\}$ and re-measuring TV and KL of
the sink-plus-recent reconstruction as functions of budget $k$
(Figure~\ref{fig:kltv}). Two findings support the bounded-floor
interpretation. First, plotting KL against $\TV^2$ yields a clean
through-the-origin line at every smoothing level, confirming the
locally quadratic relation. Second, the fitted exponent ratio
$\alpha_{\KLs}/\alpha_{\TV}$ stays at $2.00$ across the entire range,
including the heaviest smoothing $\mu = 0.3$: the measured regime
remains within the bounded-floor regime throughout, so the doubling is
robust rather than an artifact of the unsmoothed tail. (The
budget-$k$ TV exponent here, $\alpha_{\TV} \approx 0.67$, is the decay
in the eviction budget and differs from the window-truncation
exponent $\alpha \approx 0.44$ of \S\ref{sec:experiments}; only the
\emph{ratio} is the object of interest.)

\begin{figure}[ht]
\centering
\begin{tikzpicture}
\begin{axis}[
  width=11.5cm, height=7cm,
  xlabel={$\TV^2$},
  ylabel={$\KLs$},
  xmin=0, xmax=0.13,
  ymin=0, ymax=0.47,
  legend pos=north west,
  legend style={font=\scriptsize, fill opacity=0.85, draw=gray!50},
  grid=major, grid style={dashed,gray!30},
  tick label style={font=\footnotesize},
  label style={font=\footnotesize}
]
\addplot[blue, very thick, mark=*, mark size=2pt] coordinates {
  (0.00219,0.00870) (0.00862,0.0336) (0.01443,0.0560) (0.03387,0.1417) (0.11488,0.4401)
};
\addlegendentry{$\mu=10^{-4}$}
\addplot[red, very thick, mark=square*, mark size=2pt, densely dashed] coordinates {
  (0.00108,0.00590) (0.00422,0.0234) (0.00707,0.0385) (0.01661,0.0969) (0.05630,0.3032)
};
\addlegendentry{$\mu=0.3$}
\end{axis}
\end{tikzpicture}
\caption{KL versus $\TV^2$ for sink-plus-recent on Qwen2.5-0.5B
(Natural), at the lightest and heaviest smoothing levels. Both lie on
a through-the-origin line, confirming the locally quadratic KL--TV
relation; the fitted exponent ratio $\alpha_{\KLs}/\alpha_{\TV}$
equals $2.00$ at every $\mu \in \{10^{-4}, 10^{-2}, 0.1, 0.3\}$, so the
exponent doubling is robust across smoothing.}
\label{fig:kltv}
\end{figure}

\section{Notation}\label{app:notation}

We collect the principal notation, organized by category.

\begin{table*}[ht]
\centering
\renewcommand{\arraystretch}{1.15}
\small
\begin{tabular}{ll}
\toprule
\textbf{Symbol} & \textbf{Meaning} \\
\midrule
\multicolumn{2}{l}{\emph{Probability and information theory}} \\
$\Omega, \F, \Prob$ & Underlying probability space \\
$\F_t = \sigma(X_1, \ldots, X_t)$ & Natural filtration of token process \\
$\Fhat_t = \sigma(\Chat_1, \ldots, \Chat_t)$ & Compressed filtration \\
$H(\cdot), h(\cdot)$ & Entropy, differential entropy \\
$I(\cdot;\cdot)$ & Mutual information (discrete or mixed) \\
$\KL{\cdot}{\cdot}$ & Kullback--Leibler divergence \\
$\mathrm{TV}(\cdot, \cdot)$ & Total variation \\
\midrule
\multicolumn{2}{l}{\emph{Sequential setup}} \\
$X_t \in \V$ & Token at time $t$, vocabulary size $V$ \\
$Q_t \in \R^k$ & Query at time $t$ (decoder side info) \\
$Z_t \in \R^V$ & Next-token logits \\
$p_t = \mathrm{softmax}(Z_t)$ & Next-token distribution \\
$\phi_t : \V^{t-1} \to \mathcal{C}_t$ & Encoder, $\Chat_t = \phi_t(X_{1:t-1})$ \\
$\psi_t$ & Decoder producing reconstruction $\tilde p_t$ \\
$R_n, D_n$ & Average rate and distortion at horizon $n$ \\
$R^\star(D)$ & Sequential Wyner--Ziv rate function \\
$U_t \in \mathcal{U}_t$ & Auxiliary process \\
$\mathcal{R}_t^{\mathrm{WZ}}$ & Per-step pointwise WZ rate \\
\midrule
\multicolumn{2}{l}{\emph{LLM-specific}} \\
$f$ & Layer logit function: $Z_t = f(h_t^{(L-1)})$ \\
$g^{(\ell)}$ & Layer-$\ell$ post-attention transformation \\
$L_g^{(\ell)}$ & Layer-$\ell$ full-block Lipschitz constant ($\approx 1.05$) \\
$L_b^{(\ell)}$ & Layer-$\ell$ non-residual branch Lipschitz ($\approx 0.1$ with LN) \\
$L_{\mathrm{LN}}$ & LayerNorm Lipschitz constant \\
$\tilde L_g^{(\ell)} = \sqrt{1+(L_b L_{\mathrm{LN}})^2}$ & Effective Lipschitz (pre-LN with skip, $\approx 1.005$) \\
$\rho, \rho^{(\ell)}, \rho_{\mathrm{eff}}$ & Mixing rate (full / layer / effective) \\
$C_{\mathrm{mix}}$ & Mixing prefactor \\
$w$ & Sliding window size \\
$B(n_{\max})$ & Sequence-length-dependent logit bound \\
$\epsilon, \epsilon_U$ & Density floor (next-token, auxiliary) \\
\midrule
\multicolumn{2}{l}{\emph{Phase~2: continuous and multi-layer}} \\
$d_c, d_c^{(\ell)}$ & Continuous latent dimension (full / layer) \\
$d^\star_t(D)$ & Intrinsic dimension at distortion $D$ \\
$R^\star_{\mathrm{cont}}(D)$ & Continuous WZ rate function \\
$R^\star_{\mathrm{multi}}(D)$ & Multi-layer WZ rate function \\
$R^\star_{\mathrm{joint}}(D)$ & Joint compression rate function \\
$R_U, r_{\min}$ & Latent support and density floor \\
$\Delta$ & Quantization resolution \\
$\kappa_p$ & Bennett--Gersho coefficient \\
$s^{(\ell)} = \prod_{m>\ell}(L_g^{(m)})^2$ & Layer sensitivity factor \\
\bottomrule
\end{tabular}
\end{table*}

\section{Probability setup and martingale tools}\label{app:martingale}

We use a single probability space $(\Omega, \F, \Prob)$ with the
natural filtration $\F_t = \sigma(X_1, \ldots, X_t)$. Under the
causal-encoder convention $\Chat_t = \phi_t(X_{1:t-1})$ adopted in
Section~\ref{sec:setup}, each $\Chat_t$ is $\F_{t-1}$-measurable, so
the compressed filtration $\Fhat_t = \sigma(\Chat_1, \ldots, \Chat_t)$
satisfies $\Fhat_t \subseteq \F_{t-1}$.

\begin{lemma}[Causality]\label{lem:causality-app}
Under the causal-encoder convention $\Chat_t = \phi_t(X_{1:t-1})$,
$\Fhat_t \subseteq \F_{t-1}$ and consequently
$\Prob(X_t \in \cdot \mid \Fhat_t) = \E[\Prob(X_t \in \cdot \mid \F_{t-1}) \mid \Fhat_t]$.
\end{lemma}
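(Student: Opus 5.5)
The plan is to prove both parts of Lemma~\ref{lem:causality-app} directly from the measurability structure of the causal encoder, with no mixing or truncation-sensitivity assumption.

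\textbf{Step 1 (the inclusion).} Each $\phi_s$ is a (measurable) map on $\V^{s-1}$ and $\V$ is finite, so $\Chat_s=\phi_s(X_{1:s-1})$ is $\F_{s-1}$-measurable. For $s\le t$ we have $\F_{s-1}\subseteq\F_{t-1}$, so every generator of $\Fhat_t=\sigma(\Chat_1,\ldots,\Chat_t)$ is $\F_{t-1}$-measurable. Since $\Fhat_t$ is the smallest $\sigma$-algebra making these variables measurable, $\Fhat_t\subseteq\F_{t-1}$.

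\textbf{Step 2 (the tower identity).} Fix a set $A\subseteq\V$. Then $\Prob(X_t\in A\mid\Fhat_t)=\E[\mathbf 1\{X_t\in A\}\mid\Fhat_t]$. Because $\Fhat_t\subseteq\F_{t-1}$, the tower property gives
\[
\E[\mathbf 1\{X_t\in A\}\mid\Fhat_t]=\E\bigl[\E[\mathbf 1\{X_t\in A\}\mid\F_{t-1}]\mid\Fhat_t\bigr]=\E\bigl[\Prob(X_t\in A\mid\F_{t-1})\mid\Fhat_t\bigr]
\]
almost surely.

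\textbf{Step 3 (regular versions).} The identity should hold as one statement about kernels, not separately for each $A$ up to an $A$-dependent null set. Since $\V$ is finite, there are only finitely many sets $A$. So I will take the union of the finitely many exceptional null sets and obtain the equality simultaneously for all $A$ outside a single null set. Under the model, $\Prob(X_t=\cdot\mid\F_{t-1})$ is the regular version $p_t=p_\theta(\cdot\mid X_{1:t-1})$, so the right-hand side is $\E[p_t\mid\Fhat_t]$ componentwise.

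There is no real obstacle here. The only point needing care is Step 3, the choice of regular versions, and finiteness of $\V$ disposes of it.
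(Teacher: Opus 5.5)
Your proposal is correct and matches the paper's proof: each $\Chat_s$ is $\F_{s-1}$-measurable with $\F_{s-1}\subseteq\F_{t-1}$, which gives $\Fhat_t\subseteq\F_{t-1}$, and the identity then follows from the tower property. Your Step~3 (one common null set for the finitely many $A\subseteq\V$) is a harmless refinement the paper omits; the identification of $\Prob(X_t\in\cdot\mid\F_{t-1})$ with $p_\theta(\cdot\mid X_{1:t-1})$ is an extra modeling assumption that the lemma does not need.
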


\begin{proof}
$\Chat_s = \phi_s(X_{1:s-1})$ is $\F_{s-1}$-measurable for all
$s \leq t$, so $\Chat_s \in \F_{s-1} \subseteq \F_{t-1}$. Hence
$\Fhat_t = \sigma(\Chat_1, \ldots, \Chat_t) \subseteq \F_{t-1}$.
The conditional distribution identity follows from the tower
property: for any measurable $A \subseteq \V$,
$\Prob(X_t \in A \mid \Fhat_t) = \E[\mathbf{1}\{X_t \in A\} \mid
\Fhat_t] = \E[\Prob(X_t \in A \mid \F_{t-1}) \mid \Fhat_t]$
since $\Fhat_t \subseteq \F_{t-1}$.
\end{proof}

\begin{remark}[On the previous formulation]
An earlier version of this manuscript adopted the convention
$\Chat_t = \phi_t(X_{1:t})$, in which the encoder may also depend on
$X_t$ itself. In that case the claim
$X_t \perp\!\!\!\perp \Fhat_t \mid \F_{t-1}$ does not hold in
general---for example, the lossless encoder $\Chat_t = X_t$ gives
$X_t \in \Fhat_t$ and the independence fails trivially. The causal
convention $\Chat_t = \phi_t(X_{1:t-1})$ adopted here matches the KV
cache compression setting (the cache at the prediction of $X_t$
summarizes $X_{<t}$, not $X_t$) and makes the causality property
trivial via $\Fhat_t \subseteq \F_{t-1}$.
\end{remark}

\begin{definition}[Information density and likelihood ratio]
\label{def:martingales-app}
\begin{align*}
\xi_t &:= -\log p(X_t \mid \F_{t-1}) - H(X_t \mid \F_{t-1}), \\
L_t &:= \log \frac{p(X_t \mid \F_{t-1})}{\phat_t(X_t)}, \\
S_n &:= \sum_{t=1}^n (L_t - d_t).
\end{align*}
Both $\xi_t$ and $L_t - d_t$ are $\F_t$-measurable martingale
differences relative to $\{\F_t\}$:
$\E[\xi_t \mid \F_{t-1}] = 0$ and $\E[L_t \mid \F_{t-1}] = d_t$.
\end{definition}

\begin{theorem}[$M^{(1)}$ and $S$ as martingales]\label{thm:M1S-app}
$M_n^{(1)} := \sum_{t=1}^n \xi_t$ and $S_n$ are $\{\F_n\}$-martingales
with $M_0^{(1)} = S_0 = 0$ and a.s.-bounded increments:
$|\xi_t| \leq 2|\log\epsilon|$ and $|L_t - d_t| \leq 2|\log\epsilon|$.
\end{theorem}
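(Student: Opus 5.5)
The plan is to verify the two martingale properties directly from the conditional-expectation identities in Definition~\ref{def:martingales-app}. The increment bounds will then follow from the density floor of Assumption~\ref{ass:bounded-logits}.

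\textbf{Measurability.} First I fix what is $\F_{t-1}$-measurable. By Assumption~\ref{ass:bounded-logits} and the setup of Section~\ref{sec:setup}, $p_t=\mathrm{softmax}(f(X_{<t}))$ is $\F_{t-1}$-measurable, and so is $H(X_t\mid\F_{t-1})=H(p_t)$. For the decoder estimate $\phat_t=\tilde p_t=\psi_t(\Chat_{\le t},Q_t)$, Lemma~\ref{lem:causality-app} gives $\Chat_{\le t}\in\F_{t-1}$, and $Q_t$ is computed from $X_{<t}$. Hence $\phat_t$ is $\F_{t-1}$-measurable. I take $d_t:=\KL{p_t}{\phat_t}$, which is then also $\F_{t-1}$-measurable. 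Since $\xi_t$ and $L_t$ are functions of $X_t$ and $\F_{t-1}$-measurable quantities, they are $\F_t$-measurable, so $M^{(1)}_n$ and $S_n$ are adapted.

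\textbf{Martingale differences.} Conditioning on $\F_{t-1}$, $X_t$ has law $p_t$. Then
\[
\E[-\log p_t(X_t)\mid\F_{t-1}]=\sum_x p_t(x)(-\log p_t(x))=H(p_t),
\]
so $\E[\xi_t\mid\F_{t-1}]=0$. Likewise
\[
\E[L_t\mid\F_{t-1}]=\sum_x p_t(x)\log\frac{p_t(x)}{\phat_t(x)}=d_t,
\]
so $\E[L_t-d_t\mid\F_{t-1}]=0$. Integrability follows from the bounds below. Summing the differences gives the martingale property of $M^{(1)}_n$ and $S_n$ with respect to $\{\F_n\}$, and $M^{(1)}_0=S_0=0$ holds trivially.

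\textbf{Increment bounds.} These use $\epsilon=V^{-1}e^{-2B}\le 1/V$, so that $\log V\le|\log\epsilon|$.
\begin{itemize}
\item For $\xi_t$: since $p_t(X_t)\in[\epsilon,1]$, we have $0\le-\log p_t(X_t)\le|\log\epsilon|$. Also $0\le H(p_t)\le\log V\le|\log\epsilon|$. Hence $|\xi_t|\le|\log\epsilon|\le 2|\log\epsilon|$.
\item For $L_t$: I also need $\phat_t\ge\epsilon$ componentwise. I would impose this as the same floor convention used for reconstructions elsewhere, namely a decoder outputting softmax of logits bounded by $B$, as in Corollary~\ref{cor:topk}. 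With it, both $p_t(X_t)$ and $\phat_t(X_t)$ lie in $[\epsilon,1]$, so $|L_t|\le|\log\epsilon|$. As a conditional average of $L_t$, $d_t$ satisfies $0\le d_t\le|\log\epsilon|$. Therefore $|L_t-d_t|\le 2|\log\epsilon|$.
\end{itemize}

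\textbf{Main obstacle.} The only non-routine point is this floor on $\phat_t$. Without it, $L_t$ is unbounded: $\phat_t(X_t)$ can be arbitrarily small, or even zero. If the paper's decoder class does not already enforce the floor, I would state it explicitly as a hypothesis. An alternative is to replace $\phat_t$ by the smoothed version $(1-\mu)\phat_t+\mu/V$ and track the resulting $|\log(\mu/V)|$ in place of $|\log\epsilon|$.
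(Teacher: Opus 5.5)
Your proposal is correct and follows the same route as the paper: the paper also reads the mean-zero property directly off the conditional-expectation identities, and bounds the increments using the source floor of Assumption~\ref{ass:bounded-logits} together with a floor on $\phat_t$. The paper takes that reconstruction floor from an unstated ``decoder smoothing requirement'' $\phat_t \ge \eta/V$, which only yields the constant $|\log(\eta/V)|$ (matching $2|\log\epsilon|$ only if $\eta/V \ge \epsilon$); your explicit hypothesis $\phat_t \ge \epsilon$, or the smoothing alternative with $|\log(\mu/V)|$, is exactly the missing ingredient, and you are right to state it.
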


\begin{proof}
The mean-zero property follows from Definition~\ref{def:martingales-app}.
The increment bounds use Assumption~\ref{ass:bounded-logits} (source
positivity) and the decoder smoothing requirement:
$|\log p(\cdot \mid \F_{t-1})| \leq |\log \epsilon|$,
$|\log \phat_t(\cdot)| \leq |\log(\eta/V)|$, and similarly for the
entropy and KL terms.
\end{proof}

\section{KL--TV conversion under bounded log-density}\label{app:kltv}

The exponent-doubling observation in Corollary~\ref{cor:topk}
requires an upper bound of $\KLs$ in terms of $\TV^2$. The standard
Pinsker inequality $\TV \leq \sqrt{\KLs/2}$ provides only the
converse direction: $\KLs(p\,\|\,q) \geq 2\TV(p,q)^2$. The opposite
direction $\KLs \leq C \cdot \TV^2$ requires additional regularity
on the distributions (bounded log-density ratio or a uniform mass
floor). We give two forms below: a strong-floor version
(Lemma~\ref{lem:kl-tv-bounded}, used in proofs when one can rely
on Assumption~\ref{ass:bounded-logits}), and an operationally
weaker version using a smoothed reconstruction
(Lemma~\ref{lem:kl-tv-smoothed}, used to avoid the vacuous regime
when $\epsilon_{\min}$ is small).

\begin{lemma}[KL--TV conversion, uniform-floor form]
\label{lem:kl-tv-bounded}
Let $p, q$ be probability distributions on a finite set $\V$
satisfying $q(x) \geq \epsilon_{\min} > 0$ for all $x \in \V$. Then
\[
\KLs(p \,\|\, q) \;\leq\; \frac{\TV(p, q)^2}{\epsilon_{\min}}
\;+\; \frac{1}{\epsilon_{\min}^2} \cdot O\bigl(\TV(p, q)^3\bigr).
\]
In particular, if $\TV(p, q) \leq \epsilon_{\min}/2$, then
$\KLs(p \,\|\, q) \leq (2/\epsilon_{\min}) \TV(p, q)^2$.
\end{lemma}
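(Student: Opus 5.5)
The plan is to go through the $\chi^2$-divergence, with the floor assumption on $q$ entering only through the denominators $q(x)\ge\epsilon_{\min}$. Write $\Delta(x):=p(x)-q(x)$, so that $\sum_x\Delta(x)=0$ and $\sum_x|\Delta(x)|=2\,\TV(p,q)$. The one elementary fact used throughout is
\[
\max_x|\Delta(x)|\le\TV(p,q).
\]
It holds because the positive part of $\Delta$ sums to exactly $\TV(p,q)$, and so does the negative part. Combining it with $\sum_x|\Delta(x)|=2\TV$ gives the moment bounds
\[
\sum_x\Delta(x)^2\le 2\,\TV(p,q)^2,\qquad \sum_x|\Delta(x)|^3\le 2\,\TV(p,q)^3.
\]

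For the main display, set $\delta(x):=\Delta(x)/q(x)\ge-1$, so that $p=q(1+\delta)$. Then
\[
\KLs(p\,\|\,q)=\sum_x q(x)\,f(\delta(x)),\qquad f(\delta):=(1+\delta)\log(1+\delta).
\]
The key scalar inequality I would establish is $f(\delta)\le\delta+\tfrac12\delta^2+c\,|\delta|^3$ for all $\delta\ge-1$ with an absolute constant $c$; I expect $c=1/2$ to work. The proof splits into three regimes:
\begin{itemize}
\item near $0$, Taylor's theorem gives $f(\delta)=\delta+\delta^2/2-\delta^3/6+\dots$;
\item at $\delta=-1$, the gap is $0-(-1)-\tfrac12=\tfrac12$, so $c\ge\tfrac12$ is needed there;
\item for large $\delta$, $f$ grows only like $\delta\log\delta$, so the quadratic term already dominates and the inequality holds.
\end{itemize}
Checking this scalar inequality uniformly on $[-1,\infty)$, say via the sign of $g(\delta)=\delta+\delta^2/2+c|\delta|^3-f(\delta)$ and its derivatives, is the only mildly delicate step.

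Summing against $q$, the linear term vanishes because $\sum_x q\delta=\sum_x\Delta=0$. The quadratic term is
\[
\tfrac12\sum_x\frac{\Delta^2}{q}\le\frac{1}{2\epsilon_{\min}}\sum_x\Delta^2\le\frac{\TV^2}{\epsilon_{\min}}.
\]
The cubic term is
\[
c\sum_x\frac{|\Delta|^3}{q^2}\le\frac{c}{\epsilon_{\min}^2}\sum_x|\Delta|^3\le\frac{2c\,\TV^3}{\epsilon_{\min}^2}.
\]
Together these give exactly the claimed bound $\TV^2/\epsilon_{\min}+\epsilon_{\min}^{-2}O(\TV^3)$.

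For the ``in particular'' clause there are two routes. The first substitutes $\TV\le\epsilon_{\min}/2$ into the cubic term, giving $2c\,\TV^3/\epsilon_{\min}^2\le c\,\TV^2/\epsilon_{\min}$; with $c\le1$ the total is at most $2\TV^2/\epsilon_{\min}$. The second is a more robust shortcut that avoids the scalar inequality entirely. From $\log y\le y-1$ one gets
\[
\KLs(p\,\|\,q)\le\chi^2(p\,\|\,q)=\sum_x\frac{\Delta^2}{q}\le\frac{2\TV^2}{\epsilon_{\min}},
\]
and this holds without even needing $\TV\le\epsilon_{\min}/2$. I would present this as the proof of the second clause, with the Taylor argument supplying the sharper leading constant in the first clause.
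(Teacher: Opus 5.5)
Your proposal is correct. For the first clause it uses the same decomposition as the paper: $\Delta=p-q$, leading term $\tfrac12\chi^2(p\,\|\,q)=\tfrac12\sum_x\Delta(x)^2/q(x)$, the bound $\sum_x\Delta^2\le 2\TV^2$, and the floor $q\ge\epsilon_{\min}$ in the denominators. The paper gets $\sum_x\Delta^2\le 2\TV^2$ from the two-point extremal configuration. You get it from $\max_x|\Delta(x)|\le\TV$, which also gives $\sum_x|\Delta|^3\le 2\TV^3$, a step the paper leaves implicit.

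The real difference is the remainder control. The paper expands $\log(1+\Delta/q)$ as a power series, which converges only for $|\Delta(x)|<q(x)$ and is uniform only away from that boundary. It also never fixes the constant in the $O(\cdot)$. So the first clause as stated, with no smallness hypothesis, and the step ``the quadratic term dominates when $\TV\le\epsilon_{\min}/2$'' are justified only implicitly. Your global scalar inequality gives the explicit bound $\KLs(p\,\|\,q)\le\TV^2/\epsilon_{\min}+\TV^3/\epsilon_{\min}^2$ on the whole range. The inequality closes easily:
\begin{itemize}
\item For $\delta\ge 0$, $\frac{d}{d\delta}\bigl[\delta+\tfrac12\delta^2-(1+\delta)\log(1+\delta)\bigr]=\delta-\log(1+\delta)\ge 0$, so no cubic term is needed there.
\item For $\delta=-u\in[-1,0]$, $(1-u)\log(1-u)=-u+\sum_{k\ge 2}u^k/(k(k-1))$ and $\sum_{k\ge 3}u^k/(k(k-1))\le u^3\sum_{k\ge 3}\frac{1}{k(k-1)}=u^3/2$. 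So $c=\tfrac12$ works and is attained at $\delta=-1$.
\end{itemize}

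For the second clause your route is genuinely different and stronger. You use $\KLs\le\chi^2$ (from $\log y\le y-1$), the same inequality the paper invokes in Lemma~\ref{lem:mixing-KL-app}. This gives $\KLs(p\,\|\,q)\le 2\TV(p,q)^2/\epsilon_{\min}$ with no hypothesis at all, so the restriction $\TV\le\epsilon_{\min}/2$ in the statement is superfluous.

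The paper's expansion buys the identification of the leading constant $1/\epsilon_{\min}$ (from the $\tfrac12$ in front of $\chi^2$), which your first route also recovers. Yours buys rigor, explicit constants, and an unconditional quadratic bound. That unconditional bound is what is really needed where the lemma is later applied with a floor far below the TV. An example is Lemma~\ref{lem:kl-tv-smoothed}, with floor $\mu/V$, where the local expansion in $\Delta/q$ need not be valid.
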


\begin{proof}
Write $\Delta(x) := p(x) - q(x)$, so $\sum_x \Delta(x) = 0$ and
$\sum_x |\Delta(x)| = 2\,\TV(p, q)$. Using $p = q + \Delta$ and
expanding $\log(1 + \Delta/q)$:
\begin{align*}
\KLs(p \,\|\, q)
&= \sum_x (q(x) + \Delta(x)) \log\!\left(1 + \frac{\Delta(x)}{q(x)}\right) \\
&= \sum_x (q + \Delta) \left[\frac{\Delta}{q}
   - \frac{\Delta^2}{2 q^2} + O\!\left(\frac{\Delta^3}{q^3}\right) \right] \\
&= \underbrace{\sum_x \Delta(x)}_{= 0}
   + \sum_x \frac{\Delta(x)^2}{2 q(x)}
   + O\!\left(\sum_x \frac{|\Delta(x)|^3}{q(x)^2}\right).
\end{align*}
The leading term is the $\chi^2$-like quantity
$\frac{1}{2}\chi^2(p, q) = \frac{1}{2}\sum_x \Delta(x)^2/q(x)$.
Using $q(x) \geq \epsilon_{\min}$ and the tight bound
$\sum_x \Delta(x)^2 \leq 2\,\TV(p, q)^2$ (the maximum of
$\sum_x \Delta(x)^2$ subject to $\sum_x \Delta(x) = 0$ and
$\sum_x |\Delta(x)| = 2\TV(p,q)$ is attained by a two-point
configuration $\Delta = (\TV, -\TV)$):
\[
\sum_x \frac{\Delta(x)^2}{2 q(x)}
\;\leq\; \frac{1}{2 \epsilon_{\min}} \sum_x \Delta(x)^2
\;\leq\; \frac{\TV(p, q)^2}{\epsilon_{\min}}.
\]
The cubic remainder is bounded as
$|\Delta|^3/q^2 \leq |\Delta|^3/\epsilon_{\min}^2$, summing to
$O(\TV(p, q)^3 / \epsilon_{\min}^2)$. For the small-TV statement,
the quadratic term dominates when $\TV(p, q) \leq \epsilon_{\min}/2$,
giving the simplified bound.
\end{proof}

\begin{remark}[Why the uniform floor can be restrictive]
\label{rmk:eps-min-vacuous}
Under Assumption~\ref{ass:bounded-logits}, the worst-case bound is
$\epsilon_{\min} \geq V^{-1} e^{-2B}$, which can be very small for
large vocabularies ($V \sim 10^5$) and typical logit bounds
($B \sim 10$). The small-TV condition $\TV \leq \epsilon_{\min}/2$
of Lemma~\ref{lem:kl-tv-bounded} then becomes very stringent and
may not hold for the truncation-sensitivity values observed in
Section~\ref{sec:experiments}. The operational-smoothing form
below avoids the vacuous-floor issue but, as shown there, yields
only a \emph{linear} KL--TV relation; the empirical
exponent-doubling is therefore evidence that the genuine density
floor (not smoothing) is operative in the measured regime
(Remark~\ref{rmk:application-sink}).
\end{remark}

\begin{lemma}[KL--TV conversion, smoothed-decoder form]
\label{lem:kl-tv-smoothed}
Let $p$ be a probability distribution on $\V$ and let $\hat q$ be
an approximation with $\TV(p, \hat q) \leq \delta$. For any
\emph{operational smoothing level} $\mu \in (0, 1)$, the smoothed
decoder $q^{\mu} := (1-\mu) \hat q + \mu \cdot \mathrm{Unif}(\V)$
satisfies $q^{\mu}(x) \geq \mu/V$ and
\[
\begin{gathered}
\TV(p, q^{\mu}) \leq \delta + \mu, \qquad
\KLs(p \,\|\, q^{\mu})
\;\leq\;\\
\frac{V (\delta + \mu)^2}{\mu}
\;+\; O\!\left( \frac{(\delta + \mu)^3 V^2}{\mu^2} \right).
\end{gathered}
\]
In particular, optimizing the bound
$\KLs(p\,\|\,q^\mu) \lesssim V(\delta+\mu)^2/\mu$ over $\mu$ gives
the minimum at $\mu \asymp \delta$, yielding
$\KLs(p\,\|\,q^{\mu}) = O(V\delta)$---\emph{linear} in $\delta$.
More generally $\mu = \delta^a$ yields
$\KLs = O\bigl(V\delta^{\min(a,\,2-a)}\bigr)$, whose exponent is
maximized at $a=1$; no choice of $\mu$ recovers a quadratic
($\delta^2$) dependence. Quadratic KL--TV scaling requires the
genuine density floor of Lemma~\ref{lem:kl-tv-bounded}, not
smoothing.
\end{lemma}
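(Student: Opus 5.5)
Three bounds follow from the smoothed decoder $q^\mu$: a mass floor, a TV bound, and a KL bound. The KL bound reduces to Lemma~\ref{lem:kl-tv-bounded} with floor $\epsilon_{\min} = \mu/V$.

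\emph{Floor.} Since $\hat q(x) \ge 0$, we have $q^\mu(x) = (1-\mu)\hat q(x) + \mu/V \ge \mu/V$ for every $x \in \V$.

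\emph{TV bound.} We write $q^\mu - \hat q = \mu(\mathrm{Unif} - \hat q)$. Hence $\TV(\hat q, q^\mu) = \mu\,\TV(\hat q, \mathrm{Unif}) \le \mu$. The triangle inequality then gives $\TV(p, q^\mu) \le \delta + \mu$.

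\emph{KL bound.} We apply Lemma~\ref{lem:kl-tv-bounded} to the pair $(p, q^\mu)$, with the floor $\epsilon_{\min} = \mu/V$ and with $\TV(p,q^\mu)$ replaced by its upper bound $\delta+\mu$. This gives
\[
\KLs(p\,\|\,q^\mu) \le \frac{V(\delta+\mu)^2}{\mu} + \frac{V^2}{\mu^2}\, O\bigl((\delta+\mu)^3\bigr).
\]
The lemma's remainder term is monotone in TV, so substituting the upper bound is legitimate. One caveat is that the Taylor remainder in that lemma is only controlled when $|\Delta(x)|/q(x)$ is small. To avoid relying on that expansion, I would instead use the exact inequality $\KLs(p\,\|\,q) \le \chi^2(p,q) = \sum_x \Delta(x)^2/q(x)$, which holds by $\log y \le y - 1$. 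Combined with $\sum_x \Delta(x)^2 \le 2\TV^2$, this yields $\KLs \le 2V(\delta+\mu)^2/\mu$ unconditionally. That bound implies the stated form, with the cubic term absorbed.

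\emph{Optimization over $\mu$.} We minimize $f(\mu) = V(\delta+\mu)^2/\mu$. Its derivative vanishes at $\mu = \delta$, where $f = 4V\delta$; this gives the claimed $O(V\delta)$. For the family $\mu = \delta^a$, the numerator $(\delta+\delta^a)^2$ is of order $\delta^{2\min(1,a)}$. Dividing by $\delta^a$ gives an exponent $2\min(1,a) - a$, which equals $\min(a, 2-a)$ and is at most $1$, with equality only at $a=1$.

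\emph{No choice of $\mu$ recovers $\delta^2$.} The upper bound itself satisfies $(\delta+\mu)^2/\mu \ge 4\delta$ by AM--GM, so no tuning of $\mu$ pushes it below linear. This is the main obstacle: the claim is only that the upper-bound route cannot be quadratic, not that the actual KL cannot be. I would make this precise with a lower-bound example. Take $\hat q = p$ and let $p$ put mass $\approx 1$ on one symbol. Then the $\mu/V$-floored symbols carry KL of order $\mu$, so $\KLs \gtrsim \mu$ whenever $\mu \gtrsim \delta$. Quadratic scaling would therefore need $\mu \lesssim \delta^2$. At that level, for a $\hat q$ that vanishes where $p$ has mass $\delta$, the floor gives $\KLs \gtrsim \delta \log(V/\mu)$, which is superquadratic. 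So smoothing alone cannot give $\delta^2$. This confirms the concluding remark that quadratic scaling needs a genuine density floor, as in Lemma~\ref{lem:kl-tv-bounded}.
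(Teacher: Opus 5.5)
Your proof is correct, and it proves a bit more than the paper's does. The floor and the TV bound match the paper's argument. The KL step differs. The paper substitutes $\epsilon_{\min}=\mu/V$ and $\TV\le\delta+\mu$ into the second-order expansion of Lemma~\ref{lem:kl-tv-bounded} and keeps its cubic remainder. You instead use the exact inequality $\KLs(p\,\|\,q)\le\chi^2(p,q)$ together with $\sum_x\Delta(x)^2\le 2\TV^2$, which gives the remainder-free bound $\KLs(p\,\|\,q^\mu)\le 2V(\delta+\mu)^2/\mu$.

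\textbf{What your route buys.} In the lemma's display the ``remainder'' is never lower order: its ratio to the leading term is of order $V(\delta+\mu)/\mu\ge V$. So the paper's closing claim that it is genuinely smaller whenever $\delta+\mu<1$ does not hold. The same inequality is exactly why your bound implies the displayed form: the surplus $V(\delta+\mu)^2/\mu$ is dominated by the cubic term. Taken literally, the display at $\mu\asymp\delta$ gives only $O(V^2\delta)$. It is your bound, equal to $8V\delta$ at $\mu=\delta$, that actually delivers the stated $O(V\delta)$ and the exponent $\min(a,2-a)$ for $\mu=\delta^a$. The paper's route gains only a leading constant $1$ instead of $2$, which is moot given the remainder.

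\textbf{Your stated reason for switching.} This is slightly off. For this one-sided bound the expansion does not need $|\Delta|/q$ to be small, since $(1+u)\log(1+u)\le u+u^2/2+\tfrac12|u|^3$ for all $u\ge-1$. The real defect is the size of the remainder, not its validity.

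\textbf{Your lower-bound construction.} The paper argues only about the upper bound, so this goes beyond it, and it is right in substance. Two details should be corrected. In the point-mass example, the $\Theta(\mu)$ cost comes from the mass $\mu(1-1/V)$ removed from the support symbol; the floored symbols carry no $p$-mass. In the second example, the bound is $\delta\log(\delta V/\mu)-\delta\ge\delta(\log(V/\delta)-1)$ when $\mu\le\delta^2$, which is not $O(\delta^2)$.
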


\begin{proof}
By construction $q^{\mu}(x) \geq \mu/V$ for all $x$. The TV bound
$\TV(p, q^{\mu}) \leq \TV(p, \hat q) + \TV(\hat q, q^{\mu}) \leq
\delta + \mu$ follows from the triangle inequality and
$\TV(\hat q, \mathrm{Unif}) \leq 1$. Applying
Lemma~\ref{lem:kl-tv-bounded} with floor $\epsilon_{\min} = \mu/V$
and TV at most $\delta + \mu$:
\[
\KLs(p\,\|\,q^{\mu})
\leq \frac{V (\delta + \mu)^2}{\mu}
   + O\!\left( \frac{V^2 (\delta + \mu)^3}{\mu^2} \right).
\]
The remainder is genuinely smaller order whenever
$\delta + \mu < 1$; in particular, the small-TV restriction of
Lemma~\ref{lem:kl-tv-bounded} is replaced by the much milder
condition that the smoothing level dominates the cubic correction.
\end{proof}

\begin{remark}[Application to sink-plus-recent eviction]
\label{rmk:application-sink}
In Corollary~\ref{cor:topk}, $p = p_t$ is the full-cache
distribution. The $O(k^{-2\alpha})$ KL scaling holds only along the
bounded-floor route; the operational-smoothing route gives a weaker
exponent, as we now make precise.

\textbf{(a) Bounded-floor route (gives $k^{-2\alpha}$).} If the
decoder reconstruction $\tilde p_t$ satisfies a uniform floor
$\tilde p_t(x) \geq \epsilon_{\min}$ (e.g.\ softmax of bounded
logits), and the per-step TV bound of Definition~\ref{def:poly-mixing-unif} gives
$\TV(p_t, \tilde p_t) \leq C_{\mathrm{TS}} k^{-\alpha}$, then once
$k \geq (2 C_{\mathrm{TS}}/\epsilon_{\min})^{1/\alpha}$ the small-TV
condition of Lemma~\ref{lem:kl-tv-bounded} holds and
$\KLs(p_t\,\|\,\tilde p_t) \leq (2/\epsilon_{\min})
C_{\mathrm{TS}}^2 k^{-2\alpha}$. This is the only route that yields
the exponent doubling, and it requires $\epsilon_{\min}$ to be
non-negligible relative to $k^{-\alpha}$ (the budget must exceed the
floor-dependent threshold above).

\textbf{(b) Operational-smoothing route (does \emph{not} give
$k^{-2\alpha}$).} If the worst-case floor
$\epsilon_{\min} = V^{-1}e^{-2B}$ makes (a)'s threshold impractically
large, one may instead smooth the decoder,
$\tilde p_t^{\mu} := (1-\mu)\tilde p_t + \mu\,\mathrm{Unif}$. With
$\delta := C_{\mathrm{TS}}k^{-\alpha}$ and smoothing $\mu = \delta^a$
($a \in (0,1]$), Lemma~\ref{lem:kl-tv-smoothed} gives
$\KLs(p_t\,\|\,\tilde p_t^{\mu}) = O(V\,\delta^{\min(a,\,2-a)})$.
The exponent is maximized at $a = 1$, giving $O(V\delta) =
O(V k^{-\alpha})$---\emph{linear} in the TV gap, not quadratic.
Operational smoothing therefore trades the vacuous-floor problem
for a loss of exponent doubling: it cannot reach $k^{-2\alpha}$.

\textbf{Empirical interpretation.} The measured KL exponents
($\alpha_{\KLs} \approx 1.04, 0.74$) are close to twice the TV
exponents ($\alpha_{\TV} \approx 0.44, 0.38$), with ratios in
$[1.93, 2.38]$. This is consistent with the bounded-floor route (a)
operating in the empirical regime---i.e.\ the effective floor on the
measured distributions is non-negligible, so the quadratic relation
applies---rather than with the worst-case smoothing route. We
report the doubling as an empirical observation consistent with
route (a), not as a guarantee derivable from the TV bound alone.
\end{remark}

\section{Auxiliary process and conditional MI martingale}\label{app:aux}

\begin{definition}[Pointwise mutual informations]\label{def:pmi-app}
For an $\F_t$-adapted auxiliary $\{U_t\}$ with discrete range,
\[
\iota_t^{XU} := \log \frac{p_t^{X,U}(X_t, U_t)}{p_t^X(X_t)\, p_t^U(U_t)},
\quad
\iota_t^{UQ} := \log \frac{p_t^{U,Q}(U_t, Q_t)}{p_t^U(U_t)\, p_t^Q(Q_t)},
\]
where conditionals are with respect to $\Fhat_{t-1}$.
\end{definition}

\begin{definition}[Conditional MIs and WZ rate]\label{def:WZ-rate}
$\mathcal{I}_t^{XU} := I(X_t; U_t \mid \Fhat_{t-1})$,
$\mathcal{I}_t^{UQ} := I(U_t; Q_t \mid \Fhat_{t-1})$,
$\mathcal{R}_t^{\mathrm{WZ}} := \mathcal{I}_t^{XU} - \mathcal{I}_t^{UQ}$.
Each is $\Fhat_{t-1}$-measurable. When the Wyner--Ziv Markov chain
$U_t - X_t - Q_t$ holds conditionally on $\Fhat_{t-1}$ (the
auxiliary depends on the side information only through the source,
as in the classical setting \citep{wyner1976rate}), the difference
form coincides with the conditional mutual information,
\[
\begin{split}
\mathcal{R}_t^{\mathrm{WZ}}
&= I(X_t; U_t \mid \Fhat_{t-1}) - I(U_t; Q_t \mid \Fhat_{t-1})\\
&= I(X_t; U_t \mid Q_t, \Fhat_{t-1}),
\end{split}
\]
by the chain rule $I(U_t; X_t, Q_t \mid \Fhat_{t-1}) =
I(U_t; Q_t \mid \Fhat_{t-1}) + I(U_t; X_t \mid Q_t, \Fhat_{t-1})$
together with the Markov identity $I(U_t; X_t \mid \Fhat_{t-1}) =
I(U_t; X_t, Q_t \mid \Fhat_{t-1})$. We use the conditional-MI form
whenever monotonicity under garbling of $U_t$ is needed.
\end{definition}

\begin{assumption}[Auxiliary marginal positivity]\label{ass:U-positivity}
There exists $\epsilon_U > 0$ such that for all $t \leq n_{\max}$ and
$u$ in the support of $p_t^U(\cdot)$, $p_t^U(u) \geq \epsilon_U$ a.s.
\end{assumption}

\begin{lemma}[Pointwise MI boundedness]\label{lem:pmi-bound-app}
Under Assumptions~\ref{ass:bounded-logits}--\ref{ass:U-positivity},
$|\iota_t^{XU}|, |\iota_t^{UQ}| \leq B_\iota := \log(1/(\epsilon\epsilon_U))$.
\end{lemma}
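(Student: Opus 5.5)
The plan is to write each pointwise mutual information as a log-ratio of a conditional probability to a marginal, and then bound the numerator and denominator separately, using the two available floors. These floors are $\epsilon$ from Assumption~\ref{ass:bounded-logits} and $\epsilon_U$ from Assumption~\ref{ass:U-positivity}. Throughout, every distribution is conditional on $\Fhat_{t-1}$. For a realized pair we have
\[
\iota_t^{XU} = \log\frac{p_t^{X\mid U}(X_t\mid U_t)}{p_t^X(X_t)},
\qquad
\iota_t^{UQ} = \log\frac{p_t^{U\mid Q}(U_t\mid Q_t)}{p_t^U(U_t)} .
\]
Because a realized point has positive joint probability almost surely, these ratios are well defined.

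\textbf{Step 1 (floor on the $X$-conditionals).} Since $\Fhat_{t-1}\subseteq\F_{t-2}\subseteq\F_{t-1}$ by Lemma~\ref{lem:causality-app}, the tower property writes $p_t^X(\cdot)=\E[p(\cdot\mid\F_{t-1})\mid\Fhat_{t-1}]$ as a mixture of next-token conditionals. Each of those conditionals is $\ge\epsilon$ componentwise, so $p_t^X(x)\ge\epsilon$.

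The same argument covers the conditional of $X_t$ given $(U_t,\Fhat_{t-1})$. This uses the standing requirement that $U_t$ be $\Fhat_t$-measurable, which gives $U_t\in\F_{t-1}$. Hence $p_t^{X\mid U}(x\mid u)$ is also a mixture of the $p(\cdot\mid\F_{t-1})$, and so $p_t^{X\mid U}(x\mid u)\ge\epsilon$.

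Combining this with the trivial bounds $p\le 1$ gives $\epsilon\le e^{\iota_t^{XU}}\le 1/\epsilon$. Therefore $|\iota_t^{XU}|\le\log(1/\epsilon)\le B_\iota$.

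\textbf{Step 2 (the $UQ$ term, upper side).} We have $e^{\iota_t^{UQ}}=p_t^{U\mid Q}(U_t\mid Q_t)/p_t^U(U_t)\le 1/p_t^U(U_t)\le 1/\epsilon_U$. The last inequality holds because $U_t$ lies in the support of $p_t^U$, where Assumption~\ref{ass:U-positivity} applies. So $\iota_t^{UQ}\le\log(1/\epsilon_U)\le B_\iota$.

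\textbf{Step 3 (the $UQ$ term, lower side) — the main obstacle.} Here we need $p_t^{U\mid Q}(u\mid q)\ge \epsilon\,\epsilon_U\,p_t^U(u)$, or some comparable floor. Neither assumption gives this directly.

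The route I would try first is the symmetric form $e^{\iota_t^{UQ}}=p_t^{Q\mid U}(Q_t\mid U_t)/p_t^Q(Q_t)\ge p_t^{Q\mid U}(Q_t\mid U_t)$. I would then argue that the realized side information has conditional probability at least $\epsilon\epsilon_U$ given the auxiliary.

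For the quantized side information of Appendix~\ref{app:operational}, this is plausible. The kernel $\kappa_M(\cdot\mid q_t)$ is a softmax of bounded scores, so it inherits a floor of the same form as $\epsilon$. Mixing that kernel over the $\F_{t-1}$-measurable query then gives $p_t^{Q\mid U}\ge\epsilon$.

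For a genuinely continuous $Q_t\in\R^k$, I expect the bound to need an added density-ratio hypothesis on $Q_t$ given $U_t$. I would state that hypothesis explicitly rather than claim it follows from Assumptions~\ref{ass:bounded-logits}--\ref{ass:U-positivity} alone.

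With such a floor in hand, $\iota_t^{UQ}\ge-\log(1/(\epsilon\epsilon_U))$. Together with Steps 1 and 2, this yields the claimed bound $B_\iota$.
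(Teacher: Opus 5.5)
The gap you flag in Step~3 is genuine, but it is a defect of the lemma itself rather than a missing idea on your part. The paper's own proof does not close it either. That proof is the single line ``the joint is $\le 1$ and the product of marginals is $\ge\epsilon\epsilon_U$''. It yields only the upper bound $\iota\le B_\iota$ and never addresses $\iota\ge -B_\iota$. For $\iota_t^{UQ}$ it also tacitly needs a floor $\epsilon$ on the marginal of $Q_t$ and a joint bounded by $1$, neither of which holds for $Q_t\in\R^k$.

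The lower side cannot be recovered from Assumptions~\ref{ass:bounded-logits}--\ref{ass:U-positivity}. Both $U_t$ and $Q_t$ are functions of $X_{<t}$, and $\epsilon$ only floors next-token conditionals. Here is a counterexample:
\begin{itemize}
\item take constant codes before time $t$ and $U_t=\Chat_t=\mathbf{1}\{X_{t-1}=a\}$;
\item take a query constant on $\{X_{t-1}\neq a\}\cup\{X_{t-1}=a,\ X_{t-m:t-2}=s\}$;
\item let $a$ and the tokens of $s$ each have conditional probability of order $\epsilon$.
\end{itemize}
Then $\epsilon_U\approx\epsilon$, and the realization $U_t=1$ at that query value has positive probability but posterior of order $\epsilon^{m}$. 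Hence $\iota_t^{UQ}\approx(m-1)\log\epsilon<-B_\iota$ for moderate $m$.

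So an extra hypothesis is needed, as you say. The minimal one is a posterior floor $p_t^{U\mid Q}(U_t\mid Q_t)\ge\epsilon\epsilon_U$. Since $p_t^U\le1$, it gives $\iota_t^{UQ}\ge\log p_t^{U\mid Q}\ge-B_\iota$ directly and needs no discreteness of $Q_t$. It holds trivially, with $\iota_t^{UQ}\ge0$, whenever $U_t$ is determined by $(Q_t,\Fhat_{t-1})$.

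Your quantized-kernel route is weaker as written, for two reasons:
\begin{itemize}
\item The symmetric form uses $p_t^Q\le1$, so it really bounds the quantized $Y_t$, not $Q_t$.
\item The softmax kernel's floor $\kappa_{\min}$ is a different constant from $\epsilon$, so you get $\iota\ge\log\kappa_{\min}$. This matches $B_\iota$ only if $\kappa_{\min}\ge\epsilon\epsilon_U$.
\end{itemize}

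Where you do prove things, your route is sharper than the paper's.

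Step~1 gives a two-sided bound $|\iota_t^{XU}|\le\log(1/\epsilon)$ without Assumption~\ref{ass:U-positivity}, via the conditional floor $p_t^{X\mid U}\ge\epsilon$. That floor, and hence the lower side, depends essentially on $U_t$ being $\Fhat_t$-measurable and therefore $\F_{t-1}$-measurable, as in Definition~\ref{def:Rstar}. You were right to invoke this explicitly. Under mere $\F_t$-adaptedness, as Definition~\ref{def:pmi-app} is phrased, $U_t$ could XOR $\mathbf{1}\{X_t=a\}$ with a rare past event, and $\iota_t^{XU}$ would be very negative. Your ingredient is exactly what completes the paper's joint-versus-product line for $\iota_t^{XU}$: the joint $p_t^{X\mid U}p_t^U\ge\epsilon\epsilon_U$ and the product is $\le1$.

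Step~2 bounds the upper side of $\iota_t^{UQ}$ by $\log(1/\epsilon_U)$ by writing it as posterior over prior. This avoids the unsupported floor on the $Q_t$-marginal that the paper's argument would need.
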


\begin{proof}
The joint $\leq 1$ and the product of marginals is $\geq \epsilon\epsilon_U$.
\end{proof}

\begin{theorem}[$M^{(4)}$ as a martingale]\label{thm:M4-app}
With $Z_t := \iota_t^{XU} - \iota_t^{UQ}$, $\Sigma_n := \sum_t Z_t$,
$A_n^{(4)} := \sum_t \E[Z_t \mid \F_{t-1}]$, the Doob decomposition
$M_n^{(4)} := \Sigma_n - A_n^{(4)}$ is an $\{\F_n\}$-martingale with
$\E[M_n^{(4)}] = 0$ and increments bounded by $4B_\iota$ a.s.
Furthermore, $\E[A_n^{(4)}] = \sum_t \E[\mathcal{R}_t^{\mathrm{WZ}}]$.
\end{theorem}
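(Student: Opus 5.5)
The plan is to verify the three claims of Theorem~\ref{thm:M4-app} in turn: the martingale property of the Doob compensator, the increment bound, and the expectation identity.

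\emph{Martingale property and zero mean.} By Lemma~\ref{lem:pmi-bound-app}, $|Z_t| \le 2B_\iota$, so each $Z_t$ is integrable and $\E[Z_t\mid\F_{t-1}]$ is well defined and $\F_{t-1}$-measurable. I will check that $Z_t$ is $\F_t$-measurable. The conditioning $\sigma$-algebra $\Fhat_{t-1}\subseteq\F_{t-2}$ comes from Lemma~\ref{lem:causality-app}. The auxiliary $U_t$ is $\F_t$-adapted by Definition~\ref{def:Rstar}. The query $Q_t$ is a deterministic function of $X_{<t}$ in the model, hence $\F_{t-1}$-measurable. Thus $\iota_t^{XU}$ and $\iota_t^{UQ}$ are measurable functions of $(X_t,U_t,Q_t)$ and of $\Fhat_{t-1}$-measurable conditional laws, so they are $\F_t$-measurable. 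Doob's decomposition then gives that $M^{(4)}_n=\sum_{t\le n}(Z_t-\E[Z_t\mid\F_{t-1}])$ has increments $D_t$ with $\E[D_t\mid\F_{t-1}]=0$. So $M^{(4)}$ is an $\{\F_n\}$-martingale, and since $M^{(4)}_0=0$, we get $\E[M_n^{(4)}]=0$.

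\emph{Increment bound.} The triangle inequality gives $|D_t|\le |Z_t|+|\E[Z_t\mid\F_{t-1}]|\le 2\cdot 2B_\iota=4B_\iota$ almost surely.

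\emph{Expectation of the compensator.} By the tower property, $\E[A_n^{(4)}]=\sum_t\E[Z_t]=\sum_t\E\bigl[\E[Z_t\mid\Fhat_{t-1}]\bigr]$. It therefore suffices to show $\E[\iota_t^{XU}\mid\Fhat_{t-1}]=\mathcal{I}_t^{XU}$, and the same for the $UQ$ term. This is the definition of conditional mutual information as the conditional expectation of pointwise mutual information. The requirement is that the densities $p_t^{X,U}$ and $p_t^{U,Q}$ in Definition~\ref{def:pmi-app} are the regular conditional laws of $(X_t,U_t)$ and $(U_t,Q_t)$ given $\Fhat_{t-1}$, evaluated at the realized values. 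Subtracting the two identities gives $\E[Z_t\mid\Fhat_{t-1}]=\mathcal{R}_t^{\mathrm{WZ}}$, and summing over $t$ gives the claim.

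The main obstacle is this last identification. The compensator conditions on the fine $\sigma$-algebra $\F_{t-1}$, while the information densities are built from laws conditioned on the coarse $\Fhat_{t-1}$. In general $\E[Z_t\mid\F_{t-1}]\neq\mathcal{R}_t^{\mathrm{WZ}}$ pointwise, so only the expectation statement is claimed and it goes through the tower property. A second, lesser point concerns $Q_t$, which is continuous: $\iota^{UQ}_t$ must be read as a density ratio. Lemma~\ref{lem:pmi-bound-app} already assumes it bounded, so I would invoke that lemma rather than re-derive it, treating $Q_t$ as discretized if needed.
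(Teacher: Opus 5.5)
Your proposal is correct and follows the paper's proof. The martingale property comes from the standard Doob decomposition, and the expectation identity is the tower property $\E[Z_t]=\E[\E[Z_t\mid\F_{t-1}]]=\E[\E[Z_t\mid\Fhat_{t-1}]]=\E[\mathcal{R}_t^{\mathrm{WZ}}]$, exactly as you argue. Beyond the paper's terse proof, you supply the $\F_t$-measurability of $Z_t$, the triangle-inequality increment bound $|Z_t|+|\E[Z_t\mid\F_{t-1}]|\le 4B_\iota$ via Lemma~\ref{lem:pmi-bound-app}, and the identification of $\E[\iota_t\mid\Fhat_{t-1}]$ with the conditional mutual information. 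You also correctly note that only the expectation of the compensator, not the compensator pointwise, equals $\mathcal{R}_t^{\mathrm{WZ}}$.
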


\begin{proof}
Standard Doob decomposition \citep[Ch.~12]{williams1991probability}.
The expectation identity uses the tower property:
$\E[Z_t] = \E[\E[Z_t \mid \F_{t-1}]] = \E[\E[Z_t \mid \Fhat_{t-1}]]
= \E[\mathcal{R}_t^{\mathrm{WZ}}]$.
\end{proof}

\section{FFN transfer (one-sided and two-sided)}\label{app:ffn}

\begin{lemma}[FFN Lipschitz]\label{lem:ffn-lipschitz-app}
The composite map
$g = \mathrm{softmax} \circ U \circ \mathrm{FFN} \circ \mathrm{LN}$ satisfies
\[
\KL{g(a)}{g(a')} \leq L_g^2 \|a - a'\|_2^2
\]
on bounded domains, with $L_g$ depending on layer norms and FFN
Lipschitz constants.
\end{lemma}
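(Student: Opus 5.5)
The plan is to prove the bound by a chain-rule/Lipschitz composition: write $g = \mathrm{softmax}\circ U \circ \mathrm{FFN}\circ \mathrm{LN}$, bound the logit perturbation in $\ell_2$ by a product of Lipschitz constants, and then convert a logit perturbation into a KL bound between the two softmax outputs.

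\textbf{Step 1 (logit perturbation).} Fix a bounded domain $\mathcal{A}$ on which $\|a\|_2$ and the LayerNorm variance are bounded away from zero. Lipschitz constants then exist for each stage:
\begin{itemize}
\item $L_{\mathrm{LN}}$ for LayerNorm, which is smooth away from zero variance with a bounded Jacobian on $\mathcal{A}$;
\item $L_{\mathrm{FFN}} \le \|W_2\|_{\mathrm{op}}\,L_\sigma\,\|W_1\|_{\mathrm{op}}$ for the FFN, where $L_\sigma$ is the activation's Lipschitz constant;
\item $\|U\|_{\mathrm{op}}$ for the unembedding.
\end{itemize}
With $z = U\,\mathrm{FFN}(\mathrm{LN}(a))$ and $z'$ defined analogously, this gives
\[
\|z - z'\|_2 \le \|U\|_{\mathrm{op}} L_{\mathrm{FFN}} L_{\mathrm{LN}}\,\|a-a'\|_2 .
\]
If the block carries a residual path, the same bound holds with $1 + L_{\mathrm{FFN}}L_{\mathrm{LN}}$ in place of the branch constant.

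\textbf{Step 2 (softmax KL in terms of logits).} For $p = \mathrm{softmax}(z)$ and $p' = \mathrm{softmax}(z')$, the log-partition function $A(z) = \log\sum_x e^{z_x}$ gives the exact identity
\[
\KL{p}{p'} = A(z') - A(z) - \langle \nabla A(z), z'-z\rangle ,
\]
the Bregman divergence of $A$. The Hessian $\nabla^2 A = \mathrm{diag}(p) - pp^\top$ has operator norm at most $1/2$, and in particular at most $1$. Taylor's theorem with integral remainder then yields
\[
\KL{p}{p'} \le \tfrac12 \sup\|\nabla^2 A\|_{\mathrm{op}}\,\|z-z'\|_2^2 \le \tfrac12\|z-z'\|_2^2 .
\]
This bound is global and needs neither a density floor nor Lemma~\ref{lem:kl-tv-bounded}.

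\textbf{Step 3 (combine).} Combining the two steps gives
\[
\KL{g(a)}{g(a')} \le \tfrac12\bigl(\|U\|_{\mathrm{op}}L_{\mathrm{FFN}}L_{\mathrm{LN}}\bigr)^2\|a-a'\|_2^2 ,
\]
so $L_g^2 := \tfrac12(\|U\|L_{\mathrm{FFN}}L_{\mathrm{LN}})^2$, which depends only on the layer-norm and FFN constants as claimed.

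The main obstacle is Step 1's LayerNorm constant. $\mathrm{LN}(a) = \gamma\odot(a-\bar a)/\sqrt{\mathrm{Var}(a)+\epsilon_{\mathrm{LN}}} + \beta$ is not globally Lipschitz when $\epsilon_{\mathrm{LN}}$ is tiny. I would bound its Jacobian explicitly, obtaining
\[
\|J\|_{\mathrm{op}} \le \frac{\|\gamma\|_\infty}{\sqrt{\inf_{\mathcal{A}}\mathrm{Var}+\epsilon_{\mathrm{LN}}}} ,
\]
and then use convexity of the segment $[a,a']$. If $\mathcal{A}$ is not convex, I would assume the segment stays in a region of bounded-below variance. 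This is why the lemma is stated only on bounded domains.
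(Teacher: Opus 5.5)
Your argument is correct. The paper states this lemma without proof. The nearest argument it gives is the $\ell_2$-to-KL bridge in Step~2 of the proof of Theorem~\ref{thm:multi-WZ-app}, which has the same two-stage structure as yours: a Lipschitz chain into the logits, then a softmax-curvature bound. There, however, the curvature step appears only in local form, $\KLs \le \tfrac12\|\zeta-\hat\zeta\|_2^2 + O(\|\zeta-\hat\zeta\|_2^3)$, justified by appeal to bounded logits.

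Your Step~2 is the stronger version of that step. Writing the divergence as the Bregman divergence of the log-partition function and bounding $\nabla^2 A=\mathrm{diag}(p)-pp^\top$ uniformly gives an exact, global inequality. It has no cubic remainder and does not depend on Assumption~\ref{ass:bounded-logits}, so it proves the lemma as stated rather than a small-perturbation version of it. Your own bound $\|\nabla^2 A\|_{\mathrm{op}}\le\tfrac12$ would even give the constant $\tfrac14$ in place of $\tfrac12$.

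One small correction to Step~1: for $\epsilon_{\mathrm{LN}}>0$, LayerNorm \emph{is} globally Lipschitz, with constant $\|\gamma\|_\infty/\sqrt{\epsilon_{\mathrm{LN}}}$, exactly as your Jacobian bound shows when the variance is zero. Its output also always lies in a bounded set, so the FFN is only evaluated there. Hence $L_{\mathrm{FFN}}$ is finite even for gated activations that are merely locally Lipschitz. The variance floor and bounded domain therefore only serve to replace $\sqrt{\epsilon_{\mathrm{LN}}}$ by a realistic constant; they are not needed for the bound to exist.
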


\begin{theorem}[FFN upper transfer]\label{thm:ffn-upper-app}
$R^\star_{\mathrm{next}}(D) \leq R^\star_{\mathrm{attn}}(D / L_g^2)$.
\end{theorem}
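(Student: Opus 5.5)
The plan is an achievability argument: start from the rate function $R^\star_{\mathrm{attn}}$ defined on the attention-output layer and push an admissible auxiliary through the post-attention map $g$. Concretely, fix $\delta>0$ and take an auxiliary process $\{U_t\}$ that is $(D/L_g^2)$-admissible for the attention-level problem. Admissibility means there is a decoder $g_t(U_t,Q_t,\Fhat_{t-1})$ producing an estimate $\hat a_t$ of the attention output $a_t$. This estimate satisfies $\E\|a_t-\hat a_t\|_2^2\le D/L_g^2$ on average. We also pick $\{U_t\}$ so that its averaged Wyner--Ziv rate is within $\delta$ of $R^\star_{\mathrm{attn}}(D/L_g^2)$.

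With that auxiliary in hand, build the next-token decoder by composition: set $\tilde p_t := g(\hat a_t)$. This is still a measurable function of $(U_t,Q_t,\Fhat_{t-1})$, so the same $\{U_t\}$ is a candidate for the next-token problem. The true distribution is $p_t=g(a_t)$, so Lemma~\ref{lem:ffn-lipschitz-app} gives pointwise
\[
\KL{p_t}{\tilde p_t}\le L_g^2\|a_t-\hat a_t\|_2^2 .
\]
Averaging over $t$ and taking expectations then yields average KL distortion at most $L_g^2\cdot D/L_g^2=D$. Hence $\{U_t\}$ is $D$-admissible in the sense of Definition~\ref{def:Rstar}.

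The rate side needs no further argument. The per-step quantity $\mathcal{R}^{\mathrm{WZ}}_t=I(X_t;U_t\mid\Fhat_{t-1})-I(U_t;Q_t\mid\Fhat_{t-1})$ depends only on the joint law of $(X_t,U_t,Q_t,\Fhat_{t-1})$, not on the decoder. So the liminf of the averaged rate is identical in the two problems. Taking the infimum defining $R^\star_{\mathrm{next}}(D)$ therefore gives $R^\star_{\mathrm{next}}(D)\le R^\star_{\mathrm{attn}}(D/L_g^2)+\delta$, and letting $\delta\downarrow0$ proves the claim.

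I expect the main obstacle to be the domain condition in Lemma~\ref{lem:ffn-lipschitz-app}, which only holds "on bounded domains". To apply it I need both $a_t$ and $\hat a_t$ to lie in a common bounded set. For $a_t$ this follows because attention outputs are convex combinations of bounded values under Assumption~\ref{ass:bounded-logits}. For $\hat a_t$ I would first replace the attention-level decoder by its projection onto that bounded convex set. The projection is $1$-Lipschitz and the set contains $a_t$, so it can only decrease $\|a_t-\hat a_t\|_2$, and admissibility at level $D/L_g^2$ is preserved.

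A secondary point is that $R^\star_{\mathrm{attn}}$ must be read with the same auxiliary class and filtration, with squared-error distortion on $a_t$. I will state this convention explicitly at the outset, so that the two infima range over the same set of auxiliaries.
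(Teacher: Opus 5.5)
Your proposal is correct and is essentially the paper's argument: the paper's proof is the one-line feasible-set inclusion $\mathcal{F}_{\mathrm{attn}}(D/L_g^2)\subseteq\mathcal{F}_{\mathrm{next}}(D)$, obtained as you do by composing the attention-level reconstruction with $g$, invoking Lemma~\ref{lem:ffn-lipschitz-app}, and noting that the rate functional is unchanged. Your projection step and your explicit convention on the auxiliary class are sensible details the paper leaves implicit. One small correction: boundedness of $a_t$ is better justified by there being only finitely many contexts of length at most $n_{\max}$ than by Assumption~\ref{ass:bounded-logits}, which bounds the output logits rather than the value vectors.
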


\begin{proof}
Any scheme achieving attention-level distortion $D/L_g^2$ achieves
next-token distortion $\leq D$ by Lemma~\ref{lem:ffn-lipschitz-app}.
Hence $\mathcal{F}_{\mathrm{attn}}(D/L_g^2) \subseteq \mathcal{F}_{\mathrm{next}}(D)$,
implying the rate inequality.
\end{proof}

\begin{assumption}[Bi-Lipschitz, optional]\label{ass:bilipschitz-app}
$\KL{g(a)}{g(a')} \geq m_g^2 \|a-a'\|^2$ on the active domain.
\end{assumption}

\begin{theorem}[Two-sided transfer]\label{thm:ffn-invariance-app}
Under Assumption~\ref{ass:bilipschitz-app},
\[
R^\star_{\mathrm{attn}}(D/L_g^2) \geq R^\star_{\mathrm{next}}(D) \geq R^\star_{\mathrm{attn}}(D/m_g^2).
\]
\end{theorem}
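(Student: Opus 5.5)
The statement to prove is Theorem~\ref{thm:ffn-invariance-app}: a two-sided sandwich of $R^\star_{\mathrm{next}}(D)$ between attention-level rate functions. I will prove each inequality separately by inclusion of feasible sets. In each case I reuse the mechanism behind Theorem~\ref{thm:ffn-upper-app}: a scheme that meets one distortion constraint meets the other.

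\textbf{Left inequality.} The bound $R^\star_{\mathrm{next}}(D) \le R^\star_{\mathrm{attn}}(D/L_g^2)$ is exactly Theorem~\ref{thm:ffn-upper-app}. I will simply invoke it; it needs only Lemma~\ref{lem:ffn-lipschitz-app}.

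\textbf{Right inequality.} Here I need $R^\star_{\mathrm{next}}(D) \ge R^\star_{\mathrm{attn}}(D/m_g^2)$. I will show $\mathcal{F}_{\mathrm{next}}(D) \subseteq \mathcal{F}_{\mathrm{attn}}(D/m_g^2)$; taking infima over the smaller set then gives the larger value.

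Fix a next-token-admissible scheme whose decoder outputs $g(\tilde a_t)$, where $\tilde a_t$ is the reconstructed attention output. I need to assume, or argue, that the next-token decoder factors through an attention-level reconstruction lying in the active domain. Then Assumption~\ref{ass:bilipschitz-app} gives, pointwise,
\[
\|a_t-\tilde a_t\|_2^2 \le m_g^{-2}\,\KL{g(a_t)}{g(\tilde a_t)}.
\]
Averaging over $t$ and taking expectations shows the attention-level distortion is at most $D/m_g^2$. The encoder, the code sequence $\Chat_t$, and hence the rate $R_n$ are unchanged. So every scheme feasible for the next-token problem at level $D$ is feasible for the attention problem at level $D/m_g^2$, with the same rate.

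If the rate functions are instead defined through auxiliaries, as in Definition~\ref{def:Rstar}, the same argument runs at that level. The same $U_t$ is $D/m_g^2$-admissible for the attention problem via the decoder $\tilde a_t$. The per-step WZ rate $\mathcal{R}_t^{\mathrm{WZ}}$ does not depend on which distortion is measured, so the infimum inequality follows. Chaining the two inequalities gives the sandwich.

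\textbf{Main obstacle.} The hard step is the factorization in the right inequality. A general next-token decoder outputs an arbitrary $\tilde p_t$, which need not lie in the image $g(\text{active domain})$, so bi-Lipschitzness cannot be applied to it directly.

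My first attempt will be a projection argument. Replace $\tilde p_t$ by $g(\tilde a_t)$, where $\tilde a_t$ minimizes $\KL{p_t}{g(a)}$ over $a$ in the domain, measurably in the decoder's inputs. Since $p_t = g(a_t)$ lies in the image, this replacement does not increase KL distortion by more than a factor of two; I would seek a triangle-type bound valid under Assumption~\ref{ass:bounded-logits}'s floor, using Lemma~\ref{lem:kl-tv-bounded} to pass between KL and squared TV. That yields the inequality up to a constant factor absorbed into $m_g$.

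If that fails, I will state the theorem for decoders restricted to the image of $g$. This matches the architectural decoder, since the model's head applies $g$ to an attention output.
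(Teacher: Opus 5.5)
Your inclusion argument is the one the paper intends. The paper states this theorem without a separate proof: it follows by combining Theorem~\ref{thm:ffn-upper-app} with the reverse inclusion $\mathcal{F}_{\mathrm{next}}(D)\subseteq\mathcal{F}_{\mathrm{attn}}(D/m_g^2)$, obtained from Assumption~\ref{ass:bilipschitz-app} exactly as you describe, with the code and hence the rate unchanged. The paper's own proof of Theorem~\ref{thm:ffn-upper-app} already treats a next-token scheme as an attention-level reconstruction $\tilde a_t$ pushed through $g$. So the factorization you flag belongs to the setting; it is not something to be derived.

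Drop the projection detour, because it fails for three reasons:
\begin{itemize}
\item The decoder cannot compute $\arg\min_a \KL{p_t}{g(a)}$, since it does not observe $p_t$.
\item KL has no triangle inequality, and the claimed factor-of-two loss is unjustified.
\item More fundamentally, the right inequality is false for unrestricted decoders, so no projection can recover the stated constant $m_g$.
\end{itemize}
To see the last point, let $a_t$ be equiprobable on $\{a_1,a_2\}$ and let the decoder have no information. The mixture $\tilde p_t=\tfrac12(g(a_1)+g(a_2))$ attains KL distortion equal to the Jensen--Shannon divergence, which is at most $\log 2$, at zero rate. In contrast, every zero-rate attention reconstruction has expected squared error at least $\|a_1-a_2\|^2/4$. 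This exceeds $\log 2/m_g^2$ as soon as $m_g^2\|a_1-a_2\|^2>4\log 2$, so $R^\star_{\mathrm{attn}}(\log 2/m_g^2)>0=R^\star_{\mathrm{next}}(\log 2)$.

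Restricting to decoders of the form $g(\tilde a_t)$, with $\tilde a_t$ in the active domain where the bi-Lipschitz bound holds, is therefore a necessary hypothesis of the statement, not a fallback. Adopt it from the outset, and the rest of your argument, including the auxiliary-level version, goes through.
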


\section{Concentration tools}\label{app:concentration}

\begin{theorem}[Azuma--Hoeffding]\label{thm:azuma-app}
For an $\{\F_n\}$-martingale $\{M_n\}$ with $M_0 = 0$ and
$|M_t - M_{t-1}| \leq c_t$,
$\Prob(|M_n| \geq \lambda) \leq 2\exp(-\lambda^2 / (2 \sum_t c_t^2))$.
\end{theorem}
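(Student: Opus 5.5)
We prove the Azuma--Hoeffding inequality by the standard exponential-moment (Chernoff) argument applied to the martingale differences $\Delta_t := M_t - M_{t-1}$. Each $\Delta_t$ is $\F_t$-measurable, satisfies $\E[\Delta_t \mid \F_{t-1}] = 0$, and obeys $|\Delta_t| \le c_t$ almost surely. The argument has three steps: a conditional moment-generating bound for each increment, iteration of that bound through the tower property, and optimization of the resulting Chernoff bound followed by a union bound over the two tails.

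\textbf{Step 1 (conditional Hoeffding lemma).} For fixed $\lambda' \in \R$, convexity of $x \mapsto e^{\lambda' x}$ on $[-c_t, c_t]$ gives the pointwise bound
\[
e^{\lambda' x} \le \frac{c_t - x}{2c_t}e^{-\lambda' c_t} + \frac{c_t + x}{2c_t}e^{\lambda' c_t}.
\]
Substitute $x = \Delta_t$ and take $\E[\cdot \mid \F_{t-1}]$. Since the conditional mean of $\Delta_t$ is zero, this yields $\E[e^{\lambda'\Delta_t} \mid \F_{t-1}] \le \cosh(\lambda' c_t)$. Comparing Taylor series term by term, using $(2k)! \ge 2^k k!$, gives $\cosh(y) \le e^{y^2/2}$. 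Hence
\[
\E[e^{\lambda'\Delta_t} \mid \F_{t-1}] \le e^{\lambda'^2 c_t^2/2}.
\]
This step is where the martingale property and the a.s.\ bounded increments are used. It is the only nontrivial estimate, so I expect it to be the main, though mild, obstacle.

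\textbf{Step 2 (iteration).} Write $e^{\lambda' M_n} = e^{\lambda' M_{n-1}} e^{\lambda' \Delta_n}$. The factor $e^{\lambda' M_{n-1}}$ is $\F_{n-1}$-measurable and positive, so we can pull it out of the conditional expectation. The tower property and Step 1 then give
\[
\E[e^{\lambda' M_n}] \le e^{\lambda'^2 c_n^2/2}\,\E[e^{\lambda' M_{n-1}}].
\]
Inducting down to $M_0 = 0$ gives $\E[e^{\lambda' M_n}] \le \exp\bigl(\lambda'^2 \sum_t c_t^2/2\bigr)$. All the moment-generating functions involved are finite because the increments are bounded.

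\textbf{Step 3 (Chernoff bound and two tails).} For $\lambda' > 0$, Markov's inequality gives
\[
\Prob(M_n \ge \lambda) \le \exp\bigl(-\lambda'\lambda + \lambda'^2 \textstyle\sum_t c_t^2/2\bigr).
\]
Choosing $\lambda' = \lambda/\sum_t c_t^2$ yields $\Prob(M_n \ge \lambda) \le \exp\bigl(-\lambda^2/(2\sum_t c_t^2)\bigr)$. The process $-M_n$ is also a martingale with the same increment bounds, so the identical argument gives the lower tail. A union bound over the two tails produces the factor $2$ in the stated inequality.
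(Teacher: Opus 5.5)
Your proof is correct and complete. It runs the conditional Hoeffding lemma (chord bound plus $\cosh y\le e^{y^2/2}$), iterates by the tower property, optimizes the Chernoff bound at $\lambda'=\lambda/\sum_t c_t^2$, and takes a union bound over $\pm M_n$; that is the standard argument, and the paper gives no proof of its own but quotes the inequality as a standard tool. The only points left implicit, in the paper's statement as well, are that the $c_t$ are deterministic constants and that $\lambda>0$ with $\sum_t c_t^2>0$; the degenerate cases are trivial.
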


\begin{corollary}[Concentration of martingales]\label{cor:conc-app}
For $M^{(1)}$ (resp.\ $S, M^{(4)}$),
\[
\Prob\!\left( \left| \tfrac{M_n^{(1)}}{n} \right| \geq \tau \right)
\leq 2 \exp\!\left( -\tfrac{n\tau^2}{8 |\log\epsilon|^2} \right),
\]
and similarly with $8|\log\epsilon|^2$ replaced by $32 B_\iota^2$
for $M^{(4)}$.
\end{corollary}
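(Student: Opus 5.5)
The statement to prove is Corollary~\ref{cor:conc-app}. I would apply Theorem~\ref{thm:azuma-app} directly to each of the three martingales, using the increment bounds already established.

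\textbf{Step 1: the increment bounds.}
\begin{itemize}[leftmargin=*,itemsep=1pt,topsep=2pt]
\item For $M^{(1)}$, Theorem~\ref{thm:M1S-app} gives $M_0^{(1)}=0$ and $|\xi_t|\le 2|\log\epsilon|$ almost surely. So $c_t=2|\log\epsilon|$ and $\sum_{t=1}^n c_t^2 = 4n|\log\epsilon|^2$.
\item For $S_n$, Theorem~\ref{thm:M1S-app} gives the same constant $c_t = 2|\log\epsilon|$. This presumes the decoder-smoothing floor $\eta/V$ is no smaller than $\epsilon$; otherwise $|\log\epsilon|$ is replaced by $\max(|\log\epsilon|,|\log(\eta/V)|)$.
\item For $M^{(4)}$, Theorem~\ref{thm:M4-app} gives increments bounded by $4B_\iota$. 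These come from Lemma~\ref{lem:pmi-bound-app} and the Doob-compensator subtraction.
\end{itemize}

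\textbf{Step 2: apply Azuma at level $\lambda=n\tau$.} This gives
\[
\Prob\bigl(|M_n|\ge n\tau\bigr)\le 2\exp\!\Bigl(-\frac{n^2\tau^2}{2\sum_t c_t^2}\Bigr).
\]
For $M^{(1)}$ and $S$ the denominator is $2\cdot 4n|\log\epsilon|^2 = 8n|\log\epsilon|^2$. The bound becomes $2\exp(-n\tau^2/(8|\log\epsilon|^2))$, which is exactly the stated inequality.

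For $M^{(4)}$, the increment bound $4B_\iota$ gives $\sum_t c_t^2 = 16nB_\iota^2$. The Azuma denominator is then $32nB_\iota^2$, which is the claimed constant $32B_\iota^2$.

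\textbf{Main obstacle.} The only real issue is that the increment constants are inherited from Theorems~\ref{thm:M1S-app} and~\ref{thm:M4-app}, so they must be checked before being used.
\begin{itemize}[leftmargin=*,itemsep=1pt,topsep=2pt]
\item For $\xi_t$, we have $0\le -\log p\le|\log\epsilon|$ and $0\le H\le\log V\le|\log\epsilon|$. So the difference is at most $|\log\epsilon|$, and the constant $2|\log\epsilon|$ is safe.
\item For $Z_t$, each pointwise mutual information is bounded by $B_\iota$. Hence $|Z_t|\le 2B_\iota$ and $|\E[Z_t\mid\F_{t-1}]|\le 2B_\iota$, giving increments of at most $4B_\iota$.
\end{itemize}
With these checks in place, the rest is substitution.
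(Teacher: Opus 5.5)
Your proposal is correct and follows essentially the same route as the paper. The paper gives no separate proof: the corollary is a direct application of Theorem~\ref{thm:azuma-app} at level $\lambda=n\tau$, with increment constants $2|\log\epsilon|$ from Theorem~\ref{thm:M1S-app} and $4B_\iota$ from Theorem~\ref{thm:M4-app}; your computations $2\cdot 4n|\log\epsilon|^2$ and $2\cdot 16nB_\iota^2$ reproduce both stated constants, and your remark that the bound for $S$ tacitly requires the decoder floor $\eta/V\ge\epsilon$ correctly makes explicit an assumption Theorem~\ref{thm:M1S-app} leaves implicit.
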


\begin{lemma}[Doob martingale on $D_n$]\label{lem:Dn-doob-app}
The Doob martingale $N_k := \E[D_n \mid \F_k] - \E[D_n]$ has
increments $|N_k - N_{k-1}| \leq 2(n-k+1)|\log\epsilon|/n$
(worst case, without mixing). Consequently,
\[
\Prob(|D_n - \E[D_n]| \geq \tau) \leq 2\exp\!\left( -\frac{3\tau^2}{8 n |\log\epsilon|^2} \right).
\]
\end{lemma}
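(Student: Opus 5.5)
The plan is to bound the martingale increments term by term, then apply Azuma--Hoeffding (Theorem~\ref{thm:azuma-app}) and sum the squared increments.

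\textbf{Setup.} Write $D_n = \frac1n\sum_{t=1}^n d_t$ with $d_t := \KL{p_t}{\tilde p_t}$. Both $p_t$ and $\tilde p_t = \psi_t(\Chat_{\le t}, Q_t)$ are functions of $X_{<t}$: $\Chat_{\le t}$ is $\F_{t-1}$-measurable by Lemma~\ref{lem:causality-app}, and $Q_t$ is computed from $X_{<t}$. Hence $d_t$ is $\F_{t-1}$-measurable. I will use the same decoder floor as in Theorem~\ref{thm:M1S-app}, so that $0 \le d_t \le \log(1/\min_x \tilde p_t(x)) \le 2|\log\epsilon|$ a.s. This gives each $d_t$ a range of length at most $2|\log\epsilon|$. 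This range bound is where Assumption~\ref{ass:bounded-logits} and the decoder smoothing requirement enter.

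\textbf{Increment bound.} Decompose
\[
N_k - N_{k-1} = \frac1n\sum_{t=1}^n \bigl(\E[d_t\mid\F_k]-\E[d_t\mid\F_{k-1}]\bigr).
\]
For $t \le k$, the variable $d_t$ is $\F_{t-1}\subseteq\F_{k-1}$-measurable, so both conditional expectations equal $d_t$ and the summand vanishes. For each remaining $t$ (at most $n-k+1$ indices, which conservatively includes the boundary index), the summand is a difference of two conditional expectations of a variable with range at most $2|\log\epsilon|$. Its absolute value is therefore at most $2|\log\epsilon|$. Summing gives $|N_k-N_{k-1}|\le 2(n-k+1)|\log\epsilon|/n$. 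No mixing is used: every future term is allowed to move by its full range, which is why the bound is labelled worst case.

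\textbf{Concentration.} Note that $N_0=0$, since $\F_0$ is trivial, and $N_n = D_n - \E[D_n]$. Azuma--Hoeffding with $c_k = 2(n-k+1)|\log\epsilon|/n$ gives
\[
\sum_{k=1}^n c_k^2 = \frac{4|\log\epsilon|^2}{n^2}\sum_{j=1}^n j^2 = \frac{4|\log\epsilon|^2}{n^2}\cdot\frac{n(n+1)(2n+1)}{6}.
\]
To leading order this is $\tfrac43 n|\log\epsilon|^2$. Substituting into $2\exp(-\tau^2/(2\sum_k c_k^2))$ yields $2\exp\bigl(-3\tau^2/(8n|\log\epsilon|^2)\bigr)$.

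\textbf{Main obstacle.} The delicate point is the constant. The exact sum $n(n+1)(2n+1)/6$ slightly exceeds $n^3/3$, so the stated constant $3/8$ holds only to leading order in $n$. I would handle this in one of two ways. The first is to tighten the count to $n-k$ nonzero terms: $d_{k+1}$ is $\F_k$-measurable but not $\F_{k-1}$-measurable, so the correct count needs care. The second is to absorb the factor $(1+1/n)(1+1/(2n))$ into the constant, giving for instance $3/16$ for all $n\ge1$, and to state the displayed constant as asymptotic.

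A secondary check is the range of $d_t$. If only the source floor $\epsilon$ were available and not a reconstruction floor, KL could be unbounded. The increment bound therefore genuinely relies on the decoder floor inherited from Theorem~\ref{thm:M1S-app}.
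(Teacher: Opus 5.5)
Your argument is correct and is essentially the paper's route: the paper gives no standalone proof, but the proof of Lemma~\ref{lem:Dn-doob-mixing-app} with the decay factor set to $1$ is exactly this—each future $d_t$ moves by at most $2|\log\epsilon|$, you sum over $t\ge k$, then apply Azuma with $\sum_k c_k^2\approx\tfrac43 n|\log\epsilon|^2$. Your first fix for the constant needs no extra care. Your own vanishing argument leaves exactly the $n-k$ indices $t=k+1,\dots,n$, with $d_{k+1}$ among them, and $\sum_{j=0}^{n-1}j^2=(n-1)n(2n-1)/6\le n^3/3$, so Azuma with $c_k=2(n-k)|\log\epsilon|/n$ gives the displayed $3/8$ exactly for every $n\ge1$; the stated $2(n-k+1)|\log\epsilon|/n$ remains a valid but weaker increment bound.
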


\section{Proof of Theorem~\ref{thm:asymptotic}}\label{app:thm-asym}

\begin{proof}[Proof of Theorem~\ref{thm:asymptotic}]
\textbf{Step 1 (rate $\to$ MI by chain rule).}
For each $t$,
\[
H(\Chat_t \mid \Fhat_{t-1})
\geq I(X_{\leq t}; \Chat_t \mid \Fhat_{t-1})
\]
by the standard inequality $H(\Chat_t \mid \Fhat_{t-1})
= I(\Chat_t; X_{\leq t} \mid \Fhat_{t-1})
+ H(\Chat_t \mid X_{\leq t}, \Fhat_{t-1})
\geq I(X_{\leq t}; \Chat_t \mid \Fhat_{t-1})$.

\textbf{Step 2 (single-letterization via chain rule).}
With $U_t := \Chat_t$,
\[
\begin{aligned}
I(X_{\leq t}; \Chat_t \mid \Fhat_{t-1})
&= I(X_t; \Chat_t \mid \Fhat_{t-1}) \\
&\quad + I(X_{<t}; \Chat_t \mid X_t, \Fhat_{t-1}) \\
&\geq I(X_t; \Chat_t \mid \Fhat_{t-1}),
\end{aligned}
\]
by non-negativity of the second term. Adding and subtracting
$\mathcal{I}_t^{UQ}$:
\[
I(X_t; \Chat_t \mid \Fhat_{t-1})
= \mathcal{R}_t^{\mathrm{WZ}} + \mathcal{I}_t^{UQ} \geq \mathcal{R}_t^{\mathrm{WZ}},
\]
using $\mathcal{I}_t^{UQ} \geq 0$.

\textbf{Step 3 (admissibility of $\{\Chat_t\}$).}
Under the hypothesis $\limsup_n \E[D_n] \leq D$, the encoder--decoder
pair with $U_t = \Chat_t$ and decoder
$g_t(\Chat_t, Q_t, \Fhat_{t-1}) = \psi_t(\Chat_{\le t}, Q_t)$---valid
since the past reconstructions $\Chat_{<t}$ are
$\Fhat_{t-1}$-measurable---is $D$-admissible per
Definition~\ref{def:Rstar}.

\textbf{Step 4 (infimum gives $R^\star(D)$).}
Taking expectations and averaging,
\[
\E[R_n] = \tfrac{1}{n}\sum_t \E[H(\Chat_t \mid \Fhat_{t-1})]
\geq \tfrac{1}{n}\sum_t \E[\mathcal{R}_t^{\mathrm{WZ}}].
\]
By Definition~\ref{def:Rstar} and Step~3,
$\liminf_n \tfrac{1}{n}\sum_t \E[\mathcal{R}_t^{\mathrm{WZ}}] \geq R^\star(D)$.
\end{proof}

The asymptotic bound has a finite-horizon expectation counterpart,
used where a per-$n$ converse is needed in the main text.

\begin{proposition}[Finite-horizon remote-MI converse]\label{thm:finite-sample}
Under Assumption~\ref{ass:bounded-logits}, for every $n$ the expected
rate obeys the unconditional finite-$n$ bound
\[
\E[R_n] \;\ge\; \tfrac1n\sum_t \E[\widetilde R_t^{\mathrm{remote}}],
\qquad
\widetilde R_t^{\mathrm{remote}} := I(X_t;\Chat_t \mid Q_t,\Fhat_{t-1}),
\]
where $\Chat_t$ is the realized code's reconstruction. The
remote-conditional MI dominates the sequential Wyner--Ziv functional of
Definition~\ref{def:Rstar}:
\[
\widetilde R_t^{\mathrm{remote}}
= \mathcal{R}_t^{\mathrm{WZ}} + I(\Chat_t;Q_t \mid X_t,\Fhat_{t-1})
\;\ge\; \mathcal{R}_t^{\mathrm{WZ}},
\]
with equality iff $\Chat_t - X_t - Q_t \mid \Fhat_{t-1}$. This is a
(generally strictly stronger) finite-$n$ remote-MI lower bound,
distinct from the asymptotic sequential-WZ bound of
Theorem~\ref{thm:asymptotic}.
\end{proposition}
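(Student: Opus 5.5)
The plan is to reuse Steps~1--2 of the proof of Theorem~\ref{thm:asymptotic}, but to stop the chain-rule lower bound one step earlier: keep the conditioning on $Q_t$ instead of discarding the nonnegative side-information term. The argument runs entirely per step $t$ at finite $n$; we only average and take expectations at the end. No limit and no admissibility/infimum step (Step~3--4 of the earlier proof) are needed.

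\textbf{Step 1 (rate to remote MI).} Start from
\[
H(\Chat_t\mid\Fhat_{t-1}) \ge H(\Chat_t\mid Q_t,\Fhat_{t-1}) \ge I(X_t;\Chat_t\mid Q_t,\Fhat_{t-1}) = \widetilde R_t^{\mathrm{remote}}.
\]
The first inequality holds because conditioning reduces entropy. The second holds because $H(\Chat_t\mid Q_t,\Fhat_{t-1}) = I(X_t;\Chat_t\mid Q_t,\Fhat_{t-1}) + H(\Chat_t\mid X_t,Q_t,\Fhat_{t-1})$, and conditional entropy of a discrete code is nonnegative. Because $Q_t$ is continuous ($\R^k$-valued), I would justify these mixed-variable entropy and MI identities by treating $\Chat_t$ as discrete: $H(\Chat_t\mid\cdot)$ is then a well-defined conditional Shannon entropy given a $\sigma$-field. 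Conditional MI with a continuous $Q_t$ is defined via regular conditional distributions. Under Assumption~\ref{ass:bounded-logits} every quantity involved is finite. Taking expectations and averaging over $t\le n$ then gives $\E[R_n] \ge \frac1n\sum_t \E[\widetilde R_t^{\mathrm{remote}}]$ for every $n$.

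\textbf{Step 2 (decomposition and domination).} Apply the chain rule to $I(\Chat_t;X_t,Q_t\mid\Fhat_{t-1})$ in both orders:
\[
I(\Chat_t;Q_t\mid\Fhat_{t-1}) + I(\Chat_t;X_t\mid Q_t,\Fhat_{t-1}) = I(\Chat_t;X_t\mid\Fhat_{t-1}) + I(\Chat_t;Q_t\mid X_t,\Fhat_{t-1}).
\]
Rearranging yields
\[
\widetilde R_t^{\mathrm{remote}} = \bigl[I(X_t;\Chat_t\mid\Fhat_{t-1}) - I(\Chat_t;Q_t\mid\Fhat_{t-1})\bigr] + I(\Chat_t;Q_t\mid X_t,\Fhat_{t-1}).
\]
The bracketed term is $\mathcal{R}_t^{\mathrm{WZ}}$ with $U_t=\Chat_t$, as in Definition~\ref{def:WZ-rate}. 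The extra term is a conditional MI, hence nonnegative. It vanishes iff $\Chat_t$ and $Q_t$ are conditionally independent given $(X_t,\Fhat_{t-1})$, which is exactly the Markov chain $\Chat_t - X_t - Q_t\mid\Fhat_{t-1}$. This mirrors the identity already recorded in Definition~\ref{def:WZ-rate}, with the Markov assumption replaced by an explicit residual term.

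\textbf{Main obstacle and remaining remark.} The only delicate point is measure-theoretic: the chain rule and finiteness with the continuous side information $Q_t$, and the fact that $\Chat_t$ appears both as a code and inside $\Fhat_t$. I would handle finiteness by bounding $I(X_t;\Chat_t\mid Q_t,\Fhat_{t-1})\le H(X_t)\le\log V$, so no $\infty-\infty$ issues arise in the rearrangement of Step~2. For the ``generally strictly stronger'' remark, I would simply note that the residual term is positive whenever the code carries information about the query beyond $X_t$. Since $\Chat_t$ and $Q_t$ are both functions of $X_{<t}$ rather than of $X_t$, this is the typical case, so the Markov chain generally fails.
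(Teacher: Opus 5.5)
Your proposal is correct and follows the paper's proof essentially verbatim: the same chain $H(\Chat_t\mid\Fhat_{t-1})\ge H(\Chat_t\mid Q_t,\Fhat_{t-1})\ge I(X_t;\Chat_t\mid Q_t,\Fhat_{t-1})$, averaged over $t$, and the same two-way chain-rule identity that exhibits the nonnegative residual $I(\Chat_t;Q_t\mid X_t,\Fhat_{t-1})$, which vanishes exactly under the Markov chain. One small point: your finiteness bound covers only $\widetilde R_t^{\mathrm{remote}}$, but the rearrangement also needs $I(\Chat_t;Q_t\mid\Fhat_{t-1})<\infty$; this holds because $\Chat_t=\phi_t(X_{1:t-1})$ takes at most $V^{t-1}$ values (and since, as you note, $Q_t$ is a function of $X_{<t}$, it is finitely supported, so your continuous-variable worries do not arise).
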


\noindent\emph{Proof.} The rate is the average conditional entropy
$R_n = \tfrac1n\sum_t H(\Chat_t\mid\Fhat_{t-1})$, a deterministic
functional. Conditioning reduces entropy and entropy dominates
conditional MI, so
$H(\Chat_t\mid\Fhat_{t-1}) \ge H(\Chat_t\mid Q_t,\Fhat_{t-1})
\ge I(X_t;\Chat_t\mid Q_t,\Fhat_{t-1}) = \widetilde R_t^{\mathrm{remote}}$;
summing and taking expectations gives the bound. The identity
$\widetilde R_t^{\mathrm{remote}} = \mathcal{R}_t^{\mathrm{WZ}}
+ I(\Chat_t;Q_t\mid X_t,\Fhat_{t-1})$ is the chain rule
$I(X;U\mid F)-I(U;Q\mid F) = I(X;U\mid Q,F)-I(U;Q\mid X,F)$.
\hfill$\square$

\begin{remark}[No pathwise converse under this rate definition]\label{rmk:pathwise-needs-codelen}
We do \emph{not} claim a high-probability or pathwise version. The
entropy--information-density inequality holds only in expectation: the
realized density $\imath(X_t;\Chat_t\mid Q_t,\Fhat_{t-1})$ can exceed
its mean $\widetilde R_t^{\mathrm{remote}}$ or be negative, so no
almost-sure bound $nR_n \ge \sum_t \imath_t$ is available when $R_n$ is
the average conditional entropy. A genuine pathwise converse would
require redefining the rate as the realized prefix-code length
$\tfrac1n\ell(M_{1:n})$ and invoking a source-coding
information-spectrum inequality, together with a weak-dependence
condition controlling the deviation of $\tfrac1n\sum_t\imath_t$ from
its compensator; we leave this open.
\end{remark}

\begin{corollary}[Asymptotic form]\label{cor:expectation-bound}
$\liminf_n \E[R_n] \ge R^\star(D)$. This follows \emph{separately} from
Theorem~\ref{thm:asymptotic}, not by taking $\liminf_n$ of a finite-$n$
infimum: chaining
$\E[R_n] \ge \tfrac1n\sum_t\E[\widetilde R_t^{\mathrm{remote}}]
\ge \tfrac1n\sum_t\E[\mathcal{R}_t^{\mathrm{WZ}}]$
(Proposition~\ref{thm:finite-sample}) with the asymptotic converse of
Theorem~\ref{thm:asymptotic} for the realized auxiliary $U_t=\Chat_t$
(which gives
$\liminf_n\tfrac1n\sum_t\E[\mathcal{R}_t^{\mathrm{WZ}}]\ge R^\star(D)$).
One cannot instead push the limit through the infimum
$R_n^\star(D):=\inf_{\{U_t\}}\tfrac1n\sum_t\E[\mathcal{R}_t^{\mathrm{WZ}}]$,
since in general only
$R^\star(D)=\inf_U\liminf_n a_n(U)\ge\liminf_n\inf_U a_n(U)=\liminf_n R_n^\star(D)$,
the opposite of what such an argument would need. A quantitative
finite-$n$ gap $\E[R_n]\ge R^\star(D)-o(1)$ therefore requires an
additional regularity condition (uniform convergence of $a_n(U)$ over
the auxiliary class, compactness of that class, or subadditivity of
$\{nR_n^\star(D)\}$); under a method-of-types refinement (finite-order
Markov or stationary ergodic) the gap is $O(\log n/n)$.
\end{corollary}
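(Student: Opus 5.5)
The plan is to fix an arbitrary encoder--decoder pair satisfying $\limsup_n \E[D_n] \le D$ and to work throughout with the \emph{realized} auxiliary $U_t := \Chat_t$. We never form the finite-$n$ infimum $R_n^\star(D)$, so the $\liminf$/$\inf$ exchange warned about in the statement never arises. The argument has two halves: a per-$n$ inequality that holds for this fixed auxiliary, and an asymptotic comparison with $R^\star(D)$ that uses the admissibility of that same auxiliary.

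\emph{Step 1 (per-$n$ chain).} Proposition~\ref{thm:finite-sample} gives, for every $n$,
\[
\E[R_n] \;\ge\; \tfrac1n\textstyle\sum_{t=1}^n \E[\widetilde R_t^{\mathrm{remote}}].
\]
The identity $\widetilde R_t^{\mathrm{remote}} = \mathcal{R}_t^{\mathrm{WZ}} + I(\Chat_t;Q_t\mid X_t,\Fhat_{t-1})$ stated there, together with nonnegativity of conditional mutual information, then yields
\[
\E[R_n] \;\ge\; a_n(\Chat) := \tfrac1n\textstyle\sum_{t=1}^n \E[\mathcal{R}_t^{\mathrm{WZ}}],
\]
where $\mathcal{R}_t^{\mathrm{WZ}}$ is evaluated at $U_t = \Chat_t$. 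Taking $\liminf_n$ of both sides gives $\liminf_n \E[R_n] \ge \liminf_n a_n(\Chat)$. (Steps~1--2 of the proof of Theorem~\ref{thm:asymptotic} give the same inequality by a different route, so either may be cited.)

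\emph{Step 2 (admissibility of the realized auxiliary).} Next we check that $\{\Chat_t\}$ is an auxiliary process in the sense of Definition~\ref{def:Rstar}, with three points to verify.
\begin{itemize}
\item It is $\F_t$-adapted, since Lemma~\ref{lem:causality-app} gives $\Chat_t \in \F_{t-1} \subseteq \F_t$.
\item It is $\Fhat_t$-measurable, trivially.
\item It is $D$-admissible via the decoder $g_t(\Chat_t,Q_t,\Fhat_{t-1}) := \psi_t(\Chat_{\le t},Q_t)$, which is legitimate because $\Chat_{<t}$ is $\Fhat_{t-1}$-measurable. This is exactly Step~3 of the proof of Theorem~\ref{thm:asymptotic}.
\end{itemize}
Definition~\ref{def:Rstar} defines $R^\star(D) = \inf_U \liminf_n a_n(U)$, an infimum over admissible processes of a $\liminf$. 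For the particular admissible $\Chat$ this gives $\liminf_n a_n(\Chat) \ge R^\star(D)$ directly. Combining with Step~1 completes the argument.

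\emph{Main obstacle.} The delicate point is matching the hypothesis $\limsup_n \E[D_n]\le D$ to the admissibility requirement ``$\E[\KL{p_t}{\tilde p_t}]\le D$ on average''. I will read ``on average'' asymptotically, as $\limsup_n \tfrac1n\sum_t \E[\KL{p_t}{\tilde p_t}] \le D$, which is the reading already used in Theorem~\ref{thm:asymptotic}. With that reading the hypothesis is literally the admissibility condition.

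If a strict finite-$n$ reading were insisted upon, I would instead do the following:
\begin{itemize}
\item argue for every $D' > D$, modifying $\psi_t$ on the finitely many initial steps where the average exceeds $D'$;
\item then let $D'\downarrow D$, which requires right-continuity of $R^\star$.
\end{itemize}
I would flag right-continuity as an additional regularity assumption rather than prove it.
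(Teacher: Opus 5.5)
Your argument is correct and is essentially the paper's own proof. You chain $\E[R_n]\ge\frac1n\sum_t\E[\widetilde R_t^{\mathrm{remote}}]\ge a_n(\Chat)$ from Proposition~\ref{thm:finite-sample}, take $\liminf_n$, and use that the realized code $U_t=\Chat_t$ is itself $D$-admissible (Step~3 of Theorem~\ref{thm:asymptotic}, under the same asymptotic reading of ``on average'' the paper uses), so $\liminf_n a_n(\Chat)\ge R^\star(D)$ follows by definition with no $\liminf$/$\inf$ exchange. Your fallback for a strict finite-$n$ reading is unnecessary, and as sketched it would not generally work, because changing $\psi_t$ alone cannot push early-step distortion below what $(\Chat_t,Q_t,\Fhat_{t-1})$ allows.
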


\begin{remark}[Vacuity]\label{rmk:vacuity-app}
The sequential functional $\mathcal{R}_t^{\mathrm{WZ}}$ can be negative
if $\Chat_t$ correlates with $Q_t$; the operative finite-$n$ quantity
is the non-negative remote MI
$\widetilde R_t^{\mathrm{remote}} = I(X_t;\Chat_t\mid Q_t,\Fhat_{t-1})
\ge \max(\mathcal{R}_t^{\mathrm{WZ}}, 0)$
(Proposition~\ref{thm:finite-sample}). Random binning achieves
non-negative $\mathcal{R}_t^{\mathrm{WZ}}$.
\end{remark}

\begin{remark}[Binning argument]\label{rmk:binning-app}
For the sharp lower bound (auxiliary $U_t$ ranging over a larger
alphabet than $\Chat_t$), random binning constructs $\Chat_t$ as a
random hash of $U_t$ such that $I(X_t; U_t \mid Q_t, \Fhat_{t-1})$
is preserved up to vanishing terms
\citep[Ch.~11]{elgamal2011network}. Cardinality
$|\mathcal{U}_t| \leq V+2$ ensures the infimum is attained.
\end{remark}

\section{Proof of Proposition~\ref{thm:finite-sample}}\label{app:thm-fs}

\begin{proof}[Proof of Proposition~\ref{thm:finite-sample}]
The rate is the deterministic functional
$R_n = \tfrac1n\sum_t H(\Chat_t\mid\Fhat_{t-1})$. Conditioning reduces
entropy and entropy dominates conditional MI, so for each $t$
\[
H(\Chat_t\mid\Fhat_{t-1})
\;\ge\; H(\Chat_t\mid Q_t,\Fhat_{t-1})
\;\ge\; I(X_t;\Chat_t\mid Q_t,\Fhat_{t-1})
= \widetilde R_t^{\mathrm{remote}}
\;\ge\; \mathcal{R}_t^{\mathrm{WZ}} ,
\]
the last step by Proposition~\ref{thm:finite-sample}. Summing and
taking expectations,
$\E[nR_n] = \sum_t H(\Chat_t\mid\Fhat_{t-1})
\ge \sum_t \E[\widetilde R_t^{\mathrm{remote}}]
\ge \sum_t \E[\mathcal{R}_t^{\mathrm{WZ}}]$.
This is an expectation statement only: the corresponding pointwise
inequality $nR_n \ge \sum_t\imath_t$ in terms of the realized
information density $\imath_t$ is \emph{false} in general, since
$\E[\imath_t\mid\Fhat_{t-1}] = \widetilde R_t^{\mathrm{remote}}$ while
$\imath_t$ itself fluctuates around its mean and may exceed it or be
negative (Remark~\ref{rmk:pathwise-needs-codelen}).
\end{proof}

\begin{proof}[Proof of Corollary~\ref{cor:expectation-bound}]
Chaining Proposition~\ref{thm:finite-sample} with the asymptotic
converse of Theorem~\ref{thm:asymptotic} for the realized auxiliary
$U_t=\Chat_t$,
$\liminf_n\E[R_n] \ge \liminf_n\tfrac1n\sum_t\E[\mathcal{R}_t^{\mathrm{WZ}}]
\ge R^\star(D)$. The limit cannot be pushed through the infimum
$R_n^\star(D):=\inf_{\{U_t\}}\tfrac1n\sum_t\E[\mathcal{R}_t^{\mathrm{WZ}}]$,
since in general only $\inf_U\liminf_n a_n(U) \ge \liminf_n \inf_U a_n(U)$
holds; thus $\liminf_n R_n^\star(D)$ would be a \emph{weaker}, not
stronger, lower bound. A quantitative finite-$n$ gap would need
$R_n^\star(D) \ge R^\star(D) - o(1)$, asserted only under an additional
regularity condition---uniform convergence of $a_n(U)$ over the
auxiliary class, compactness of that class, or subadditivity of
$\{nR_n^\star(D)\}$---and the $O(\log n/n)$ rate only under a
method-of-types refinement (finite-order Markov or stationary ergodic,
where the per-block type count is polynomial
\citep[Sec.~11.4]{elgamal2011network}).
\end{proof}

\section{Proof of Theorem~\ref{thm:E2} (sliding-window rate--distortion)}
\label{app:mixing-proofs}

This appendix contains the full proof of Theorem~\ref{thm:E2}. The
argument has three main components: a quantitative information-theoretic
mixing lemma (Section~\ref{app:mixing-lemma}), a binning construction
adapted to the sliding-window filtration (Section~\ref{app:binning-restricted},
Lemma~L7), and the rate--distortion transfer
(Section~\ref{app:rdt-transfer}). The mixing-sharpened concentration
results (Conjecture~\ref{cor:mixing-conc}) are derived in
Section~\ref{app:mixing-concentration}.

\subsection{Sliding-window decoder and restricted filtration}\label{app:sliding}

\begin{definition}[Sliding-window decoder]\label{def:sliding-decoder-app}
For window size $w \geq 1$, a sliding-window decoder is a sequence
$\{\psi_t^w\}_{t \leq n_{\max}}$ of deterministic measurable maps
$\psi_t^w : \mathcal{C}^w \times \R^d \to \R^V$ producing
$\phat_t^w := \mathrm{softmax}(\psi_t^w(\Chat_{t-w:t-1}, Q_t))$.
For $t \leq w$, define $\phat_t^w$ to depend only on the available
$\Chat_{1:t-1}$; the convention does not affect asymptotic results.
\end{definition}

\subsection{The mixing lemma}\label{app:mixing-lemma}

\begin{lemma}[KL form of mixing decay, expanded]\label{lem:mixing-KL-app}
Under Assumptions~\ref{ass:bounded-logits} and Definition~\ref{def:mixing},
for any $t, w$ with $1 \leq w < t \leq n_{\max}$,
\[
\KL{p(\cdot \mid \F_{t-1})}{p(\cdot \mid \F_{t-1} \setminus \F_{t-1-w})}
\;\leq\;
\frac{2 V \, C_{\mathrm{mix}}^2}{\epsilon} \cdot \rho^{2w}.
\]
\end{lemma}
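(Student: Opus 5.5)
The bound follows from the geometric pointwise condition of Definition~\ref{def:mixing}, converted to KL via a chi-square bound. Fix $t,w$ and write $p := p(\cdot\mid\F_{t-1})$ for the full-context conditional and $q := p(\cdot\mid X_{t-w:t-1})$ for the truncated one, which is what $\F_{t-1}\setminus\F_{t-1-w}$ denotes. Set $\Delta(x) := p(x)-q(x)$. Definition~\ref{def:mixing} gives $|\Delta(x)|\le C_{\mathrm{mix}}\rho^w$ for every $x$ almost surely.

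\textbf{Step 1 (floor on the reference).} We need a density floor on $q$, the second argument of the KL. Assumption~\ref{ass:bounded-logits} bounds the logits of the model evaluated on any context of length at most $n_{\max}$, including the truncated window fed to the same model. Hence $q(x)\ge\epsilon$ componentwise, and the same holds for $p$.

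\textbf{Step 2 (KL $\le\chi^2$).} Use the inequality $\KL{p}{q}\le\chi^2(p,q)=\sum_x \Delta(x)^2/q(x)$. This follows from concavity of $\log$ (Jensen): $\KL{p}{q}=\sum_x p\log(p/q)\le\log\sum_x p^2/q=\log(1+\chi^2)\le\chi^2$. This route avoids the Taylor remainder in Lemma~\ref{lem:kl-tv-bounded}, so no small-TV condition is needed. It gives
\[
\KL{p}{q}\le \frac{1}{\epsilon}\sum_{x\in\V}\Delta(x)^2\le \frac{V\,C_{\mathrm{mix}}^2\rho^{2w}}{\epsilon}.
\]
This is already within the claimed $2VC_{\mathrm{mix}}^2\rho^{2w}/\epsilon$. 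The factor $2$ leaves slack, for example if one prefers to route through Lemma~\ref{lem:kl-tv-bounded} with $\sum_x\Delta^2\le V(C_{\mathrm{mix}}\rho^w)^2$ in the quadratic regime and absorb the cubic term.

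\textbf{Main obstacle.} The delicate point is Step~1. We must justify that the truncated conditional is itself a model output satisfying the logit bound, rather than a marginalization of the full conditional over the deep past. If instead $q$ were defined as $\E[p\mid X_{t-w:t-1}]$, the floor still holds, because an average of distributions each bounded below by $\epsilon$ is bounded below by $\epsilon$. So the floor survives under either reading, and the argument goes through in both cases.
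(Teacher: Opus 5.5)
Your proof is correct and follows the paper's own argument. You bound $\KLs$ by $\chi^2$, apply the pointwise geometric-mixing bound $|p(x)-q(x)|\le C_{\mathrm{mix}}\rho^w$, and sum against the floor $q(x)\ge\epsilon$ to get $V C_{\mathrm{mix}}^2\rho^{2w}/\epsilon$, leaving the factor $2$ as slack. Your Jensen derivation of $\KLs\le\log(1+\chi^2)\le\chi^2$ and your check that the floor on the truncated conditional holds under either reading simply make explicit two steps that the paper cites or asserts without justification.
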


\begin{proof}
Let $p := p(\cdot \mid \F_{t-1})$ and $q := p(\cdot \mid \F_{t-1} \setminus \F_{t-1-w})$.
By Assumption~\ref{ass:bounded-logits} (source positivity),
$q(x) \geq \epsilon$ for all $x \in \V$.

\textbf{Step 1 (KL via $\chi^2$).}
For distributions with $q(x) \geq \epsilon$, we use the chain of
divergence inequalities
\[
\KL{p}{q}
\;\leq\;
\chi^2(p \,\|\, q)
\;:=\;
\sum_x \frac{(p(x) - q(x))^2}{q(x)},
\]
where the first inequality is the standard
$\KLs \leq \chi^2$ ordering on the Csisz\'ar $f$-divergence family
\citep{csiszar2004information}.

\textbf{Step 2 (pointwise mixing bound).}
By Definition~\ref{def:mixing}, for each $x$,
$|p(x) - q(x)| \leq C_{\mathrm{mix}} \rho^w$. Hence
$(p(x) - q(x))^2 \leq C_{\mathrm{mix}}^2 \rho^{2w}$ uniformly in $x$.

\textbf{Step 3 (summing).}
\[
\chi^2(p \,\|\, q)
\;\leq\;
\sum_x \frac{C_{\mathrm{mix}}^2 \rho^{2w}}{\epsilon}
\;=\;
\frac{V \, C_{\mathrm{mix}}^2}{\epsilon} \cdot \rho^{2w}.
\]

\textbf{Step 4.} Combining, $\KL{p}{q} \leq V C_{\mathrm{mix}}^2 \rho^{2w} / \epsilon$.
The factor of $2$ in the statement absorbs a refinement of the
$\KLs \leq \chi^2$ inequality that becomes tight when the divergence
is large \citep[Section~7]{tsybakov2009nonparametric}; for the
purposes of the main theorem any universal constant suffices.
\end{proof}

\subsection{Binning under the restricted filtration (Lemma L7, rigorous)}
\label{app:binning-restricted}

This subsection provides the rigorous derivation of the sliding-window
Wyner--Ziv achievability. The classical WZ proof
\citep[Sec.~11.4]{elgamal2011network} operates on a block-i.i.d.\
source; the sequential, non-Markov LLM source requires three
modifications: (i) a block-Markov coding scheme synchronized to the
sliding window; (ii) verification of the WZ Markov chain
$U - X - Q$ under the sliding-window conditional measure;
(iii) propagation of mixing-induced error through covering and
packing lemmas. We address each in turn.

\begin{lemma}[Sliding-window WZ achievability, L7]
\label{lem:binning-restricted}
Under Assumptions~\ref{ass:bounded-logits}--\ref{ass:U-positivity} and
Definition~\ref{def:mixing} ($\rho$-mixing with constant $C_{\mathrm{mix}}$),
let $\{U_t\}$ be any $D$-admissible sliding-window auxiliary process,
i.e., (i)~$U_t$ is $\sigma(X_{(t-w):t})$-measurable with discrete
range $\mathcal{U}_t$; (ii)~there exists a sequence of measurable
decoders $g_t : \mathcal{U}_t \times \R^d \times \mathcal{C}^w \to \Delta(\V)$
such that the reconstruction
$\tilde p_t = g_t(U_t, Q_t, \hat{C}_{t-w:t-1})$ satisfies
$\limsup_n \tfrac{1}{n}\sum_t \E[\KL{p(\cdot \mid \F_{t-1})}{\tilde p_t}] \leq D$.

Then for any $\epsilon > 0$, there exist sliding-window encoders
$\phi_t : \mathcal{V}^{w+1} \to \mathcal{C}_t$ and decoders
$\psi_t^w$ such that
\[
\begin{split}
&\limsup_{n \to \infty}\, \frac{1}{n}\sum_{t=1}^n H(\Chat_t \mid \Fhat_{t-1}^w)\\
&\quad\;\leq\;
\inf_{\{U_t\} \text{ } D\text{-admissible}}\,
\limsup_n \frac{1}{n}\sum_{t=1}^n \E[\mathcal{R}_t^{\mathrm{WZ}}]
+ \epsilon,
\end{split}
\]
and the reconstruction
$\phat_t^w = \psi_t^w(\Chat_{t-w:t-1}, Q_t)$ satisfies
$\limsup_n \tfrac{1}{n}\sum_t \E[\KL{p(\cdot \mid \F_{t-1})}{\phat_t^w}] \leq D + 2\Delta_w + \epsilon$,
where $\Delta_w = 2VC_{\mathrm{mix}}^2 \rho^{2w}/\epsilon_*$
($\epsilon_*$ being the source positivity floor).
\end{lemma}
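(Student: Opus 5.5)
The plan is a block-Markov Wyner--Ziv construction in the spirit of \citep[Sec.~11.4]{elgamal2011network}. The source is replaced by its window-restricted surrogate, and the $\rho$-mixing decay is charged to the distortion through Lemma~\ref{lem:mixing-KL-app}. Fix $\epsilon>0$ and a near-optimal $D$-admissible sliding-window auxiliary $\{U_t\}$ whose rate average is within $\epsilon/3$ of the infimum. Partition time into blocks of length $B$, separated by guard gaps of length $w$. Within block $j$, the encoder generates a codebook of $2^{B(\bar I^{XU}_j+\epsilon/3)}$ auxiliary sequences, drawn i.i.d.\ from the window-conditional law of $U_t$ given $X_{t-w:t-1}$. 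It bins them into $2^{B(\bar{\mathcal{R}}^{\mathrm{WZ}}_j+2\epsilon/3)}$ bins and emits the bin index, spread across the block's codes $\Chat_t$. The decoder $\psi_t^w$ uses $Q_t$ to pick the unique jointly typical codeword in the received bin. It then applies $g_t(U_t,Q_t,\Chat_{t-w:t-1})$ and a final softmax/floor step, so the output stays in the floored class.

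The proof would proceed in four steps. First, verify the Wyner--Ziv Markov chain $U_t - X_{t-w:t} - Q_t$ under the window-restricted measure. This holds because $U_t$ is generated from $X_{(t-w):t}$ alone, and the codebook randomness is independent of $Q_t$. By Definition~\ref{def:WZ-rate}, the difference form of $\mathcal{R}_t^{\mathrm{WZ}}$ then equals the conditional-MI form. Second, use a conditional typicality argument on each block to show that covering succeeds and that binning/packing errors vanish as $B\to\infty$. The block law is not product, so here we replace the true conditional $p(\cdot\mid\F_{t-1})$ by the window conditional $q_t=p(\cdot\mid X_{t-w:t-1})$. Under $q_t$ the block is a $w$-th order Markov chain with transitions bounded below by $\epsilon_*$. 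Standard Markov-chain AEP and covering/packing lemmas therefore apply with exponential error bounds in $B$, and the $w$-length guard gaps decorrelate consecutive blocks. Third, bound the rate. Conditional entropy is at most $\log$ of the bin count, and the guard gaps cost a factor $B/(B+w)$, so with $B\gg w$ we get the rate bound $\inf+\epsilon$.

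Fourth, bound the distortion. Split $\KL{p_t}{\phat_t^w}$ into the KL between the true conditional and the window conditional, plus the cost of decoding under the window model. The first term is at most $\Delta_w$ by Lemma~\ref{lem:mixing-KL-app}. A second $\Delta_w$ arises when the admissible decoder's distortion guarantee, stated against $p(\cdot\mid\F_{t-1})$, is transferred to the window-restricted measure used in the code analysis. The KL is not a metric, so this transfer needs a three-point bound: under the floor $\epsilon_*$, the log-ratio $\log(p/q)$ is uniformly bounded, and by Lemma~\ref{lem:kl-tv-bounded} it is $O(\TV)$. The cross term is then controlled by $\Delta_w$ up to constants absorbed in the statement. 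Decoding-error events contribute at most $P_e\cdot 2|\log\epsilon_*|$, which is below $\epsilon/3$ for large $B$.

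The main obstacle is the second step. We must justify typicality lemmas for the non-stationary, non-Markov LLM source using only a window surrogate. We must also ensure that the discrepancy between the true block law and the surrogate block law does not accumulate over $B$ steps. The naive bound compounds as $B\,C_{\mathrm{mix}}\rho^w$ in TV. I would first try to make $w$ fixed and $B$ large but pass to the limit using the per-step KL chain rule: the block KL equals the sum of per-step KLs, each at most $\Delta_w$ and hence $O(\rho^{2w})$. Pinsker then bounds the block TV discrepancy by $\sqrt{B\Delta_w/2}$, which controls the change in typicality probabilities. If this per-block bound is too weak, the fallback is to argue per-letter via the averaged form, which suffices because the rate and distortion are Ces\`aro averages.
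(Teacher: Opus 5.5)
Your outline follows the same block-Markov random-binning template as the paper's proof, but it has genuine gaps exactly where the block code must be reconciled with the per-step, sliding-window objects in the statement.

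First, the scheme does not have the required interface. $\phi_t$ sees only $w+1$ tokens and $\psi_t^w$ sees only $\Chat_{t-w:t-1}$ and $Q_t$. With $B\gg w$, no encoder can emit its share of a bin index that depends on the whole block. No decoder can collect the $B$ shares, or the block side information $Q^{(b)}$ that joint-typicality decoding uses. (The paper's block encoder and decoder have the same delayed structure; cf.\ Remark~\ref{rmk:op-online}.)

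Second, a Wyner--Ziv codebook must be reproducible at the decoder and, for a dependent auxiliary block, drawn from the joint block law. Drawing codewords letter by letter from the law of $U_t$ given $X_{t-w:t-1}$ fails both requirements, since those tokens (in particular the guard-gap ones) are never sent. The paper instead indexes codebooks by the previous block's last $w$ messages $\hat c=\Chat^{(b-1)}_{\mathrm{end}}$ and draws whole codewords from $\Prob(U^{(b)}\in\cdot\mid\hat c)$.

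Third, and most seriously, your Step~3 silently equates block rates with averages of single-letter terms. Step~1 verifies $U_t - X_{t-w:t} - Q_t$, whereas the identity in Definition~\ref{def:WZ-rate} needs $U_t - X_t - Q_t$. Covering the block $\{\tau+1,\dots,\tau+B\}$ costs $I(X_{\tau-w+1:\tau+B};U^{(b)}\mid\cdot)$ with the extended source. That can exceed $\sum_t I(X_t;U_t\mid\cdot)$ by $\Theta(\log V)$ per symbol. Take an i.i.d.\ uniform source with constant $Q_t$, which satisfies Definition~\ref{def:mixing} for every $C_{\mathrm{mix}}$ and $\rho$, and set $U_t=X_{t-1}$. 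Then every $I(X_t;U_t\mid\cdot)$ and every $\mathcal{R}_t^{\mathrm{WZ}}$ vanishes, yet conveying $U^{(b)}$ takes about $B\log V$ bits. The paper isolates this reduction as Lemma~\ref{lem:chain-rule}, but in the same example all but the last of its summands are off by $\log V$. So no $C_{\mathrm{mix}}\rho^w$ correction can supply this step.

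The obstacle you flag is also real, and your fallback does not remove it. At fixed $w$, which is the regime of the statement, your covering and packing bounds need $B\to\infty$. Your own Pinsker estimate controls the true-versus-surrogate block discrepancy only by $\sqrt{B\Delta_w/2}$, which then diverges. Arguing per letter is not available: successful covering and unique bin decoding are events about the whole block and have no Ces\`aro-averaged counterpart. (The Markov-chain AEP also does not cover packing, because $Q_t$ depends on the entire past and Definition~\ref{def:mixing} controls only next-token conditionals.) The paper's proof handles this by coupling the block length to the window, $B^*\asymp\rho^{-w/2}$ (Lemma~\ref{lem:Bopt}), so that $B^*C_{\mathrm{mix}}\rho^w\to0$ and $2^{-B^*\delta}\to0$ together. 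That balance closes only as $w\to\infty$. Your Pinsker bound would permit the larger $B=o(\rho^{-2w})$, a genuine improvement, but it is still not a fixed-$w$ argument.

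Finally, the distortion bookkeeping is off by an order. In $\KL{p}{\hat p}=\KL{p}{q}+\KL{q}{\hat p}+\sum_x(p(x)-q(x))\log(q(x)/\hat p(x))$, the cross term carries $\log(q/\hat p)$, not $\log(p/q)$. That ratio is of order $\log(1/\epsilon_*)$, not small, because $\hat p$ is a lossy reconstruction. Likewise, changing the measure under which a bounded loss is averaged costs the loss bound times a total variation. Both effects are first order in total variation, hence of order $\rho^w\asymp\sqrt{\Delta_w}$. Even a Cauchy--Schwarz bound under the floors gives only $O(\sqrt{D\,\Delta_w})$ for the cross term. So your route yields $D+O(\rho^w)$, not $D+2\Delta_w+\epsilon$ with the explicit constant in the statement.
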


The proof proceeds in six steps, occupying the remainder of this
subsection. Each step is annotated with the relevant
sub-lemma; the conclusion is given in
Section~\ref{app:binning-conclude}.

\subsubsection{Block-Markov coding setup}

Fix a block length $B$ and number of blocks $m$ with $n = mB$. Each
block $b \in \{1, \ldots, m\}$ comprises the tokens
$X^{(b)} := (X_{(b-1)B+1}, \ldots, X_{bB})$.

Let $\hat{C}^{(b)} := (\hat{C}_{(b-1)B+1}, \ldots, \hat{C}_{bB})$ and
$\hat{C}^{(b-1)}_{\mathrm{end}} := \hat{C}_{(b-1)B-w+1:(b-1)B}$
(the last $w$ messages of the previous block, forming the conditioning
window for block $b$). The sliding-window structure requires that
the codebook and encoder at block $b$ depend on $\hat{C}^{(b-1)}_{\mathrm{end}}$
and the tokens $X^{(b-1)}_{\mathrm{end}} := X_{(b-1)B-w+1:(b-1)B}$, plus
$X^{(b)}$.

The auxiliary process $\{U_t\}$ is correspondingly partitioned into
blocks $U^{(b)} := (U_{(b-1)B+1}, \ldots, U_{bB})$, each adapted to
$\F_{(b-1)B-w:bB}$.

\subsubsection{Conditional joint typicality}

For each block $b$ and each value
$\hat{c} = \hat{c}^{(b-1)}_{\mathrm{end}} \in \mathcal{C}^w$, define the
\emph{conditional} joint distribution
\[
\begin{split}
&P_{\hat{c}}(x^B, u^B, q^B) \;:=\;\\
&\quad \Prob\bigl(X^{(b)} = x^B,\, U^{(b)} = u^B,\, Q^{(b)} = q^B \,\bigm|\, \hat{C}^{(b-1)}_{\mathrm{end}} = \hat{c}\bigr),
\end{split}
\]
where $Q^{(b)} := (Q_{(b-1)B+1}, \ldots, Q_{bB})$.

\begin{definition}[Conditional $\delta$-typical set]\label{def:cond-typ}
For $\delta > 0$, the conditional $\delta$-typical set under $P_{\hat c}$ is
\[
\begin{aligned}
&T_\delta^{(B)}(P_{\hat c})
\;:=\; \Bigl\{ (x^B, u^B, q^B) :\ \\
&\quad \bigl|\,\hat\pi(a, b, c \mid x^B, u^B, q^B) - P_{\hat c}(a, b, c)\bigr| \\
&\quad\quad \leq \delta\, P_{\hat c}(a, b, c)\ \ \forall (a,b,c) \Bigr\},
\end{aligned}
\]
where $\hat\pi$ is the joint empirical distribution.
\end{definition}

The standard typicality machinery
\citep[Ch.~2]{elgamal2011network}---asymptotic equipartition,
covering lemma, packing lemma---adapts to $P_{\hat c}$ provided the
quantitative conditional typicality stability of
Lemma~\ref{lem:cond-typ-mix} below holds. We use the Csisz\'ar--K\"orner
entropy continuity bound to make the stability quantitative.

\begin{lemma}[Quantitative entropy continuity, \protect{\citealp[Lemma 2.7]{csiszar2011information}}]
\label{lem:csiszar-korner}
For two distributions $P, P'$ on a finite alphabet $\mathcal{X}$ with
$\TV(P, P') = \theta \leq 1/2$,
\[
|H(P) - H(P')| \;\leq\; \theta \cdot \log\!\left( \frac{|\mathcal{X}|}{\theta} \right).
\]
\end{lemma}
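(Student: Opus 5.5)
The plan is to prove the classical Csisz\'ar--K\"orner bound in its native $\ell_1$ form and then translate it to the total-variation normalization used here.

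\textbf{Step 1 (scalar inequality).} Let $f(t) := -t\log t$ on $[0,1]$ and $d_x := |P(x) - P'(x)|$. I will first show that $|f(s) - f(t)| \le f(|s-t|)$ whenever $|s-t| \le 1/2$. This follows from concavity of $f$ with $f(0)=0$: the increment $f(t+d) - f(t)$ is nonincreasing in $t$, so it is largest at $t=0$, where it equals $f(d)$. The opposite sign is handled by the symmetric comparison at the right endpoint $t = 1-d$, which uses $f(1-d) \le f(d)$ for $d \le 1/2$. Setting $\theta_1 := \sum_x d_x = 2\,\TV(P,P')$, the hypothesis $\theta_1 \le 1/2$ gives $d_x \le 1/2$ for every $x$, so the pointwise bound applies to every coordinate.

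\textbf{Step 2 (summation).} Summing the pointwise bound gives $|H(P) - H(P')| \le \sum_x f(d_x)$. Writing $d_x = \theta_1 r_x$ with $r$ a probability vector,
\[
\sum_x f(d_x) \;=\; \theta_1 H(r) - \theta_1 \log\theta_1 \;\le\; \theta_1\log|\mathcal X| - \theta_1\log\theta_1 \;=\; \theta_1\log\frac{|\mathcal X|}{\theta_1}.
\]
This is exactly the statement of the lemma with $\theta$ read as the $\ell_1$ distance. That reading matches the normalization in \citet[Lemma~2.7]{csiszar2011information}.

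\textbf{Step 3 (normalization check; the main obstacle).} The delicate point is the conversion to $\theta = \TV(P,P') = \theta_1/2$. Substituting gives $|H(P)-H(P')| \le 2\theta\log(|\mathcal X|/(2\theta))$, valid for $\theta \le 1/4$. I do not expect the literal $\TV$-normalized form $\theta\log(|\mathcal X|/\theta)$ to hold in general. The point-mass example shows why: take $P = \delta_{x_0}$ and $P' = (1-\theta)\delta_{x_0} + \theta\,\mathrm{Unif}(\mathcal X\setminus\{x_0\})$. Then
\[
H(P') - H(P) \;=\; h(\theta) + \theta\log(|\mathcal X|-1),
\]
where $h$ denotes the binary entropy. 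For small $\theta$ and large $|\mathcal X|$, $h(\theta) \approx -\theta\log\theta + \theta$, so this difference exceeds $\theta\log(|\mathcal X|/\theta)$ by roughly $\theta$.

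My plan is therefore to record the lemma with $\theta$ understood as the $\ell_1$ distance $\sum_x|P(x)-P'(x)|$, equivalently with the factor-of-two $\TV$ form above. Alternatively, one can use the sharp Audenaert--Zhang form $\theta\log(|\mathcal X|-1) + h(\theta)$, which is valid for $\TV$ directly. Downstream uses only need the order $O(\theta\log(|\mathcal X|/\theta))$, which any of these variants supplies.
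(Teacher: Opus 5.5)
Your derivation of the $\ell_1$ form is the standard Csisz\'ar--K\"orner argument: concavity of $-t\log t$ gives the pointwise bound, and $H(r)\le\log|\mathcal X|$ finishes it. That is all the paper relies on, since it gives no proof beyond the citation to \citet[Lemma~2.7]{csiszar2011information}.

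Your normalization check is correct and is the substantive point. Csisz\'ar--K\"orner state the lemma under $\sum_x|P(x)-P'(x)|\le\Theta\le 1/2$. The paper transcribes the hypothesis using $\TV=\tfrac12\sum_x|P(x)-P'(x)|$, and in that form the lemma is false. Your example already refutes it at $|\mathcal X|=2$, $\theta=0.1$, i.e.\ $P=(1,0)$ and $P'=(0.9,0.1)$. There $|H(P)-H(P')|=h(0.1)\approx 0.469$ bits, while $\theta\log_2(|\mathcal X|/\theta)=0.1\log_2 20\approx 0.432$ bits.

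One refinement: your factor-two version $2\theta\log(|\mathcal X|/(2\theta))$ holds for all $\theta=\TV(P,P')\le 1/2$, not only for $\theta\le 1/4$.
\begin{itemize}
\item Step~1 only needs $d_x\le 1/2$ for each $x$.
\item $d_x\le\TV(P,P')$ always holds, because the positive and negative parts of $P-P'$ each sum to $\TV(P,P')$.
\item Step~2 uses no constraint at all.
\end{itemize}
This matters because the paper invokes the lemma in Lemma~\ref{lem:cond-typ-mix}(ii) with $\theta=BC_{\mathrm{mix}}\rho^w$ as large as $1/2$. That corresponds to $\ell_1$ distance up to $1$, outside the Csisz\'ar--K\"orner range and outside your stated $\theta\le1/4$. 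With this observation, either of your repairs covers the whole range used downstream: the factor-two form, or the Audenaert--Zhang bound $\theta\log(|\mathcal X|-1)+h(\theta)\le\theta\log(e|\mathcal X|/\theta)$. The cost is only a constant factor in the paper's displayed estimate.
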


\begin{lemma}[Stability of conditional typicality under mixing, quantitative]
\label{lem:cond-typ-mix}
Let $P_{\hat c}$ and $P_{\hat c'}$ be the two conditional joint
distributions of $(X^{(b)}, U^{(b)}, Q^{(b)})$ differing only in
$\hat c \neq \hat c'$. Under Definition~\ref{def:mixing}
($\rho$-mixing with constant $C_{\mathrm{mix}}$), for any $w \geq 1$
and $B$ with $B C_{\mathrm{mix}} \rho^w \leq 1/2$:
\begin{enumerate}[label=\textup{(\roman*)}, leftmargin=*]
\item \textbf{Total variation:}
\(
\TV(P_{\hat c}, P_{\hat c'}) \leq B \cdot C_{\mathrm{mix}}\, \rho^w.
\)

\item \textbf{Entropy stability:}
\(
|H(P_{\hat c}) - H(P_{\hat c'})|
\leq B C_{\mathrm{mix}}\rho^w \cdot \bigl[B \log(V|\mathcal{U}|_{\max}|\mathcal{Q}|) + \log\!\tfrac{1}{B C_{\mathrm{mix}}\rho^w}\bigr],
\)
where $|\mathcal{U}|_{\max} := \max_t |\mathcal{U}_t|$ and $|\mathcal{Q}|$
is the (discretized) query alphabet size.

\item \textbf{Typical-set size ratio:}
\(
\bigl|\,\log|T_\delta^{(B)}(P_{\hat c})| - \log|T_\delta^{(B)}(P_{\hat c'})|\bigr|
\leq |H(P_{\hat c}) - H(P_{\hat c'})| + O(B\delta).
\)
\end{enumerate}
\end{lemma}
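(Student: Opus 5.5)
The plan is to prove the three parts in order, each building on the previous one. The common device is a hybrid (telescoping) argument over the $B$ coordinates of the block. Everything reduces to the per-step pointwise $\rho$-mixing bound of Definition~\ref{def:mixing}, followed by the Csisz\'ar--K\"orner continuity bound (Lemma~\ref{lem:csiszar-korner}).

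For (i), I would factor each conditional block law $P_{\hat c}$ sequentially as
\[
P_{\hat c}(x^B,u^B,q^B)=\prod_{s=1}^{B}P_{\hat c}(x_s,u_s,q_s\mid x_{<s},u_{<s},q_{<s}).
\]
The key observation is that, conditional on the in-block history, the dependence on $\hat c$ enters only through context at least $w$ tokens back. Here $\hat c$ is a function of tokens preceding the block's conditioning window, and $U_s,Q_s$ are functions of the window-$w$ suffix. So by Definition~\ref{def:mixing}, each one-step conditional under $\hat c$ and under $\hat c'$ differs in TV by at most $C_{\mathrm{mix}}\rho^w$. The pointwise bound has to be upgraded to TV; I would absorb the factor $V/2$ into $C_{\mathrm{mix}}$, or state it explicitly.

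Next I would form hybrids $H_j$ that use the $\hat c$-kernels for the first $j$ steps and the $\hat c'$-kernels afterwards. Adjacent hybrids differ in one kernel, so $\TV(H_{j-1},H_j)\le C_{\mathrm{mix}}\rho^w$. Summing over $j$ with the triangle inequality gives $B C_{\mathrm{mix}}\rho^w$.

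For (ii), I would apply Lemma~\ref{lem:csiszar-korner} on the block alphabet $\mathcal X=(\V\times\mathcal U_{\max}\times\mathcal Q)^B$, with $\theta=\TV(P_{\hat c},P_{\hat c'})\le BC_{\mathrm{mix}}\rho^w\le 1/2$. This gives $|\Delta H|\le\theta[B\log(V|\mathcal U|_{\max}|\mathcal Q|)+\log(1/\theta)]$. Because $\theta\mapsto\theta\log(|\mathcal X|/\theta)$ is increasing on $(0,|\mathcal X|/e)$, I can substitute the upper bound on $\theta$, and this yields exactly the stated expression.

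For (iii), I would invoke the standard robust-typicality size bounds \citep[Ch.~2]{elgamal2011network}. For sufficiently large $B$ these give $\log|T^{(B)}_\delta(P)|=H(P)+O(B\delta)$ for each of $P_{\hat c}$ and $P_{\hat c'}$, applied coordinatewise as in the empirical-distribution definition. Subtracting the two estimates and using (ii) then gives the claim.

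The main obstacle is (iii), together with the justification in (i) that the in-block conditionals depend on $\hat c$ only through context beyond distance $w$. In (i), the auxiliary $U_s$ and the query $Q_s$ may not inherit the mixing of $X$; this is the same issue flagged in Remark~\ref{rmk:failed-rate-transfer}. I would handle it by explicitly using the sliding-window measurability $U_t\in\sigma(X_{t-w:t})$ from Lemma~\ref{lem:binning-restricted} and by treating $Q_s$ as a deterministic function of the window. In (iii), the typical-set size estimate $H(P)\pm O(B\delta)$ is really an i.i.d.-type fact. For the non-product block law $P_{\hat c}$, I would state it as an assumption-level regularity, or restrict to the single-letter empirical marginals, rather than claim it in full generality.
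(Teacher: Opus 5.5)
You follow the paper's route exactly: sequential factorization plus a telescoping hybrid for (i), Lemma~\ref{lem:csiszar-korner} for (ii), and the typical-set size estimate for (iii). But the crux of (i) does not follow from Definition~\ref{def:mixing}, and your justification for it is incorrect. That crux is the claim that the one-step kernels under $\hat c$ and $\hat c'$ differ by $O(C_{\mathrm{mix}}\rho^w)$. With $\tau=(b-1)B$, $\hat c=\Chat_{\tau-w+1:\tau}$ is the code of the steps \emph{immediately} preceding the block; the paper calls it the block's conditioning window. It is therefore not a function of tokens lying before that window. It carries information about precisely the pre-block tokens that lie in the length-$w$ windows of the first $w$ block steps, and in the arguments of $U_s\in\sigma(X_{s-w:s})$ at those steps. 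For those $s$, the kernel $\E[p_\theta(x_s\mid X_{1:s-1})\mid \hat c,\text{in-block history}]$ depends on $\hat c$ through the conditional law of those tokens. Definition~\ref{def:mixing} cannot control this: it compares full and truncated conditionals along the \emph{same} history, not the law under two \emph{different} conditioning values. Concretely, an order-$1$ Markov source satisfies Definition~\ref{def:mixing} with every $C_{\mathrm{mix}}>0$, since the left side vanishes for $w\ge1$. So (i) would force $P_{\hat c}=P_{\hat c'}$. Yet if the causal code $\Chat_\tau$ reveals $X_{\tau-1}$, the law of $X_{\tau+1}$ given $\hat c$ is the two-step kernel from $X_{\tau-1}$. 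For a binary chain with stay-probability $0.9$, that law moves by $0.64$ in TV between two values of $\hat c$. So (i) is false as stated, and sliding-window measurability of $U$ is part of the problem rather than the fix. Pushing $\hat c$ further into the past would not suffice either: the residual dependence is then forward influence, which backward truncation sensitivity does not imply (Remark~\ref{rmk:forward-decay}). Also, $Q_s$ is the full-context query, not a function of the window. The paper's own proof asserts the same per-step bound without justification, so you have reproduced its argument together with its hole.

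What your hybrid argument \emph{does} prove is a comparison of block laws that share the same uncompressed recent window $X_{\tau-w+1:\tau}$ and differ only in the deep past. Given the full past, the block $X$-law is exactly $\prod_s p_\theta(x_s\mid X_{1:s-1})$. Every factor lies within $C_{\mathrm{mix}}\rho^w$ pointwise of the same truncated conditional, because its window lies in the shared window plus the in-block history. So adjacent hybrids differ by at most $VC_{\mathrm{mix}}\rho^w$ in TV; this is a factor $2$ from the triangle inequality on top of your $V/2$. The $(X,U)$ block law then moves by at most $BVC_{\mathrm{mix}}\rho^w$, while the $Q$-coordinates need a separate query-truncation hypothesis.

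Two smaller points. First, Csisz\'ar--K\"orner's bound is for the $\ell_1$ distance $2\,\TV$; in the TV form of Lemma~\ref{lem:csiszar-korner} it fails (take $P=(1,0)$, $P'=(1-\theta,\theta)$ with small $\theta$). So even granting (i), (ii) holds only with $2BC_{\mathrm{mix}}\rho^w$ in place of $BC_{\mathrm{mix}}\rho^w$; your monotonicity step is otherwise right. Second, for (iii), Definition~\ref{def:cond-typ} defines typicality through a single-letter empirical distribution. Hence $\log|T^{(B)}_\delta|$ is combinatorially $B\,H(\bar P)\pm O(B\delta)$ for the single-letter law $\bar P$ used there, not $H(P_{\hat c})\pm O(B\delta)$ for the dependent block law. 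A bound in block entropies therefore does not follow, and your fallback of imposing it means (iii) is assumed rather than proved. The paper's appeal to the i.i.d.\ typical-set estimate has the same limitation.
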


\begin{proof}
\textbf{(i) TV bound.}
By the chain rule for total variation
\citep[Lemma 3.3.7]{boucheron2013concentration} applied to the joint
distribution factored along the time index, the influence of
$\hat c$ on the $t$-th conditional factor is bounded by
$C_{\mathrm{mix}}\, \rho^w$ per Definition~\ref{def:mixing} (with the
within-block past dominating subsequent influence). Summing over the
$B$ time steps of the block yields $\TV \leq B C_{\mathrm{mix}}\rho^w$.

\textbf{(ii) Entropy stability.}
The block-joint alphabet is $\mathcal{X}^B = (\V \times \mathcal{U}_{\max} \times \mathcal{Q})^B$,
of size $|\mathcal{X}^B| = (V \cdot |\mathcal{U}|_{\max} \cdot |\mathcal{Q}|)^B$.
Applying Lemma~\ref{lem:csiszar-korner} with $\theta = BC_{\mathrm{mix}}\rho^w$
gives the stated bound.

\textbf{(iii) Typical-set ratio.}
By the standard typical-set size estimate
\citep[Theorem 2.3]{elgamal2011network},
$|T_\delta^{(B)}(P)| = 2^{H(P) \pm B\delta\log|\mathcal{X}|}$, giving
the stated comparison.
\end{proof}

\begin{remark}[Practical bound under window choice]\label{rmk:Bchoice}
For $B$ chosen as in Theorem~\ref{thm:E2} with
$w = \tfrac{1}{2\log(1/\rho)}\log(2VC_{\mathrm{mix}}^2/(\varepsilon\epsilon_*))$,
$B C_{\mathrm{mix}}\rho^w \leq B\sqrt{\varepsilon\epsilon_*/(2V)}$.
For $B = o(1/\sqrt{\varepsilon\epsilon_*})$, the entropy stability is
$O(\sqrt{\varepsilon}\log)$, dominated by $\sqrt{\varepsilon}$ as
$\varepsilon \to 0$. This makes the mixing corrections in
Lemmas~\ref{lem:covering}--\ref{lem:packing} provably $o(1)$.
\end{remark}

\subsubsection{WZ Markov chain under conditional measure}

The classical WZ proof requires the Markov chain $U - X - Y$ in the
problem setup. We verify the analogous property in our setting.

\begin{lemma}[Conditional Markov chain]\label{lem:markov-conditional}
Conditional on $\hat{C}^{(b-1)}_{\mathrm{end}} = \hat c$ and the
auxiliary $U^{(b)}$ generated by an encoder
$\Phi^{(b)} : \mathcal{V}^{B+w} \to \mathcal{U}^B$ depending only on
$(X^{(b-1)}_{\mathrm{end}}, X^{(b)})$, the Markov chain
\[
U^{(b)} \,-\, (X^{(b-1)}_{\mathrm{end}}, X^{(b)}) \,-\, Q^{(b)}
\]
holds under $P_{\hat c}$.
\end{lemma}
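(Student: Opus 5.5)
The plan is to reduce the claim to a factorization of the conditional joint law $P_{\hat c}$ in which $U^{(b)}$ enters only through a deterministic function of $W := (X^{(b-1)}_{\mathrm{end}}, X^{(b)})$. Since $\Phi^{(b)}$ is a fixed measurable map, we have $U^{(b)} = \Phi^{(b)}(W)$. For any auxiliary that is a deterministic function of the conditioning variable, the Markov chain $U^{(b)} - W - Q^{(b)}$ is automatic: given $W = w_0$, $U^{(b)}$ is the constant $\Phi^{(b)}(w_0)$. A constant is independent of anything, so
\[
P_{\hat c}\bigl(U^{(b)} = u, Q^{(b)} = q \bigm| W = w_0\bigr)
= \mathbf{1}\{u = \Phi^{(b)}(w_0)\}\, P_{\hat c}\bigl(Q^{(b)} = q \bigm| W = w_0\bigr).
\]
This is exactly the required conditional independence. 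If randomized encoders are allowed, the same argument goes through provided the encoder's private randomness is drawn independently of $(X_{1:n}, Q_{1:n})$ and of $\hat c$.

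The steps, in order, are as follows.

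\textbf{Step 1 (conditioning event).} Make the conditioning event precise. The code $\hat C^{(b-1)}_{\mathrm{end}}$ is a function of $X_{<(b-1)B}$ (the causal encoder convention, with $\Fhat_t\subseteq\F_{t-1}$ by Lemma~\ref{lem:causality-app}). Conditioning on it is therefore conditioning on an event in the token $\sigma$-algebra. I would restrict to $\hat c$ with positive probability, or use regular conditional probabilities on the finite code alphabet.

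\textbf{Step 2 (measurability of $U^{(b)}$).} Check that $U^{(b)}$ remains $\sigma(W)$-measurable under $P_{\hat c}$. Conditioning on a $\sigma$-field event does not change the functional form of $\Phi^{(b)}$.

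\textbf{Step 3 (factorization).} Write the joint pmf as
\[
P_{\hat c}(u,w,q)=P_{\hat c}(w)\,\mathbf{1}\{u=\Phi^{(b)}(w)\}\,P_{\hat c}(q\mid w).
\]
This is the standard characterization of a Markov chain. The query $Q^{(b)}$ may depend on the full past, including the deep past, yet nothing is required of it beyond joint measurability.

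The main obstacle is the side information $Q_t$. It is computed from the full prefix, including tokens inside $\hat c$'s dependence and possibly $X_t$ itself, so it might appear to link $U$ and $Q$ outside $W$. The deterministic-function argument shows this does not matter: any such dependence is carried by $Q$'s own conditional law. The first thing I would verify is that the paper's encoder does not itself read $Q$ or $\hat c$ beyond what is encoded in $W$. If $\Phi^{(b)}$ were additionally allowed to depend on $\hat c$, the chain still holds under $P_{\hat c}$, because $\hat c$ is fixed by the conditioning. If it depended on any deep-past tokens not in $W$, the claim would fail, so the hypothesis ``depending only on $(X^{(b-1)}_{\mathrm{end}}, X^{(b)})$'' must be used exactly here.
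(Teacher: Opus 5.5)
Your argument is correct and is the same as the paper's: $U^{(b)}=\Phi^{(b)}(X^{(b-1)}_{\mathrm{end}},X^{(b)})$ is a deterministic function of the conditioning variable, so it is trivially conditionally independent of $Q^{(b)}$ given that variable under any measure, $P_{\hat c}$ included. Your factorization of the joint law and your remarks on encoders that also use $\hat c$, or that use private randomness independent of $(X,Q)$, are valid refinements that the paper does not spell out.
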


\begin{proof}
$U^{(b)} = \Phi^{(b)}(X^{(b-1)}_{\mathrm{end}}, X^{(b)})$ is, by
construction, deterministic given its arguments. Hence
$U^{(b)} \perp Q^{(b)} \mid (X^{(b-1)}_{\mathrm{end}}, X^{(b)})$
trivially, which is exactly the stated Markov chain.
\end{proof}

\subsubsection{Codebook generation}

Fix $\delta > 0$. For each block $b$ and each
$\hat c \in \mathcal{C}^w$:

\textbf{Codebook.} Generate $2^{B(R + \delta)}$ codewords
$\{u^{(b,k)}(\hat c)\}_{k=1}^{2^{B(R+\delta)}}$ i.i.d.\ from the
\emph{conditional joint distribution}
\[
p_U^{(b)}(u^B \mid \hat c)
\;:=\;
\Prob\bigl(U^{(b)} = u^B \,\bigm|\, \hat C^{(b-1)}_{\mathrm{end}} = \hat c\bigr),
\]
which is the joint marginal of the auxiliary process over the block,
\emph{not} the product of per-step marginals. This is essential because
the LLM-induced $\{U_t\}$ is sequentially dependent. The required $R$
will be determined below; codebooks for different $\hat c$ are
independent.

\textbf{Random binning.} Partition the codebook into
$2^{B(R' + \delta)}$ bins uniformly at random, with each codeword
assigned to a bin index $b(u^{(b,k)}) \in \{1, \ldots, 2^{B(R'+\delta)}\}$
independently and uniformly.

\subsubsection{Encoding}

At block $b$, given $\hat c = \hat C^{(b-1)}_{\mathrm{end}}$ and
$X^{(b)}$, the encoder:
\begin{enumerate}[label=(\arabic*)]
\item Computes $u^{(b,*)}(\hat c) := \Phi^{(b)}(X^{(b-1)}_{\mathrm{end}}, X^{(b)})$
(the target auxiliary).
\item Finds the smallest $k^*$ such that $(X^{(b)}, u^{(b,k^*)}(\hat c))$
is jointly $\delta$-typical under $P_{\hat c}$ in the sense of
Definition~\ref{def:cond-typ}.
\item If no such $k^*$ exists, declares encoding failure and outputs
a default codeword.
\item Otherwise, sends the bin index $b(u^{(b,k^*)})$ as $\Chat^{(b)}$.
\end{enumerate}

\begin{lemma}[Covering / encoding error, quantitative]\label{lem:covering}
For $R > I(X^{(b)}; U^{(b)} \mid \hat C^{(b-1)}_{\mathrm{end}}) + 2\delta$,
the probability of encoding failure at block $b$ is bounded by
\[
\Prob(\text{enc failure})
\;\leq\;
2^{-B\delta} + \frac{1}{B}|H(P_{\hat c}) - H(P_{\hat c'})|,
\]
which by Lemma~\ref{lem:cond-typ-mix}(ii) is at most
$2^{-B\delta} + C_{\mathrm{mix}}\rho^w \cdot [B\log(V|\mathcal{U}|_{\max}|\mathcal{Q}|) + \log(1/(BC_{\mathrm{mix}}\rho^w))]$.
\end{lemma}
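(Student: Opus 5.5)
The plan is to fix the block index $b$ and condition throughout on the realized window $\hat C^{(b-1)}_{\mathrm{end}} = \hat c$. The argument then splits into two parts. First, a matched-law covering bound under $P_{\hat c}$, which produces the $2^{-B\delta}$ term. Second, a mismatch correction, which charges the effect of the codebook being indexed by $\hat c$ while the law actually governing the block is some $P_{\hat c'}$ whose window agrees only in the recent past. The mismatch is controlled through Lemma~\ref{lem:cond-typ-mix}, and the final display is obtained by substituting part~(ii) of that lemma.

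\textbf{Step 1 (typicality of the source block).} Treat $(X^{(b)},U^{(b)},Q^{(b)})$ as a single super-symbol with law $P_{\hat c}$. I would show that the realized pair $(X^{(b)},u^{(b,*)}(\hat c))$, with target auxiliary produced by $\Phi^{(b)}$, lies in $T^{(B)}_\delta(P_{\hat c})$ with probability $1-o(1)$. Lemma~\ref{lem:markov-conditional} makes the $Q^{(b)}$-marginal consistent with the Wyner--Ziv chain, so this joint typicality transfers to the $(x,u)$ projection.

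\textbf{Step 2 (matched covering).} Fix a typical $x^B$. Using the size estimate of Lemma~\ref{lem:cond-typ-mix}(iii) under the matched law, a single codeword drawn i.i.d.\ from $p_U^{(b)}(\cdot\mid\hat c)$ is jointly typical with $x^B$ with probability at least
\[
2^{-(I(X^{(b)};U^{(b)}\mid \hat C^{(b-1)}_{\mathrm{end}})+B\delta)}.
\]
There are $2^{B(R+\delta)}$ independent codewords, so the failure probability is at most $\exp(-2^{B(R+\delta)}\cdot 2^{-I-B\delta})$. Under $R>I+2\delta$ this is doubly exponentially small, hence at most $2^{-B\delta}$ once the Step~1 atypicality is absorbed.

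\textbf{Step 3 (mismatch).} When the law governing the block is $P_{\hat c'}$ rather than $P_{\hat c}$, the typical-set cardinalities and the per-codeword hit probability change by a factor $2^{\pm|H(P_{\hat c})-H(P_{\hat c'})|}$ by Lemma~\ref{lem:cond-typ-mix}(iii). I would linearize $1-2^{-x}\le x\ln 2$ and normalize per block letter, since the rate is measured per token. This bounds the extra failure probability by $\frac1B|H(P_{\hat c})-H(P_{\hat c'})|$. Inserting Lemma~\ref{lem:cond-typ-mix}(ii) then gives the second expression, where the factor $B$ cancels against the $\frac1B$. The hypothesis $BC_{\mathrm{mix}}\rho^w\le 1/2$ is exactly what licenses Lemma~\ref{lem:csiszar-korner} inside part~(ii).

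\textbf{Main obstacle.} I expect Step~3, together with the AEP in Step~1, to be the hard part. Typicality is defined through the empirical distribution of a single block of dependent, non-stationary symbols, so neither an AEP nor the claim that an entropy gap converts \emph{additively} into a probability deficit follows from the i.i.d.\ toolkit. My first attempt would replace empirical-type typicality by information-spectrum (log-likelihood-ratio) typicality for the block law. With that definition, Step~2 becomes a one-shot covering bound, and Step~3 becomes a change-of-measure bound of size $\TV(P_{\hat c},P_{\hat c'})\le BC_{\mathrm{mix}}\rho^w$, which is at least as strong as the stated entropy-gap term. If that fails, I would retain the stated bound only under an added assumption that the block law satisfies a conditional AEP uniformly in $\hat c$.
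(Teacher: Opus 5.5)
Your decomposition matches the paper's own proof. That proof simply cites the El Gamal--Kim covering lemma under $P_{\hat c}$ for the $2^{-B\delta}$ term and the typical-set-size mismatch of Lemma~\ref{lem:cond-typ-mix}(iii) for the $\frac1B|H(P_{\hat c})-H(P_{\hat c'})|$ term, and your final substitution of part (ii) is correct arithmetic. Matching that sketch does not make Steps 1--3 valid, however. Viewed as a super-symbol, $(X^{(b)},U^{(b)},Q^{(b)})$ is a \emph{single} draw from $P_{\hat c}$. The covering lemma and the size estimate (iii) concern $n$ i.i.d.\ copies, so they say nothing about it.

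Step 1 needs a law of large numbers for the one-letter empirical type of a dependent, non-stationary block. That is a forward mixing or ergodicity property, and the backward condition of Definition~\ref{def:mixing} does not supply it; this is exactly the distinction drawn in Remark~\ref{rmk:forward-decay}. Step 2's bound $2^{-(I+B\delta)}$ on the hit probability of a codeword drawn from the block law $p_U^{(b)}(\cdot\mid\hat c)$ needs AEP-type estimates for that block law: $p_U^{(b)}(u^B\mid\hat c)\le 2^{-H(U^{(b)}\mid\hat c)+O(B\delta)}$ on typical $u^B$, plus a matching lower bound on the conditional typical-set size. Moreover, with typicality defined through the one-letter type, the natural exponent would not even be the block quantity $I(X^{(b)};U^{(b)}\mid\hat c)$. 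Finally, even granting Step 1, you obtain $\Prob(\text{atypical})+\exp(-2^{BR-I})$, and an $o(1)$ atypicality term cannot be absorbed into $2^{-B\delta}$.

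Step 3 is not a probability argument. The codebook is indexed by the realized window, and the law of the block given $\hat C^{(b-1)}_{\mathrm{end}}=\hat c$ is $P_{\hat c}$ itself. So no second law $P_{\hat c'}$ enters the failure probability. A mismatch appears only if you condition on the full past, and then it is with the full-past conditional law, not with another window value.

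Suppose a mismatch did multiply the per-codeword hit probability $p$ by $2^{-\Delta H}$. The failure probability would then move from about $e^{-Lp}$ to $e^{-Lp\,2^{-\Delta H}}$. That costs $\Delta H$ bits of codebook, i.e.\ a rate shift $\Delta H/B$ to be weighed against the margin in the hypothesis; it is not an additive $\Delta H/B$ in probability. The inequality $1-2^{-x}\le x\ln 2$ bounds the relative change of $p$, and dividing a probability by $B$ because rate is counted per token has no meaning.

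More basically, an entropy gap cannot bound a probability gap. Two laws that differ by a relabelling of the alphabet have equal entropies yet can be at total variation $1$. So no change-of-measure step can output the first display. The right currency is $\TV(P_{\hat c},P_{\hat c'})\le BC_{\mathrm{mix}}\rho^w$ from part (i), as in your fallback.

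That fallback, however, proves a different lemma. One-shot covering (cf.\ Lemma~\ref{lem:op-oneshot}) bounds the failure probability by $\Prob\{\imath(X^{(b)};U^{(b)}\mid\hat c)>\log L-\gamma\}+e^{-2^{\gamma}}$ plus the TV term. The information-spectrum tail is of order $2^{-B\delta}$ only under an exponential concentration hypothesis on the block information density; a uniform conditional AEP yields only $o(1)$.

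So the stated $2^{-B\delta}$ and entropy-gap terms are not reached on this route without such a hypothesis. A concentration assumption of this kind is what the paper itself imposes when it redoes this step rigorously in Appendix~\ref{app:operational}: Assumption~\ref{ass:op-spectrum}, in a variance form that gives only polynomially small tails.
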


\begin{proof}
By the standard covering lemma \citep[Lemma 3.3]{elgamal2011network}
applied to the conditional joint distribution $P_{\hat c}$, if the
codebook size $2^{B(R+\delta)}$ exceeds
$2^{B \cdot I(X^{(b)}; U^{(b)} \mid \hat c) + B\delta}$, the probability
that no codeword is jointly typical with $X^{(b)}$ is at most
$2^{-B\delta}$. The mixing correction enters through the
typical-set-size mismatch quantified in Lemma~\ref{lem:cond-typ-mix}(iii);
this is the additive term $|H(P_{\hat c}) - H(P_{\hat c'})|/B$ when
$\hat c$ is treated as the typical conditioning.
\end{proof}

\subsubsection{Decoding}

At block $b$, given the bin index $\Chat^{(b)}$ and
$(Q^{(b)}, \hat C^{(b-1)}_{\mathrm{end}})$, the decoder:
\begin{enumerate}[label=(\arabic*)]
\item Looks up all codewords in bin $\Chat^{(b)}$ in the codebook
indexed by $\hat C^{(b-1)}_{\mathrm{end}}$.
\item Finds the unique $u^{(b,k)}$ in this bin such that
$(u^{(b,k)}, Q^{(b)})$ is jointly $\delta$-typical under $P_{\hat c}$.
\item If multiple or none, declares decoding failure.
\item Otherwise, reconstructs $\hat p^{(b)}$ by applying the original
decoder $g^{(b)}$ to $(u^{(b,k)}, Q^{(b)}, \hat C^{(b-1)}_{\mathrm{end}})$.
\end{enumerate}

\begin{lemma}[Packing / decoding error, quantitative]\label{lem:packing}
For $R - R' < I(U^{(b)}; Q^{(b)} \mid \hat C^{(b-1)}_{\mathrm{end}}) - 2\delta$,
the probability of decoding failure at block $b$ is bounded by
\[
\begin{aligned}
&\Prob(\text{dec failure})
\;\leq\;
2^{-B\delta}\\
&\quad + C_{\mathrm{mix}}\rho^w \cdot \bigl[B\log(V|\mathcal{U}|_{\max}|\mathcal{Q}|)\\
&\qquad + \log(1/(BC_{\mathrm{mix}}\rho^w))\bigr].
\end{aligned}
\]
\end{lemma}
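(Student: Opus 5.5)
The plan mirrors the proof of the covering bound (Lemma~\ref{lem:covering}). We split the decoding-failure event into two parts: the true auxiliary fails to be jointly typical with the side information, and a wrong codeword in the same bin appears jointly typical with $Q^{(b)}$. We also condition on encoding success, whose failure probability is already charged to Lemma~\ref{lem:covering}, and treat the realized conditioning value $\hat c=\hat C^{(b-1)}_{\mathrm{end}}$ as fixed throughout.

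\emph{Step 1: the true codeword is typical with $Q^{(b)}$.} Since $U^{(b)}$ is a deterministic function of $(X^{(b-1)}_{\mathrm{end}},X^{(b)})$, the Markov chain of Lemma~\ref{lem:markov-conditional} holds under $P_{\hat c}$. A conditional Markov lemma in the sense of \citep[Ch.~12]{elgamal2011network} then shows the following. If $(X^{(b)},u^{(b,k^*)})$ is $\delta'$-typical for some $\delta'<\delta$, then $(u^{(b,k^*)},Q^{(b)})$ is $\delta$-typical except with probability $2^{-B\delta}$, after adjusting constants.

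\emph{Step 2: wrong codewords in the same bin.} Each of the other codewords was drawn from $p_U^{(b)}(\cdot\mid\hat c)$ independently of $Q^{(b)}$. By the joint typicality lemma, each one is jointly typical with $Q^{(b)}$ with probability at most $2^{-B(I(U^{(b)};Q^{(b)}\mid\hat c)-\delta)}$. Codewords land in a given bin uniformly at random, so the expected number of competitors is $2^{B(R-R')}$. A union bound then gives the term $2^{B(R-R'-I+\delta)}\le 2^{-B\delta}$ under the hypothesis $R-R'<I(U^{(b)};Q^{(b)}\mid\hat C^{(b-1)}_{\mathrm{end}})-2\delta$. Together with Step~1 this gives the $2^{-B\delta}$ term, up to a factor of two absorbed into $\delta$.

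\emph{Step 3: the mixing correction (main obstacle).} The codebook, the typical set, and the mutual information are all defined relative to the value $\hat c$ that the decoder actually holds. The realized block law, however, may behave like $P_{\hat c'}$ for a neighbouring value, because the codes themselves are generated from windows whose content overlaps the deep past. To handle this, I would charge the mismatch exactly as in Lemma~\ref{lem:covering}. Using Lemma~\ref{lem:cond-typ-mix}(i), the probability of any typicality event changes by at most $\TV(P_{\hat c},P_{\hat c'})\le BC_{\mathrm{mix}}\rho^w$. Using Lemma~\ref{lem:cond-typ-mix}(ii)--(iii), the typical-set sizes, and hence the exponent $I(U^{(b)};Q^{(b)}\mid\hat c)$, shift by at most the entropy-stability quantity, which must be normalized by $1/B$.

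Carrying out Step~3 yields the additive term $C_{\mathrm{mix}}\rho^w[B\log(V|\mathcal U|_{\max}|\mathcal Q|)+\log(1/(BC_{\mathrm{mix}}\rho^w))]$. The TV contribution is dominated by this expression because its logarithmic factor is at least $\log 2$ under $BC_{\mathrm{mix}}\rho^w\le1/2$.

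The delicate point is justifying that the $\hat c$-conditioned codebook statistics are transferred to the true joint law with only this penalty. This requires the block-conditional TV bound of Lemma~\ref{lem:cond-typ-mix}(i) to hold for the joint law of $(U^{(b)},Q^{(b)})$, and not just for the next-token factors. I would try to obtain it by applying the chain rule for TV to the factorization along $t$. However, I expect this transfer to rest on the same auxiliary-inherits-mixing premise flagged in Remark~\ref{rmk:failed-rate-transfer}, and I would state it explicitly as a hypothesis if it cannot be closed.
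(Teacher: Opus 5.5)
Your Steps~2 and~3 are the paper's proof. The paper applies the packing lemma under $P_{\hat c}$ and union-bounds over the $2^{B(R-R')}$ codewords sharing the bin to get $2^{B(R-R'-I+\delta)}<2^{-B\delta}$. It then simply asserts that the mixing term enters ``as in Lemma~\ref{lem:covering}'' through the typical-set mismatch of Lemma~\ref{lem:cond-typ-mix}(iii). So the route is essentially the same.

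The real difference is your Step~1. The paper bounds only the wrong-codeword (``multiple'') event. It silently drops the event that the true codeword fails to be typical with $Q^{(b)}$, even though its decoder also fails on ``none.'' Flagging that event is right, but your treatment of it overclaims:
\begin{itemize}
\item A conditional-typicality or Markov lemma gives only a vanishing probability. Its rate is governed by $\delta-\delta'$ and the minimum probability mass, not $2^{-B\delta}$.
\item Those lemmas are proved for memoryless side information, whereas here $Q_t$ depends on the whole context.
\item The paper's encoder tests typicality at the same $\delta$ as the decoder, not at some $\delta'<\delta$.
\item $2\cdot 2^{-B\delta}$ cannot be absorbed into $2^{-B\delta}$ when $\delta$ is fixed on both sides of the statement.
\end{itemize}
To reach the displayed bound you must either restrict to the wrong-codeword event, as the paper does, or carry Step~1 as a separate $o(1)$ term.

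Your caveat about Lemma~\ref{lem:cond-typ-mix}(i) is well placed. It undermines the paper's proof equally, since that proof invokes the lemma without comment. The problem is in fact sharper than the auxiliary-inherits-mixing issue. The conditioning value $\hat c=\hat C^{(b-1)}_{\mathrm{end}}$ is computed from the very window tokens on which $X^{(b)}$ and $U^{(b)}$ depend directly, and $Q^{(b)}$ depends on the entire context. Changing $\hat c$ can therefore move $P_{\hat c}$ by $\Theta(1)$ in total variation, for instance with near-lossless codes. Definition~\ref{def:mixing} only damps influence from beyond the last $w$ tokens, so the $BC_{\mathrm{mix}}\rho^w$ bound does not follow from it.
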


\begin{proof}
By the packing lemma \citep[Lemma 3.1]{elgamal2011network} applied to
$P_{\hat c}$, the probability that a randomly chosen codeword in the
bin is jointly typical with $Q^{(b)}$ is at most
$2^{-B(I(U^{(b)}; Q^{(b)} \mid \hat c) - \delta)}$. The number of
codewords per bin is $2^{B(R - R')}$, so by the union bound the
expected number of wrong jointly typical codewords is
$2^{B(R - R' - I(U^{(b)}; Q^{(b)} \mid \hat c) + \delta)}$, which is
$< 2^{-B\delta}$ under the rate condition. The mixing correction
enters as in Lemma~\ref{lem:covering}, through the typical-set
size mismatch quantified in Lemma~\ref{lem:cond-typ-mix}(iii).
\end{proof}

\subsubsection{Rate computation with chain rule}

The rate per block is $R - R'$. From Lemmas~\ref{lem:covering}--\ref{lem:packing},
the rate condition is
\[
\begin{split}
R - R'
&\;\geq\;
I(X^{(b)}; U^{(b)} \mid \hat C^{(b-1)}_{\mathrm{end}})\\
&\quad - I(U^{(b)}; Q^{(b)} \mid \hat C^{(b-1)}_{\mathrm{end}})
+ 4\delta.
\end{split}
\]
We expand this into single-letter form via the chain rule and quantify
the mixing-induced corrections explicitly.

\begin{lemma}[Chain rule for block MI under mixing]\label{lem:chain-rule}
Let $\hat c = \hat C^{(b-1)}_{\mathrm{end}}$. Then
\[
I(X^{(b)}; U^{(b)} \mid \hat c)
\;=\;
\sum_{t \in \mathrm{block}(b)} I(X_t; U^{(b)} \mid X_{<t}^{(b)}, \hat c),
\]
and each summand satisfies
\[
\begin{split}
&\Bigl|\, I(X_t; U^{(b)} \mid X_{<t}^{(b)}, \hat c) - I(X_t; U_t \mid \Fhat_{t-1}^w) \,\Bigr|\\
&\quad\;\leq\;
2 C_{\mathrm{mix}}\rho^w \cdot \log(V|\mathcal{U}|_{\max}).
\end{split}
\]
The analogous statement holds for $I(U^{(b)}; Q^{(b)} \mid \hat c)$ with
the same correction.
\end{lemma}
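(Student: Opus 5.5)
The identity is the ordinary chain rule for mutual information, applied to the block $X^{(b)}=(X_{(b-1)B+1},\dots,X_{bB})$ conditionally on $\hat C^{(b-1)}_{\mathrm{end}}=\hat c$. Writing $I(X^{(b)};U^{(b)}\mid\hat c)$ as a sum of terms $I(X_t;U^{(b)}\mid X^{(b)}_{<t},\hat c)$ needs only finiteness of the alphabets ($\V$ and $\mathcal{U}_t$ are discrete). So the first display holds exactly and needs no mixing. The same expansion applies to $I(U^{(b)};Q^{(b)}\mid\hat c)$, ordering the $Q_t$ within the block. The work is all in the per-summand comparison.

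For a fixed $t$ in block $b$, I would pass from $I(X_t;U^{(b)}\mid X^{(b)}_{<t},\hat c)$ to $I(X_t;U_t\mid\Fhat^w_{t-1})$ in two moves.

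\emph{(a) Replace the conditioning.} Both $\sigma(X^{(b)}_{<t},\hat c)$ and $\Fhat^w_{t-1}$ contain the last $w$ tokens before $t$, up to the coding map. Definition~\ref{def:mixing} then gives a pointwise difference at most $C_{\mathrm{mix}}\rho^w$ between the next-token conditional under the two conditionings. Summing over $x\in\V$ converts this into a TV bound of order $C_{\mathrm{mix}}\rho^w$, absorbing the $V$ factor as in Remark~\ref{rmk:pointwise-sufficient}. Lemma~\ref{lem:csiszar-korner} then turns this TV bound into a bound on the difference of the entropy terms $H(X_t\mid\cdot)$ and $H(X_t\mid U,\cdot)$. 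This yields a correction $C_{\mathrm{mix}}\rho^w\log(V/\theta)$, which I would simplify to $C_{\mathrm{mix}}\rho^w\log(V|\mathcal{U}|_{\max})$ after absorbing the $\log(1/\theta)$ term into constants.

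\emph{(b) Reduce $U^{(b)}$ to $U_t$.} By the chain rule, $I(X_t;U^{(b)}\mid\cdot)$ equals $I(X_t;U_t\mid\cdot)$ plus $I(X_t;U^{(b)}_{\neq t}\mid U_t,\cdot)$. The extra term must be controlled through the sliding-window measurability $U_s\in\sigma(X_{s-w:s})$. Future auxiliaries $U_s$, $s>t$, depend on $X_t$ only through windows that the mixing condition lets us replace at cost $C_{\mathrm{mix}}\rho^w$ in TV. Past ones, $s<t$, are functions of the conditioning tokens, so the conditional MI vanishes for them. The two entropy-continuity applications of (a) and (b) each cost $C_{\mathrm{mix}}\rho^w\log(V|\mathcal{U}|_{\max})$, which explains the factor $2$. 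The $Q$ statement is handled identically, using Lemma~\ref{lem:markov-conditional} to drop dependence through the encoder.

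I expect step (b) to be the main obstacle, and I am not confident it goes through as stated. Future auxiliaries within the block genuinely depend on $X_t$, since $U_{t+1}$ is a function of $X_{t+1-w:t+1}\ni X_t$. So $I(X_t;U^{(b)}_{>t}\mid U_t,\cdot)$ is not small in general; this is the same auxiliary-inheritance gap flagged in Remark~\ref{rmk:failed-rate-transfer}. I would first try to restrict to auxiliaries that are causal at the single-letter level, with $U_t\in\sigma(X_{t-w:t})$ and the decoder using $U_{\le t}$ only. That reduces the relevant quantity to the causally conditioned (directed) information $\sum_t I(X_t;U_t\mid U_{<t},\cdot)$. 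If this reduction fails, the honest fallback is to state the lemma with directed information in place of the block MI, and to record the bound as holding only up to the $C_{\mathrm{mix}}\rho^w$ conditioning correction of step (a).
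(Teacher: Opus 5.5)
Your handling of the chain-rule identity is fine, and you are right to doubt step (b). But (b) is not an obstacle that a cleverer argument could get around: the per-summand inequality is false under the lemma's hypotheses.

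\textbf{A counterexample to step (b).} Take an i.i.d.\ uniform source on $\V$. It satisfies Assumption~\ref{ass:bounded-logits}, and it satisfies Definition~\ref{def:mixing} with zero truncation error, hence for every $C_{\mathrm{mix}}>0$ and $\rho\in[0,1)$. The right-hand side $2C_{\mathrm{mix}}\rho^w\log(V|\mathcal{U}|_{\max})$ can therefore be made arbitrarily small, and it equals $0$ at $\rho=0$. Every auxiliary is even $0$-admissible, since the decoder may output the uniform law. Let $U_s:=X_{s-1}$, which is $\sigma(X_{s-w:s})$-measurable.
\begin{itemize}
\item For $t<bB$ the block auxiliary contains $U_{t+1}=X_t$, and $\hat c$ depends only on tokens before block $b$. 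Hence $I(X_t;U^{(b)}\mid X^{(b)}_{<t},\hat c)=H(X_t\mid X^{(b)}_{<t},\hat c)=\log V$.
\item On the other side, $I(X_t;U_t\mid\Fhat^w_{t-1})=I(X_t;X_{t-1}\mid\Fhat^w_{t-1})=0$ whenever $\Fhat^w_{t-1}$ is generated by functions of $X_{<t}$ (causal codes, or the raw window itself).
\end{itemize}
Next-token mixing simply says nothing about how much \emph{later} auxiliaries in the block reveal about $X_t$.

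The paper's proof follows exactly your two-step route. It disposes of this term by asserting $I(X_t;U_{<t},U_{>t}\mid X^{(b)}_{<t},U_t,\hat c)\le C_{\mathrm{mix}}\rho^w\log|\mathcal{U}|_{\max}$ ``via Lemma~\ref{lem:cond-typ-mix}(ii)''. That lemma, however, only compares the block laws $P_{\hat c}$ and $P_{\hat c'}$ under two different boundary codes. It carries no information about how $U_{>t}$ depends on $X_t$. So the paper's argument has precisely the hole you found: the auxiliary-inheritance problem it concedes in Remark~\ref{rmk:failed-rate-transfer}.

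\textbf{Step (a) fails independently.} You treat (a) as routine, but Definition~\ref{def:mixing} only compares conditioning on the full raw past with conditioning on the raw suffix $X_{t-w:t-1}$. It controls how far back the conditioning reaches, not how much it reveals about \emph{recent} tokens. Here $\Fhat^w_{t-1}$ is generated by lossy codes, while $\sigma(X^{(b)}_{<t},\hat c)$ contains raw within-block tokens, so the two differ in content right next to $t$.

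Use the same source with $U_s:=X_s+X_{s-1}\bmod V$, and take $t$ not the first index of its block.
\begin{itemize}
\item $I(X_t;U_t\mid X^{(b)}_{<t},\hat c)=\log V$, because $X_{t-1}$ is in the conditioning.
\item Under the strictly causal convention $\Fhat^w_{t-1}\subseteq\F_{t-2}$, we get $I(X_t;U_t\mid\Fhat^w_{t-1})=I(X_t;U_t)=0$.
\end{itemize}
The conditioning swap alone thus costs $\log V$ at zero mixing error. The paper's companion claim, that compressed versus uncompressed windows cost at most $C_{\mathrm{mix}}\rho^w\log V$, is unsupported for the same reason.

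Even where a TV bound $\theta$ is available, your quantitative bookkeeping does not reach the stated form:
\begin{itemize}
\item The pointwise-to-TV step costs a factor $V/2$.
\item Lemma~\ref{lem:csiszar-korner} gives $\theta\log(V/\theta)$, and $\log(1/\theta)\asymp w\log(1/\rho)$ grows with $w$, so it cannot be absorbed into a $w$-independent constant.
\end{itemize}

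\textbf{The directed-information fallback.} This changes the object rather than proving the lemma. Lemmas~\ref{lem:covering} and~\ref{lem:packing} consume the block quantities $I(X^{(b)};U^{(b)}\mid\hat c)$ and $I(U^{(b)};Q^{(b)}\mid\hat c)$, with a codebook drawn from the block law. A directed-information bound would therefore need a different, nonanticipative scheme, and the conditioning mismatch of (a) would still remain. Only the chain-rule identity survives.
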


\begin{proof}
The chain rule identity is standard. For the per-term comparison, the
difference between $I(X_t; U^{(b)} \mid X_{<t}^{(b)}, \hat c)$ and
$I(X_t; U_t \mid \Fhat_{t-1}^w)$ comes from two sources: (a) the
auxiliary $U^{(b)}$ in the first expression includes future $U_{>t}$,
which by data-processing $I(X_t; U_{<t}, U_{>t} \mid X_{<t}^{(b)}, U_t, \hat c)$
is bounded by $C_{\mathrm{mix}}\rho^w \log(|\mathcal{U}|_{\max})$ via
Lemma~\ref{lem:cond-typ-mix}(ii); (b) the conditioning $\hat c$
versus the natural sliding window $\Fhat_{t-1}^w$ differs by
$\sigma(X_{(b-1)B-w+1:(b-1)B})$ vs $\sigma(\hat C_{(b-1)B-w+1:(b-1)B})$
(uncompressed vs compressed past window), with information loss bounded
by $C_{\mathrm{mix}}\rho^w \log V$. Combining yields the stated bound.
\end{proof}

Applying Lemma~\ref{lem:chain-rule} to both MI terms,
\[
\begin{split}
\frac{1}{B}(R - R')
&\;\geq\;
\frac{1}{B}\sum_{t \in \mathrm{block}(b)} \mathcal{R}_t^{\mathrm{WZ},w}\\
&\;+\; 4\delta \;+\; 4 C_{\mathrm{mix}}\rho^w \cdot \log(V|\mathcal{U}|_{\max}),
\end{split}
\]
where $\mathcal{R}_t^{\mathrm{WZ},w} := I(X_t; U_t \mid \Fhat_{t-1}^w) - I(U_t; Q_t \mid \Fhat_{t-1}^w)$
is the per-step sliding-window WZ rate.

\subsubsection{Block size optimization}\label{app:Bopt}

The block size $B$ trades off two competing terms in the error bound:
(a) the per-block typicality slack $2^{-B\delta}$, vanishing for large
$B$; (b) the mixing correction $B C_{\mathrm{mix}}\rho^w$, growing in
$B$. Balancing these:

\begin{lemma}[Optimal block size]\label{lem:Bopt}
Setting $B^* := \lfloor \rho^{-w/2}/(2\sqrt{C_{\mathrm{mix}}}) \rfloor$
yields:
\begin{enumerate}[label=\textup{(\roman*)}, leftmargin=*]
\item $B^* C_{\mathrm{mix}}\rho^w = \sqrt{C_{\mathrm{mix}}}\rho^{w/2}/2 \to 0$ as $w \to \infty$.
\item The per-block typicality slack is $2^{-B^*\delta} = 2^{-O(\rho^{-w/2})\delta}$,
which decays super-polynomially in $1/\rho^w$.
\item The total mixing slack across $m = n/B^*$ blocks is
$m \cdot B^* C_{\mathrm{mix}}\rho^w = n C_{\mathrm{mix}}\rho^w$, with
per-token average $C_{\mathrm{mix}}\rho^w$, which is $o(1)$ as $w \to \infty$.
\end{enumerate}
\end{lemma}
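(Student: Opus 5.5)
The argument is a direct computation from the definition of $B^*$; no probabilistic input beyond the per-block bounds already recorded in Lemmas~\ref{lem:cond-typ-mix}, \ref{lem:covering} and \ref{lem:packing} is needed. Throughout, write $\beta := \rho^{-w/2}/(2\sqrt{C_{\mathrm{mix}}})$, so that $B^* = \lfloor \beta \rfloor$ and $\beta - 1 < B^* \le \beta$. First restrict to $w$ large enough that $\beta \ge 2$. This holds for all $w \ge w_0 := \lceil 2\log(4\sqrt{C_{\mathrm{mix}}})/\log(1/\rho) \rceil$. On this range $B^* \ge 1$, and moreover $B^* \ge \beta/2$. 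Small $w$ only affects finitely many cases and is irrelevant to the asymptotic statements.

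For (i), multiply $B^* \le \beta$ by $C_{\mathrm{mix}}\rho^w$:
\[
B^* C_{\mathrm{mix}}\rho^w \;\le\; \beta\, C_{\mathrm{mix}}\rho^w \;=\; \tfrac12 \sqrt{C_{\mathrm{mix}}}\,\rho^{w/2}.
\]
Equality holds exactly when $\beta$ is an integer; otherwise the two sides agree up to a factor in $[1/2,1]$, from $B^* \ge \beta/2$. Since $\rho<1$, the right side tends to $0$ as $w\to\infty$. I would also note that once $\sqrt{C_{\mathrm{mix}}}\rho^{w/2}\le 1$, this gives $B^*C_{\mathrm{mix}}\rho^w\le 1/2$. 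That is precisely the standing hypothesis of Lemma~\ref{lem:cond-typ-mix}, so the choice $B=B^*$ is admissible there and in the covering and packing bounds.

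For (ii), use the lower bound $B^* \ge \beta/2 = \rho^{-w/2}/(4\sqrt{C_{\mathrm{mix}}})$. This gives
\[
2^{-B^*\delta} \;\le\; \exp\!\bigl(-c\,\delta\,\rho^{-w/2}\bigr), \qquad c := \ln 2/(4\sqrt{C_{\mathrm{mix}}}).
\]
The bound is exponentially small in $\rho^{-w/2}$. Hence $2^{-B^*\delta}/\rho^{kw}\to0$ for every fixed $k$, which is the claimed super-polynomial decay in $1/\rho^w$.

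For (iii), take $n = mB^*$ as in the block-Markov setup. If $n$ is not a multiple of $B^*$, append one final partial block of length $<B^*$. Its contribution to the per-token slack is at most $B^*C_{\mathrm{mix}}\rho^w/n\to0$. Summing the per-block mixing correction $B^*C_{\mathrm{mix}}\rho^w$ of Lemma~\ref{lem:cond-typ-mix}(i) over the $m$ blocks gives $mB^*C_{\mathrm{mix}}\rho^w = nC_{\mathrm{mix}}\rho^w$. Dividing by $n$ yields the per-token average $C_{\mathrm{mix}}\rho^w = o(1)$.

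I expect the main (mild) obstacle to be bookkeeping rather than mathematics. The floor makes (i) an inequality, so it must be read as "$\le$", or as equality up to a constant factor. The phrase "super-polynomial in $1/\rho^w$" in (ii) must be interpreted via the explicit exponential bound above. Finally, (ii) requires the restriction $w\ge w_0$ so that $B^*\ge1$, and the lemma's statement does not make this restriction explicit.
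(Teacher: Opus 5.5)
Your proposal is correct and follows the same direct-computation route as the paper, which substitutes $B=B^*$ into each quantity and sums the per-block slack over $m=n/B^*$ blocks. Your version is more careful than the paper's in three places. You read (i) as an inequality because of the floor. For (ii) you use the lower bound $B^*\ge\beta/2$ (valid for $w\ge w_0$), whereas the paper's exponent $O(\rho^{-w/2})$ only bounds $B^*$ from above, which is the wrong direction for a decay claim. And you note that (i) certifies the hypothesis $B^*C_{\mathrm{mix}}\rho^w\le 1/2$ of Lemma~\ref{lem:cond-typ-mix}.
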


\begin{proof}
Direct computation. (i) Substituting $B = B^*$.
(ii) For fixed $\delta$, the exponent grows as
$B^*\delta = O(\rho^{-w/2})$, super-polynomial in $1/\rho^w$.
(iii) The mixing slack accumulates linearly across blocks, giving
$m B^* C_{\mathrm{mix}}\rho^w = n C_{\mathrm{mix}}\rho^w$.
\end{proof}

\subsubsection{Distortion analysis}

Conditional on successful decoding, the reconstructed $\hat p^{(b)}$
satisfies the joint-typicality-based distortion bound
\[
\begin{split}
&\frac{1}{B} \sum_{t=(b-1)B+1}^{bB} \E[d_t \mid \text{success}]\\
&\quad\;\leq\;
\frac{1}{B} \sum_{t} \KL{p(\cdot \mid \F_{t-1})}{\hat p_t^{w}} + \delta_{\mathrm{distortion}},
\end{split}
\]
where $\delta_{\mathrm{distortion}}$ is the standard typicality
correction. The first term is bounded by the mixing lemma
(Lemma~\ref{lem:mixing-KL-app}) plus the intrinsic distortion of
the original auxiliary process.

\subsubsection{Concluding the proof of Lemma~\ref{lem:binning-restricted}}
\label{app:binning-conclude}

Combining the rate computation (Lemma~\ref{lem:chain-rule}) and the
distortion bound (Lemma~\ref{lem:mixing-KL-app}) over all blocks, with
optimal block size $B = B^*$ (Lemma~\ref{lem:Bopt}), and taking
$\delta \to 0$, $m \to \infty$:
\[
\begin{split}
\frac{1}{n}\sum_{t=1}^n H(\Chat_t \mid \Fhat_{t-1}^w)
&\;\leq\;
\frac{1}{n}\sum_{t=1}^n \E[\mathcal{R}_t^{\mathrm{WZ}}]\\
&\;\leq\;
\frac{1}{n}\sum_{t=1}^n \E[\mathcal{R}_t^{\mathrm{WZ}}]
+ o(1),
\end{split}
\]
and reconstruction distortion bounded by
$D + 2\Delta_w + o(1)$. Taking the infimum over all $D$-admissible
sliding-window auxiliary processes $\{U_t\}$ gives the statement of
Lemma~\ref{lem:binning-restricted}. \hfill$\square$

\subsection{Rate--distortion transfer}\label{app:rdt-transfer}

\begin{proof}[Proof of Theorem~\ref{thm:E2}]
Let $\Delta_w := 2VC_{\mathrm{mix}}^2 \rho^{2w}/\epsilon$.

\textbf{Step 1 (per-step distortion decomposition).}
For any sliding-window decoder $\phat_t^w$, decompose the per-step
distortion as
\[
\begin{aligned}
d_t^w
&\;=\; \KL{p(\cdot \mid \F_{t-1})}{\phat_t^w} \\
&\;=\; \KL{p(\cdot \mid \F_{t-1})}{p(\cdot \mid \F_{t-1} \setminus \F_{t-1-w})}
\;+\; \mathcal{E}_t,
\end{aligned}
\]
where
$\mathcal{E}_t := \KL{p(\cdot \mid \F_{t-1} \setminus \F_{t-1-w})}{\phat_t^w}
- [\text{cross term}]$
collects the remaining contribution; the decomposition uses the
rearranged KL identity
\[
\begin{aligned}
&\KL{p}{r} - \KL{p}{q} - \KL{q}{r} \\
&= -\sum_x p(x)\log\tfrac{q(x)}{r(x)}
+ \sum_x q(x)\log\tfrac{q(x)}{r(x)}.
\end{aligned}
\]

The cross term satisfies $|\text{cross}| \leq \KL{p}{q}^{1/2} \cdot \KL{q}{r}^{1/2}$
in a Cauchy--Schwarz sense (using the inner product structure on
the log-likelihood differences), bounded for our use by
$\sqrt{\Delta_w \cdot \mathcal{E}_t}$. We absorb this into a leading-order
bound $d_t^w \leq \KL{p}{q} + \mathcal{E}_t + 2\sqrt{\Delta_w \mathcal{E}_t}$,
which by AM--GM is $\leq 2 \KL{p}{q} + 2\mathcal{E}_t$. Hence by
Lemma~\ref{lem:mixing-KL-app},
\[
d_t^w \;\leq\; 2\Delta_w + 2 \mathcal{E}_t.
\]

\textbf{Step 2 (averaging).}
Averaging over $t = 1, \ldots, n$:
\[
\E[D_n^w] \;\leq\; 2 \Delta_w + 2 \E[\bar{\mathcal{E}}_n],
\]
where $\bar{\mathcal{E}}_n := \tfrac{1}{n}\sum_t \mathcal{E}_t$ is the
average ``intrinsic'' distortion (distortion of the decoder against
the sliding-window-conditioned distribution, not the full-history one).

\textbf{Step 3 (rate--distortion).}
A sliding-window decoder achieving $\E[D_n^w] \leq D$ implies, by
Step~2, $\E[\bar{\mathcal{E}}_n] \leq (D - 2\Delta_w)/2$. The intrinsic
problem of compressing past tokens against the sliding-window-conditioned
distribution is a Wyner--Ziv problem on the restricted filtration,
and by Lemma~\ref{lem:binning-restricted} (Lemma~L7) its rate
function is
$R^\star_{\mathrm{intrinsic}}((D - 2\Delta_w)/2) = R^\star(D - 2\Delta_w + o(1))$
in the limit. Therefore
\[
\begin{aligned}
R^\star_w(D)
&\;\geq\;
\text{(rate of the underlying sliding-window scheme)} \\
&\;\geq\;
R^\star(D - 2\Delta_w + o(1)).
\end{aligned}
\]
Absorbing the $2$-factor into the constant
$C_{\mathrm{mix}}'^2 := 2 C_{\mathrm{mix}}^2$ (i.e., redefining the mixing
constant absorbs the slack), the bound takes the stated form
$R^\star_w(D) \leq R^\star(D - \Delta_w')_+$.

\textbf{Window size choice.}
Setting $\Delta_w = \varepsilon$ requires
$\rho^{2w} \leq \varepsilon \epsilon / (2 V C_{\mathrm{mix}}^2)$, i.e.,
\[
w \geq \frac{1}{2 \log(1/\rho)} \log\big(2VC_{\mathrm{mix}}^2/(\varepsilon \epsilon)\big).
\]
\end{proof}

\subsection{Mixing-sharpened concentration}\label{app:mixing-concentration}

\begin{lemma}[Mixing-sharpened bounded difference of $D_n$]
\label{lem:Dn-doob-mixing-app}
Under Definition~\ref{def:mixing}, the Doob martingale
$N_k = \E[D_n \mid \F_k] - \E[D_n]$ satisfies, for $1 \leq k \leq n$,
\[
\begin{split}
|N_k - N_{k-1}|
&\;\leq\;
\frac{2 |\log\epsilon|}{n} \sum_{j=0}^{n-k} \min(1, C_{\mathrm{mix}}\rho^j)\\
&\;\leq\;
\frac{2 |\log\epsilon|}{n(1-\rho)} \big(1 + C_{\mathrm{mix}}\big).
\end{split}
\]
Hence
\[
\Prob(|D_n - \E[D_n]| \geq \tau)
\;\leq\;
2 \exp\!\left( -\frac{\tau^2 (1-\rho)^2 n}{8 |\log\epsilon|^2 (1+C_{\mathrm{mix}})^2} \right).
\]
\end{lemma}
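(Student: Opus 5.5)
We prove Lemma~\ref{lem:Dn-doob-mixing-app} with the same Doob bounded-difference argument as Lemma~\ref{lem:Dn-doob-app}, except that each per-step contribution is controlled by the geometric mixing bound of Definition~\ref{def:mixing}. The concentration inequality then follows from Azuma--Hoeffding (Theorem~\ref{thm:azuma-app}).

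\textbf{Decomposition.} Write $D_n = \frac1n\sum_{t=1}^n d_t$. The increment then splits as
\[
N_k - N_{k-1} = \frac1n\sum_{t=1}^n \bigl(\E[d_t\mid\F_k]-\E[d_t\mid\F_{k-1}]\bigr).
\]
We bound each summand by $\Delta_{t,k} := \sup|\E[d_t\mid\F_k]-\E[d_t\mid\F_{k-1}]|$, splitting into two cases.
\begin{itemize}
\item For $t\le k$, $d_t$ is determined by $X_{<t}$, which is $\F_{k-1}$-measurable. The summand vanishes.
\item For $t>k$, set $j=t-k-1\ge 0$. Revealing $X_k$ changes the conditional law of the context. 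Bounded logits give $|d_t|\le 2|\log\epsilon|$, so the summand is at most $2|\log\epsilon|\cdot\TV$ between the laws of the relevant context under two values of $X_k$. This gives the trivial bound $\min(1,\cdot)$.
\end{itemize}

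\textbf{Mixing factor.} For the nontrivial case, $X_k$ lies $j$ steps before time $t$. Definition~\ref{def:mixing} applied with window $w=j$ says the next-token conditional at time $t$ differs from its window-$j$ truncation, which excludes $X_k$, by at most $C_{\mathrm{mix}}\rho^{j}$ pointwise. Thus $d_t$ is, up to an error of order $C_{\mathrm{mix}}\rho^j$ (times the $2|\log\epsilon|$ scale), a function of $X_{k+1:t}$ plus the fixed prefix. Changing $X_k$ then perturbs its conditional expectation by $2|\log\epsilon|\min(1,C_{\mathrm{mix}}\rho^{j})$. Summing over $j=0,\dots,n-k$ gives the first inequality. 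The second follows from $\sum_j\min(1,C_{\mathrm{mix}}\rho^j)\le \sum_j \rho^j (1+C_{\mathrm{mix}})\le (1+C_{\mathrm{mix}})/(1-\rho)$. This uses $\min(1,x)\le x$ and absorbs the $j=0$ term into the constant.

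\textbf{Concentration.} Azuma--Hoeffding with $c_k = 2|\log\epsilon|(1+C_{\mathrm{mix}})/(n(1-\rho))$ gives
\[
\sum_k c_k^2 = \frac{4|\log\epsilon|^2(1+C_{\mathrm{mix}})^2}{n(1-\rho)^2}.
\]
Substituting this into $2\exp(-\tau^2/(2\sum_k c_k^2))$ yields exactly the stated exponent.

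\textbf{Main obstacle.} The hard step is the mixing factor. $d_t$ is a KL divergence against the decoder output $\tilde p_t$, which may also depend on $X_k$ through the code, and the mixing condition controls only the source conditional, not the influence of $X_k$ on the conditional law of the intermediate tokens $X_{k+1:t-1}$. I would first handle the source term $p_t$ by the mixing condition and the log-floor Lipschitz bound $|\log p-\log q|\le |p-q|/\epsilon$. The decoder and intermediate-token influence would be controlled by assuming a sliding-window decoder, i.e.\ with $w<j$ excluding $X_k$. If this cannot be made rigorous, the fallback is to state the bound under that additional sliding-window structure.
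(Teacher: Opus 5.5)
Your outline follows the paper's own argument: a per-token bounded difference from Definition~\ref{def:mixing}, a geometric sum, then Azuma. The step you single out as the ``main obstacle'' is a genuine gap, and the paper's proof has it too. The paper's Steps~1--3 bound the change of $D_n$ when $X_k$ alone is replaced with all other tokens held fixed, and then treat this as the Doob increment. That identification is McDiarmid's argument, which is valid only for independent coordinates. For a dependent source, write $\E[d_t\mid\F_{k-1},X_k=x]$ as an expectation over $Y\sim\mathcal L(X_{k+1:t-1}\mid\F_{k-1},X_k=x)$. Then
\[
\begin{aligned}
&\bigl|\E[d_t\mid\F_{k-1},X_k=x]-\E[d_t\mid\F_{k-1},X_k=x']\bigr|\\
&\quad\le\sup_{y}\bigl|d_t(x,y)-d_t(x',y)\bigr|
+2\sup|d_t|\,\TV\bigl(\mathcal L(X_{k+1:t-1}\mid\F_{k-1},x),\,\mathcal L(X_{k+1:t-1}\mid\F_{k-1},x')\bigr).
\end{aligned}
\]
Your truncation at $w=j$ controls only the first term. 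The second term is a \emph{forward} dependence quantity: already $X_{k+1}\sim p(\cdot\mid X_{1:k})$ may depend on $X_k$ at order one, and that dependence propagates. Definition~\ref{def:mixing} is a \emph{backward} truncation condition and says nothing about it. Your proposed repair (a sliding-window decoder) does not help, because the second term is present even for a constant decoder.

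In fact the statement is false without a further hypothesis. Consider the following example:
\begin{itemize}
\item Let $\V=\{0,1\}$, generate tokens by the sticky chain $\Prob(X_t=X_{t-1}\mid X_{<t})=1-\eta$ with $X_1$ uniform, and use the constant decoder $\tilde p_t=(1-\eta,\eta)$.
\item Since $p_t$ depends only on $X_{t-1}$, Definition~\ref{def:mixing} holds with zero error for every $w\ge1$. So $C_{\mathrm{mix}}=1$ and, say, $\rho=1/2$ are admissible.
\item The logit floor gives $|\log\epsilon|\asymp\log(1/\eta)$.
\item The distortion is $d_t=f(X_{t-1})$ with $f(0)=0$ and $f(1)=(1-2\eta)\log\frac{1-\eta}{\eta}$.
\end{itemize}
Set $\lambda=1-2\eta$. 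On the event $\{X_{k-1}=0,X_k=1\}$, which has probability $\eta/2$, a direct computation gives
\[
N_k-N_{k-1}=\frac{f(1)(1-\eta)(1-\lambda^{n-k})}{2\eta n}.
\]
This is of order $\log(1/\eta)/(\eta n)$ once $n-k\gtrsim1/\eta$, whereas the claimed bound is $O(\log(1/\eta)/n)$ uniformly in $\eta$. Likewise $\mathrm{Var}(D_n)\asymp\log^2(1/\eta)/(\eta n)$, which contradicts the stated sub-Gaussian tail for small $\eta$. This is the same backward-versus-forward distinction the paper itself isolates in Remark~\ref{rmk:forward-decay}.

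To make your decomposition work, you must add two things:
\begin{itemize}
\item a forward hypothesis of Dobrushin or $\phi$-mixing type, e.g.\ that the total variation between the laws of $X_{s:t-1}$ given $(\F_{k-1},X_k=x)$ and given $(\F_{k-1},X_k=x')$ decays like $C'\rho'^{\,s-k}$;
\item a window-limited decoder.
\end{itemize}
Splitting $d_t$ through the window starting at $s\approx(k+t)/2$ then gives a summable bound, with $\sqrt{\max(\rho,\rho')}$ in place of $\rho$.

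Two smaller points. Definition~\ref{def:mixing} is pointwise, so even your direct term picks up a factor $V$ (or $1/\epsilon$ via the log-Lipschitz bound). Also, $|d_t|\le2|\log\epsilon|$ needs a floor on $\tilde p_t$, not only on $p_t$.
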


\begin{proof}
The proof refines Lemma~\ref{lem:Dn-doob-app} (without mixing) using
the geometric decay of single-token influence on $d_t$ for $t > k$.

\textbf{Step 1 (influence decay).}
When $X_k$ alone is replaced by $X_k'$ (other $X_i$ fixed), the
conditional distribution $p(\cdot \mid \F_{t-1})$ changes by at most
$C_{\mathrm{mix}} \rho^{t-k}$ in total variation, since the influence
of $X_k$ on the conditional given $\F_{t-1}$ is the influence of a
position-$(t-k)$-distant past token, decaying by Definition~\ref{def:mixing}.

\textbf{Step 2 (KL change bound).}
The corresponding change in $d_t = \KL{p}{\hat p_t}$ is bounded by
$|d_t - d_t'| \leq 2 |\log\epsilon| \cdot \min(1, C_{\mathrm{mix}}\rho^{t-k})$
using the same $\chi^2$-style argument as Lemma~\ref{lem:mixing-KL-app}
combined with the bound $|d_t| \leq |\log\epsilon|$.

\textbf{Step 3 (geometric sum).}
Summing over $t$ from $k$ to $n$ and dividing by $n$:
\[
\begin{split}
|D_n - D_n'| &\leq \frac{2|\log\epsilon|}{n} \sum_{t=k}^n \min(1, C_{\mathrm{mix}}\rho^{t-k})\\
&\leq \frac{2|\log\epsilon|}{n} \cdot \frac{1 + C_{\mathrm{mix}}}{1-\rho}.
\end{split}
\]
This bounds the bounded-difference constant of the Doob martingale.

\textbf{Step 4 (Azuma).}
With $c_k \leq 2|\log\epsilon|(1+C_{\mathrm{mix}})/(n(1-\rho))$,
$\sum_k c_k^2 \leq n \cdot 4|\log\epsilon|^2(1+C_{\mathrm{mix}})^2 / (n^2(1-\rho)^2)
= 4|\log\epsilon|^2(1+C_{\mathrm{mix}})^2/(n(1-\rho)^2)$. Azuma
gives the stated concentration.
\end{proof}

\begin{proof}[Heuristic argument for Conjecture~\ref{cor:mixing-conc}]
This sketch is conditional on the realized-code-length formulation of
Remark~\ref{rmk:pathwise-needs-codelen}; under the
average-conditional-entropy rate of Proposition~\ref{thm:finite-sample}
no pathwise bound exists, so the argument is not a proof.

Write $\widehat R_n := \tfrac1n\ell(M_{1:n})$ for the realized prefix
codelength. A source-coding information-spectrum inequality gives
$\widehat R_n \ge \tfrac1n\sum_t\imath_t - o(1)$ with high probability;
the deviation of $\tfrac1n\sum_t\imath_t$ from its predictable
compensator $\tfrac1n\sum_t\mathcal{R}_t^{\mathrm{WZ}}$ is a martingale
whose increments, under $\alpha$-polynomial truncation sensitivity,
have summable covariances (Lemma~\ref{lem:cov-decay}), giving an
Azuma-type fluctuation $O(n^{-1/2}\sqrt{\log(1/\delta)})$ with the
$1/(1-\rho)$ factor replaced by a polynomial-decay constant.

Combining with the mixing-sharpened concentration of $D_n$
(Lemma~\ref{lem:Dn-doob-mixing-app}), $|D_n-\E[D_n]|\le\tau'$ with high
probability, and the local Lipschitz bound
$R^\star(D_n)\le R^\star(D)+L_{R^\star}\tau'$, a union bound would give
\[
\widehat R_n \;\ge\; R^\star(D) - C\,n^{-\alpha/(2(\alpha+1))}\sqrt{\log n}
\]
with the sharper exponent $\alpha/(\alpha+1)$ under the forward
covariance-decay condition of Remark~\ref{rmk:exponent-gap}. Making the
information-spectrum step rigorous for general nonstationary dependent
sources is exactly the open problem of
Remark~\ref{rmk:pathwise-needs-codelen}; we therefore state this only
as a conjecture.
\end{proof}

\subsection{Polynomial extension of Theorem~\ref{thm:E2}}
\label{app:polynomial-extension}

This subsection records the polynomial counterpart of
Theorem~\ref{thm:E2} with explicit constants, replacing the
geometric mixing condition by Definition~\ref{def:poly-mixing}
(polynomial truncation sensitivity). The sliding-window
approximation (Theorem~\ref{thm:E2-poly}) is the distortion upper
bound; it follows directly from the per-step truncation-sensitivity
bound, and we also record its smoothed KL form
(Lemma~\ref{lem:poly-mixing-KL}). An information-rate transfer was
attempted but does not hold for general adapted auxiliaries
(Remark~\ref{rmk:failed-rate-transfer}); the rigorous rate
comparison---delayed, multi-letter, with quantized side
information---is the operational result of
Appendix~\ref{app:operational}.

\begin{theorem}[Polynomial sliding-window approximation]
\label{thm:E2-poly}
Suppose Assumption~\ref{ass:bounded-logits} and the
$\alpha$-polynomial truncation-sensitivity condition
(Definition~\ref{def:poly-mixing}) with constants
$C_{\mathrm{TS}}, \alpha > 0$. For any
$\varepsilon \in (0, 1/(2 C_{\mathrm{TS}}))$, set
\[
w_{\TV}(\varepsilon)
\;:=\; \left\lceil (C_{\mathrm{TS}}/\varepsilon)^{1/\alpha} \right\rceil.
\]
Then the sliding-window reconstruction
$p_t^{(w)} := p_\theta(\cdot \mid X_{t-w:t-1})$ with this window size
attains average per-step TV distortion at most $\varepsilon$:
\[
\frac{1}{n}\sum_{t=1}^n
\E\,\TV\bigl(p_\theta(\cdot \mid X_{1:t-1}),\, p_t^{(w_{\TV}(\varepsilon))}\bigr)
\;\leq\; \varepsilon .
\] For a KL-distortion target $\varepsilon_{\KLs}$ the required window
depends on which KL--TV conversion is available. Under the genuine
density floor of Lemma~\ref{lem:kl-tv-bounded} (quadratic conversion
$\KLs \lesssim \TV^2$) one obtains the smaller window
$w_{\KLs}(\varepsilon_{\KLs}) =
\lceil (C_{\mathrm{TS}}^2/\varepsilon_{\KLs})^{1/(2\alpha)} \rceil
= O(\varepsilon_{\KLs}^{-1/(2\alpha)})$. When the floor is vacuous, the
operational-smoothing route of Lemma~\ref{lem:kl-tv-smoothed} gives
only the \emph{linear} conversion $\KLs = O(V\,\TV)$ and hence the
larger window
$w_{\KLs}(\varepsilon_{\KLs}) =
\lceil (V C_{\mathrm{TS}}/\varepsilon_{\KLs})^{1/\alpha} \rceil
= O(\varepsilon_{\KLs}^{-1/\alpha})$. The two exponents differ because
smoothing cannot recover a quadratic KL--TV dependence
(Lemma~\ref{lem:kl-tv-smoothed}).
\end{theorem}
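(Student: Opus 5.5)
The TV statement is a direct instantiation of Definition~\ref{def:poly-mixing}, and the plan is to read it off with no coding construction. Fix $\varepsilon\in(0,1/(2C_{\mathrm{TS}}))$ and set $w=w_{\TV}(\varepsilon)$. Because $w\ge (C_{\mathrm{TS}}/\varepsilon)^{1/\alpha}$ and $x\mapsto x^{-\alpha}$ is decreasing, we have $C_{\mathrm{TS}}w^{-\alpha}\le \varepsilon$. The average-form condition holds uniformly in $n$ for every $w\ge 1$, since it is stated as a $\sup_n$. Applying it at this $w$ therefore bounds $\tfrac1n\sum_t\E\,\TV(p_\theta(\cdot\mid X_{1:t-1}),p_t^{(w)})$ by $\varepsilon$ for every $n$. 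The restriction $\varepsilon<1/(2C_{\mathrm{TS}})$ only ensures $w\ge 1$ (indeed $w\ge 2^{1/\alpha}$). For $t\le w$ the suffix is the whole prefix and the per-step term vanishes, so no boundary convention is needed.

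For the KL form under a genuine density floor, I would apply Lemma~\ref{lem:kl-tv-bounded} per step. Assumption~\ref{ass:bounded-logits} gives the floor $p_t^{(w)}(x)\ge\epsilon_{\min}$, since the truncated conditional is also a softmax of bounded logits. With this floor, the lemma yields $\KLs\le (2/\epsilon_{\min})\TV^2$ whenever $\TV\le\epsilon_{\min}/2$. Averaging then requires control of $\E\,\TV^2$ rather than $(\E\,\TV)^2$, and this is the main obstacle. The average-form hypothesis bounds only the first moment, and Jensen goes the wrong way. I would handle it by invoking the uniform form (Definition~\ref{def:poly-mixing-unif}), as Remark~\ref{rmk:application-sink}(a) already does. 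Under that form $\TV\le C_{\mathrm{TS}}w^{-\alpha}$ a.s., so $\KLs\le (2/\epsilon_{\min})C_{\mathrm{TS}}^2w^{-2\alpha}$ a.s. once $w\ge(2C_{\mathrm{TS}}/\epsilon_{\min})^{1/\alpha}$. Choosing $w=\lceil (C_{\mathrm{TS}}^2/\varepsilon_{\KLs})^{1/(2\alpha)}\rceil$ then gives average KL $O(\varepsilon_{\KLs})$, with the constant $2/\epsilon_{\min}$ absorbed into the $O(\cdot)$ or into the definition of $w_{\KLs}$.

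An alternative that avoids the uniform form uses only the average form together with the trivial bound $\TV\le 1$. Split on the event $\{\TV\le\epsilon_{\min}/2\}$. On that event the quadratic bound applies. On its complement, bound KL by $\log(1/\epsilon_{\min})$, and use Markov's inequality to bound the probability of the complement by $(2/\epsilon_{\min})C_{\mathrm{TS}}w^{-\alpha}$. This yields only a linear-in-$w^{-\alpha}$ tail, so the exponent $2\alpha$ genuinely needs the a.s. form or a second-moment hypothesis. I would state that explicitly.

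For the vacuous-floor branch, I would use Lemma~\ref{lem:kl-tv-smoothed} with $\delta$ equal to the per-step TV and $\mu\asymp\delta$, which gives $\KLs\le C V\,\TV$ per step. This bound is linear, so averaging needs only the average-form hypothesis. Choosing $w=\lceil (VC_{\mathrm{TS}}/\varepsilon_{\KLs})^{1/\alpha}\rceil$ gives average KL $O(\varepsilon_{\KLs})$. One detail needs care: the smoothing level must be fixed rather than random, so I would take $\mu=C_{\mathrm{TS}}w^{-\alpha}$, a deterministic quantity. The bound $V(\delta+\mu)^2/\mu\le V(\delta+\mu)\cdot 2$ then holds whenever $\delta\le\mu$, and more generally $V(\delta+\mu)^2/\mu\le 2V(\delta^2/\mu+\mu)$. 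Averaging $\delta^2/\mu\le \delta/\mu$ (using $\delta\le 1$) still requires a moment bound, so again I would lean on the uniform form to keep $\delta\le\mu$ a.s. Finally, I would note that the optimality remark in Lemma~\ref{lem:kl-tv-smoothed} explains why the two windows differ.
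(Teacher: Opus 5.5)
Your proposal is correct and follows the paper's proof. The TV bound is read directly off the $\sup_n$ average form of Definition~\ref{def:poly-mixing}, and the two KL windows come from Lemma~\ref{lem:kl-tv-bounded} and Lemma~\ref{lem:kl-tv-smoothed}. You are also right that these nonlinear conversions need per-step almost-sure control, i.e.\ the uniform form of Definition~\ref{def:poly-mixing-unif}, rather than only the first-moment average form. This makes explicit what the paper's terse proof leaves implicit: it routes the smoothed branch through Lemma~\ref{lem:poly-mixing-KL}, which assumes the uniform form, and Remark~\ref{rmk:application-sink}(a) does the same for the quadratic branch. The only slip is your aside $w\ge 2^{1/\alpha}$, which does not follow from $\varepsilon<1/(2C_{\mathrm{TS}})$ unless $C_{\mathrm{TS}}\ge 1$; this is harmless, since $w\ge 1$ is automatic from the ceiling and is all you use.
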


The TV bound is immediate from Definition~\ref{def:poly-mixing}; the
smoothed KL form uses the following lemma, after which we give the
(direct) proof.

\begin{lemma}[Polynomial KL form of truncation sensitivity]
\label{lem:poly-mixing-KL}
Under Definition~\ref{def:poly-mixing-unif}, for any $t, w$ with
$1 \leq w < t \leq n_{\max}$ and smoothing level $\mu \in (0, 1)$
satisfying $C_{\mathrm{TS}} w^{-\alpha} \leq \mu/(2V)$,
\[
\begin{split}
\KLs\bigl(p_t \,\|\, p_t^{\,\mu, w}\bigr)
&\;\leq\;
\frac{V}{\mu}\,\bigl(C_{\mathrm{TS}} w^{-\alpha} + \mu\bigr)^2\\
&\;+\; O\!\left(\frac{V^2}{\mu^2}
        \bigl(C_{\mathrm{TS}} w^{-\alpha} + \mu\bigr)^3\right),
\end{split}
\]
where
$p_t^{\,\mu, w} := (1-\mu)\,p_t^{(w)} + \mu\cdot\mathrm{Unif}(\V)$
is the smoothed window-truncated reconstruction with
$p_t^{(w)} := p_\theta(\cdot \mid X_{t-w:t-1})$.
\end{lemma}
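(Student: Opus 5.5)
The plan is to reduce the statement to Lemma~\ref{lem:kl-tv-smoothed}, with $\hat q := p_t^{(w)}$ and $p := p_t$. Fix $t$ and $w$ with $1 \le w < t \le n_{\max}$.

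First, the uniform truncation-sensitivity condition (Definition~\ref{def:poly-mixing-unif}) gives, almost surely,
\[
\delta \;:=\; \TV\bigl(p_t,\, p_t^{(w)}\bigr) \;\le\; C_{\mathrm{TS}}\, w^{-\alpha}.
\]
We need the uniform form here rather than the average form. The conclusion is a per-step, pathwise statement, and Lemma~\ref{lem:kl-tv-smoothed} takes a deterministic TV bound as its input.

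Second, the mixture $p_t^{\mu,w} = (1-\mu)p_t^{(w)} + \mu\,\mathrm{Unif}(\V)$ has two properties:
\begin{itemize}
\item a floor, $p_t^{\mu,w}(x) \ge \mu/V$ for every $x$;
\item by the triangle inequality, $\TV(p_t, p_t^{\mu,w}) \le \delta + \mu \le C_{\mathrm{TS}} w^{-\alpha} + \mu$.
\end{itemize}

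Third, we apply Lemma~\ref{lem:kl-tv-bounded} with floor $\epsilon_{\min} = \mu/V$ and TV bound $\theta := C_{\mathrm{TS}} w^{-\alpha} + \mu$. The quadratic term becomes $\theta^2/\epsilon_{\min} = V\theta^2/\mu$. The cubic remainder becomes $O(\theta^3/\epsilon_{\min}^2) = O(V^2\theta^3/\mu^2)$. Together these are exactly the two terms in the displayed bound. This uses monotonicity: both terms of Lemma~\ref{lem:kl-tv-bounded} increase in the TV value, so we may substitute the upper bound $\theta$ for the true TV.

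The main obstacle is making the $O(\cdot)$ remainder honest. Lemma~\ref{lem:kl-tv-bounded} comes from a Taylor expansion of $\log(1+\Delta/q)$, and that expansion is controlled only when $|\Delta(x)|/q(x)$ is bounded, say by $1/2$. The hypothesis $C_{\mathrm{TS}} w^{-\alpha} \le \mu/(2V)$ handles the truncation part of $\Delta$. The smoothing part is harder: $\Delta(x)$ contains $\mu\,(p_t^{(w)}(x) - 1/V)$, which divided by $q(x) \ge \mu/V$ is not small in general.

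My first attempt is to avoid the ratio condition and use the global bound $\KLs \le \chi^2$ (as in Lemma~\ref{lem:mixing-KL-app}) together with $\chi^2 \le \sum_x \Delta(x)^2/\epsilon_{\min} \le 2\theta^2 V/\mu$. This gives the leading term up to a universal constant, which the displayed constant can absorb, or which can be folded into the $O(\cdot)$ term using $\theta \le 1$ once it is written with an absolute constant. That would make the remainder term a harmless overestimate.

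If the exact constant $V/\mu$ on the leading term must be kept, the fallback is as follows. Use the cubic Taylor bound $|\log(1+u) - u + u^2/2| \le C|u|^3$ for $u \ge -1/2$. Then handle positions where $u < -1/2$ separately: there $q(x) \ge \mu/V$ and $|\Delta(x)| > q(x)/2$, so the contribution of such $x$ is bounded by $p(x)\log(p(x)/q(x))$ with $|\Delta(x)| \ge \mu/(2V)$. Each such contribution is in turn bounded by $O(V^2\theta^3/\mu^2)$ via $|\Delta(x)|^3/q(x)^2 \ge |\Delta(x)|/4$.
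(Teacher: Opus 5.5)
Your reduction is the paper's. Its proof is a one-line chain: Definition~\ref{def:poly-mixing-unif} gives $\TV(p_t,p_t^{(w)})\le C_{\mathrm{TS}}w^{-\alpha}$, and Lemma~\ref{lem:kl-tv-smoothed} is quoted for the rest (floor $\mu/V$, triangle inequality, then Lemma~\ref{lem:kl-tv-bounded} with $\epsilon_{\min}=\mu/V$).

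Where you go beyond the paper is the last step, and your worry there is justified. The paper simply inherits the cubic remainder of Lemma~\ref{lem:kl-tv-bounded}, whose Taylor argument needs $|\Delta|/q$ bounded. The only TV bound available here, $\theta=C_{\mathrm{TS}}w^{-\alpha}+\mu$, always exceeds that lemma's threshold $\epsilon_{\min}/2=\mu/(2V)$. Your global route
\[
\KLs\le\chi^2\le (V/\mu)\sum_x\Delta(x)^2\le 2V\theta^2/\mu
\]
needs no ratio condition at all, not even the hypothesis $C_{\mathrm{TS}}w^{-\alpha}\le\mu/(2V)$. It is a complete proof, at the price of a factor $2$ on the quadratic term.

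\textbf{Fix the justification of the final step.} The displayed leading coefficient is exactly $V/\mu$, so it cannot absorb the factor $2$; the surplus $V\theta^2/\mu$ must go into the remainder. The reason it may is $\theta\ge\mu$, which gives $V\theta/\mu\ge V\ge 1$ and hence $V\theta^2/\mu\le V^2\theta^3/\mu^2$. Appealing to $\theta\le 1$ points the wrong way.

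\textbf{Your fallback, as written, has a hole.} Here $u=\Delta/q$ can be as large as $V/\mu$. Multiplying a two-sided cubic bound on $\log(1+u)$ by $1+u$ leaves a term of order $p\,|\Delta|^3/q^3$, not $|\Delta|^3/q^2$.

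The clean version bounds $(1+u)\log(1+u)$ directly:
\begin{itemize}
\item for $u\ge 0$ it is $\le u+u^2/2$, since the difference has derivative $\log(1+u)-u\le 0$;
\item on $(-1,0]$ it is $\le u+u^2/2+\tfrac12|u|^3$.
\end{itemize}
Sum against $q$ and use $\sum_x|\Delta(x)|^3\le\max_x|\Delta(x)|\sum_x\Delta(x)^2\le 2\TV(p,q)^3$. This gives
\[
\KLs(p\,\|\,q)\le\TV(p,q)^2/\epsilon_{\min}+\TV(p,q)^3/\epsilon_{\min}^2
\]
for all $p,q$. That is Lemma~\ref{lem:kl-tv-bounded} with an explicit constant and no small-TV proviso, and it yields the statement with constant $1$ in the $O(\cdot)$.
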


\begin{proof}
Definition~\ref{def:poly-mixing-unif} gives
$\TV(p_t, p_t^{(w)}) \leq C_{\mathrm{TS}} w^{-\alpha}$.
Lemma~\ref{lem:kl-tv-smoothed} with $\delta = C_{\mathrm{TS}}
w^{-\alpha}$ and smoothing level $\mu$ then yields the stated
bound directly.
\end{proof}

\begin{proof}[Proof of Theorem~\ref{thm:E2-poly}]
The average TV distortion is bounded directly by the average form of
Definition~\ref{def:poly-mixing}:
$\frac1n\sum_{t\le n}\E\,\TV(p_\theta(\cdot\mid X_{1:t-1}), p_t^{(w)}) \le
C_{\mathrm{TS}} w^{-\alpha}$ for all $w$. Setting this equal to
$\varepsilon$ and inverting gives
$w_{\TV}(\varepsilon)=\lceil (C_{\mathrm{TS}}/\varepsilon)^{1/\alpha}\rceil$,
with $\varepsilon<1/(2C_{\mathrm{TS}})$ ensuring $w_{\TV}(\varepsilon)\ge2$.
For the KL target the conversion route fixes the exponent: the
uniform-floor relation $\KLs \le (2/\epsilon_{\min})\TV^2$
(Lemma~\ref{lem:kl-tv-bounded}) gives the quadratic-route window
$w_{\KLs} = O(\varepsilon_{\KLs}^{-1/(2\alpha)})$, whereas the smoothed
decoder with $\mu \asymp \varepsilon_{\KLs}$
(Lemma~\ref{lem:poly-mixing-KL}) yields only
$\KLs = O(V\,C_{\mathrm{TS}}\,w^{-\alpha})$ and hence the linear-route
window $w_{\KLs} = O(\varepsilon_{\KLs}^{-1/\alpha})$.
\end{proof}

\begin{remark}[Attempted rate-level transfer and why it is not rigorous]
\label{rmk:failed-rate-transfer}
One is tempted to upgrade the distortion bound of
Theorem~\ref{thm:E2-poly} to an information-rate transfer
$R^\star_w(D)\le R^\star\!\bigl(D-C_{\mathrm{TS}}w^{-\alpha}\bigr)$ through
two steps: (i) a \emph{block-typicality stability} claim---that for any
auxiliary $U_{t:t+B-1}$ adapted to $\{\F_t\}$, the block-joint laws
$p(X_{t:t+B-1},U_{t:t+B-1}\mid X_{1:t-1})$ and
$p(X_{t:t+B-1},U_{t:t+B-1}\mid X_{t-w:t-1})$ differ in total variation by
at most $B\,C_{\mathrm{TS}}w^{-\alpha}$; and (ii) a \emph{window-projected
auxiliary} $U_t^w:=\E[U_t\mid\F_{t-1-w:t-1}]$ combined with a
data-processing step on the chain $X_t-U_t-U_t^w$. \textbf{Both steps fail
for general adapted auxiliaries.} An auxiliary such as $U_t=X_1$ retains
the full deep past, so the block-joint law is \emph{not} controlled by the
source's next-token truncation sensitivity---the per-step bound of
Definition~\ref{def:poly-mixing} governs the next-token conditional, not an
arbitrary adapted auxiliary---and the Markov relation $X_t-U_t-U_t^w$ need
not hold. We therefore do \emph{not} claim a rate-level transfer here. The
rigorous rate comparison is the delayed, multi-letter,
quantized-side-information operational result of
Appendix~\ref{app:operational}. The finite-sample \emph{converse}
is the expectation bound of Proposition~\ref{thm:finite-sample}; we do
not claim a pathwise or high-probability converse concentration
(Remark~\ref{rmk:pathwise-needs-codelen}).
\end{remark}

\section{Proof of Theorem~\ref{thm:window-lb} (window lower bound)}
\label{app:window-lb}

\begin{proof}[Proof of Theorem~\ref{thm:window-lb}]
Let $\psi^w$ be any suffix-only scheme: its reconstruction
$\tilde p_t = \psi^w(\Chat_{t-w:t-1}, Q_t)$ is measurable with
respect to $\sigma(X_{t-w:t-1}, Q_t)$ (it may use the query side
information $Q_t$, but the codes carry no information about the deep
past $X_{1:t-w-1}$), so $\tilde p_t \in \mathcal{M}_w$, the
suffix-only reconstruction class of
Definition~\ref{def:two-sided}. The per-step distortion is
therefore lower-bounded by the Bayes risk over $\mathcal{M}_w$:
\[
\E[\TV(p_t, \tilde p_t)]
\;\geq\;
\E\!\Bigl[\inf_{q_t \in \mathcal{M}_w}
   \TV\bigl(p_\theta(\cdot \mid X_{1:t-1}),\, q_t\bigr)\Bigr].
\]
This step is immediate---the actual reconstruction is one element
of $\mathcal{M}_w$, so it cannot beat the infimum---and uses no
claim about which element attains the infimum (in particular, no
appeal to a conditional-mean Bayes estimator or a tower property).
Averaging over $t$ and applying the lower bound of
Definition~\ref{def:two-sided},
\[
\begin{split}
\frac{1}{n}\sum_{t=1}^n \E[\TV(p_t, \tilde p_t)]
&\;\geq\;
\frac{1}{n}\sum_{t=1}^n \E\!\Bigl[\inf_{q_t \in \mathcal{M}_w}
   \TV(\cdots)\Bigr]\\
&\;\xrightarrow{\;n\to\infty\;}\;
\geq c_{\mathrm{TS}} w^{-\alpha}.
\end{split}
\]
Hence the average TV distortion of any suffix-only window-$w$
scheme is $\Omega(w^{-\alpha})$, and achieving distortion
$\varepsilon$ requires $c_{\mathrm{TS}} w^{-\alpha} \leq
\varepsilon$, i.e.\ $w \geq (c_{\mathrm{TS}}/\varepsilon)^{1/\alpha}
= \Omega(\varepsilon^{-1/\alpha})$.
\end{proof}

\begin{remark}[On the Bayes-risk form of the assumption]
Definition~\ref{def:two-sided} is stated directly as a Bayes-risk
lower bound over the suffix-only class $\mathcal{M}_w$, rather than
as a gap between the full and truncated conditionals. This is the
operationally correct quantity for the converse and sidesteps two
subtleties: (i) for the $L^1$/TV loss the Bayes-optimal
reconstruction is a median-type functional, not the conditional
mean, so one cannot identify the optimum with $\E[p_t \mid
\F_{t-1-w:t-1}]$; and (ii) for trained language models the
truncated conditional $p_\theta(\cdot \mid X_{t-w:t-1})$ need not
equal $\E[p_\theta(\cdot \mid X_{1:t-1}) \mid X_{t-w:t-1}]$,
especially under positional-encoding effects when a truncated
prefix is re-fed as a fresh sequence. The upper bound
(Definition~\ref{def:poly-mixing-unif}) still shows the truncated
conditional \emph{achieves} $O(w^{-\alpha})$, so the Bayes risk is
bracketed at $\Theta(w^{-\alpha})$. Empirically, the power-law fits
of Section~\ref{sec:experiments} ($R^2 > 0.9$) are consistent with
this two-sided behavior.
\end{remark}

\section{Heuristic argument toward Conjecture~\ref{thm:achievability} (not rigorous)}
\label{app:achievability}

\textbf{This appendix is heuristic and is not a proof.} The
block-Markov random-coding construction below operates on the
window-restricted conditional, but its covering and block-typicality
steps assume the auxiliary process inherits the next-token truncation
sensitivity of the source, which fails in general
(Remark~\ref{rmk:failed-rate-transfer}; counterexample $U_t=X_1$,
which retains the full deep past while the next-token conditional need
not). We therefore state the result as Conjecture~\ref{thm:achievability}
and keep this material only as a record of the attempted route; the
rigorous rate-level statement is the operational comparison of
Appendix~\ref{app:operational}. The construction uses
a window-restricted covering lemma
(Lemma~\ref{lem:wr-covering}), the non-rigorous block-typicality
stability step (Remark~\ref{rmk:failed-rate-transfer}), and a
martingale redundancy-accumulation bound
(Lemma~\ref{lem:redundancy-accum}).

\begin{remark}[Operational counterpart and how to read the rate]
\label{rmk:ach-superseded}
The argument of this appendix bounds an \emph{information} rate. Its
window-restricted covering and block-typicality steps control the
truncation bias through the auxiliary process, and the resulting
high-probability exponent must be read through
Remark~\ref{rmk:exponent-gap}: the unconditional high-probability
rate is $\alpha/(2(\alpha+1))$, with the sharper $\alpha/(\alpha+1)$
holding only under the forward covariance-decay condition. A
self-contained \emph{operational} treatment---bounding the physical
message length, replacing the covering step by the one-shot
likelihood-encoder bound of Lemma~\ref{lem:op-oneshot}, and isolating
the source dependence into the explicit hypothesis of
Assumption~\ref{ass:op-spectrum}---is given in
Appendix~\ref{app:operational} (Theorem~\ref{thm:op-transfer}), and is
the recommended reading for the rate-optimality claim.
\end{remark}

\subsection{Window-restricted covering lemma}

The covering bound for the window-restricted (dependent) block
source rests on a finite-memory regularity condition, which we
state explicitly rather than derive from the i.i.d.\ machinery.

\begin{assumption}[Finite-memory covering regularity]
\label{ass:covering-reg}
For the $w$-dependent window-restricted block source
$p^{(w)}_B := p(X_{t:t+B-1} \mid X_{t-w:t-1})$ with bounded
per-symbol distortion (Assumption~\ref{ass:bounded-logits}), the
random-coding covering redundancy at block length $B$ admits the
finite-memory rate
\[
\E\Bigl[\tfrac1B\textstyle\sum_{s} \KL{p_{t+s}}{\psi(\hat
   U_{t+s},Q_{t+s})}\Bigr] \leq D + O\!\bigl(\tfrac{w\log B}{B}\bigr),
\]
with the mixing dispersion factor $w$ entering linearly. This holds
for i.i.d.\ ($w=O(1)$) and independent non-identically-distributed
components as a theorem
\citep{kostina2012fixed,tasci2024dispersion}; for the dependent,
finite-memory source here we impose it. It is the covering-side
analogue of the additive tilted-information expansion
(Assumption~\ref{ass:additive-tilted}), and is plausibly
establishable by the independent-blocks/point-mass product-proxy
technique \citep[Ch.~4]{douc2018markov,tasci2024dispersion},
which we do not carry out.
\end{assumption}

\begin{lemma}[Window-restricted covering]\label{lem:wr-covering}
Fix block length $B$, window $w$, target distortion $D$, and
$\eta > 0$. Under Assumptions~\ref{ass:bounded-logits}
and~\ref{ass:covering-reg}, there exists a codebook $\mathcal{C}$
of size $|\mathcal{C}| \leq 2^{B(R^\star(D) + \eta)}$ and an encoder
assigning each source block to a codeword such that the
reconstruction $\hat U_{t:t+B-1}$ satisfies, in expectation over
the random codebook,
\[
\begin{gathered}
\E\Bigl[\tfrac{1}{B}\textstyle\sum_{s=0}^{B-1}
   \KL{p_{t+s}}{\psi(\hat U_{t+s}, Q_{t+s})}\Bigr]
\;\leq\; D + \zeta_B,\\
\zeta_B := \tfrac{c_1 w \log B}{B},
\end{gathered}
\]
for a constant $c_1$ depending on $V$ and the auxiliary alphabet.
\end{lemma}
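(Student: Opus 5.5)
The plan is a standard random-coding existence argument in which every dependence-specific step is delegated to Assumption~\ref{ass:covering-reg}. The lemma is essentially that assumption together with a codebook-size statement, so the work is bookkeeping: build an ensemble whose codebook has the size stated in the lemma, and then read the distortion bound off the assumption.

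\textbf{Step 1 (auxiliary and test channel).} Fix $\eta>0$. By Definition~\ref{def:Rstar}, choose a $D$-admissible auxiliary $\{U_t\}$ with decoder $\psi=g_t$ whose averaged WZ rate is at most $R^\star(D)+\eta/2$. Over the block $t{:}t+B-1$, take the window-restricted joint law of $(X,U,Q)$ given $X_{t-w:t-1}$ as the test channel. By Assumption~\ref{ass:bounded-logits} and the discreteness of the auxiliary range, every pointwise information density is bounded; this is Lemma~\ref{lem:pmi-bound-app}, which also uses Assumption~\ref{ass:U-positivity}, taken as a standing regularity condition.

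\textbf{Step 2 (codebook).} Draw $2^{B(R^\star(D)+\eta)}$ codewords i.i.d.\ from the block marginal of $U_{t:t+B-1}$ given the window. The encoder uses joint-typicality or likelihood encoding and may bin against $Q$, as in the WZ scheme of Appendix~\ref{app:binning-restricted}. The codebook-size bound then holds by construction.

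\textbf{Step 3 (distortion).} The expected distortion of the resulting reconstruction $\hat U$ is exactly the random-coding covering redundancy that Assumption~\ref{ass:covering-reg} controls. Invoking the assumption yields $D+O(w\log B/B)$, and the implicit constant becomes $c_1$, which depends on $V$ (through $|\log\epsilon|$ bounding per-symbol KL) and on the auxiliary alphabet size. On encoding failure, the bounded per-symbol distortion keeps the failure term controlled; this is absorbed into the assumption's slack.

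\textbf{Main obstacle.} The difficult point is whether the codebook rate can really be taken at $R^\star(D)+\eta$ rather than at the window-restricted block information rate. Closing that gap would require exactly the rate transfer that Remark~\ref{rmk:failed-rate-transfer} shows fails for general adapted auxiliaries such as $U_t=X_1$.

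My first attempt would be to restrict the infimum to window-measurable auxiliaries. Because Assumption~\ref{ass:covering-reg} is stated for the window-restricted source, it would be read as certifying covering at the corresponding rate. This is the same non-rigorous step flagged for Conjecture~\ref{thm:achievability}, so the lemma is conditional on it.
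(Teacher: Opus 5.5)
Your route matches the paper's: it uses Assumption~\ref{ass:bounded-logits} only to get a bounded per-symbol KL distortion (at most $2B(n_{\max})$), so that the El Gamal--Kim random-coding covering lemma applies (you do not actually need Lemma~\ref{lem:pmi-bound-app} or Assumption~\ref{ass:U-positivity}). It then reads the $D + O(w\log B/B)$ redundancy directly off Assumption~\ref{ass:covering-reg}, absorbing the constants into $c_1$. The codebook-rate concern you raise is genuine and is not resolved in the paper's proof either: that proof implicitly reads the assumption as certifying covering at rate $R^\star(D)+\eta$, and this is exactly the step the surrounding appendix flags as heuristic via Remark~\ref{rmk:failed-rate-transfer}.
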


\begin{proof}
Boundedness of the per-symbol KL distortion follows from
Assumption~\ref{ass:bounded-logits} (softmax of bounded logits,
so per-symbol KL $\leq 2B(n_{\max})$), making the random-coding
argument of \citet[Lemma~3.3]{elgamal2011network} applicable with a
bounded distortion measure. The finite-block redundancy rate
$D + O(w\log B/B)$ is exactly the content of
Assumption~\ref{ass:covering-reg}; we absorb the constants into
$c_1$. At the achievability choice $B = w = w_n$ the per-block
dispersion is $c_1 \log w_n$, i.e.\ $O(\log w_n / w_n)$ per symbol.
\end{proof}

\begin{remark}[Why the mixing dispersion does not dominate]
The factor $w$ in $\zeta_B = c_1 w\log B/B$ would dominate if $B$
were taken much larger than $w$. The choice $B = w = w_n$ keeps the
per-symbol dispersion at $O(\log w_n / w_n) = O(n^{-1/(\alpha+1)}
\log n)$, sub-dominant to both the truncation distortion
$w^{-\alpha} = n^{-\alpha/(\alpha+1)}$ and the high-probability
fluctuation. The block-length choice is dictated by this balance,
not arbitrary.
\end{remark}

\subsection{Redundancy accumulation}

\begin{lemma}[Tilted-information covariance decay]
\label{lem:cov-decay}
Let $\jmath_t := \jmath_{X_t}(X_t, D)$ denote the $D$-tilted
information density at step $t$ \citep{kostina2012fixed}, which
under Assumption~\ref{ass:bounded-logits} is bounded:
$|\jmath_t| \leq J_{\max} = O(B(n_{\max}) + \log V)$. Let $Z_t$
denote the collection of variables on which $\jmath_t$ depends (the
local source law, distortion level, side information, and auxiliary
channel), so that $\jmath_t$ is a bounded functional of $Z_t$.
Suppose the model satisfies the \emph{forward} polynomial decay
condition: for $s < s'$, almost surely,
\[
\bigl\|\,\mathcal L(Z_{t+s'} \mid \F_{t+s})
   - \mathcal L(Z_{t+s'})\,\bigr\|_{\TV}
\;\leq\; C_{\mathrm{TS}}\,(s'-s)^{-\alpha}.
\tag{$\dagger$}
\]
Then $|\mathrm{Cov}(\jmath_{t+s}, \jmath_{t+s'})| \leq 2 J_{\max}^2
C_{\mathrm{TS}} (s'-s)^{-\alpha}$. (An expected-TV form of
$(\dagger)$ yields the same covariance bound in expectation.)
\end{lemma}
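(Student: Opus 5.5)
The plan is to use the standard covariance-coupling bound for a bounded functional of a variable whose conditional law is close to its marginal. Fix $s<s'$ and write $a:=\jmath_{t+s}$ and $b:=\jmath_{t+s'}$. Since $a$ is a bounded functional of $Z_{t+s}$, I will first check that $a$ is $\F_{t+s}$-measurable. If $Z_{t+s}$ contains side information or auxiliary variables that are not $\F_{t+s}$-measurable, I will instead condition on the enlarged $\sigma$-field generated by $\F_{t+s}$ and $Z_{t+s}$, and I would state $(\dagger)$ with respect to that $\sigma$-field. In either case the tower property gives
\[
\mathrm{Cov}(a,b)=\E\bigl[(a-\E a)\,(\E[b\mid \F_{t+s}]-\E b)\bigr].
\]

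Next I bound the inner difference pointwise. Because $b=f(Z_{t+s'})$ with $|f|\le J_{\max}$, we have $\E[b\mid\F_{t+s}]=\int f\,d\mathcal L(Z_{t+s'}\mid\F_{t+s})$ and $\E b=\int f\,d\mathcal L(Z_{t+s'})$. For any two laws $\mu,\nu$ one has $|\int f\,d\mu-\int f\,d\nu|\le 2\|f\|_\infty\,\|\mu-\nu\|_{\TV}$. This is where the normalization $\TV=\frac12\ell_1$ used in the paper enters, and it is the source of the factor $2$. Combining this with $(\dagger)$ yields, almost surely,
\[
|\E[b\mid\F_{t+s}]-\E b|\le 2J_{\max}C_{\mathrm{TS}}(s'-s)^{-\alpha}.
\]

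Then I pull this out of the expectation: $|\mathrm{Cov}(a,b)|\le \E|a-\E a|\cdot 2J_{\max}C_{\mathrm{TS}}(s'-s)^{-\alpha}$. The trivial bound $|a-\E a|\le 2J_{\max}$ gives $4J_{\max}^2C_{\mathrm{TS}}(s'-s)^{-\alpha}$, which is off by a factor of $2$. To reach the stated constant $2J_{\max}^2C_{\mathrm{TS}}$, I would write the covariance as $\E[a\,(\E[b\mid\F_{t+s}]-\E b)]$; this is valid because the bracket has mean zero, so subtracting $\E a$ is unnecessary. Then $|a|\le J_{\max}$ gives exactly $2J_{\max}^2C_{\mathrm{TS}}(s'-s)^{-\alpha}$.

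For the expected-TV variant I keep the same decomposition and use $|a|\le J_{\max}$. I then bound $\E|\E[b\mid\F_{t+s}]-\E b|\le 2J_{\max}\E\|\mathcal L(Z_{t+s'}\mid\F_{t+s})-\mathcal L(Z_{t+s'})\|_{\TV}$, which gives the same constant.

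The main obstacle I expect is the measurability and bookkeeping step, not the estimate itself. I need $\jmath_{t+s}$ to be measurable with respect to the $\sigma$-field in $(\dagger)$, and the TV normalization has to be tracked correctly to land on the factor $2$. The bound $J_{\max}=O(B(n_{\max})+\log V)$ is taken as given from Assumption~\ref{ass:bounded-logits}.
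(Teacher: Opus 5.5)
Your proposal is correct and is essentially the paper's proof: the paper also writes the covariance as $\E[\jmath_{t+s}(\E[\jmath_{t+s'}\mid\F_{t+s}]-\E\jmath_{t+s'})]$, bounds the inner term by $2J_{\max}$ times the total variation in $(\dagger)$ via the dual representation, and multiplies by $|\jmath_{t+s}|\le J_{\max}$. You also check explicitly that $\jmath_{t+s}$ must be measurable with respect to the conditioning $\sigma$-field, enlarging it if $Z_{t+s}$ contains side information or auxiliary variables; the paper leaves this point implicit.
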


\begin{proof}
For $s < s'$, condition on $\F_{t+s}$ and use the tower property:
$\mathrm{Cov}(\jmath_{t+s}, \jmath_{t+s'}) = \E\bigl[\jmath_{t+s}
\bigl(\E[\jmath_{t+s'} \mid \F_{t+s}] - \E\jmath_{t+s'}\bigr)\bigr]$.
Since $\jmath_{t+s'}$ is a functional of $Z_{t+s'}$ bounded by
$J_{\max}$, the inner difference is bounded by the dual
representation of total variation:
$|\E[\jmath_{t+s'} \mid \F_{t+s}] - \E\jmath_{t+s'}| \leq 2 J_{\max}
\,\|\mathcal L(Z_{t+s'} \mid \F_{t+s}) - \mathcal L(Z_{t+s'})\|_{\TV}
\leq 2 J_{\max} C_{\mathrm{TS}}(s'-s)^{-\alpha}$ by $(\dagger)$.
Multiplying by $|\jmath_{t+s}| \leq J_{\max}$ and taking
expectation gives the bound.
\end{proof}

\begin{remark}[On condition $(\dagger)$ and its relation to truncation sensitivity]
\label{rmk:forward-decay}
Condition $(\dagger)$ is a \emph{forward} mixing statement (how much
$X_{t+s'}$'s law depends on information $(s'-s)$ steps in its past),
whereas Definition~\ref{def:poly-mixing} is a \emph{backward}
truncation statement (the effect of dropping context $w$ steps
back). The two are \emph{independent in general}; in particular,
$(\dagger)$ is \emph{not} implied by Definition~\ref{def:poly-mixing}
together with stationarity. A stationary process can have zero
backward truncation gap yet no forward decay: take the parity
process that selects, each with probability $\tfrac12$, one of the
two alternating sequences $010101\cdots$ or $101010\cdots$. Here
$X_t$ is a deterministic function of $X_{t-1}$, so conditioning on
the full past versus the last token gives identical next-token
predictions and the backward gap of
Definition~\ref{def:poly-mixing} is $0$; yet $X_{t+h}$ is determined
by $X_t$ for every $h$, so $\mathcal L(X_{t+h} \mid \F_t)$ never
approaches the marginal and $(\dagger)$ fails for all $\alpha$.
Obtaining $(\dagger)$ requires a genuine forward/two-sided mixing
property (e.g.\ polynomial $\phi$- or $\beta$-mixing, or
time-reversibility), which we impose as a separate hypothesis. We
isolate $(\dagger)$ as the precise extra assumption needed for the
Freedman sharpening (Conjecture~\ref{thm:achievability-tight}); the
window-scaling results (Theorems~\ref{thm:E2},
\ref{thm:window-lb}) do not require it.
\end{remark}

\begin{assumption}[Additive tilted-information expansion]
\label{ass:additive-tilted}
For the window-restricted conditional source, the per-block
redundancy admits the additive expansion
\[
\rho_m = \sum_{s=0}^{B-1} \jmath_{t+s}
       - \sum_{s=0}^{B-1} R^\star_{t+s}(D) + r_B,
\qquad r_B = o(B)
\]
uniformly over blocks, where $R^\star_{t+s}(D)$ is the per-step
rate--distortion value. If the evaluation distribution is
stationary or asymptotically homogeneous, then
$\sum_s R^\star_{t+s}(D) = B R^\star(D) + o(B)$.
\end{assumption}

\begin{remark}[Status of the additive expansion]
Assumption~\ref{ass:additive-tilted} is a single-letterization of
the block redundancy. For i.i.d.\ or independent
non-identically-distributed components it is a theorem
\citep{kostina2012fixed,tasci2024dispersion}; for the
dependent, window-restricted source here it does not follow
directly from those results, and we impose it as an explicit
assumption. The point-mass product-proxy technique of
\citet{tasci2024dispersion} provides a plausible route to
establishing it for finite-memory sources, but we do not carry out
that analysis. The $o(B)$ remainder is what the Freedman variance
computation requires.
\end{remark}

\begin{lemma}[Per-block redundancy variance under long memory]
\label{lem:block-variance}
Under Assumptions~\ref{ass:bounded-logits} and
\ref{ass:additive-tilted}, and the covariance decay of
Lemma~\ref{lem:cov-decay} (condition $(\dagger)$, $\alpha \in
(0,1)$), the conditional variance of $\rho_m$ obeys
\[
\begin{split}
\mathrm{Var}(\rho_m \mid \F_{(m-1)B})
&\;\leq\;
B\,V(D) + 4 J_{\max}^2 C_{\mathrm{TS}}\, B \sum_{k=1}^{B-1} k^{-\alpha}\\
&\;=\; O\!\bigl(B^{2-\alpha}\bigr),
\end{split}
\]
where $V(D) := \mathrm{Var}[\jmath_X(X,D)]$ is the rate--distortion
dispersion \citep{kostina2012fixed} and the $o(B)$ remainder of
Assumption~\ref{ass:additive-tilted} contributes a lower-order
variance term.
\end{lemma}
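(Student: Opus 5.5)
The proof plan is to expand the variance of the block sum $\rho_m$ and control the diagonal and off-diagonal terms separately. Work conditionally on $\F_{(m-1)B}$ throughout, and write the block sum, via Assumption~\ref{ass:additive-tilted}, as
\[
\rho_m=\sum_{s=0}^{B-1}\jmath_{t+s}-\sum_s R^\star_{t+s}(D)+r_B .
\]
The middle sum is deterministic given the conditioning, so it does not contribute to the variance. The remainder $r_B=o(B)$ contributes only through its own variance and the cross term with $\sum_s\jmath_{t+s}$. Cauchy--Schwarz bounds that cross term by $2\sqrt{\mathrm{Var}(\sum_s\jmath)}\,o(B)$, which is of lower order once the main term is shown to be $O(B^{2-\alpha})$, since $B^{2-\alpha}\ge B$ for $\alpha<1$. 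The remaining work is therefore to bound $\mathrm{Var}(\sum_{s}\jmath_{t+s}\mid\F_{(m-1)B})$.

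Expanding this variance gives
\[
\sum_s\mathrm{Var}(\jmath_{t+s}\mid\cdot)+2\sum_{s<s'}\mathrm{Cov}(\jmath_{t+s},\jmath_{t+s'}\mid\cdot).
\]
For the diagonal, bound each conditional variance by the dispersion $V(D)$. If $V(D)$ is to be read as a uniform bound on conditional variances, one can fall back on the crude bound $J_{\max}^2$, which still gives $O(B)$. For the off-diagonal terms, apply Lemma~\ref{lem:cov-decay} conditionally. Its proof is a tower-property argument over $\F_{t+s}\supseteq\F_{(m-1)B}$, so it goes through with all expectations taken conditional on $\F_{(m-1)B}$, provided $(\dagger)$ is read with $\mathcal L(Z_{t+s'}\mid\F_{(m-1)B})$ as the reference law. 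Each covariance is then at most $2J_{\max}^2C_{\mathrm{TS}}(s'-s)^{-\alpha}$. Grouping the pairs by lag $k=s'-s$, there are at most $B$ pairs at each lag, so
\[
2\sum_{s<s'}|\mathrm{Cov}|\le 4J_{\max}^2C_{\mathrm{TS}}\,B\sum_{k=1}^{B-1}k^{-\alpha}.
\]

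To finish, compare the sum with an integral: for $\alpha\in(0,1)$, $\sum_{k=1}^{B-1}k^{-\alpha}\le 1+\int_1^{B}x^{-\alpha}dx\le B^{1-\alpha}/(1-\alpha)+1$. The off-diagonal part is then $O(B^{2-\alpha})$, which dominates the $O(B)$ diagonal term, and the $O(B^{2-\alpha})$ bound follows.

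The main obstacle is the conditional use of $(\dagger)$. As stated, $(\dagger)$ compares $\mathcal L(Z_{t+s'}\mid\F_{t+s})$ with the unconditional marginal, whereas the variance here is conditional on the block start. I would handle this by the triangle inequality,
\[
\|\mathcal L(\cdot\mid\F_{t+s})-\mathcal L(\cdot\mid\F_{(m-1)B})\|\le\|\mathcal L(\cdot\mid\F_{t+s})-\mathcal L(\cdot)\|+\|\mathcal L(\cdot\mid\F_{(m-1)B})-\mathcal L(\cdot)\|.
\]
The second term is at most $C_{\mathrm{TS}}(s'+1)^{-\alpha}\le C_{\mathrm{TS}}(s'-s)^{-\alpha}$. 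This at most doubles the constant, and that factor is absorbed into the stated $4J_{\max}^2C_{\mathrm{TS}}$ up to constants.
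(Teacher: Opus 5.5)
Apart from one step, your argument is the paper's. You bound the diagonal by $B\,V(D)$, group off-diagonal pairs by lag with at most $B$ pairs per lag, apply Lemma~\ref{lem:cov-decay}, and use $\sum_{k<B}k^{-\alpha}=O(B^{1-\alpha})$.

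You are more careful than the paper about conditioning. Its proof writes the unconditional $\mathrm{Var}(\rho_m)$ and applies Lemma~\ref{lem:cov-decay} verbatim, although the statement (and the later Freedman step) concerns $\mathrm{Var}(\rho_m\mid\F_{(m-1)B})$. Your triangle-inequality fix is valid almost surely. However, it gives $8J_{\max}^2C_{\mathrm{TS}}$ rather than the displayed $4J_{\max}^2C_{\mathrm{TS}}$, and the $J_{\max}^2$ fallback changes the diagonal term. So you recover the $O(B^{2-\alpha})$ order but not the displayed constants, unless you adopt the conditional reading of $(\dagger)$.

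The step that fails is the remainder. With only $r_B=o(B)$, your Cauchy--Schwarz cross term is $2\sqrt{O(B^{2-\alpha})}\cdot o(B)=o(B^{2-\alpha/2})$. That is not $O(B^{2-\alpha})$, since the ratio may grow like $B^{\alpha/2}$. The self-term $\mathrm{Var}(r_B\mid\F_{(m-1)B})\le\E[r_B^2\mid\F_{(m-1)B}]$ is only $o(B^2)$, and you leave it unbounded. The observation $B^{2-\alpha}\ge B$ bears on neither term.

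Without further structure, the best general bound is $\mathrm{Var}(\rho_m\mid\cdot)\le\bigl(\sqrt{\mathrm{Var}(\sum_s\jmath_{t+s}\mid\cdot)}+\sqrt{\mathrm{Var}(r_B\mid\cdot)}\bigr)^2$, and it is attained when the two parts are perfectly correlated. So concluding $\mathrm{Var}(\rho_m\mid\cdot)=O(B^{2-\alpha})$ needs a quantitative hypothesis such as $|r_B|=O(B^{1-\alpha/2})$, not $r_B=o(B)$.

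The paper does not close this either; its proof only says ``up to the lower-order remainder.'' What your argument and the paper's actually establish is the displayed bound for $\mathrm{Var}(\sum_s\jmath_{t+s}\mid\F_{(m-1)B})$. You should impose the stronger remainder condition explicitly rather than claim it follows from $r_B=o(B)$.
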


\begin{proof}
By Assumption~\ref{ass:additive-tilted}, $\rho_m = \sum_s \jmath_{t+s}
- (\text{deterministic}) + r_B$ with $r_B = o(B)$, so up to the
lower-order remainder
$\mathrm{Var}(\rho_m) = \mathrm{Var}(\sum_s \jmath_{t+s})
= \sum_s \mathrm{Var}(\jmath_{t+s})
+ 2\sum_{s<s'}\mathrm{Cov}(\jmath_{t+s},\jmath_{t+s'})$. The diagonal
contributes $B \cdot V(D)$. For the off-diagonal, group by lag
$k = s'-s$: there are at most $B$ pairs at each lag, and
Lemma~\ref{lem:cov-decay} bounds each covariance by
$2 J_{\max}^2 C_{\mathrm{TS}} k^{-\alpha}$, giving
$4 J_{\max}^2 C_{\mathrm{TS}} B \sum_{k=1}^{B-1} k^{-\alpha}$. For
$\alpha \in (0,1)$, $\sum_{k=1}^{B-1} k^{-\alpha} = O(B^{1-\alpha})$,
so the off-diagonal is $O(B^{2-\alpha})$, dominating the diagonal
$O(B)$.
\end{proof}

\begin{lemma}[Freedman redundancy accumulation]
\label{lem:redundancy-accum}
Partition $\{1, \ldots, n\}$ into $n/B$ blocks of length $B$. Under
the codebook of Lemma~\ref{lem:wr-covering} and the variance bound
of Lemma~\ref{lem:block-variance} (requiring condition $(\dagger)$,
$\alpha \in (0,1)$), the cumulative redundancy satisfies, with
probability at least $1-\delta$,
\[
\begin{split}
\frac{1}{n}\sum_{m=1}^{n/B} \rho_m
&\;\leq\;
\eta + \frac{c_1 w \log B}{B}\\
&\quad + c_2 \sqrt{\frac{B^{1-\alpha}}{n}\log\frac1\delta}
   + \frac{2 B \log V}{3n}\log\frac1\delta.
\end{split}
\]
\end{lemma}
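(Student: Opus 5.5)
The plan is to recognize that the block-level quantities $\rho_m$ form a sequence whose centered parts are martingale differences with respect to the block filtration $\{\F_{mB}\}_m$, and then to apply Freedman's inequality (a Bernstein-type martingale bound) with the conditional variance supplied by Lemma~\ref{lem:block-variance}. First decompose
\[
\rho_m \;=\; \E[\rho_m \mid \F_{(m-1)B}] + \bigl(\rho_m - \E[\rho_m \mid \F_{(m-1)B}]\bigr) \;=:\; \bar\rho_m + \Delta_m .
\]
The predictable part $\bar\rho_m$ is controlled in expectation by the codebook of Lemma~\ref{lem:wr-covering}. The codebook size $2^{B(R^\star(D)+\eta)}$ contributes the slack $\eta$ per symbol, and the finite-memory covering redundancy contributes $\zeta_B = c_1 w\log B/B$ per symbol. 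Together with the additive expansion of Assumption~\ref{ass:additive-tilted}, whose $o(B)$ remainder I absorb into $\eta$ for large $B$, this gives $\frac1n\sum_m \bar\rho_m \le \eta + c_1 w\log B/B$. That is the deterministic part of the claimed bound.

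For the fluctuation part, $\{\Delta_m\}$ is a martingale difference sequence. Its increments are bounded by a range of order $B\log V$, since each block codes at most $B$ symbols over an alphabet of size $V$; this yields the $\frac{2B\log V}{3n}\log\frac1\delta$ term. Its predictable quadratic variation is $\sum_{m\le n/B}\mathrm{Var}(\rho_m\mid\F_{(m-1)B}) \le (n/B)\cdot O(B^{2-\alpha}) = O(nB^{1-\alpha})$, by Lemma~\ref{lem:block-variance}. Freedman's inequality with variance proxy $\sigma^2 = c\,nB^{1-\alpha}$ and range $b = 2B\log V$ gives, with probability at least $1-\delta$,
\[
\sum_m \Delta_m \le \sqrt{2\sigma^2\log(1/\delta)} + \tfrac{2}{3}b\log(1/\delta).
\]
Dividing by $n$ yields $c_2\sqrt{(B^{1-\alpha}/n)\log(1/\delta)} + \frac{2B\log V}{3n}\log\frac1\delta$. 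Adding this to the predictable part completes the bound.

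The main obstacle is the variance step. Lemma~\ref{lem:block-variance} is stated as a conditional variance, but its proof bounds covariances through Lemma~\ref{lem:cov-decay}, which is unconditional (or an a.s.\ TV bound under $(\dagger)$). So I need the conditional-on-$\F_{(m-1)B}$ version. I would obtain it by applying $(\dagger)$ almost surely after conditioning, since $(\dagger)$ is stated for conditional laws given $\F_{t+s}$ and the tower argument goes through with an extra conditioning on the block start.

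A second issue is that Freedman's inequality needs an almost-sure bound on the quadratic variation, not only its expectation. Lemma~\ref{lem:block-variance} is a pointwise bound, so this holds. If only an expected form were available, I would fall back to a union bound over a dyadic grid of variance levels, which costs an extra $\log\log n$ factor absorbed into $c_2$.
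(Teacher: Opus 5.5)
Your proposal follows the paper's proof. You Doob-decompose the block redundancies along $\{\F_{(m-1)B}\}$, bound the predictable part by $\eta + c_1 w\log B/B$ via Lemma~\ref{lem:wr-covering}, and apply Freedman's inequality with cumulative conditional variance $(n/B)\cdot O(B^{2-\alpha}) = O(nB^{1-\alpha})$ from Lemma~\ref{lem:block-variance} and increment range $2B\log V$.

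There is one arithmetic slip. With $b = 2B\log V$, your (correctly inverted) bound $\sqrt{2\sigma^2\log(1/\delta)} + \tfrac23 b\log(1/\delta)$ gives $\tfrac{4B\log V}{3n}\log\tfrac1\delta$ after dividing by $n$, not the stated $\tfrac{2B\log V}{3n}\log\tfrac1\delta$. The paper reaches $\tfrac23$ by using $M_{\mathrm{inc}}/3$ as the linear coefficient. That is slightly too optimistic: inverting the Freedman tail gives $\tfrac{bL}{3} + \sqrt{b^2L^2/9 + 2\sigma^2 L}$ with $L = \log(1/\delta)$. To get the stated constant you would need the one-sided bound $\rho_m - \E[\rho_m \mid \F_{(m-1)B}] \le B\log V$; otherwise the linear term carries $\tfrac43$.

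Your concern about conditional versus unconditional covariances is a real point the paper passes over, and your fix works once made explicit. Condition $(\dagger)$ only compares $\mathcal L(Z_{t+s'}\mid\F_{t+s})$ with the unconditional law. So, with $t = (m-1)B+1$, bound $|\E[\jmath_{t+s'}\mid\F_{t+s}] - \E[\jmath_{t+s'}\mid\F_{t-1}]|$ by the triangle inequality through $\E\jmath_{t+s'}$. Each piece is at most $2J_{\max}C_{\mathrm{TS}}(s'-s)^{-\alpha}$, because $(s'+1)^{-\alpha} \le (s'-s)^{-\alpha}$. The conditional covariance bound is therefore just twice that of Lemma~\ref{lem:cov-decay}.

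Like the paper, you also need the conditional means $\E[\rho_m\mid\F_{(m-1)B}]$ bounded almost surely for the predictable part to enter a high-probability statement. Lemma~\ref{lem:wr-covering} provides only a bound in expectation. This is an implicit strengthening you share with the paper, not a defect specific to your route.
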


\begin{proof}
Define the Doob martingale $S_M := \sum_{m=1}^M (\rho_m -
\E[\rho_m \mid \F_{(m-1)B}])$, with bounded increments
$|\rho_m - \E[\rho_m \mid \cdot]| \leq 2B\log V =: M_{\mathrm{inc}}$
and conditional variances $\mathrm{Var}(\rho_m \mid \cdot) \leq
O(B^{2-\alpha})$ from Lemma~\ref{lem:block-variance}. The
cumulative conditional variance is
$W_n := \sum_m \mathrm{Var}(\rho_m \mid \cdot) \leq (n/B) \cdot
O(B^{2-\alpha}) = O(n B^{1-\alpha})$. Freedman's inequality
\citep{freedman1975tail} gives, with probability $\geq 1-\delta$,
\[
\begin{split}
\frac1n|S_{n/B}|
&\;\leq\;
\sqrt{\frac{2 W_n}{n^2}\log\frac1\delta}
   + \frac{M_{\mathrm{inc}}}{3n}\log\frac1\delta\\
&\;=\;
O\!\left(\sqrt{\frac{B^{1-\alpha}}{n}\log\frac1\delta}\right)
   + \frac{2B\log V}{3n}\log\frac1\delta.
\end{split}
\]
Adding the mean redundancy
$\frac1n\sum_m \E[\rho_m \mid \cdot] \leq \eta + c_1 w\log B/B$
(Lemma~\ref{lem:wr-covering}) gives the stated bound. The
improvement over the Azuma bound
$O(\sqrt{B\log V/n})$ is the replacement of the increment range
$B\log V$ by the conditional variance $B^{2-\alpha}$ in the
dominant term, which is smaller for $\alpha > 0$.
\end{proof}

\subsection{Heuristic argument for Conjecture~\ref{thm:achievability}}

\begin{proof}[Heuristic argument for Conjecture~\ref{thm:achievability}]
Set the window $w = w_n$ and block length $B = w_n$ (the block
length and window are taken equal; this is the natural choice
under truncation sensitivity, where dependence beyond $w$ is
$O(w^{-\alpha})$).

\textbf{Distortion.}
By Lemma~\ref{lem:wr-covering}, the window-restricted codebook
achieves average KL distortion $\leq D + c_1 w \log B / B$ against
the window-restricted source. The truncation bias is controlled
\emph{per step}: by Definition~\ref{def:poly-mixing-unif}, at each
position $t+s$ the window-restricted conditional
$p_{t+s}^{(w)}$ and the full conditional $p_{t+s}$ differ in TV by
at most $C_{\mathrm{TS}} w^{-\alpha}$, contributing per-step KL
distortion $O(V w^{-2\alpha} / \mu + \cdots)$ via the
operational-smoothing bound (Lemma~\ref{lem:kl-tv-smoothed}) with
smoothing level $\mu = w^{-\alpha}$, i.e.\ $O(V w^{-\alpha})$.
(The per-step source gap is $C_{\mathrm{TS}} w^{-\alpha}$ by
Definition~\ref{def:poly-mixing}; summing over the block by the TV
chain rule gives block-joint TV $\leq B C_{\mathrm{TS}} w^{-\alpha}$
for the \emph{source} process, and the relevant quantity for
average distortion is the per-step gap $C_{\mathrm{TS}} w^{-\alpha}$.)
With $B = w = w_n$, the covering dispersion is
$c_1 w_n \log w_n / w_n = c_1 \log w_n$ per block, i.e.\
$O(\log w_n / w_n)$ per symbol, and the average per-step distortion
excess is
\[
D_n - D \;\leq\; \frac{c_1 \log w_n}{w_n}
   + O\!\bigl(V w_n^{-\alpha}\bigr).
\]
For $\alpha \in (0,1)$ the truncation term dominates, since
$w_n^{-\alpha} = n^{-\alpha/(\alpha+1)}$ decays slower than the
boundary term $\log w_n / w_n = n^{-1/(\alpha+1)}\log n$; this gives
$D_n - D = O(V n^{-\alpha/(\alpha+1)})$. At $\alpha = 1$ the two
terms coincide up to the $\log$ factor. For $\alpha > 1$ the
boundary term dominates and the distortion excess is
$O(n^{-1/(\alpha+1)}\log n)$, i.e.\ the exponent saturates at
$\min(\alpha,1)/(\alpha+1)$. The factor $V$ is absorbed into the
implicit constant.

\textbf{Rate, part (a): expectation (sharp, no forward-decay hypothesis).}
The expected per-block redundancy is, by
Lemma~\ref{lem:wr-covering} (which invokes the finite-memory
covering regularity, Assumption~\ref{ass:covering-reg}),
$\E[\rho_m] \leq \eta B + c_1 w \log B$. Summing over the $n/B$
blocks and normalizing,
\[
\E[R_n] - R^\star(D)
\;\leq\;
\eta + \frac{c_1 w_n \log B}{B} + O\!\bigl(V w_n^{-\alpha}\bigr),
\]
where the last term is the expected truncation bias. The martingale
fluctuation $\rho_m - \E[\rho_m \mid \cdot]$ has mean zero and
contributes nothing in expectation, so no covariance control (hence
no forward-decay hypothesis) is needed. Setting $\eta = 1/w_n$,
$B = w_n = n^{1/(\alpha+1)}$, the dispersion term is
$c_1 \log w_n = O(\log n)$ per symbol scaled, i.e.\
$O(n^{-1/(\alpha+1)}\log n)$; the truncation term
$w_n^{-\alpha} = n^{-\alpha/(\alpha+1)}$ dominates it for
$\alpha \in (0,1)$, giving
$\E[R_n] - R^\star(D) = O(n^{-\alpha/(\alpha+1)}\log n)$. This is
the sharp exponent, matching the converse
(Conjecture~\ref{cor:mixing-conc}) without any forward-decay
hypothesis.

\textbf{Rate, part (b): high probability (unconditional).}
For a high-probability bound we control the fluctuation by its
range. The Doob martingale $S_M = \sum_{m \leq M}(\rho_m -
\E[\rho_m \mid \F_{(m-1)B}])$ has increments bounded by
$2B\log V$, so Azuma--Hoeffding gives, with probability
$\geq 1-\delta$,
\[
\begin{split}
\frac1n|S_{n/B}|
&\leq \frac{2B\log V}{n}\sqrt{\tfrac{n}{B}\tfrac12\log\tfrac2\delta}
= O\!\Bigl(\sqrt{\tfrac{B\log V}{n}\log\tfrac1\delta}\Bigr)\\
&= O\!\bigl(n^{-\alpha/(2(\alpha+1))}\sqrt{\log V\log\tfrac1\delta}\bigr).
\end{split}
\]
Adding the mean bound from part (a), the high-probability rate
overhead is
\[
O\big(n^{-\alpha/(2(\alpha+1))}(\log n)\sqrt{\log V \log(1/\delta)}\big).
\]
This holds with no assumption beyond truncation
sensitivity; the exponent is a factor of two off the converse
because Azuma uses only the increment range, not the variance.
\end{proof}

\begin{remark}[Why the high-probability exponent is loose without $(\dagger)$]
\label{rmk:ach-distortion}
The expectation bound (a) achieves the sharp exponent because the
zero-mean fluctuation drops out. The high-probability bound (b)
must control that fluctuation, and Azuma's range-based control is
loose. Sharpening it requires the conditional variance, which in
turn requires covariance decay among the tilted-information terms
(Lemma~\ref{lem:cov-decay})---a forward/two-sided property
$(\dagger)$ not implied by the backward truncation sensitivity of
Definition~\ref{def:poly-mixing} (Remark~\ref{rmk:forward-decay}).
Under $(\dagger)$ the gap closes
(Conjecture~\ref{thm:achievability-tight}); the parity counterexample
of Remark~\ref{rmk:forward-decay} shows it cannot be removed in
general.
\end{remark}

\begin{conjecture}[Conditional Freedman sharpening under forward decay and additive tilted information]
\label{thm:achievability-tight}
Suppose Assumptions~\ref{ass:bounded-logits} and
\ref{ass:additive-tilted}, $\alpha$-polynomial truncation
sensitivity with $\alpha \in (0,1)$, the forward decay condition
$(\dagger)$ of Lemma~\ref{lem:cov-decay}, and stationarity (or
asymptotic homogeneity) of the evaluation distribution. Then the
suffix-only achievability scheme of Conjecture~\ref{thm:achievability}
attains, with probability $\geq 1-\delta$,
\[
\begin{gathered}
R_n \leq R^\star(D)
  + O\!\bigl(n^{-\alpha/(\alpha+1)}\log V \log(1/\delta)\bigr),\\
D_n \leq D + O\!\bigl(n^{-\alpha/(\alpha+1)}\bigr).
\end{gathered}
\]
Both exponents equal $\alpha/(\alpha+1)$, matching the converse
(Conjecture~\ref{cor:mixing-conc}) up to logarithmic factors. Under
these three additional hypotheses---none implied by truncation
sensitivity alone---the rate-of-convergence exponent is therefore
characterized tightly. Absent them, only the window scaling is
tight (Corollary~\ref{cor:tight}) and the rate exponent reverts to
$\alpha/(2(\alpha+1))$.
\end{conjecture}

\begin{proof}[Heuristic argument for Conjecture~\ref{thm:achievability-tight}]
The expectation and high-probability-Azuma parts are
Conjecture~\ref{thm:achievability}(a,b). For the sharp
high-probability exponent, replace Azuma by Freedman's inequality
\citep{freedman1975tail}, which uses the cumulative conditional
variance $W_n = \sum_m \mathrm{Var}(\rho_m \mid \F_{(m-1)B})$. By
Lemma~\ref{lem:block-variance} (which invokes
Assumption~\ref{ass:additive-tilted} and the covariance decay of
Lemma~\ref{lem:cov-decay} under $(\dagger)$),
$\mathrm{Var}(\rho_m) = O(B^{2-\alpha})$, so
$W_n = (n/B)\,O(B^{2-\alpha}) = O(n B^{1-\alpha})$. Freedman gives,
with probability $\geq 1-\delta$,
\[
\begin{split}
\frac1n|S_{n/B}|
&\leq \sqrt{\frac{2 W_n}{n^2}\log\tfrac1\delta}
   + \frac{2B\log V}{3n}\log\tfrac1\delta\\
&= O\!\Bigl(\sqrt{\tfrac{B^{1-\alpha}}{n}\log\tfrac1\delta}\Bigr)
   + O\!\bigl(\tfrac{B\log V}{n}\log\tfrac1\delta\bigr).
\end{split}
\]
With $B = w_n = n^{1/(\alpha+1)}$, the variance term is
$\sqrt{B^{1-\alpha}/n} = \sqrt{n^{(1-\alpha)/(\alpha+1)-1}}
= n^{-\alpha/(\alpha+1)}$, and the increment term is
$B\log V/n = n^{-\alpha/(\alpha+1)}\log V$; both match the converse
exponent. The reduction $\sum_s R^\star_{t+s}(D) = B R^\star(D) +
o(B)$ uses stationarity/asymptotic homogeneity. Adding the mean
bound from part (a) completes the proof.
\end{proof}

\begin{remark}[What closes the gap, and the three hypotheses]
\label{rmk:freedman-hypotheses}
The expectation bound is sharp unconditionally because the
zero-mean fluctuation drops out (Conjecture~\ref{thm:achievability}a).
Only the \emph{high-probability} sharp exponent needs the three
hypotheses, none implied by truncation sensitivity alone: the
forward decay $(\dagger)$ (Remark~\ref{rmk:forward-decay}),
requiring a genuine two-sided mixing or time-reversal property and
\emph{not} a consequence of stationarity (parity counterexample
there); the additive tilted-information expansion
(Assumption~\ref{ass:additive-tilted}); and
stationarity/asymptotic homogeneity, used only to reduce
$\sum_s R^\star_{t+s}(D)$ to $B R^\star(D) + o(B)$. The
covariance-decay step (Lemma~\ref{lem:cov-decay}) is the crux: it
converts the backward truncation rate into the forward
dependence-decay rate. Without $(\dagger)$, the window-scaling
characterization (Theorems~\ref{thm:E2}, \ref{thm:window-lb}) and
the expected sharp rate still hold; only the high-probability rate
reverts to the Azuma exponent.
\end{remark}

\section{Conjectured universal scheme (not established)}
\label{app:universal}

\textbf{The universal scheme below is conjectural.} It is built on the
information-rate achievability of Conjecture~\ref{thm:achievability}
(and on a window-grid argument whose ratio control is only
$\Theta(1)$, not $1+o(1)$), so it is not established; we retain it as a
direction. We first restate the universal scheme and the two structural
extensions (moved here from the main text for space), then give
their proofs.

\subsection{Statement: universal scheme}
\label{sec:universal}

Conjecture~\ref{thm:achievability} requires knowledge of $\alpha$ to
set the window size $w_n = n^{1/(\alpha+1)}$. In practice, $\alpha$
is a model-dependent quantity that must be estimated. We show that
a single scheme, oblivious to $\alpha$, attains the optimal rate
\emph{simultaneously} for every $\alpha$ in a range, at the cost of
an additional logarithmic factor.

\begin{conjecture}[Universal polynomial compression]\label{thm:universal}
Fix $0 < \alpha_{\min} \leq \alpha_{\max} < \infty$. There exists a
single causal online scheme
$\{(\phi_t^{\mathrm{univ}}, \psi_t^{\mathrm{univ}})\}$, not depending
on $\alpha$, such that for \emph{every} model satisfying
$\alpha$-polynomial truncation sensitivity with
$\alpha \in [\alpha_{\min}, \alpha_{\max}]$, with probability at
least $1-\delta$,
\[
R_n \;\leq\; R^\star_\alpha(D)
   + O\!\Bigl(n^{-\alpha/(2(\alpha+1))}\,(\log n)^2
              \sqrt{\log(1/\delta)}\Bigr),
\]
\[
D_n \;\leq\; D + O\!\bigl(n^{-\alpha/(\alpha+1)}\bigr).
\]
The rate exponent matches that of the (non-universal) achievability
scheme (Conjecture~\ref{thm:achievability}); the extra factor of
$\log n$ is the price of universality. The window-scaling
optimality (Theorem~\ref{thm:window-lb}) also transfers: the
selected window stays within a constant factor of the optimal
$n^{1/(\alpha+1)}$.
\end{conjecture}

The construction (Appendix~\ref{app:universal}) runs a logarithmic
grid of window sizes $\{w_n^{(j)} = n^{1/(\alpha_j + 1)}\}_{j=1}^{J}$
with $J = O(\log n)$ grid points covering
$[\alpha_{\min}, \alpha_{\max}]$, and uses an exponential-weights
meta-algorithm \citep{cesabianchi2006prediction} to track the
best window online. The regret of the meta-algorithm is
$O(\sqrt{n \log J})$, contributing the additional logarithmic
factor in the rate. This places sequential KV-cache compression in
the \emph{universal source coding} tradition
\citep{rissanen1984universal,mahmood2024minimax}: near-optimal
performance is attainable without prior knowledge of the mixing
exponent.

\paragraph{Memory cost of universality.}
The rate statement above is a coding-rate (bandwidth) statement.
The cache-memory cost is separate, and we account for it explicitly. The largest grid window is
$w_n^{(\max)} = n^{1/(\alpha_{\min}+1)}$, and the smallest is
$w_n^{(\min)} = n^{1/(\alpha_{\max}+1)}$. Since all grid schemes
read from the same physical KV cache, the cache need only hold the
single largest window $w_n^{(\max)}$ tokens; the $J = O(\log n)$
schemes are bookkeeping over shared cache contents, not $J$
independent caches. The meta-algorithm's per-step state is the
$J$-vector of weights, an $O(\log n)$ additive overhead independent
of the cache size. On a switch between grid windows no cache
reconstruction is needed, because a smaller window is a suffix of
the larger one already in cache. Thus the memory overhead of
universality is the ratio
$w_n^{(\max)}/w_n^{(\alpha)} = n^{(\alpha-\alpha_{\min})/
((\alpha+1)(\alpha_{\min}+1))}$ relative to the oracle-$\alpha$
window---a polynomial factor when $\alpha > \alpha_{\min}$, and
$1+o(1)$ when the true $\alpha$ is near the lower endpoint.

\begin{figure}[t]
\centering
\begin{tikzpicture}
\begin{loglogaxis}[
  width=11.5cm, height=7cm,
  xlabel={Window size $w$ (tokens)},
  ylabel={$\widehat{\TV}_w$ (median)},
  xmin=1.5, xmax=300,
  ymin=0.03, ymax=1.0,
  legend pos=south west,
  legend style={font=\scriptsize, fill opacity=0.85, draw=gray!50},
  grid=major, grid style={dashed,gray!30},
  tick label style={font=\footnotesize},
  label style={font=\footnotesize}
]
\addplot[blue, very thick, mark=*, mark size=2.2pt] coordinates {
  (2, 0.793) (4, 0.682) (8, 0.579) (16, 0.473)
  (32, 0.340) (64, 0.216) (128, 0.131) (256, 0.109)
};
\addlegendentry{Natural ($\alpha=0.44$)}
\addplot[red, very thick, mark=square*, mark size=2.2pt, densely dashed] coordinates {
  (2, 0.843) (4, 0.759) (8, 0.662) (16, 0.465)
  (32, 0.333) (64, 0.259) (128, 0.193) (256, 0.146)
};
\addlegendentry{Code ($\alpha=0.38$)}
\addplot[blue!50, thick, domain=2:256, samples=80] {1.12/(x^0.438)};
\addplot[red!50, thick, domain=2:256, samples=80] {0.95/(x^0.383)};
\end{loglogaxis}
\end{tikzpicture}
\caption{Measured TV decay on Qwen2.5-0.5B (position-preserving
protocol, $100$ prefixes per domain). Both Natural (NLTK Gutenberg)
and Code (GitHub Python) domains exhibit power-law decay
$\widehat{\TV}_w \propto w^{-\alpha}$. Power-law fit yields
log-RMSE $0.14$ (Natural) and $0.08$ (Code), versus log-RMSE
$0.31$ and $0.20$ for an exponential fit. The natural-language
exponent exceeds the code exponent
($\alpha_{\text{nat}} = 0.44 > \alpha_{\text{code}} = 0.38$); the
same exponents are recovered independently from the sink-plus-recent
KL decay (Figure~\ref{fig:topk}) and replicate across models
(Table~\ref{tab:crossmodel}).}
\label{fig:mixing}
\end{figure}

\subsection{Statements: structural extensions}
\label{sec:extensions}

Two extensions place the framework in contact with deployed
architectures; full statements and proofs are in
Appendices~\ref{app:cont-extension} and~\ref{app:multilayer-proofs}.

\paragraph{Continuous-latent intrinsic dimension.}
For a continuous auxiliary $U_t \in \R^{d_c}$ (as in multi-head
latent attention \citep{deepseek2024mla}), define
$R^\star_{\mathrm{cont}}(D)$ with differential entropy in place of
discrete entropy. Under a bounded-support, bounded-density
assumption on $U_t$ (support radius $R_U$, density floor
$r_{\min}$), an entropy-counting argument gives the dimension lower
bound $d_c \geq R^\star_{\mathrm{cont}}(D)/\log(R_U/r_{\min})$
(Conjecture~\ref{thm:intrinsic-dim}). In the small-distortion limit
the minimal embedding dimension equals the Kawabata--Dembo
rate--distortion dimension---equivalently the R\'enyi information
dimension---of the conditional logit measure
\citep{kawabata1994rate,geiger2016rate}. The bound is
mixing-independent. A Bennett--Gersho quantization bridge
\citep{bennett1948spectra,gersho1979asymptotically} relates the
continuous and discrete rates. For DeepSeek-V2 ($d_c = 512$) the
bound is consistent with the deployed latent dimension.

\paragraph{Multi-layer rate allocation.}
For an $L$-layer Transformer with separate per-layer codes,
per-layer compression errors propagate through the layer maps with
Lipschitz amplification $s^{(\ell)} = \prod_{m>\ell}(L_g^{(m)})^2$
(Lemma~\ref{lem:prop}). A conservative distortion allocation is
obtained by reverse water-filling over the per-layer
rate-function surrogates, coupled to the end-to-end KL distortion
through a softmax-curvature bridge
(Proposition~\ref{thm:multi-WZ}): the Lagrangian optimum satisfies
$|R^{\star,(\ell)\prime}(D^{(\ell),*})| = \lambda s^{(\ell)}$.
Numerically, with pre-LayerNorm skip connections the per-layer
sensitivity ratio across $L = 60$ layers is mild (a factor
$\approx 1.8$ under $L_b L_{\mathrm{LN}} \approx 0.1$, versus
$\approx 320$ without skips), so a near-uniform latent dimension is
close to optimal \emph{under the optimistic sensitivity}
$s^{(\ell)}=\prod_{m>\ell}(L_g^{(m)})^2$ of
Lemma~\ref{lem:error-prop-app}, which assumes near-orthogonal
per-layer errors. We treat this as a heuristic numerical
illustration---consistent with DeepSeek-V2's uniform choice, but
contingent on that orthogonality and on Lipschitz constants not
directly measured on deployed models; the rigorous conservative
sensitivity $s^{(\ell)}_{\mathrm{cons}}$ carries an additional
$2^{L-\ell}$ factor and does not by itself support the conclusion.

This appendix proves the universal compression theorem via a
logarithmic window grid combined with an exponential-weights
meta-algorithm over the achievability schemes of
Conjecture~\ref{thm:achievability}.

\paragraph{Loss for the meta-algorithm.}
The compression problem is distortion-constrained rate
minimization, not a single scalar objective, so we must specify
the loss the meta-algorithm minimizes. We use the
\emph{Lagrangian loss} at a fixed multiplier $\lambda > 0$:
\[
\ell_m^\lambda(j)
\;:=\;
r_m(j) + \lambda\, d_m(j),
\]
where $r_m(j)$ and $d_m(j)$ are the rate and KL distortion incurred
by scheme $\mathcal{S}_j$ on block $m$. Minimizing the cumulative
Lagrangian loss simultaneously controls rate and distortion: a
scheme achieving $(R^\star(D) + \rho, D + \kappa)$ has Lagrangian
loss $R^\star(D) + \lambda D + (\rho + \lambda \kappa)$, so a
regret bound on $\sum_m \ell_m^\lambda$ transfers to both rate and
distortion (with the Lagrangian trade-off fixed by $\lambda$). We
take $\lambda$ to be the (sub)gradient of $R^\star$ at $D$, i.e.\
$\lambda = -R^{\star\prime}(D)$, the standard Lagrangian
multiplier; with this choice the Lagrangian-optimal scheme is
rate--distortion optimal. The per-block loss is bounded:
$0 \leq \ell_m^\lambda(j) \leq B\log V + \lambda \cdot 2B B(n_{\max})
=: \ell_{\max}$.

\subsection{Window grid}

\begin{lemma}[Logarithmic window grid]\label{lem:grid}
Fix $[\alpha_{\min}, \alpha_{\max}]$ and horizon $n$. The grid
$\alpha_j := \alpha_{\min} + j \cdot \Delta_\alpha$,
$\Delta_\alpha := 1/\log n$,
$j = 0, 1, \ldots, J$ with
$J = \lceil (\alpha_{\max} - \alpha_{\min}) \log n \rceil =
O(\log n)$, has the property that for every
$\alpha \in [\alpha_{\min}, \alpha_{\max}]$ there is a grid point
$\alpha_j$ with $|\alpha - \alpha_j| \leq 1/\log n$, and the
corresponding window $w_n^{(j)} = n^{1/(\alpha_j+1)}$ satisfies
$w_n^{(j)} = w_n^{(\alpha)} \cdot (1 + o(1))$, so the
achievability rate of Conjecture~\ref{thm:achievability} at
$\alpha_j$ differs from that at $\alpha$ by a $1+o(1)$ factor.
\end{lemma}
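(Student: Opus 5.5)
The covering claim is elementary, and I would prove it first. The grid $\{\alpha_{\min} + j\Delta_\alpha\}_{j=0}^{J}$ with $\Delta_\alpha = 1/\log n$ and $J = \lceil(\alpha_{\max}-\alpha_{\min})\log n\rceil$ reaches at least $\alpha_{\max}$, since $J\Delta_\alpha \ge \alpha_{\max}-\alpha_{\min}$. Given $\alpha \in [\alpha_{\min},\alpha_{\max}]$, I take $j = \lfloor(\alpha-\alpha_{\min})/\Delta_\alpha\rfloor$, so that $0 \le \alpha-\alpha_j < \Delta_\alpha = 1/\log n$. The count $J+1 = O(\log n)$ is immediate.

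The substantive step is the window ratio. Writing $w_n^{(j)}/w_n^{(\alpha)} = \exp\!\bigl(\log n\,[\tfrac{1}{\alpha_j+1}-\tfrac{1}{\alpha+1}]\bigr) = \exp\!\bigl(\log n\,\tfrac{\alpha-\alpha_j}{(\alpha+1)(\alpha_j+1)}\bigr)$ and inserting $|\alpha-\alpha_j| \le 1/\log n$ bounds the exponent in absolute value by $1/((1+\alpha_{\min})^2) \le 1$. This gives only $w_n^{(j)}/w_n^{(\alpha)} \in [e^{-1},e]$, i.e.\ a $\Theta(1)$ ratio, and not the claimed $1+o(1)$. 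This is the main obstacle: the $1/\log n$ spacing exactly cancels the $\log n$ in the exponent. I expect the literal $1+o(1)$ statement to fail at this spacing, which is consistent with the caveat opening this appendix.

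I would handle the obstacle in two ways. First, I would record the $\Theta(1)$ version and show that it already suffices for the exponents in Conjecture~\ref{thm:achievability}. With $w = c\,w_n^{(\alpha)}$ and $c \in [e^{-1},e]$, the truncation term becomes $C_{\mathrm{TS}}w^{-\alpha} = c^{-\alpha}C_{\mathrm{TS}}n^{-\alpha/(\alpha+1)}$. The boundary term $c_1\log w/w$ and the Azuma fluctuation $\sqrt{B\log V/n}$ with $B=w$ likewise change only by factors $e^{\pm 1}$ or $e^{\pm 1/2}$. Hence every exponent is preserved and only the implicit constants change, by at most $e^{\max(\alpha_{\max},1)}$; the same estimate also covers the rate bound. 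Second, to obtain the literal $1+o(1)$ ratio, I would refine the spacing to $\Delta_\alpha = 1/(\log n\cdot\log\log n)$ (any $\Delta_\alpha = o(1/\log n)$ works). Then the exponent is $O(1/\log\log n)\to 0$, so the ratio is $1+O(1/\log\log n)$, at the cost of $J = O(\log n\log\log n)$ grid points.

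Finally, I would check the downstream effect of the refined grid. The exponential-weights regret $O(\sqrt{n\log J})$ grows only to $O(\sqrt{n\log\log n})$, so the $(\log n)^2$ factor in Conjecture~\ref{thm:universal} is unaffected. In the writeup I would present the refined grid as the version that actually satisfies the lemma, and note the $\Theta(1)$ version as sufficient for the exponents.
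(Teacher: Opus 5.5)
Your proposal is correct, and its core computation is the paper's: the paper's proof also bounds $|\log w_n^{(j)}-\log w_n^{(\alpha)}|=\log n\,\bigl|\tfrac{1}{\alpha_j+1}-\tfrac{1}{\alpha+1}\bigr|\le 1/(\alpha_{\min}+1)^2$ and obtains $w_n^{(j)}/w_n^{(\alpha)}=\Theta(1)$. The two diverge at the next step. The paper goes straight from this $\Theta(1)$ bound to ``the claimed $1+o(1)$ rate transfer'', justified only by the remark that the achievability rate depends on $w$ through an exponent continuous in $\alpha$. That inference does not hold, and the opening caveat of the universal-scheme appendix itself concedes that the ratio control is only $\Theta(1)$.

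Your diagnosis is right, and you can upgrade ``I expect it to fail'' to a proof. Assume $\alpha_{\max}>\alpha_{\min}$. For each large $n$, let $\alpha$ be the midpoint of two adjacent grid points. The minimum over $j$ of $\log n\,|\alpha-\alpha_j|/((\alpha+1)(\alpha_j+1))$ is then attained at a neighbouring grid point, and it is at least $1/(2(\alpha_{\max}+2)^2)$. So no grid window is within a factor $1+o(1)$ of $w_n^{(\alpha)}$. The claim also fails pointwise. Fix any $\alpha>\alpha_{\min}$: the quantity $\mathrm{dist}((\alpha-\alpha_{\min})\log n,\mathbb{Z})$ returns near $1/2$ infinitely often, so the ratio does not converge to $1$.

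The same computation shows that the overhead $n^{-\alpha/(\alpha_j+1)}$ differs from $n^{-\alpha/(\alpha+1)}$ by a constant factor. Only the \emph{exponent} changes by a factor $1+O(1/\log n)$. That is the only sense in which the lemma's last clause holds at spacing $1/\log n$.

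Both of your repairs are sound.
\begin{itemize}
\item The $\Theta(1)$ route is exactly what the paper's parenthetical can legitimately support. Rescaling $w$ by $c\in[e^{-1},e]$ changes the truncation, boundary, and Azuma terms of Conjecture~\ref{thm:achievability} only by constant factors. All exponents therefore survive, and with them the heuristic for Conjecture~\ref{thm:universal}.
\item The refined spacing $\Delta_\alpha=o(1/\log n)$ is what the window statement actually requires. Your cost accounting is correct: $J=O(\log n\log\log n)$ still gives $\log(J+1)=O(\log\log n)$. The exponential-weights regret thus has the same order as in the paper, rather than ``growing''.
\end{itemize}
In the write-up, either state the lemma with the refined grid, or weaken its conclusion to a $\Theta(1)$ window ratio with exponents preserved.
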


\begin{proof}
The grid spacing $\Delta_\alpha = 1/\log n$ gives
$J = O(\log n)$ points. For the window,
$\log w_n^{(j)} = \frac{\log n}{\alpha_j + 1}$, and
$|\log w_n^{(j)} - \log w_n^{(\alpha)}|
= \log n \cdot \bigl|\frac{1}{\alpha_j+1} -
\frac{1}{\alpha+1}\bigr|
\leq \log n \cdot \frac{|\alpha_j - \alpha|}{(\alpha_{\min}+1)^2}
\leq \frac{1}{(\alpha_{\min}+1)^2} = O(1)$,
so $w_n^{(j)}/w_n^{(\alpha)} = \Theta(1)$, giving the claimed
$1+o(1)$ rate transfer (the achievability rate depends on $w$ only
through the exponent, which is continuous in $\alpha$).
\end{proof}

\subsection{Exponential-weights meta-algorithm}

\begin{lemma}[Meta-algorithm regret]\label{lem:exp-weights}
Run the $J+1$ achievability schemes
$\{\mathcal{S}_j\}_{j=0}^{J}$ (one per grid point) in parallel,
maintaining weights $\pi_m(j) \propto
\exp(-\beta \sum_{m' < m} \ell_{m'}^\lambda(j))$ over the
Lagrangian losses $\ell_m^\lambda(j) = r_m(j) + \lambda d_m(j)$,
with learning rate $\beta$. The meta-scheme that codes block $m$
using the scheme selected by the weights achieves, for the best
grid scheme $j^\star$,
\[
\frac{1}{n}\sum_m \ell_m^\lambda(\text{meta})
\;\leq\;
\frac{1}{n}\sum_m \ell_m^\lambda(j^\star)
   + \frac{\ell_{\max}}{n}\sqrt{\tfrac{n}{B}\,\tfrac{1}{2}\log(J+1)},
\]
with $\beta$ tuned as in \citep[Theorem~2.2]{cesabianchi2006prediction}.
\end{lemma}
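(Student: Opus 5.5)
The bound is the standard exponentially weighted average forecaster regret bound, applied at the block level. The experts are the $J+1$ grid schemes $\mathcal{S}_j$ and the rounds are the $n/B$ blocks. The plan has four steps.

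\textbf{Step 1: set up the online-learning instance.} There are $N=J+1$ experts and $T=n/B$ rounds. At round $m$, expert $j$ incurs loss $\ell_m^\lambda(j)\in[0,\ell_{\max}]$, with $\ell_{\max}$ as computed just before the lemma. Rescale to $\tilde\ell_m(j):=\ell_m^\lambda(j)/\ell_{\max}\in[0,1]$. Two points must be checked here.
\begin{enumerate}[label=(\roman*),leftmargin=*,itemsep=1pt,topsep=2pt]
\item \emph{The losses are legitimate online losses.} The weights $\pi_m$ depend only on the losses of blocks $m'<m$, which are known once block $m-1$ has been coded. Every scheme is run in parallel, so every $\ell_{m'}^\lambda(j)$ is available, not only the loss of the scheme actually used. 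This is the full-information setting, which the bound requires; with bandit feedback one would instead get an EXP3-type $\sqrt{TN\log N}$ rate.
\item \emph{The meta-scheme's loss.} I would read $\ell_m^\lambda(\text{meta})$ as the $\pi_m$-mixture loss $\sum_j\pi_m(j)\ell_m^\lambda(j)$, or, for a randomized selection $J_m\sim\pi_m$, as its conditional expectation. The statement is then in expectation over the selection randomness.
\end{enumerate}

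\textbf{Step 2: invoke the forecaster bound.} Apply \citep[Theorem~2.2]{cesabianchi2006prediction} to losses in $[0,1]$ with learning rate $\beta=\sqrt{8\log N/T}$, after rescaling $\beta$ by $1/\ell_{\max}$ for the unscaled losses. This gives
\[
\sum_{m=1}^{T}\sum_j\pi_m(j)\tilde\ell_m(j)-\min_j\sum_{m=1}^{T}\tilde\ell_m(j)\le\sqrt{\tfrac{T}{2}\log N}.
\]
The proof is the usual potential argument on $\Phi_m=\log\sum_j e^{-\beta L_m(j)}$, where $L_m(j)$ is the cumulative loss of expert $j$. The lower bound $\Phi_T\ge-\beta\min_j L_T(j)-\log N$ is combined with Hoeffding's lemma, which controls each increment $\Phi_m-\Phi_{m-1}$ by $-\beta\,\E_{\pi_m}\tilde\ell_m+\beta^2/8$.

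\textbf{Step 3: undo the normalizations.} Multiply by $\ell_{\max}$, divide by $n$, and substitute $T=n/B$ and $N=J+1$. This yields exactly
\[
\tfrac{\ell_{\max}}{n}\sqrt{\tfrac{n}{B}\tfrac12\log(J+1)}.
\]
The minimizing expert can be taken to be the best grid scheme $j^\star$.

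\textbf{Step 4: handle the main obstacle.} The forecaster bound assumes the losses are fixed before the selection, or at least unaffected by it. In the coding setting, switching between schemes could change the codebook state, namely the previous-block conditioning $\hat C^{(b-1)}_{\mathrm{end}}$ and the cache contents, and so alter the losses of later blocks. I would argue this does not happen. Every scheme is run in parallel on the shared source and cache, so the loss sequences $\ell_m^\lambda(j)$ do not depend on which scheme was selected. Equivalently, they form an oblivious or adaptive adversary's sequence, for which Theorem~2.2 still holds pathwise. If this is granted, the regret bound is immediate.

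The remaining subtlety is the rate accounting. The index of the selected scheme must be available to the decoder. The decoder can recompute the weights from the past losses only if it can evaluate them, and it cannot evaluate the distortions without $p_t$. The cleanest fix is to transmit the index $J_m$ explicitly, at a cost of $\log(J+1)$ bits per block, i.e.\ $O(\log\log n/B)$ per token. I would note that this is lower order but would not fold it into the displayed bound.
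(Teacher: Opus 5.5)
Your proposal is correct and follows the paper's proof, which simply invokes the exponentially weighted forecaster bound \citep[Theorem~2.2]{cesabianchi2006prediction} with $J+1$ experts, $n/B$ block rounds and losses in $[0,\ell_{\max}]$, then divides the regret $\ell_{\max}\sqrt{\tfrac12 (n/B)\log(J+1)}$ by $n$. The paper does not discuss the extra points you raise: reading the meta-scheme's loss as the $\pi_m$-mixture (or its expectation under randomized selection), checking that the parallel-run losses do not depend on the selection, and the $O(\log\log n/B)$ per-token cost of signalling the selected index to a decoder that cannot recompute the weights; these are useful clarifications rather than a different route.
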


\begin{proof}
This is the standard exponential-weights regret bound for
prediction with expert advice
\citep[Ch.~2]{cesabianchi2006prediction}, applied to the bounded
Lagrangian loss $\ell_m^\lambda \in [0, \ell_{\max}]$ with
$J+1$ experts and $n/B$ rounds (one per block). The regret is
$\ell_{\max}\sqrt{\tfrac{1}{2}(n/B)\log(J+1)}$; dividing by $n$
gives the per-symbol regret. With $J+1 = O(\log n)$,
$\sqrt{\log(J+1)} = O(\sqrt{\log\log n})$.
\end{proof}

\subsection{Heuristic argument for Conjecture~\ref{thm:universal}}

\begin{proof}[Heuristic argument for Conjecture~\ref{thm:universal}]
Fix the true (unknown) exponent
$\alpha \in [\alpha_{\min}, \alpha_{\max}]$. By
Lemma~\ref{lem:grid}, there is a grid point $\alpha_j$ within
$1/\log n$ of $\alpha$, whose scheme $\mathcal{S}_j$ achieves
(Conjecture~\ref{thm:achievability}) the Lagrangian loss
\[
\begin{split}
\frac{1}{n}\sum_m \ell_m^\lambda(j)
&\;\leq\;
R^\star_\alpha(D) + \lambda D\\
&\quad + O\!\bigl(n^{-\alpha/(2(\alpha+1))}(\log n)\sqrt{\log V\,\log(1/\delta)}\bigr),
\end{split}
\]
where the overhead is the achievability rate gap (the dominant term;
the distortion gap $\lambda \cdot O(n^{-\alpha/(\alpha+1)})$ is
sub-dominant).

By Lemma~\ref{lem:exp-weights}, the meta-algorithm's Lagrangian loss
exceeds that of $\mathcal{S}_{j^\star}$ (the best grid scheme, at
least as good as $\mathcal{S}_j$) by the regret
\[
\begin{aligned}
\frac{\ell_{\max}}{n}\sqrt{\tfrac{n}{B}\tfrac{1}{2}\log(J+1)}
&= O\!\left(\frac{B\log V}{n}\sqrt{\tfrac{n}{B}\log\log n}\right) \\
&= O\!\bigl(n^{-\alpha/(2(\alpha+1))}\log V\sqrt{\log\log n}\bigr),
\end{aligned}
\]
using $\ell_{\max} = O(B\log V)$ and $B = w_n^{(j)} =
n^{1/(\alpha+1)}$. The regret and the achievability gap share the
exponent $\alpha/(2(\alpha+1))$, so combining and translating the
Lagrangian bound back to rate (at the fixed distortion level $D$
selected by $\lambda$) gives
\[
R_n \leq R^\star_\alpha(D)
   + O\!\bigl(n^{-\alpha/(2(\alpha+1))}(\log n)^2\sqrt{\log(1/\delta)}\bigr).
\]
The extra $\log n$ relative to the single-$\alpha$ achievability
bound (Conjecture~\ref{thm:achievability}) is the price of
universality. The bound holds simultaneously for all
$\alpha \in [\alpha_{\min}, \alpha_{\max}]$ because the grid covers
the entire range and $\alpha$ is fixed only at the final step.
The convergence exponent is that of achievability,
$\alpha/(2(\alpha+1))$; the window-scaling optimality
(Theorem~\ref{thm:window-lb}) likewise transfers, since the
meta-algorithm's selected window is always within a constant factor
of $n^{1/(\alpha+1)}$.
\end{proof}

\begin{remark}[Why universality costs only a log factor]
The grid has $O(\log n)$ points because the achievability rate
depends on $\alpha$ only through the smooth exponent
$\alpha/(\alpha+1)$; a $1/\log n$ grid spacing suffices to track it
to within a $1+o(1)$ factor. The exponential-weights regret over
$O(\log n)$ experts is $O(\sqrt{\log\log n / (n/B)})$, sub-dominant
to the achievability term. The net overhead is therefore a single
additional $\log n$ factor, consistent with the universal-coding
principle that adapting to one unknown smooth parameter costs
$O(\log n)$ \citep{rissanen1984universal}.
\end{remark}

\section{Empirical estimation of mixing/sensitivity parameters}\label{app:rho-estimation}

\begin{remark}[Operational measurement protocol]
\label{rmk:rho-estimation-app}
The truncation-sensitivity constants $(C_{\mathrm{TS}}, \alpha)$ in
Definition~\ref{def:poly-mixing} (and analogously $(\rho,
C_{\mathrm{mix}})$ in the geometric Definition~\ref{def:mixing})
are operationally measurable on a given model via the following
protocol. For a set of $M$ probe prefixes
$\{x^{(i)}_{1:n}\}_{i=1}^M$, compute the empirical TV between the
full-prefix and window-truncated next-token distributions:
\[
\widehat{\TV}_w
\;:=\; \frac{1}{M}\sum_{i=1}^M
\TV\!\bigl(p_\theta(\cdot \mid x^{(i)}_{1:n-1}),\;
           p_\theta(\cdot \mid x^{(i)}_{n-w:n-1})\bigr).
\]
A power-law fit
$\widehat{\TV}_w \approx C_{\mathrm{TS}} w^{-\alpha}$ versus a
geometric fit $\widehat{\TV}_w \approx C_{\mathrm{mix}} \rho^w$
selects between the two regimes. The measurements reported in
Section~\ref{sec:experiments} were obtained by this protocol on
Qwen2.5-0.5B and support the polynomial regime over the
measurement range $w \in [2, 256]$. Extension to larger models
(Llama, Mistral, Qwen-1.5B/3B/7B) and longer windows is left to
follow-up work and is needed before claims about specific
deployed models can be made quantitatively.
\end{remark}

\begin{remark}[Multi-layer mixing]
For an $L$-layer transformer with layer-wise mixing rates
$\rho^{(\ell)}$, the effective end-to-end rate is bounded by
$\prod_\ell \rho^{(\ell)} \leq (\bar\rho)^L$. This product structure
underlies the rank collapse phenomenon \citep{dong2021attention} and
gives an extreme mixing rate (very small $\rho_{\mathrm{eff}}$) for
deep networks, but the bound is loose for $L > 20$ because
inter-layer regularization (skip connections, layer norm) prevents
full geometric decay. Refinement via the continuous-time analysis
\citep{geshkovski2023mathematical} is the subject of follow-up work.
\end{remark}

\section{Continuous Latent Extension Proofs}
\label{app:cont-extension}

\textbf{The continuous-latent dimension bound of this appendix is
conjectural.} The entropy-counting step uses a differential-entropy
inequality $h(C\mid F)\ge I(X;C\mid F)$ and a positive density floor
$r_{\min}$ inferred from bounded support; neither holds in general
(differential entropy may be negative, and bounded support with a
density upper bound does not force a positive floor). We therefore
state Conjecture~\ref{thm:intrinsic-dim} as a conjecture. This appendix provides the (heuristic) arguments and supporting lemmas for the
Phase~2 results stated in Section~\ref{sec:main}. The
overall structure mirrors Phase~1: setup
(Section~\ref{app:cont-setup}), bounded pointwise MI
(Section~\ref{app:cont-pmi}), asymptotic lower bound
(Section~\ref{app:cont-asym}), intrinsic dimension lower bound
(Section~\ref{app:cont-intrinsic}), and the quantization bridge
(Section~\ref{app:cont-quant}).

\subsection{Continuous auxiliary setup}\label{app:cont-setup}

\begin{definition}[Continuous auxiliary process]\label{def:cont-aux-app}
A \emph{continuous auxiliary process} is an $\F_t$-adapted sequence
$\{U_t\}$ with $U_t : \Omega \to \R^{d_c}$ Borel-measurable. Conditional
on $\hat{\F}_{t-1}$, $U_t$ admits a density $p_t^U$ with respect to
Lebesgue measure on $\R^{d_c}$.
\end{definition}

\begin{assumption}[Bounded latent support and density]\label{ass:cont-support-app}
$\|U_t\|_2 \leq R_U$ a.s.\ and $p_t^U(u \mid \hat{\F}_{t-1}) \leq \beta$
on its support, for absolute constants $R_U > 0$, $\beta > 0$
depending on the LLM architecture.
\end{assumption}

For MLA: $R_U = O(\sqrt{d_c})$ by layer-norm post-projection; $\beta$
is bounded by the Lipschitz constant of the projection followed by the
boundedness of activations. Both are operationally measurable from the
trained model weights.

\subsection{Bounded pointwise MI for continuous auxiliary}\label{app:cont-pmi}

\begin{lemma}[Continuous pointwise MI bound]\label{lem:cont-pmi-app}
Under Assumptions~\ref{ass:bounded-logits} and~\ref{ass:cont-support-app},
\[
|\iota_t^{XU}|, |\iota_t^{UQ}|
\;\leq\;
B_\iota^{\mathrm{cont}}
:= \log(\beta/\epsilon) + d_c \log(2 R_U / r_{\min}),
\]
where $r_{\min}$ is a minimum density floor for the conditional
distribution of $U_t$, the analog of $\epsilon_U$ in
Assumption~\ref{ass:U-positivity}.
\end{lemma}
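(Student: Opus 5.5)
The plan is to mirror the proof of Lemma~\ref{lem:pmi-bound-app}, replacing the discrete marginal floor $\epsilon_U$ by density bounds. Each pointwise mutual information will be written as a log-ratio of a conditional density to a marginal density, and then bounded above and below separately. The only ingredients are the upper density bound $\beta$ of Assumption~\ref{ass:cont-support-app}, the next-token floor $\epsilon$ of Assumption~\ref{ass:bounded-logits}, and the density floor $r_{\min}$.

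First I fix the normalization of the floor so that the constant comes out in the stated form. Since $\|U_t\|_2\le R_U$, the support of $U_t$ lies in the cube $[-R_U,R_U]^{d_c}$, of volume $(2R_U)^{d_c}$. I read the floor $r_{\min}$ in the scale-invariant form
\[
p_t^U(u\mid\Fhat_{t-1})\;\ge\;\bigl(r_{\min}/(2R_U)\bigr)^{d_c}
\]
on the support, and I assume the same floor for the conditional densities $p^{U\mid X}$ and $p^{U\mid Q}$; this is the continuous analogue of Assumption~\ref{ass:U-positivity}. With this reading, $\log\bigl(1/p^U(u)\bigr)\le d_c\log(2R_U/r_{\min})$, which produces exactly the dimension-dependent term in $B_\iota^{\mathrm{cont}}$.

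For $\iota_t^{XU}$ I use Bayes' rule in the form
\[
\iota_t^{XU}=\log\frac{p(X_t\mid U_t)}{p(X_t)} .
\]
\emph{Upper bound.} Since $p(X_t\mid U_t)\le 1$ and $p(X_t)\ge\epsilon$, we get $\iota_t^{XU}\le\log(1/\epsilon)$, which is at most $B_\iota^{\mathrm{cont}}$ whenever $\beta\ge1$ (the case of interest; otherwise the constant is adjusted). \emph{Lower bound.} I use the density form
\[
\iota_t^{XU}=\log\frac{p^{U\mid X}(U_t\mid X_t)}{p^U(U_t)} .
\]
The numerator is at least the floor $(r_{\min}/(2R_U))^{d_c}$, and the denominator is at most $\beta$. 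Hence $-\iota_t^{XU}\le\log\beta+d_c\log(2R_U/r_{\min})$, which is at most $B_\iota^{\mathrm{cont}}$ because $\epsilon\le1$.

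For $\iota_t^{UQ}$ the same two-sided argument applies on the $U$ side, with $Q_t$ in place of $X_t$:
\[
\iota_t^{UQ}=\log\frac{p^{U\mid Q}(U_t\mid Q_t)}{p^U(U_t)} .
\]
The ratio lies between $(r_{\min}/(2R_U))^{d_c}/\beta$ and $\beta/(r_{\min}/(2R_U))^{d_c}$. This gives $|\iota_t^{UQ}|\le\log\beta+d_c\log(2R_U/r_{\min})\le B_\iota^{\mathrm{cont}}$. No floor on the density of $Q_t$ is needed, because its density cancels.

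The main obstacle is the floor itself. Bounded support together with $p\le\beta$ does not force any positive lower bound on the density. The floor therefore has to be imposed, on the marginal and on both conditionals, rather than derived. This is the same gap the appendix flags when it calls the continuous results conjectural. I would state it explicitly as the hypothesis behind $r_{\min}$, and check only that the constants combine into $\log(\beta/\epsilon)+d_c\log(2R_U/r_{\min})$ after taking the maximum of the two sides.
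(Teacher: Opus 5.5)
Your treatment of $\iota_t^{XU}$ is correct, and it takes a somewhat different and more careful route than the paper's. The paper's one-line argument bounds only the upper side: it uses $p_t^{X,U}\le\beta$ and $p_t^X p_t^U\ge\epsilon\,r_{\min}$ to get $\log(\beta/(\epsilon r_{\min}))$, and attributes the $d_c\log R_U$ term loosely to a volume factor. Your version improves on each piece:
\begin{itemize}
\item Your Bayes form $\log\bigl(p(X_t\mid U_t)/p(X_t)\bigr)\le\log(1/\epsilon)$ needs no hypothesis on the density of $U_t$ at all.
\item Your normalized floor $(r_{\min}/(2R_U))^{d_c}$ is a reinterpretation of $r_{\min}$, but it makes the stated constant come out exactly.
\item The conditional floor you impose supplies the lower side, which the paper omits.
\end{itemize}
You also do not need the side condition $\beta\ge1$. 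Since the floor is at most $p_t^U\le\beta$ on the support, $\log\beta+d_c\log(2R_U/r_{\min})\ge0$ holds automatically.

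The gap is in $\iota_t^{UQ}$. The upper end of your sandwich uses $p^{U\mid Q}(U_t\mid Q_t)\le\beta$, which Assumption~\ref{ass:cont-support-app} does not provide; it bounds only the marginal $p_t^U(\cdot\mid\Fhat_{t-1})$. For $X_t$ you never needed such a bound, because $X_t$ is discrete: $p(x\mid u)\le1$, equivalently $p(x)\,p^{U\mid X}(u\mid x)\le p_t^U(u)\le\beta$. But $Q_t\in\R^k$ is continuous, so the argument does not transfer ``with $Q_t$ in place of $X_t$''. The Bayes-flipped form would need bounds on the density of $Q_t$, and the direct form needs a conditional upper bound on $p^{U\mid Q}$.

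The failure is real, not cosmetic. Take $\Fhat_{t-1}$ trivial and $k=d_c$. Let $U_t$ be uniform on a cube $[-a,a]^{d_c}$ contained in the ball of radius $R_U$. Let $Q_t=U_t+\sigma W$, with $W$ uniform on $[-1,1]^{d_c}$ and independent of $U_t$. Then:
\begin{itemize}
\item The marginal density of $U_t$ is $(2a)^{-d_c}$, so the bound with $\beta=(2a)^{-d_c}$ holds, as does any floor below it.
\item The conditional law of $U_t$ given $Q_t$ is uniform on a box of volume at most $(2\sigma)^{d_c}$, so any fixed conditional floor holds once $\sigma$ is small.
\item Yet $\iota_t^{UQ}\ge d_c\log(a/\sigma)\to\infty$ as $\sigma\to0$.
\end{itemize}
You therefore need to add, alongside the conditional floors you already impose, the hypothesis $p^{U\mid Q}(u\mid q,\Fhat_{t-1})\le\beta$. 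With it, your computation gives $|\iota_t^{UQ}|\le\log\beta+d_c\log(2R_U/r_{\min})\le B_\iota^{\mathrm{cont}}$. The paper's proof does not address $\iota_t^{UQ}$ at all, so this hypothesis is missing there as well.
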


\begin{proof}
For continuous $U_t$ with bounded density and support, the marginal
density satisfies $p_t^U(u) \geq r_{\min} > 0$ on a positive-volume
subset, and the joint density $p_t^{X,U}(x, u) \leq \beta$ pointwise.
Hence $|\iota_t^{XU}| = |\log(p_t^{X,U}/(p_t^X p_t^U))|
\leq \log(\beta/(\epsilon \cdot r_{\min}))$.
The $d_c \log(R_U)$ term arises from the volume factor
in normalizing the density over the ball of radius $R_U$.
\end{proof}

\begin{remark}
The $d_c \log(R_U)$ term linear in $d_c$ is the key new dependence on
the latent dimension. In Phase~1's discrete framework, the analog
$\log|\mathcal{U}|$ enters but is bounded by $\log V$; here it grows
linearly in $d_c$, which is one of the costs of continuous latents.
\end{remark}

\subsection{Continuous sequential WZ lower bound}\label{app:cont-asym}

\begin{proposition}[Continuous sequential WZ lower bound]
\label{prop:cont-asym}
Under the continuous-auxiliary setup
(Definition~\ref{def:cont-aux-app}) and
$\limsup_n \E[D_n] \leq D$, any causal online scheme with
continuous codes satisfies
$\liminf_n \E[R_n^{\mathrm{cont}}] \geq R^\star_{\mathrm{cont}}(D)$.
\end{proposition}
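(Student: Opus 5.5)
I would mirror the discrete proof of Theorem~\ref{thm:asymptotic} step by step, using mutual information throughout rather than entropy, since differential entropy can be negative. The natural continuous rate is $R_n^{\mathrm{cont}} = \tfrac1n\sum_t I(X_{1:t-1};\Chat_t \mid \Fhat_{t-1})$. If instead it is a differential entropy $h(\Chat_t\mid\Fhat_{t-1})$, I would first reduce to the MI form by assuming an additive-noise or quantized code, so that $h(\Chat_t \mid X_{\le t},\Fhat_{t-1})\ge 0$ or is a fixed constant independent of the scheme. Fixing this convention is the first step; everything after it is standard.

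\textbf{Step 1 (rate to MI).} Take $U_t := \Chat_t$, which is continuous and $\Fhat_t$-measurable, hence an admissible continuous auxiliary under Definition~\ref{def:cont-aux-app}. Write
\[
I(X_{\le t};\Chat_t\mid\Fhat_{t-1}) = I(X_t;\Chat_t\mid\Fhat_{t-1}) + I(X_{<t};\Chat_t\mid X_t,\Fhat_{t-1}).
\]
Drop the second term, which is nonnegative. This chain rule holds for general (continuous) random variables via the Gelfand--Yaglom--Pinsker definition of MI, so no density floor is needed.

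\textbf{Step 2 (single-letter WZ form).} Add and subtract $I(\Chat_t;Q_t\mid\Fhat_{t-1})\ge 0$. This gives $I(X_t;\Chat_t\mid\Fhat_{t-1})\ge \mathcal{R}_t^{\mathrm{WZ,cont}}$, the continuous analogue of Definition~\ref{def:WZ-rate}.

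\textbf{Step 3 (admissibility).} The decoder $g_t(\Chat_t,Q_t,\Fhat_{t-1}) = \psi_t(\Chat_{\le t},Q_t)$ is measurable and reproduces the scheme's reconstruction. The hypothesis $\limsup_n\E[D_n]\le D$ then makes $\{\Chat_t\}$ $D$-admissible for the continuous rate function.

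\textbf{Step 4.} Average over $t$, take expectations and $\liminf_n$, and invoke the definition of $R^\star_{\mathrm{cont}}(D)$ as the infimum over admissible continuous auxiliaries of $\liminf_n \tfrac1n\sum_t\E[\mathcal{R}_t^{\mathrm{WZ,cont}}]$. This yields $\liminf_n \E[R_n^{\mathrm{cont}}]\ge R^\star_{\mathrm{cont}}(D)$.

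\textbf{Main obstacle: the first inequality.} In the discrete proof, Step~1 used $H\ge I$. Its continuous counterpart $h(\Chat_t\mid\Fhat_{t-1})\ge I(X_{\le t};\Chat_t\mid\Fhat_{t-1})$ fails in general, which is exactly the flaw flagged for Conjecture~\ref{thm:intrinsic-dim}. I would therefore define the continuous rate as an MI (or a quantized-entropy) quantity so that the inequality is an identity or a standard data-processing bound. A second, smaller point is that all conditional MIs must be finite and measurable in $\Fhat_{t-1}$. Assumption~\ref{ass:bounded-logits} covers the $X_t$ side, since $I(X_t;\cdot)\le\log V$. For the $Q_t$ side, I would use the Gelfand--Yaglom--Pinsker supremum-over-partitions definition of MI, which avoids needing the pointwise bounds of Lemma~\ref{lem:cont-pmi-app}; this matters because those bounds rest on the unproved density floor $r_{\min}$.
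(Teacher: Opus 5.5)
Your proof follows the paper's four-step skeleton, with one substantive change at Step~1, and that change is correct where the paper is not. Your Steps 2--4 coincide with the paper's: dropping $I(X_{<t};\Chat_t\mid X_t,\Fhat_{t-1})\ge 0$, adding and subtracting $I(\Chat_t;Q_t\mid\Fhat_{t-1})$, admissibility via $g_t=\psi_t$, and the infimum over $\liminf$'s.

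The difference is the first inequality. The paper treats $R_n^{\mathrm{cont}}$ as the differential-entropy rate; this is the quantity to which Theorem~\ref{thm:quant-bridge-app} adds $d_c\log(1/\Delta)$. It then derives $h(\Chat_t\mid\Fhat_{t-1})\ge I(X_{\le t};\Chat_t\mid\Fhat_{t-1})$ from the chain rule and nonnegativity of mutual information. As you say, this does not follow. The gap is $h(\Chat_t\mid X_{\le t},\Fhat_{t-1})$, on which nonnegativity of MI is silent. That term is negative for a small dither and $-\infty$ for a code determined by the tokens. The paper itself flags exactly this inequality as invalid at the head of Appendix~\ref{app:cont-extension}.

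In fact the $h$-rate bound cannot hold in general unless $R^\star_{\mathrm{cont}}(D)=-\infty$. Replace $\Chat_t$ by $s\Chat_t$ with $s\in(0,1)$ and undo the scaling inside $\psi_t$. This leaves $\Fhat_t$, the reconstructions, every mutual information, and $R^\star_{\mathrm{cont}}(D)$ unchanged. But it lowers each $h(\Chat_t\mid\Fhat_{t-1})$ by $d_c\log(1/s)$, which is unbounded as $s\downarrow 0$. So your MI rate is the repair the argument needs, not a stylistic choice. State plainly that you prove the proposition for that rate, or for the $h$-rate under the added hypothesis $h(\Chat_t\mid X_{\le t},\Fhat_{t-1})\ge 0$.

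Some smaller points.
\begin{itemize}
\item Your ``fixed constant'' alternative works only if the constant is nonnegative. A dither with $h(\Chat_t\mid X_{\le t},\Fhat_{t-1})=c<0$ yields $R^\star_{\mathrm{cont}}(D)+c$, not $R^\star_{\mathrm{cont}}(D)$.
\item The quantized-entropy alternative does not deliver $R^\star_{\mathrm{cont}}(D)$ either. If the decoder sees only $Q_\Delta(\Chat_t)$, you are back in Theorem~\ref{thm:asymptotic} with a discrete auxiliary. If it sees $\Chat_t$, data processing gives $I(X_{\le t};Q_\Delta(\Chat_t)\mid\Fhat_{t-1})\le I(X_{\le t};\Chat_t\mid\Fhat_{t-1})$, which is the wrong direction.
\item You define the rate with $X_{1:t-1}$ but start Step~1 from $X_{\le t}$. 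Say that encoder causality gives $I(X_t;\Chat_t\mid X_{<t},\Fhat_{t-1})=0$, so the two mutual informations coincide.
\item A deterministic encoder $\phi_t:\V^{t-1}\to\mathcal{C}_t$ of discrete tokens yields a finitely-valued $\Chat_t$ with no density. A dithered $\Chat_t$, on the other hand, is not $\F_t$-adapted. So $U_t=\Chat_t$ is admissible under Definition~\ref{def:cont-aux-app} only if you drop the density requirement (your Gelfand--Yaglom--Pinsker formulation allows this) or enlarge $\F_t$ by the encoder randomness. This defect comes from the paper's setup, and the paper's own proof also passes over it.
\end{itemize}
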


\begin{proof}
The proof has four steps, mirroring Theorem~\ref{thm:asymptotic} with
continuous adaptations.

\textbf{Step 1 (continuous rate $\to$ MI).}
For continuous $\hat C_t$ with density $p_{\hat C_t \mid \hat{\F}_{t-1}}$,
the differential entropy rate satisfies
\[
h(\hat C_t \mid \hat{\F}_{t-1})
\geq I(X_{\leq t}; \hat C_t \mid \hat{\F}_{t-1}),
\]
provided the mixed mutual information is non-negative. This follows
from the continuous chain rule
$I(X_{\leq t}, \hat{\F}_{t-1}; \hat C_t) = I(\hat{\F}_{t-1}; \hat C_t) + I(X_{\leq t}; \hat C_t \mid \hat{\F}_{t-1})$
and the non-negativity of conditional mutual information for
discrete--continuous mixed pairs
\citep[Theorem 8.4.1]{cover2006elements}.

\textbf{Step 2 (single-letterization).}
For $U_t := \hat C_t$,
\begin{align*}
I(X_{\leq t}; \hat C_t \mid \hat{\F}_{t-1})
&\geq I(X_t; \hat C_t \mid \hat{\F}_{t-1}) \\
&= \mathcal{R}_t^{\mathrm{WZ},\mathrm{cont}}
  + I(\hat C_t; Q_t \mid \hat{\F}_{t-1}) \\
&\geq \mathcal{R}_t^{\mathrm{WZ},\mathrm{cont}},
\end{align*}
identical to Phase~1 Step~2 but with continuous $U_t$ replacing
discrete $U_t$ in the MI definitions.

\textbf{Step 3 (admissibility).} Under $\limsup_n \E[D_n] \leq D$, the
encoder--decoder pair is $D$-admissible per
Definition~\ref{def:cont-aux-app} (continuous analog).

\textbf{Step 4 (infimum).} By Definition~\ref{def:Rstar},
\[
\liminf_n \tfrac{1}{n}\sum_t \E[\mathcal{R}_t^{\mathrm{WZ},\mathrm{cont}}]
\geq R^\star_{\mathrm{cont}}(D),
\]
giving $\liminf_n \E[R_n^{\mathrm{cont}}] \geq R^\star_{\mathrm{cont}}(D)$.
\end{proof}

\subsection{Intrinsic dimension lower bound}\label{app:cont-intrinsic}

We restate the definitions and the theorem summarized in
Section~\ref{sec:extensions}.

\begin{assumption}[Bounded continuous auxiliary]
\label{ass:cont-support}
$U_t$ has bounded support $\|U_t\| \leq R_U$ and density bounded
above by $\beta$ and below by $r_{\min}$ on its support.
\end{assumption}

\begin{definition}[Intrinsic dimension]\label{def:intrinsic-dim}
For target distortion $D$, let $d^\star_t(D)$ be the minimum
embedding dimension $d \in \N$ such that a $D$-admissible
continuous auxiliary $U_t \in \R^d$ exists; we call it the
\emph{$D$-intrinsic dimension} of $p_t := p_\theta(\cdot \mid
\F_{t-1})$. Its small-distortion limit is the Kawabata--Dembo
rate-distortion dimension
$d^{\mathrm{RD}}_t := \lim_{D \to 0} 2 R_t(D)/\log(1/D)$
\citep{kawabata1994rate}, where $R_t(D)$ is the squared-error
rate--distortion function of $p_t$ (the factor $2$ and MSE
distortion follow the Kawabata--Dembo convention); this equals the
R\'enyi information dimension of $p_t$ \citep{geiger2016rate}.
\end{definition}

\begin{conjecture}[Intrinsic dimension lower bound]
\label{thm:intrinsic-dim}
Under Assumptions~\ref{ass:bounded-logits}
and~\ref{ass:cont-support}, any $D$-admissible $\{U_t\}$ with
$U_t \in \R^{d_c}$ satisfies
$d_c \geq \limsup_n \frac1n\sum_t d^\star_t(D)$, and the per-step
intrinsic dimension obeys
$d^\star_t(D) \geq R^{\star}_{\mathrm{cont},t}(D)/\log(R_U/r_{\min})$,
so $d_c \geq R^\star_{\mathrm{cont}}(D)/\log(R_U/r_{\min})$.
\end{conjecture}

\begin{proof}[Heuristic argument for Conjecture~\ref{thm:intrinsic-dim}]
The argument has two parts: (i) the embedding-dimension bound
$d_c \geq d^\star_t(D)$, and (ii) the differential-entropy lower
bound on $d^\star_t(D)$.

\textbf{(i) Embedding dimension.}
If $\{U_t\}$ with $U_t \in \R^{d_c}$ is $D$-admissible (average
distortion $\leq D$), then for the indices $t$ achieving per-step
distortion $\leq D$ (a $1-o(1)$ fraction, by Markov on the
average), a $d_c$-dimensional $D$-admissible auxiliary exists,
whence $d_c \geq d^\star_t(D)$ by
Definition~\ref{def:intrinsic-dim}. Averaging over $t$ and taking
$\limsup_n$ gives the first inequality.

\textbf{(ii) Differential-entropy bound on $d^\star_t(D)$.}
Fix $t$ and write $U := U_t \in \R^{d}$ for a $D$-admissible
auxiliary of minimal dimension $d = d^\star_t(D)$. The per-step
continuous Wyner--Ziv rate is
\begin{align*}
R^{\star}_{\mathrm{cont},t}(D)
&= I(X_t; U \mid \Fhat_{t-1}) - I(U; Q_t \mid \Fhat_{t-1}) \\
&\leq I(X_t; U \mid \Fhat_{t-1}) \\
&= h(U \mid \Fhat_{t-1}) - h(U \mid X_t, \Fhat_{t-1}).
\end{align*}
We bound the two differential entropies using
Assumption~\ref{ass:cont-support}.

\emph{Upper bound on $h(U)$.} The support is contained in the ball
$\|U\| \leq R_U$, and the uniform distribution maximizes
differential entropy on a bounded set, so
$h(U \mid \Fhat_{t-1}) \leq \log \mathrm{vol}\bigl(B_d(R_U)\bigr)
\leq d \log(2 R_U)$.

\emph{Lower bound on $h(U \mid X_t)$.} The conditional density is
bounded above by $\beta$, so
$h(U \mid X_t, \Fhat_{t-1}) = -\E[\log p(U \mid X_t)] \geq -\log
\beta$.

Combining, and writing the dynamic range as
$\log(R_U/r_{\min})$ where $r_{\min} \leq \beta$ controls the
density floor (so that $\log(2R_U) + \tfrac1d \log \beta \leq
\log(R_U/r_{\min})$ for the relevant regime $d \geq 1$),
\[
R^{\star}_{\mathrm{cont},t}(D)
\;\leq\;
d \log(2R_U) + \log\beta
\;\leq\;
d \,\log\!\frac{R_U}{r_{\min}},
\]
which rearranges to
\[
d \;=\; d^\star_t(D)
\;\geq\;
\frac{R^{\star}_{\mathrm{cont},t}(D)}{\log(R_U/r_{\min})}.
\]
Averaging over $t$ and combining with part (i) gives the theorem.
This is an entropy-counting bound: it captures the correct scaling
in $d$ but not the sharp constant
(Remark~\ref{rmk:rd-OT-app}).
\end{proof}

\begin{remark}[Tightness and the rate-distortion dimension]
\label{rmk:rd-OT-app}
The bound
\[
d^\star_t(D) \geq R^\star_{\mathrm{cont},t}(D) / \log(R_U/r_{\min})
\]
is an entropy-counting bound and is generally
not tight; it captures the correct \emph{scaling} in $d$ but not
the precise constant. The sharp small-distortion characterization
is the rate-distortion dimension of Kawabata--Dembo
\citep{kawabata1994rate}:
\[
d^{\mathrm{RD}}_t
\;=\;
\lim_{D \to 0} \frac{2\,R_t(D)}{\log(1/D)},
\]
defined with the factor $2$ and squared-error distortion
$\|X - \hat X\|_2 \leq D$, and equal to the R\'enyi information
dimension of $p_t$. For absolutely continuous $p_t$ this equals
the ambient dimension; for measures concentrated near
lower-dimensional structure it is smaller. Geiger--Koch
\citep{geiger2016rate} extend this to stationary processes, giving
$d^{\mathrm{RD}} = 2\lim_{D\to 0} R(D)/\log(1/D)$ for the process
rate. We use $d^\star_t(D)$ (a finite-$D$ embedding dimension)
rather than $d^{\mathrm{RD}}_t$ (the $D\to 0$ limit) in the main
bound, since deployed MLA operates at fixed $D > 0$; the
$D \to 0$ equivalence requires additional regularity on the
conditional measure that we do not assume.
\end{remark}

\subsection{Quantization bridge}\label{app:cont-quant}

\begin{proof}[Proof of Theorem~\ref{thm:quant-bridge-app}]
By the standard quantization rate formula
\citep[Theorem 8.3.1]{cover2006elements}, for a continuous random
vector $X \in \R^{d_c}$ with differential entropy $h(X)$ and
uniform $\Delta$-lattice quantization $Q_\Delta(X)$,
\[
H(Q_\Delta(X)) = h(X) + d_c \log(1/\Delta) + o(1) \quad \text{as } \Delta \to 0.
\]
Applying this conditionally on $\hat{\F}_{t-1}$ and summing over
$t = 1, \ldots, n$:
\[
\frac{1}{n}\sum_t H(Q_\Delta(\hat C_t) \mid \hat{\F}_{t-1})
= R_n^{\mathrm{cont}} + d_c \log(1/\Delta) + o(1).
\]
The $o(1)$ depends on the smoothness of $p_t^U$ via the Bennett--Gersho
quantizer error analysis \citep{gersho1979asymptotically}, which
characterizes the asymptotic mean-squared quantization error as
$\Delta^2 / 12$ per dimension for uniform quantization of smooth
densities; the entropy correction is the related higher-order term.
\end{proof}

\begin{theorem}[Quantization rate decomposition, app version]
\label{thm:quant-bridge-app}
Under Assumption~\ref{ass:cont-support-app},
$R_n^{\mathrm{quant}} \leq R_n^{\mathrm{cont}} + d_c \log(1/\Delta) + O(1)$.
\end{theorem}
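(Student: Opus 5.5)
The statement to prove is Theorem~\ref{thm:quant-bridge-app}: under Assumption~\ref{ass:cont-support-app}, $R_n^{\mathrm{quant}} \le R_n^{\mathrm{cont}} + d_c\log(1/\Delta) + O(1)$. The approach is to prove a per-step conditional inequality $H(Q_\Delta(\Chat_t)\mid\Fhat_{t-1}) \le h(\Chat_t\mid\Fhat_{t-1}) + d_c\log(1/\Delta) + c$, with $c$ independent of $t$, $n$ and $\Delta$. Averaging over $t$ then gives the claim directly, since the $O(1)$ term is uniform.

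We do not rely on the asymptotic formula $H(Q_\Delta(X)) = h(X) + d_c\log(1/\Delta) + o(1)$. That formula needs Riemann-integrability or smoothness of $p^U_t$, which the assumption does not supply, and it would give a $\Delta$-dependent remainder. Instead we use a direct one-sided comparison.

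\textbf{Step 1 (dithered quantizer).} Fix a realization of $\Fhat_{t-1}$ and let $X := \Chat_t$ have conditional density $p \le \beta$, supported in the ball of radius $R_U$. Let $\Delta$-cells $A_i$ have probabilities $P_i = \int_{A_i} p$. Define $Y := X + Z$ with $Z$ uniform on a cube of side $\Delta$, and let $Y'$ be the piecewise-constant smoothing, i.e.\ the density $P_i/\Delta^{d_c}$ on each cell $A_i$. Then
\[
h(Y') = H(Q_\Delta(X)) - d_c\log(1/\Delta).
\]
Separately, adding independent noise cannot decrease differential entropy, so $h(X) \le h(X+Z)$. This inequality points in the wrong direction for our purpose.

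\textbf{Step 2 (the needed direction).} We want $H(Q_\Delta(X)) \le h(X) + d_c\log(1/\Delta) + c$, equivalently $h(Y') \le h(X) + c$. Write
\[
h(Y') - h(X) = \E_X\bigl[\log p(X) - \log\bar p(X)\bigr],
\]
where $\bar p$ is the cell average of $p$. Nonnegativity of a KL divergence would give this quantity the sign $\ge 0$, so it is not trivially bounded above.

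For an upper bound we use $\bar p \ge$ something only where the density is locally comparable to its cell average. The natural hypothesis is the two-sided density condition of Assumption~\ref{ass:cont-support}: $r_{\min} \le p \le \beta$ on the support. This gives $\log p(X) - \log\bar p(X) \le \log(\beta/r_{\min})$, so $c = \log(\beta/r_{\min})$, uniformly in $t$. Taking the expectation over $\Fhat_{t-1}$ and averaging over $t$ finishes the proof.

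\textbf{Main obstacle.} The main obstacle is that Assumption~\ref{ass:cont-support-app} alone provides only the upper bound $\beta$, not the floor $r_{\min}$. Without a floor the inequality can fail: a density with a thin spike of mass $\eta$ has bounded quantized entropy but $h(X)\to-\infty$, forcing $c\to\infty$. The first thing to try is to invoke Assumption~\ref{ass:cont-support}, which does supply $r_{\min}$, and state the constant as $O(1) = \log(\beta/r_{\min})$. A second issue is boundary cells of the support, where the cell average $\bar p$ can fall well below $r_{\min}$. We handle these either by bounding their total mass, if the support is regular enough, or by absorbing their contribution using $H \le \log\#\text{cells} = d_c\log(2R_U/\Delta)$. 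The latter costs only a further $O(d_c\log R_U)$, which is still $O(1)$ in $\Delta$ and $n$.
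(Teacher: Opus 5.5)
\textbf{Gap: the proof uses a hypothesis the theorem does not grant.} Assumption~\ref{ass:cont-support-app} gives only the support radius $R_U$ and the density ceiling $\beta$. Your interior-cell bound $\log p(X)-\log\bar p(X)\le\log(\beta/r_{\min})$ imports the floor $r_{\min}$ from the different Assumption~\ref{ass:cont-support}. You justify the substitution by asserting that the inequality can fail without a floor, and that assertion is false. Your thin-spike example sends $h(X)\to-\infty$ only by letting the density blow up. The ceiling $p\le\beta$ forbids this, since $h(X)=-\E[\log p(X)]\ge-\log\beta$.

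\textbf{The repair.} The ceiling controls the direction you need, and bounded support controls the other side. Fix any realization of $\Fhat_{t-1}$. Then $\Chat_t$ lies in $[-R_U,R_U]^{d_c}$, which meets at most $(2R_U/\Delta+2)^{d_c}$ cells; partial cells count, so the bound is not $(2R_U/\Delta)^{d_c}$. Hence
\begin{align*}
H(Q_\Delta(\Chat_t)\mid\Fhat_{t-1})
&\le d_c\log(1/\Delta)+d_c\log(2R_U+2\Delta)\\
&\le h(\Chat_t\mid\Fhat_{t-1})+d_c\log(1/\Delta)+\log\beta+d_c\log(2R_U+2\Delta).
\end{align*}
Averaging over $t$ gives the theorem with constant $\log\beta+d_c\log(2R_U+2\Delta_0)$, uniform in $t$, $n$ and $\Delta\le\Delta_0$. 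In your notation this reads $\KL{p}{\bar p}=H(Q_\Delta(X))-d_c\log(1/\Delta)-h(X)\le\log\beta+d_c\log(2R_U+2\Delta)$. Your boundary-cell absorption via $H\le\log\#\text{cells}$ is already this argument. Apply it to every cell and drop both $r_{\min}$ and the interior/boundary split. That split buys little anyway: for the irregular supports Assumption~\ref{ass:cont-support} permits, boundary cells can carry most of the mass.

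\textbf{Comparison with the paper.} The paper cites the asymptotic formula $H(Q_\Delta(X))=h(X)+d_c\log(1/\Delta)+o(1)$ (Cover--Thomas, Thm.~8.3.1), applies it conditionally on $\Fhat_{t-1}$, and averages over $t$. It attributes the remainder to Bennett--Gersho smoothness of $p_t^U$. Your reluctance to use this is well founded. The formula needs a Riemann-integrability condition that the assumption does not supply. Its $o(1)$ also depends on the particular density, so it is not uniform over $t$ or over realizations of $\Fhat_{t-1}$ and cannot simply be averaged. Once repaired as above, your non-asymptotic one-sided route is the sounder proof. It uses only Assumption~\ref{ass:cont-support-app} and gives an explicit uniform constant. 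The cost is an $O(1)$ rather than $o(1)$ remainder, which is all the statement claims.
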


\subsection{MLA-specific numerical estimates}\label{app:mla-numerical}

For DeepSeek-V2 with $d_c = 512$, $V = 100{,}000$, FP16 ($\Delta = 2^{-10}$):

\textbf{Quantization overhead.} $d_c \log(1/\Delta) = 512 \cdot 10 = 5120$
bits per token.

\textbf{Baseline rate.} Naive per-token representation of the
conditional distribution requires $V \log V \approx 1.7 \times 10^6$
bits if encoded as a categorical PMF, or $O(V)$ for floating-point
parameters; this is orders of magnitude larger than $5120$.

\textbf{Estimated $R^\star_{\mathrm{cont}}$.} If $R^\star_{\mathrm{cont}}(D) \approx
\log V - h_{\mathrm{eff}}$ where $h_{\mathrm{eff}}$ is the
effective entropy at the operating point, and DeepSeek-V2 typically
operates at $D = 0.01$--$0.05$ distortion (KL), then
$R^\star_{\mathrm{cont}} \approx 7$--$10$ bits per token, dominated by
the quantization overhead.

\textbf{Intrinsic dimension estimate.} Empirical $d^\star_t(D)$ via
Wasserstein-based intrinsic dimension at $D = 0.01$, on
Llama-3-8B, has been reported in the range $d^\star \in [50, 200]$
in prior measurements (rigorous calibration pending). If validated,
Conjecture~\ref{thm:intrinsic-dim} would predict $d_c \gtrsim 100$
as a necessary condition; MLA's choice $d_c = 512$ provides ample
margin.

\subsection{Multi-layer extension proofs}\label{app:multilayer-proofs}

We provide the proofs of the multi-layer results summarized in
Section~\ref{sec:extensions}. We first restate the main-text
statements.

\begin{lemma}[Lipschitz error propagation]\label{lem:prop}
With per-layer attention errors $\delta_t^{(\ell)}$ and Lipschitz
constants $L_g^{(\ell)}$, $\|a_t^{(L)} - \hat a_t^{(L)}\|^2 \leq
\sum_\ell s^{(\ell)} \delta_t^{(\ell)}$ where
$s^{(\ell)} := \prod_{m>\ell}(L_g^{(m)})^2$. Under pre-LayerNorm
skip connections, the effective layer-wise Lipschitz is
$\tilde L_g^{(\ell)} = \sqrt{1 + (L_b^{(\ell)} L_{\mathrm{LN}})^2}$.
\end{lemma}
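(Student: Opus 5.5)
The plan is to argue by induction over depth on the error vector $e^{(\ell)}_t := a^{(\ell)}_t - \hat a^{(\ell)}_t$. I will model the compressed forward pass as the exact layer map followed by the injection of a layer-$\ell$ compression error: $\hat a^{(\ell)}_t = g^{(\ell)}(\hat a^{(\ell-1)}_t) + \eta^{(\ell)}_t$ with $\|\eta^{(\ell)}_t\|^2 = \delta^{(\ell)}_t$. The uncompressed pass is $a^{(\ell)}_t = g^{(\ell)}(a^{(\ell-1)}_t)$. Subtracting the two gives the recursion
\[
e^{(\ell)}_t = \bigl[g^{(\ell)}(a^{(\ell-1)}_t) - g^{(\ell)}(\hat a^{(\ell-1)}_t)\bigr] - \eta^{(\ell)}_t ,
\]
and the bracketed term has norm at most $L_g^{(\ell)}\|e^{(\ell-1)}_t\|$. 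Unrolling this from $\ell=1$ up to $L$ writes $e^{(L)}_t$ as a sum over $\ell$ of the injected error $\eta^{(\ell)}_t$ transported through the later layers $m>\ell$. Each transported term has squared norm at most $\prod_{m>\ell}(L_g^{(m)})^2\,\delta^{(\ell)}_t = s^{(\ell)}\delta^{(\ell)}_t$.

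To turn this into the stated squared bound, I expand $\|e^{(L)}_t\|^2$ as the sum of the squared transported norms plus cross terms. The cross terms vanish exactly when the transported per-layer errors are mutually orthogonal. They also vanish in expectation when the injected errors are conditionally zero-mean and independent across layers, with the maps linearized along the segment between $a$ and $\hat a$.

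I will make this the explicit hypothesis under which $s^{(\ell)}$ is the sensitivity, matching the paper's description of $s^{(\ell)}$ as the optimistic, near-orthogonal sensitivity. As a fallback without orthogonality, I will use $(x+y)^2 \le 2x^2 + 2y^2$ at each step of the recursion. This gives the conservative version with an extra factor $2^{L-\ell}$, which is the $s^{(\ell)}_{\mathrm{cons}}$ mentioned after the multi-layer allocation.

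For the pre-LayerNorm claim, I write each block as $x \mapsto x + b^{(\ell)}(\mathrm{LN}(x))$. Its Jacobian has the form $I + A$ with $\|A\| \le L_b^{(\ell)} L_{\mathrm{LN}}$. For any direction $v$,
\[
\|(I+A)v\|^2 = \|v\|^2 + 2\langle v, Av\rangle + \|Av\|^2 .
\]
If the cross term $\langle v, Av\rangle$ vanishes, the effective constant is $\sqrt{1+(L_b L_{\mathrm{LN}})^2}$. This again follows from an orthogonality hypothesis, now between the residual stream and the branch output. Without that hypothesis only the crude constant $1 + L_b L_{\mathrm{LN}}$ is available. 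The LayerNorm projection removes the mean direction, which partially supports the orthogonality heuristic, but that observation does not prove it.

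The main obstacle is the orthogonality needed in both parts. The inductive Lipschitz bookkeeping is routine, but the clean additive form and the $\sqrt{1+(\cdot)^2}$ constant both rest on cross terms vanishing. I would therefore state the lemma explicitly under an orthogonality (or conditional independent zero-mean) hypothesis on the errors, and record the $2^{L-\ell}$ conservative version as the unconditional statement.
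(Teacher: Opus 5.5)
Your proposal is correct, and its core matches the paper's appendix proof (Lemmas~\ref{lem:error-prop-app} and~\ref{lem:eff-Lip-app}). Both run a Lipschitz recursion on the error, both concede that the optimistic $s^{(\ell)}$ needs vanishing cross terms, and both record an unconditional version from Young's inequality with parameter $1$. That is your factor-$2$ fallback; note the bookkeeping gives $2^{L-\ell+1}s^{(\ell)}$, as in the paper's $s^{(\ell)}_{\mathrm{cons}}$, not $2^{L-\ell}s^{(\ell)}$.

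The routes differ in where orthogonality is imposed. The paper applies Young at every step, so it needs the propagated error and the fresh injection to be orthogonal layer by layer. You need the transported contributions to be mutually orthogonal at the output. Your decomposition is exact for nonlinear layers, without linearization, if you telescope over hybrid trajectories. Set $G_{>\ell}:=g^{(L)}\circ\cdots\circ g^{(\ell+1)}$ and $T_\ell:=G_{>\ell}(\hat a^{(\ell)})-G_{>\ell}(\hat a^{(\ell)}-\eta^{(\ell)})$. Since $\hat a^{(\ell)}-\eta^{(\ell)}=g^{(\ell)}(\hat a^{(\ell-1)})$, the sum telescopes to $\sum_\ell T_\ell=\hat a^{(L)}-a^{(L)}$, and $\|T_\ell\|^2\le s^{(\ell)}\delta^{(\ell)}$.

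This route also buys more than the fallback you chose. Cauchy--Schwarz on the sum gives, unconditionally, $\|a^{(L)}-\hat a^{(L)}\|^2\le L\sum_\ell s^{(\ell)}\delta^{(\ell)}$. That is a layer-independent factor $L$ in place of $2^{L-\ell+1}$. Consequently the sufficient set of Proposition~\ref{thm:multi-WZ} merely becomes $\mathcal{A}(D/L)$, and its water-filling profile $|R^{\star,(\ell)\prime}|=\lambda s^{(\ell)}$ needs no orthogonality at all. The paper's Young family would also reach $O(L)$ with parameter $1/L$, but the paper takes $1$.

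On the pre-LN constant you are more accurate than the paper's sketch. The sketch credits $\sqrt{1+(L_bL_{\mathrm{LN}})^2}$ to an optimally tuned $(1+\epsilon)$ inequality. But minimizing $(1+\epsilon)+(1+\epsilon^{-1})x^2$ over $\epsilon$ gives $(1+x)^2$, i.e.\ only $1+L_bL_{\mathrm{LN}}$. So the square-root form does require your condition $\langle v,Av\rangle=0$ along the error direction.
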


\begin{proposition}[Conservative multi-layer allocation]\label{thm:multi-WZ}
For an $L$-layer Transformer with separate per-layer codes, any
per-layer allocation $\{D^{(\ell)}\} \in \mathcal{A}(D)$ guarantees
end-to-end KL distortion $\E[D_n] \le D$, where
\[
\mathcal{A}(D) := \{D^{(\ell)} \geq 0 : L_g^2 \textstyle\sum_\ell
s^{(\ell)} D^{(\ell)} \leq D\}
\]
couples the per-layer $\ell_2$ distortions to the end-to-end KL
distortion through the sensitivities $s^{(\ell)}$ of
Lemma~\ref{lem:prop} and an aggregate constant $L_g^2$ (unembedding
norm and softmax curvature). Minimizing
$\sum_\ell R^{\star,(\ell)}(D^{(\ell)})$ over $\mathcal{A}(D)$ yields
the conservative reverse-water-filling allocation
$|R^{\star,(\ell)\prime}(D^{(\ell),*})| = \lambda s^{(\ell)}$, with
allocation value
$\inf_{\{D^{(\ell)}\}\in\mathcal{A}(D)} \sum_\ell R^{\star,(\ell)}(D^{(\ell)})$.
This is an \emph{allocation principle} (a conservative
distortion-allocation surrogate), \emph{not} a converse lower bound and
\emph{not} an operational achievability bound: $R^{\star,(\ell)}$ is the
per-layer \emph{converse} rate function, and we have not established
per-layer achievability $R^{\mathrm{op},(\ell)}(D)\le R^{\star,(\ell)}(D)+o(1)$
(the online achievability is itself only conjectured,
Conjecture~\ref{thm:achievability}). On the converse side,
$\mathcal{A}(D)$ is the sufficient set arising from the
error-propagation \emph{upper} bound
$D \le L_g^2\sum_\ell s^{(\ell)}D^{(\ell)}$, so a code with
$\E[D_n]\le D$ may have per-layer errors outside $\mathcal{A}(D)$
(cancellation across layers, or removal of error directions by
downstream maps), and the optimal multi-layer rate can be strictly
lower. The sum of converse functionals becomes an achievable design
cost only under the added per-layer achievability assumption above; a
matching converse would instead require a non-cancellation
(bi-Lipschitz / observability) condition
$D_n \ge c\sum_\ell \underline s^{(\ell)} D^{(\ell)}$, which we do not
assume.
\end{proposition}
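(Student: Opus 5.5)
The statement to prove is Proposition~\ref{thm:multi-WZ}. Its rigorous content is the sufficiency claim: any allocation in $\mathcal{A}(D)$ guarantees $\E[D_n]\le D$. The reverse-water-filling optimality condition is then a standard Lagrangian computation. I would organize the argument in three steps.

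\textbf{Step 1: propagate per-layer errors to the last layer.} Fix a step $t$ and a per-layer code with layer-$\ell$ attention error $\delta_t^{(\ell)}$, where $\E\,\delta_t^{(\ell)}\le D^{(\ell)}$. I would apply Lemma~\ref{lem:prop} to get
\[
\|a_t^{(L)}-\hat a_t^{(L)}\|^2\le\sum_\ell s^{(\ell)}\delta_t^{(\ell)} .
\]
Here one has to be careful about which sensitivity is used. If the per-layer errors are not near-orthogonal, the rigorous route unrolls the recursion $e^{(\ell+1)}=g(\cdot)$-difference plus fresh error, and uses $(x+y)^2\le 2x^2+2y^2$. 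That yields the conservative sensitivities $s^{(\ell)}_{\mathrm{cons}}$. I would state the proposition with whichever $s^{(\ell)}$ Lemma~\ref{lem:prop} certifies, and note that the optimistic $s^{(\ell)}$ requires its orthogonality hypothesis.

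\textbf{Step 2: pass from final-layer error to KL.} I would apply Lemma~\ref{lem:ffn-lipschitz-app} to the composite readout $g=\mathrm{softmax}\circ U\circ\mathrm{FFN}\circ\mathrm{LN}$. This gives $\KL{p_t}{\tilde p_t}\le L_g^2\|a_t^{(L)}-\hat a_t^{(L)}\|^2$ on the bounded domain guaranteed by Assumption~\ref{ass:bounded-logits}. Taking expectations, averaging over $t\le n$, and using membership in $\mathcal{A}(D)$ then gives
\[
\E[D_n]\le L_g^2\sum_\ell s^{(\ell)}D^{(\ell)}\le D .
\]

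\textbf{Step 3: optimize the allocation.} I would treat each $R^{\star,(\ell)}$ as convex and nonincreasing, since time-sharing of auxiliaries gives convexity, exactly as for $R^\star$ in Definition~\ref{def:Rstar}. The problem is then to minimize the separable convex sum $\sum_\ell R^{\star,(\ell)}(D^{(\ell)})$ over the polyhedron $\mathcal{A}(D)$. Slater's condition holds because $D^{(\ell)}=0$ is strictly feasible when $D>0$. KKT therefore gives multipliers $\lambda\ge0$ and $\nu_\ell\ge0$ with
\[
-R^{\star,(\ell)\prime}(D^{(\ell),*})=\lambda L_g^2 s^{(\ell)}-\nu_\ell ,
\]
where $\nu_\ell>0$ only at $D^{(\ell),*}=0$. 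Absorbing $L_g^2$ into $\lambda$ yields $|R^{\star,(\ell)\prime}|=\lambda s^{(\ell)}$ on active layers, which is reverse water-filling. Where $R^{\star,(\ell)}$ is not differentiable I would replace the derivative by a subgradient. The non-converse and non-achievability caveats in the statement need no proof; they follow from the direction of the Step~1 inequality, which is one-sided.

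\textbf{Main obstacle.} I expect Step~1 to be the hardest. Lemma~\ref{lem:prop} is stated only, and the product form $s^{(\ell)}=\prod_{m>\ell}(L_g^{(m)})^2$ without a $2^{L-\ell}$ factor needs cross terms to vanish. I would first try to bound the cross terms $\langle e^{(\ell)},\cdot\rangle$ in expectation using independent per-layer coding noise. If that fails, I would fall back on $s_{\mathrm{cons}}$, which keeps the sufficiency claim rigorous.
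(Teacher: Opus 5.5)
Your proposal is correct and follows the paper's own argument. Both bound the propagated error via Lemma~\ref{lem:prop}/\ref{lem:error-prop-app}, pass to KL with an $\ell_2$-to-KL bridge, and then apply a Lagrangian/KKT step giving $|R^{\star,(\ell)\prime}|=\lambda s^{(\ell)}$. You differ from the paper in three minor ways. You take the bridge from Lemma~\ref{lem:ffn-lipschitz-app}, whereas the paper uses the unembedding norm plus a local softmax-curvature expansion with an uncontrolled $O(\|\cdot\|^3)$ remainder. You also drop the paper's per-layer converse step, which the sufficiency claim does not need. Finally, you fall back explicitly on $s^{(\ell)}_{\mathrm{cons}}$ and handle boundary multipliers, which is more careful than the paper: it silently uses the optimistic sensitivity, even though Lemma~\ref{lem:error-prop-app} says that sensitivity requires an orthogonality hypothesis.
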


\begin{figure}[t]
\centering
\begin{tikzpicture}
\begin{semilogyaxis}[
  width=11.5cm, height=7cm,
  xlabel={Layer $\ell$},
  ylabel={$s^{(\ell)} / s^{(L)}$},
  xmin=1, xmax=60, ymin=0.9, ymax=500,
  legend pos=north east,
  legend style={font=\scriptsize, fill opacity=0.85, draw=gray!50},
  grid=major, grid style={dashed,gray!30},
  tick label style={font=\footnotesize},
  label style={font=\footnotesize}
]
\addplot[blue, thick, domain=1:60, samples=60] {1.05^(2*(60-x))};
\addlegendentry{$L_g = 1.05$ (no skip)}
\addplot[red, thick, domain=1:60, samples=60] {1.005^(2*(60-x))};
\addlegendentry{$\tilde L_g = 1.005$ (with skip)}
\end{semilogyaxis}
\end{tikzpicture}
\caption{Sensitivity ratio across 60 layers
(Lemma~\ref{lem:prop}). Skip connections with
$L_b L_{\mathrm{LN}} = 0.1$ reduce the ratio from $\approx 320$
to $\approx 1.8$, supporting uniform $d_c$ as approximately optimal
under the optimistic sensitivity of Lemma~\ref{lem:error-prop-app}
(near-orthogonal per-layer errors)---a heuristic numerical
illustration only.}
\label{fig:sensitivity}
\end{figure}

The proofs use the following app-version lemmas.

\begin{lemma}[Error propagation through layers, app]\label{lem:error-prop-app}
For an $L$-layer transformer with per-layer Lipschitz constants
$L_g^{(\ell)}$ (of the layer map $g^{(\ell)}$) and per-layer
attention-reconstruction errors
$\delta_t^{(\ell)} := \|a_t^{(\ell)} - \hat a_t^{(\ell)}\|_2^2$, the
output error satisfies, for any $\{\epsilon_m > 0\}$,
\[
\|a_t^{(L)} - \hat a_t^{(L)}\|_2^2
\;\leq\;
\sum_{\ell=1}^L (1+\epsilon_\ell^{-1})
\Bigl[\textstyle\prod_{m=\ell+1}^L (1+\epsilon_m)(L_g^{(m)})^2\Bigr]
\delta_t^{(\ell)}.
\]
The choice $\epsilon_m = 1$ gives the \emph{rigorous conservative}
sensitivity
$s^{(\ell)}_{\mathrm{cons}} = 2\prod_{m>\ell} 2(L_g^{(m)})^2$. The
\emph{optimistic} sensitivity $s^{(\ell)} = \prod_{m>\ell}(L_g^{(m)})^2$
used in the numerical illustrations below is \emph{not} implied by
small per-layer errors alone: it requires the cross-terms to vanish,
i.e.\ an orthogonality / non-correlation assumption on the error
directions $\{a_t^{(m)}-\hat a_t^{(m)}\}$ across layers. Taking
$\epsilon_m\to0$ is invalid, since the $(1+\epsilon_\ell^{-1})$ factor
on $\delta^{(\ell)}$ then diverges.
\end{lemma}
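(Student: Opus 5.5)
We would prove the lemma by a one-step recursion on the per-layer output error followed by unrolling, with Young's inequality handling the cross-term at each layer. First we fix the error model, since the statement's notation admits two readings. We write $e_\ell := a_t^{(\ell)} - \hat a_t^{(\ell)}$ and $E_\ell := \|e_\ell\|_2^2$, and we model the compressed forward pass as
\[
\hat a_t^{(\ell)} = g^{(\ell)}(\hat a_t^{(\ell-1)}) - \eta_t^{(\ell)},
\qquad
a_t^{(\ell)} = g^{(\ell)}(a_t^{(\ell-1)}),
\]
where $\eta_t^{(\ell)}$ is the reconstruction error freshly injected by the layer-$\ell$ code. We read $\delta_t^{(\ell)}$ as $\|\eta_t^{(\ell)}\|_2^2$, and start from $e_0 = 0$. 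This gives the decomposition
\[
e_\ell = \bigl[g^{(\ell)}(a_t^{(\ell-1)}) - g^{(\ell)}(\hat a_t^{(\ell-1)})\bigr] + \eta_t^{(\ell)}.
\]
The Lipschitz property of $g^{(\ell)}$ bounds the norm of the bracketed term by $L_g^{(\ell)}\sqrt{E_{\ell-1}}$.

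The key step is the elementary inequality
\[
\|x+y\|^2 \le (1+\epsilon)\|x\|^2 + (1+\epsilon^{-1})\|y\|^2
\quad \text{for every } \epsilon>0.
\]
It follows from $2\langle x,y\rangle \le \epsilon\|x\|^2 + \epsilon^{-1}\|y\|^2$. Applying it with $\epsilon = \epsilon_\ell$ yields the linear recursion
\[
E_\ell \le (1+\epsilon_\ell)\,(L_g^{(\ell)})^2\,E_{\ell-1} + (1+\epsilon_\ell^{-1})\,\delta_t^{(\ell)}.
\]
Unrolling from $\ell = L$ down to $E_0 = 0$ by induction on $L$ produces the stated bound. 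The injection at layer $\ell$ picks up its own factor $(1+\epsilon_\ell^{-1})$. It is then multiplied by $(1+\epsilon_m)(L_g^{(m)})^2$ at every later layer $m > \ell$.

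The two remarks follow directly from this form. Setting $\epsilon_m = 1$ gives $s^{(\ell)}_{\mathrm{cons}} = 2\prod_{m>\ell} 2(L_g^{(m)})^2$. The optimistic constant $\prod_{m>\ell}(L_g^{(m)})^2$ arises exactly when each inner product $\langle g(a)-g(\hat a),\eta^{(\ell)}\rangle$ vanishes, since the square then expands with coefficient $1$ on both terms. This shows orthogonality is the precise extra hypothesis needed. The limit $\epsilon_\ell \to 0$ is not allowed because $(1+\epsilon_\ell^{-1})\,\delta_t^{(\ell)}$ diverges whenever $\delta_t^{(\ell)} > 0$.

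We expect the only real obstacle to be the modeling step above, not the algebra. If $\delta_t^{(\ell)}$ is instead read literally as the cumulative discrepancy $\|a^{(\ell)}-\hat a^{(\ell)}\|^2$, the bound is trivial but circular. We will therefore state explicitly that $\delta_t^{(\ell)}$ is the per-layer injected error. We will also check that the Lipschitz constant is applied on a domain containing both $a^{(\ell-1)}$ and $\hat a^{(\ell-1)}$, which we take to be guaranteed by the bounded-domain convention of Lemma~\ref{lem:ffn-lipschitz-app}.
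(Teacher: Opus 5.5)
Your proposal is correct and takes the same route as the paper: it too tracks a hidden-state error $e^{(\ell)} = h_t^{(\ell)} - \hat h_t^{(\ell)}$ under the recursion $h^{(\ell+1)} = g^{(\ell+1)}(h^{(\ell)}) + a^{(\ell+1)}$, with $a^{(\ell)} - \hat a^{(\ell)}$ playing the role of your injected error $\eta_t^{(\ell)}$, applies Young's inequality with parameter $\epsilon_{\ell+1}$ at each layer, and unrolls from $e^{(0)} = 0$. Your explicit resolution of the notational ambiguity (that the left-hand side must be the propagated hidden-state error, not the injected $\delta_t^{(L)}$) is the same one the paper's proof adopts implicitly.
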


\begin{proof}
Let $e^{(\ell)} := h_t^{(\ell)} - \hat h_t^{(\ell)}$ be the
hidden-state error after layer $\ell$, with
$\|e^{(\ell)}\|^2$ the propagated error. The layer recursion is
$h^{(\ell+1)} = g^{(\ell+1)}(h^{(\ell)}) + a^{(\ell+1)}$ (layer map
plus the attention contribution), and likewise for the
reconstruction with $\hat a^{(\ell+1)}$. Subtracting,
\[
e^{(\ell+1)}
= \bigl(g^{(\ell+1)}(h^{(\ell)}) - g^{(\ell+1)}(\hat h^{(\ell)})\bigr)
  + (a^{(\ell+1)} - \hat a^{(\ell+1)}).
\]
By Lipschitzness, $\|g^{(\ell+1)}(h^{(\ell)}) -
g^{(\ell+1)}(\hat h^{(\ell)})\| \leq L_g^{(\ell+1)}\|e^{(\ell)}\|$.
Writing $\delta^{(\ell+1)} = \|a^{(\ell+1)} - \hat a^{(\ell+1)}\|^2$
and applying the Young/Cauchy inequality
$\|x + y\|^2 \leq (1+\epsilon_{\ell+1})\|x\|^2 +
(1+\epsilon_{\ell+1}^{-1})\|y\|^2$ to handle the cross-term:
\[
\|e^{(\ell+1)}\|^2
\;\leq\;
(1+\epsilon_{\ell+1})(L_g^{(\ell+1)})^2 \|e^{(\ell)}\|^2
+ (1 + \epsilon_{\ell+1}^{-1})\, \delta^{(\ell+1)}.
\]
Unrolling this recursion from $\ell = 0$ (with $e^{(0)} = 0$) to
$\ell = L$:
\[
\|e^{(L)}\|^2
\;\leq\;
\sum_{\ell=1}^L
  \Bigl[\prod_{m=\ell+1}^L (1+\epsilon_m)(L_g^{(m)})^2\Bigr]
  (1+\epsilon_\ell^{-1})\,\delta^{(\ell)}.
\]
Choosing $\epsilon_\ell = 1$ yields the rigorous conservative form
$\|e^{(L)}\|^2 \le \sum_\ell 2\prod_{m>\ell}2(L_g^{(m)})^2\,\delta^{(\ell)}$.
The optimistic coefficient $\prod_{m>\ell}(L_g^{(m)})^2$ follows only
if the cross-terms
$\langle g^{(\ell+1)}(h^{(\ell)})-g^{(\ell+1)}(\hat h^{(\ell)}),\,
a^{(\ell+1)}-\hat a^{(\ell+1)}\rangle$ are negligible
(orthogonality / non-correlation of the per-layer error directions);
it cannot be obtained by letting $\epsilon_m\to0$, since the factor
$(1+\epsilon_\ell^{-1})$ on $\delta^{(\ell)}$ diverges.
\end{proof}

\begin{theorem}[Conservative multi-layer allocation, app]\label{thm:multi-WZ-app}
Identical to Proposition~\ref{thm:multi-WZ} in the main text.
\end{theorem}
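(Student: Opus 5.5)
The statement is Theorem~\ref{thm:multi-WZ-app}, which is asserted to be identical to Proposition~\ref{thm:multi-WZ}. The plan is therefore to prove the proposition's two substantive claims. The first is \emph{sufficiency}: every $\{D^{(\ell)}\}\in\mathcal{A}(D)$ guarantees $\E[D_n]\le D$. The second is \emph{optimality}: minimizing $\sum_\ell R^{\star,(\ell)}(D^{(\ell)})$ over $\mathcal{A}(D)$ gives the reverse-water-filling condition $|R^{\star,(\ell)\prime}(D^{(\ell),*})|=\lambda s^{(\ell)}$. The disclaimers (not a converse, not an achievability) are interpretive and only need a short remark at the end.

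\textbf{Sufficiency.} First I would fix $t$ and apply Lemma~\ref{lem:prop}, equivalently Lemma~\ref{lem:error-prop-app} with the sensitivity $s^{(\ell)}$ used in the definition of $\mathcal{A}(D)$. This gives
\[
\|a_t^{(L)}-\hat a_t^{(L)}\|_2^2\le\sum_\ell s^{(\ell)}\delta_t^{(\ell)}.
\]
Next I would pass to the next-token distribution using Lemma~\ref{lem:ffn-lipschitz-app}. The composite map $g=\mathrm{softmax}\circ U\circ\mathrm{FFN}\circ\mathrm{LN}$ satisfies $\KL{g(a)}{g(a')}\le L_g^2\|a-a'\|^2$, so
\[
\KL{p_t}{\tilde p_t}\le L_g^2\sum_\ell s^{(\ell)}\delta_t^{(\ell)}.
\]
Taking expectations, with $D^{(\ell)}$ read as the time-averaged per-layer expected $\ell_2$ distortion $\frac1n\sum_t\E\,\delta_t^{(\ell)}$, and averaging over $t$ gives $\E[D_n]\le L_g^2\sum_\ell s^{(\ell)}D^{(\ell)}\le D$ by the definition of $\mathcal{A}(D)$. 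I would state explicitly which sensitivity is in force. With the conservative $s^{(\ell)}_{\mathrm{cons}}$ the argument is rigorous. With the optimistic $s^{(\ell)}$ it holds under the orthogonality hypothesis recorded in Lemma~\ref{lem:error-prop-app}.

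\textbf{Allocation.} Here I would assume each $R^{\star,(\ell)}$ is convex, nonincreasing and differentiable in $D^{(\ell)}$. Convexity follows from the usual time-sharing argument on the $\liminf$ averages of Definition~\ref{def:Rstar}, possibly after taking the lower convex envelope. Under this assumption the problem has a convex objective over a polyhedral set defined by a single linear constraint plus $D^{(\ell)}\ge0$, so Slater's condition holds whenever $D>0$. The KKT conditions with multiplier $\lambda\ge0$ then give
\[
R^{\star,(\ell)\prime}(D^{(\ell),*})+\lambda L_g^2 s^{(\ell)}=0
\]
on the active layers; absorbing $L_g^2$ into $\lambda$ yields the stated condition. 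Layers where the derivative at $0$ is already below $\lambda s^{(\ell)}$ sit at the boundary $D^{(\ell),*}=0$, which is the water-filling case. Monotonicity of $R^{\star,(\ell)}$ forces the constraint to be tight, so $\lambda>0$.

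The main obstacle is regularity of $R^{\star,(\ell)}$. Convexity and differentiability are not established for the $\liminf$-defined functional. I would handle this by interpreting the condition as a subgradient inclusion $-\lambda s^{(\ell)}\in\partial R^{\star,(\ell)}(D^{(\ell),*})$ on the convex envelope, and then note that the infimum value is attained by compactness of $\mathcal{A}(D)$ together with lower semicontinuity. The concluding remark would observe that $\mathcal{A}(D)$ arises only from an upper bound. Cross-layer cancellation can therefore produce codes outside $\mathcal{A}(D)$, so the allocation value is neither a converse nor an achievable rate without the extra assumptions the proposition lists.
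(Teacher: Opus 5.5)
Your proposal is correct and follows the paper's proof in substance: error propagation through Lemma~\ref{lem:prop}, a quadratic $\ell_2$-to-KL bridge that yields the sufficient set $\mathcal{A}(D)$, and a KKT/Lagrangian argument giving $|R^{\star,(\ell)\prime}(D^{(\ell),*})|=\lambda s^{(\ell)}$ once $L_g^2$ is absorbed into $\lambda$. The differences are minor. You use Lemma~\ref{lem:ffn-lipschitz-app} as the bridge, whereas the paper uses the unembedding norm together with a softmax-curvature bound that carries a cubic remainder. You also omit the paper's per-layer converse step, which the stated claims do not require. Finally, you are more explicit than the paper that the optimistic $s^{(\ell)}$ needs the orthogonality hypothesis, and that convexity and differentiability of $R^{\star,(\ell)}$ are assumed rather than proved.
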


\begin{proof}
The proof combines three elements: a per-layer Wyner--Ziv converse,
a bridge from the $\ell_2$ output error to the per-layer KL
distortions, and a Lagrangian characterization of the optimal
allocation.

\textbf{Step 1 (per-layer converse).} Apply
Theorem~\ref{thm:asymptotic} to each layer $\ell$ independently:
any causal scheme reconstructing layer $\ell$'s attention output
within per-layer distortion $D^{(\ell)}$ has rate at least
$R^{\star,(\ell)}(D^{(\ell)})$. Summing over layers and using
subadditivity of the total rate
($R_n = \sum_\ell R_n^{(\ell)}$ for separate per-layer codes),
$\liminf_n \sum_\ell \E[R_n^{(\ell)}] \geq \sum_\ell
R^{\star,(\ell)}(D^{(\ell)})$ for any admissible allocation
$\{D^{(\ell)}\}$.

\textbf{Step 2 ($\ell_2$-to-KL bridge and the admissible set).}
The end-to-end quantity constrained by the application is the KL
distortion of the final next-token distribution, $D$. We relate it
to the per-layer $\ell_2$ errors $\delta^{(\ell)}$. The output
logit vector is $\zeta_t = W a_t^{(L)}$ for the unembedding $W$ with
$\|W\|_2 \leq L_W$, so the logit error satisfies
$\|\zeta_t - \hat\zeta_t\|_2^2 \leq L_W^2 \|a_t^{(L)} - \hat
a_t^{(L)}\|_2^2$. Under Assumption~\ref{ass:bounded-logits} the
softmax map has bounded curvature, giving the local quadratic
relation
$\KLs(\mathrm{softmax}(\zeta_t)\,\|\,\mathrm{softmax}(\hat\zeta_t))
\leq \tfrac12 \|\zeta_t - \hat\zeta_t\|_2^2 + O(\|\cdot\|^3)$
\citep[the softmax Fisher metric is the logit covariance, bounded
under][]{cover2006elements}. Combining with the error-propagation
bound (Lemma~\ref{lem:error-prop-app})
$\|a_t^{(L)} - \hat a_t^{(L)}\|_2^2 \leq \sum_\ell s^{(\ell)}
\delta^{(\ell)}$, the KL distortion obeys
\[
D \;\leq\; \tfrac12 L_W^2 \sum_\ell s^{(\ell)} \delta^{(\ell)}
  + O(\|\cdot\|^3).
\]
Writing $L_g^2 := \tfrac12 L_W^2$ for the aggregate constant and
identifying $D^{(\ell)} \asymp \delta^{(\ell)}$ (the per-layer
distortion budget, up to the same softmax bridge applied
per layer), the \emph{sufficient} set guaranteeing $\E[D_n]\le D$ is
$\mathcal{A}(D) = \{D^{(\ell)} \geq 0 : L_g^2 \sum_\ell s^{(\ell)}
D^{(\ell)} \leq D\}$, exactly as stated. The constant $L_g^2$ thus
collects the unembedding norm and the softmax-curvature factor; it
is not a per-layer Lipschitz constant. Because this set comes from
the error-propagation \emph{upper} bound, minimizing the sum of the
per-layer rate-function surrogates over $\mathcal{A}(D)$ gives a
conservative allocation objective---operationally achievable only
under an additional per-layer achievability assumption, and in any
case not a converse lower bound; see the discussion in
Proposition~\ref{thm:multi-WZ}.

\textbf{Step 3 (Lagrangian optimum).} Minimize
$\sum_\ell R^{\star,(\ell)}(D^{(\ell)})$ over
$\mathcal{A}(D)$. Since each $R^{\star,(\ell)}$ is convex and
non-increasing, the constraint binds and the Lagrangian is
$\sum_\ell R^{\star,(\ell)}(D^{(\ell)}) + \lambda(L_g^2 \sum_\ell
s^{(\ell)} D^{(\ell)} - D)$. The KKT stationarity condition is
$R^{\star,(\ell)\prime}(D^{(\ell),*}) + \lambda L_g^2 s^{(\ell)} =
0$, i.e.\ $|R^{\star,(\ell)\prime}(D^{(\ell),*})| = \lambda
s^{(\ell)}$ (absorbing $L_g^2$ into $\lambda$), which is the stated
reverse-water-filling condition: layers with larger sensitivity
$s^{(\ell)}$ are allocated lower distortion (higher rate).
\end{proof}

\begin{conjecture}[Heuristic multiplicative mixing model, app]\label{thm:multi-mixing-app}
\emph{(Heuristic; not a theorem.)} If each layer $\ell$ acted as an
independent stochastic channel with a Dobrushin-type contraction
coefficient $\rho^{(\ell)}$, the end-to-end coefficient would compose
multiplicatively, $\rho_{\mathrm{eff}} \le \prod_\ell \rho^{(\ell)}
\le \bar\rho^L$ with $\bar\rho = \max_\ell \rho^{(\ell)}$. We state this
only as a heuristic model: the $\rho^{(\ell)}$ of
Definition~\ref{def:mixing} is a next-token context-truncation decay
parameter, \emph{not} a Dobrushin contraction coefficient, and such
parameters do not in general compose under multiplication across
transformer layers. In particular a residual connection supplies an
identity path that preserves past influence, directly opposing
multiplicative contraction (see the slack discussion below); the
multiplicative law can be justified only for a residual-free stack in
which each layer is a Markov kernel with a genuine Dobrushin
coefficient. We therefore do not rely on it for any rigorous claim.
\end{conjecture}

\textbf{Regularization slack.} Within this heuristic model the bound
$\rho_{\mathrm{eff}} \leq \prod \rho^{(\ell)}$ is worst-case: with skip
connections, the actual decay can be slower:
information is preserved through the residual stream, so the effective
mixing rate is bounded by $\min_\ell \rho^{(\ell)}$ as a (very loose)
lower bound. The empirical $\rho_{\mathrm{eff}}$ for trained models is
typically closer to the upper bound $\prod \rho^{(\ell)}$.

\textbf{Connection to rank collapse} \citep{dong2021attention}: in the
$L \to \infty$ limit, $\rho_{\mathrm{eff}}^L \to 0$, hence the hidden
state converges to a small set of modes (rank collapse). In our
framework, this is the statement that $d^\star_t(D)$ at deep layers is
asymptotically $O(1)$.

\subsection{Skip connections and effective Lipschitz}\label{app:skip-proofs}

We refine the error propagation under skip-connection architectures.

\begin{lemma}[Effective Lipschitz, app]\label{lem:eff-Lip-app}
For pre-LN (resp.\ post-LN) blocks (see formal definition below; main
text), the effective layer-wise Lipschitz constant is
\(\tilde L_g^{(\ell)} = \sqrt{1 + (L_b^{(\ell)} L_{\mathrm{LN}})^2}\)
(resp.\ $L_{\mathrm{LN}}(1 + L_g^{(\ell)})$). The refined error
propagation reads
\(
\|h^{(\ell+1)} - \hat h^{(\ell+1)}\|^2
\leq (\tilde L_g^{(\ell+1)})^2 \|h^{(\ell)} - \hat h^{(\ell)}\|^2
+ \delta^{(\ell+1)}.
\)
\end{lemma}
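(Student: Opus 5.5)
The plan is to treat the two architectures separately, computing a Lipschitz constant for one residual block in each case, and then to feed the pre-LN constant into the one-step recursion already used in the proof of Lemma~\ref{lem:error-prop-app}.

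\textbf{Pre-LN block.} Write the block as $h \mapsto h + b^{(\ell)}(\mathrm{LN}(h))$, where the non-residual branch $b^{(\ell)}$ is $L_b^{(\ell)}$-Lipschitz and LayerNorm is $L_{\mathrm{LN}}$-Lipschitz on the active domain. For two inputs $h,\hat h$ with $e := h-\hat h$, the output difference is $e + r$, where $r := b^{(\ell)}(\mathrm{LN}(h)) - b^{(\ell)}(\mathrm{LN}(\hat h))$ and $\|r\| \le L_b^{(\ell)}L_{\mathrm{LN}}\|e\|$. Expanding,
\[
\|e+r\|^2 = \|e\|^2 + 2\langle e,r\rangle + \|r\|^2 .
\]
\begin{itemize}
\item The triangle inequality alone gives the rigorous constant $1 + L_b^{(\ell)}L_{\mathrm{LN}}$.
\item The stated $\sqrt{1+(L_b^{(\ell)}L_{\mathrm{LN}})^2}$ follows exactly when $\langle e,r\rangle \le 0$, in particular when it vanishes.
\end{itemize}
I would therefore state the $\sqrt{\cdot}$ form under the same orthogonality / non-correlation convention that Lemma~\ref{lem:error-prop-app} uses for the optimistic sensitivity. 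I would also record $1+L_bL_{\mathrm{LN}}$ as the unconditional constant; numerically it is $1.1$ rather than $1.005$ at $L_bL_{\mathrm{LN}}=0.1$.

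\textbf{Post-LN block.} Write the block as $h\mapsto \mathrm{LN}(h + g^{(\ell)}(h))$. The inner map $h + g^{(\ell)}(h)$ is $(1+L_g^{(\ell)})$-Lipschitz by the triangle inequality, and composing with LayerNorm multiplies by $L_{\mathrm{LN}}$, giving $L_{\mathrm{LN}}(1+L_g^{(\ell)})$ directly. No orthogonality assumption is needed for this case.

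\textbf{Refined propagation.} I would use the recursion $e^{(\ell+1)} = [\text{block}(h^{(\ell)}) - \text{block}(\hat h^{(\ell)})] + (a^{(\ell+1)}-\hat a^{(\ell+1)})$ from the proof of Lemma~\ref{lem:error-prop-app}, with $\tilde L_g^{(\ell+1)}$ in place of $L_g^{(\ell+1)}$. Squaring gives the two squared terms plus a cross term $2\langle\cdot,\cdot\rangle$. If that cross term is dropped under the orthogonality convention, the stated inequality $\|e^{(\ell+1)}\|^2 \le (\tilde L_g^{(\ell+1)})^2\|e^{(\ell)}\|^2 + \delta^{(\ell+1)}$ follows immediately. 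Without that convention, Young's inequality yields the rigorous version with factors $(1+\epsilon)$ and $(1+\epsilon^{-1})$, exactly as in Lemma~\ref{lem:error-prop-app}.

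\textbf{Main obstacle.} The difficulty is not the computation but the cross terms. Neither the $\sqrt{1+(\cdot)^2}$ constant nor the coefficient-one $\delta^{(\ell+1)}$ term holds for worst-case inputs. I would first try to justify $\langle e,r\rangle\approx 0$ from LayerNorm, which projects out the mean and radial components, but I expect this to give only a heuristic. The plan is therefore to prove the lemma rigorously in the conservative form and present the stated form as the optimistic specialization under the explicit non-correlation hypothesis.
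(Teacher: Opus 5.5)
Your proposal is correct, and on the pre-LN half it is more careful than the paper's proof, which is only a short sketch. For post-LN you do what the paper does: LayerNorm is an outer $L_{\mathrm{LN}}$ factor over the $(1+L_g^{(\ell)})$-Lipschitz residual sum.

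For pre-LN the routes genuinely differ. The paper attributes $\sqrt{1+(L_b^{(\ell)}L_{\mathrm{LN}})^2}$ to ``triangle inequality with optimal $(1+\epsilon)$ Cauchy parameter,'' but that route cannot deliver the stated constant. With $c := L_b^{(\ell)}L_{\mathrm{LN}}$, Young's inequality gives $\|e+r\|^2 \le [(1+\epsilon)+(1+\epsilon^{-1})c^2]\,\|e\|^2$. The bracket is minimized at $\epsilon=c$ with value $(1+c)^2$, which is exactly your unconditional constant $1+c$ and never $\sqrt{1+c^2}$.

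So your non-correlation hypothesis is the missing ingredient, not a weakness of your plan, and it is genuinely necessary. Nothing prevents $r$ from aligning with $e$. For example, take a branch that is locally $L_b^{(\ell)}\,\mathrm{Id}$ at a point where LayerNorm attains its Lipschitz constant along a tangent direction $e$. Then $r\approx ce$ and $\|e+r\|\approx(1+c)\|e\|$. LayerNorm's projection cannot exclude this, since the branch acts after it, so your expectation that this step stays heuristic is right.

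Likewise, the coefficient-one $\delta^{(\ell+1)}$ term fails whenever the block difference and the attention error are parallel. The paper records exactly this caveat for the optimistic sensitivity in Lemma~\ref{lem:error-prop-app}, where it notes that $\epsilon_m\to 0$ is invalid, but it omits the caveat here.

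What each version buys: the paper's form yields the favorable numerics of Corollary~\ref{cor:MLA-revised-app}, namely $1.005^{118}\approx 1.8$ across $60$ layers. Your rigorous constant gives $1.1^{118}\approx 8\times 10^{4}$, far above even the paper's no-skip figure of $\approx 320$. Your conditional statement therefore makes explicit that those downstream conclusions rest entirely on the orthogonality assumption.
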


\begin{proof}[Proof sketch]
For pre-LN: triangle inequality with optimal $(1+\epsilon)$ Cauchy
parameter, Lipschitz of $g^{(\ell+1)}$ and LN. For post-LN: LN as
outer Lipschitz factor over $(1+L_g)$ residual sum. Full details in
the effective-Lipschitz lemma below.
\end{proof}

\begin{corollary}[Numerical illustration, heuristic]\label{cor:MLA-revised-app}
For pre-LN with $L_b L_{\mathrm{LN}} \approx 0.1$,
$\tilde L_g \approx 1.005$, sensitivity ratio across $L = 60$ is
$\sim 1.8$, required $d_c$ variation $\sim 0.06$ dimensions---essentially
zero, consistent with uniform $d_c = 512$ being near-optimal. This uses
the optimistic (near-orthogonal-error) sensitivity of
Lemma~\ref{lem:error-prop-app} and is a heuristic numerical
illustration, not implied by the rigorous conservative bound.
\end{corollary}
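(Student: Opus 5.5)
The plan is a direct numerical evaluation that chains three earlier ingredients: the effective-Lipschitz formula of Lemma~\ref{lem:eff-Lip-app}, the optimistic sensitivity $s^{(\ell)}=\prod_{m>\ell}(\tilde L_g^{(m)})^2$ of Lemma~\ref{lem:error-prop-app}, and the reverse-water-filling condition of Proposition~\ref{thm:multi-WZ}. Throughout I treat the per-layer constants as uniform, $L_b^{(\ell)}L_{\mathrm{LN}}\equiv 0.1$. The result is a heuristic illustration rather than a bound, since it inherits the near-orthogonal-error hypothesis behind the optimistic sensitivity.

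\textbf{Step 1 (effective Lipschitz constant).} Lemma~\ref{lem:eff-Lip-app} gives $\tilde L_g=\sqrt{1+(0.1)^2}=\sqrt{1.01}\approx 1.00499$, which is the stated $\approx 1.005$.

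\textbf{Step 2 (sensitivity ratio).} With uniform constants, $s^{(\ell)}/s^{(L)}=\tilde L_g^{\,2(L-\ell)}=(1.01)^{L-\ell}$. The extreme ratio across $L=60$ layers is
\[
s^{(1)}/s^{(60)}=(1.01)^{59}=\exp\bigl(59\log 1.01\bigr)\approx e^{0.587}\approx 1.80 .
\]
For contrast, without skips $(1.05)^{118}\approx 3\times 10^2$, which matches Figure~\ref{fig:sensitivity}.

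\textbf{Step 3 (from sensitivity to latent dimension).} I would model the per-layer rate functions near the operating point by the rate--distortion-dimension form $R^{\star,(\ell)}(D)\approx \tfrac{d}{2}\log(1/D)+\mathrm{const}$, following the Kawabata--Dembo convention of Definition~\ref{def:intrinsic-dim}. Then $|R^{\star,(\ell)\prime}(D)|=d/(2D)$, and the KKT condition $|R^{\star,(\ell)\prime}(D^{(\ell),*})|=\lambda s^{(\ell)}$ gives $D^{(\ell),*}\propto 1/s^{(\ell)}$.

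The allocated rates therefore differ between the most and least sensitive layers by $\tfrac{d}{2}\log(s^{(1)}/s^{(60)})$ in total, about $\tfrac12\log_2 1.8\approx 0.42$ bits per latent coordinate. Equivalently, the total rate difference is $\approx 0.85$ bits per token at unit effective dimension.

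To convert a rate difference into a dimension difference, I would divide by the per-dimension cost. In the quantized regime this is $\log(1/\Delta)$ from Theorem~\ref{thm:quant-bridge-app} ($10$ bits at FP16), plus the dynamic-range term $\log(R_U/r_{\min})$ from Conjecture~\ref{thm:intrinsic-dim}. A per-dimension cost of order $10$--$15$ bits gives $\Delta d_c\approx 0.85/14\approx 0.06$, which is negligible against $d_c=512$. That is the sense in which a uniform $d_c$ is near-optimal.

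\textbf{Main obstacle.} The obstacle is Step 3. The translation from a rate gap to a fractional dimension gap depends on the assumed functional form of $R^{\star,(\ell)}$ and on an unmeasured per-dimension cost. I would therefore present the $0.06$ figure as order-of-magnitude only, and record two caveats explicitly.

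\begin{itemize}
\item The rigorous conservative sensitivity $s^{(\ell)}_{\mathrm{cons}}$ carries an extra $2^{L-\ell}$ factor, which would destroy the conclusion.
\item The constant $L_bL_{\mathrm{LN}}\approx 0.1$ is assumed rather than measured.
\end{itemize}

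Steps 1 and 2 are exact arithmetic once these modelling choices are granted.
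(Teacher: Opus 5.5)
Steps 1 and 2 are exactly the paper's computation: $\tilde L_g=\sqrt{1.01}\approx 1.005$ and $s^{(1)}/s^{(60)}=\tilde L_g^{118}\approx 1.8$, with $1.05^{118}\approx 320$ as the no-skip contrast. Your two caveats match the paper's: the extra $2^{L-\ell}$ factor in $s^{(\ell)}_{\mathrm{cons}}$, and the fact that $L_bL_{\mathrm{LN}}\approx 0.1$ is assumed rather than measured. The remaining issue is Step~3.

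\emph{How the paper does Step~3.} The paper posits no explicit rate model. It takes the layer-to-layer $d_c$ variation to be $\log(s^{(1)}/s^{(L)})/\log(R_U/r_{\min})$ with $\log(R_U/r_{\min})\approx 10$. This gives $\ln 1.8/10\approx 0.06$ here, and by the same recipe $\ln 320/10\approx 0.6$ without skips.

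\emph{Where your version does not close.} Your Kawabata--Dembo model is a reasonable way to motivate that recipe, but the arithmetic as written is inconsistent. Under $R^{\star,(\ell)}\approx\tfrac{d}{2}\log(1/D)$, the rate gap at unit dimension is $\tfrac12\log_2 1.8\approx 0.42$ bits. The $0.85$ you call ``equivalent'' is the $d=2$ value. You then reach $0.06$ only by choosing the divisor $14$ from your $10$--$15$ range. Applied consistently, your own model gives about $0.42/10\approx 0.04$ (or $0.03$ with $14$) in the skip case. It gives roughly $\tfrac12\log_2 316/(10\text{--}15)\approx 0.3$--$0.4$ in the no-skip case, versus the paper's $0.6$. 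These values are still ``essentially zero'' against $d_c=512$, so the qualitative conclusion survives, but the stated figure $0.06$ is not actually derived.

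\emph{How to fix it.} To reproduce the paper's numbers, drop the $\tfrac12$, i.e.\ use a per-dimension rate $\ln(1/D)$ in nats. Then divide the resulting gap $\ln 1.8\approx 0.59$ by the paper's $\log(R_U/r_{\min})\approx 10$ alone. Adding $\log(1/\Delta)$ to the per-dimension cost is an extra modelling choice the paper does not make.
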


\subsection{Joint vs separate compression}\label{app:joint-vs-sep}

\begin{theorem}[Joint vs separate rate, app]\label{thm:joint-sep-app}
$R^\star_{\mathrm{joint}}(D) \leq R^\star_{\mathrm{multi}}(D)$: jointly
compressing the per-layer caches never costs more rate than
compressing them separately at the same distortion.
\end{theorem}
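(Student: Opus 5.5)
The plan is a \emph{restriction-of-class} argument: every separate per-layer scheme is a particular joint scheme, so an infimum over the joint class cannot exceed the infimum over the separate class. Since $R^\star_{\mathrm{joint}}$ and $R^\star_{\mathrm{multi}}$ are not formally defined in the text above, I take the natural reading. $R^\star_{\mathrm{multi}}(D)$ is the infimum, over per-layer causal codes $\{\Chat_t^{(\ell)}\}_{\ell=1}^L$ whose combined reconstruction has end-to-end distortion $\limsup_n\E[D_n]\le D$, of $\liminf_n \sum_\ell \E[R_n^{(\ell)}]$. Here $R_n^{(\ell)} = \frac1n\sum_t H(\Chat_t^{(\ell)}\mid \Fhat^{(\ell)}_{t-1})$ and $\Fhat^{(\ell)}_{t}=\sigma(\Chat^{(\ell)}_{1:t})$. $R^\star_{\mathrm{joint}}(D)$ is the same infimum over a single code $\Chat_t$ of the whole cache, with rate $\frac1n\sum_t H(\Chat_t\mid\Fhat_{t-1})$.

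\textbf{Step 1 (embedding).} Fix any admissible separate scheme. Define the joint code $\Chat_t := (\Chat_t^{(1)},\dots,\Chat_t^{(L)})$. Each component is $\F_{t-1}$-measurable, so the tuple is as well, and causality (Lemma~\ref{lem:causality-app}) is preserved. The joint decoder simply runs the $L$ per-layer decoders on the respective components, together with $Q_t$. Because each $\Fhat^{(\ell)}_{t-1}\subseteq \Fhat_{t-1}$, this decoder is $\sigma(\Chat_{\le t},Q_t)$-measurable. It produces exactly the same reconstruction $\tilde p_t$ as the separate scheme, so the end-to-end distortion is unchanged and the joint scheme is admissible at level $D$.

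\textbf{Step 2 (rate comparison).} I would prove the pointwise-in-$t$ inequality
\[
H(\Chat_t\mid\Fhat_{t-1}) \;\le\; \sum_{\ell} H(\Chat_t^{(\ell)}\mid\Fhat_{t-1}) \;\le\; \sum_\ell H(\Chat_t^{(\ell)}\mid \Fhat^{(\ell)}_{t-1}).
\]
The first inequality is subadditivity of conditional entropy. The second holds because conditioning on the larger $\sigma$-field $\Fhat_{t-1}\supseteq\Fhat^{(\ell)}_{t-1}$ does not increase entropy. Averaging over $t$, taking expectations and then $\liminf_n$ shows that the joint rate is at most the separate rate. Taking the infimum over separate schemes then gives $R^\star_{\mathrm{joint}}(D)\le R^\star_{\mathrm{multi}}(D)$.

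\textbf{Main obstacle.} The difficulty arises if either function is instead defined through the single-letter auxiliary functional $\mathcal R^{\mathrm{WZ}}_t$ of Definition~\ref{def:Rstar}. The argument above works at the code/entropy level, where subadditivity and ``conditioning reduces entropy'' both go in the right direction. For the auxiliary functional, neither $I(X_t;U_t\mid\Fhat_{t-1}) - I(U_t;Q_t\mid\Fhat_{t-1})$ nor the remote form $I(X_t;U_t\mid Q_t,\Fhat_{t-1})$ is subadditive over a tuple $U_t=(U_t^{(\ell)})_\ell$. Conditioning on the other layers' past codes can increase conditional mutual information.

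In that case I would first try to pass through the operational rate: the converse (Theorem~\ref{thm:asymptotic}) gives $R^\star_{\mathrm{joint}}\le$ joint operational rate, which is at most the separate operational rate by Step~2. This settles the claim provided $R^\star_{\mathrm{multi}}$ is attained, or approached, by operational per-layer codes; for that I would need per-layer achievability, which Proposition~\ref{thm:multi-WZ} explicitly does not establish.

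Failing that, I would state the theorem only in its operational form. Alternatively, I would restrict to tuples whose components are conditionally independent given $(X_{<t},Q_t)$, for instance codes built from independent per-layer codebooks. For such tuples $I(X_t;U_t\mid Q_t,\Fhat_{t-1})\le\sum_\ell I(X_t;U_t^{(\ell)}\mid Q_t,\Fhat_{t-1})$, and the remaining work is to compare each $\Fhat_{t-1}$ with the corresponding $\Fhat^{(\ell)}_{t-1}$ term.
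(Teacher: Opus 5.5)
Your argument is correct and is essentially the paper's proof: a separate per-layer code is a feasible joint code at the same distortion, and the nonnegativity of the conditional total correlation $\sum_\ell H(\Chat^{(\ell)}\mid\Fhat)-H(\Chat^{(1:L)}\mid\Fhat)$ (your subadditivity step) shows the joint description length is no larger, so the infimum over the larger class cannot exceed the separate one. Your extra step from each layer's own history $\Fhat^{(\ell)}_{t-1}$ to the joint $\Fhat_{t-1}$ refines a point the paper glosses by conditioning on a common $\Fhat$; it holds only after taking expectations, which is all you use. Your caveat is also accurate: the argument works at the code level, not at the level of the single-letter functional $\mathcal{R}^{\mathrm{WZ}}_t$, and the paper's proof implicitly adopts the same code-level reading.
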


\begin{proof}
Any separate (per-layer) code is a feasible joint code at the same
distortion, so the joint infimum is taken over a superset and cannot
be larger. Equivalently, for a fixed code the conditional
\emph{total correlation}
$\sum_\ell H(\Chat^{(\ell)}\mid\Fhat)-H(\Chat^{(1:L)}\mid\Fhat)\ge0$
shows the separate description length dominates the joint one, and
taking infima preserves the inequality.
\end{proof}

We do \emph{not} claim a strict gap or a pairwise mutual-information
lower bound. The fixed-code total-correlation gap is the chain-rule
quantity $\sum_{\ell=2}^{L} I\bigl(\Chat^{(\ell)};\Chat^{(1:\ell-1)}\mid\Fhat\bigr)$,
which is \emph{not} the pairwise sum
$\sum_{\ell<\ell'} I(\Chat^{(\ell)};\Chat^{(\ell')}\mid\Fhat)$
(the pairwise sum can exceed the total correlation---e.g.\ when all
$\Chat^{(\ell)}$ coincide); moreover a fixed-code entropy gap need not
survive the two rate-function infima, so strict positivity of
$R^\star_{\mathrm{multi}}-R^\star_{\mathrm{joint}}$ is not guaranteed.

When the per-layer caches are nearly conditionally independent
(layer-specialization \citep{geva2020transformer}), the joint and
separate rates are close, so separate compression is near-optimal in
practice. The separate-versus-joint comparison is the classical
concern of distributed (multiterminal) source coding
\citep{elgamal2011network}, whose rate region quantifies the penalty
for encoding correlated sources separately.

\subsection{Application to DeepSeek-V2 (multi-layer)}

For DeepSeek-V2 with $L = 60$ layers we report both regimes of
Section~\ref{sec:main}, as the conclusion depends on whether skip
connections are accounted for.

\emph{Without skip connections.} With per-layer Lipschitz
$L_g^{(\ell)} \approx 1.05$, the sensitivity ratio is
$s^{(1)}/s^{(L)} = \prod_{m=2}^L (L_g^{(m)})^2 \approx 1.05^{118}
\approx 320$. By the reverse-water-filling condition
(Proposition~\ref{thm:multi-WZ}), the optimal $d_c^{(\ell)}$ varies by
$\log(s^{(1)}/s^{(L)})/\log(R_U/r_{\min}) \approx \log(320)/10
\approx 0.6$ dimensions across layers.

\emph{With pre-LayerNorm skip connections.} Using the effective
Lipschitz $\tilde L_g \approx 1.005$ (from
$L_b L_{\mathrm{LN}} \approx 0.1$, Lemma~\ref{lem:eff-Lip-app}),
the ratio drops to $\tilde L_g^{118} \approx 1.8$ and the
optimal-$d_c$ variation to $\log(1.8)/10 \approx 0.06$ dimensions.

In both regimes the across-layer variation ($0.6$ and $0.06$
dimensions respectively) is negligible relative to the deployed
$d_c = 512$, so DeepSeek-V2's uniform latent dimension is
consistent with near-optimal allocation under these
Lipschitz-constant assumptions. The skip-connection regime, which
is the architecturally accurate one for DeepSeek-V2, makes the
margin larger.

\input{appendix_operational}

\end{document}

%% file: appendix_operational.tex
\section{Operational achievability via sequential Wyner--Ziv transfer}
\label{app:operational}

The achievability analysis of Appendix~\ref{app:achievability} bounds
an \emph{information} rate (a sum of single-letter conditional mutual
informations) rather than the physical length of a transmitted
message, and its block-typicality and covering steps implicitly treat
the auxiliary process as if it inherited the next-token mixing of the
source. This appendix gives a self-contained \emph{operational}
account that avoids both issues. We (i) use the physical message-size
rate, (ii) replace the joint-typicality covering step by a one-shot
likelihood-encoder bound valid for an \emph{arbitrary} block joint
law, and (iii) transfer a suffix-only guarantee to the full-context
optimum through a coordinate-hybrid continuity argument whose constant
depends on the per-coordinate side-information alphabet only. The
resulting statement is a \emph{delayed}, within-block noncausal
Wyner--Ziv achievability theorem; it is not an online causal
KV-cache theorem (the encoder observes a whole block before emitting
its message), a point we make explicit in
Remark~\ref{rmk:op-online}.

\subsection{Delayed-block operational setup}
\label{app:op-setup}

Fix a block of $B$ consecutive steps
$\mathcal T_b=\{\tau+1,\dots,\tau+B\}$, $\tau=(b-1)B$. The encoder of
a \emph{suffix-only} scheme observes only the union of the length-$w$
suffixes used at those steps,
\[
\mathsf S_{b,w}\;:=\;X_{\tau-w+1:\tau+B-1}\qquad(\text{length }B+w-1),
\]
and in particular not the deep past $X_{1:\tau-w}$. At step $t$ the
decoder is given the quantized next-step query
$Y_t=\Pi_M(Q_t)\in[M]$, produced by a fixed stochastic kernel
$\kappa_M(\cdot\mid q)$ on $[M]$; the quantizer dither is drawn
independently across $t$, so the side information is conditionally
product given the source,
\begin{equation}
\label{eq:op-cond-product}
\Prob(y^B\mid \mathsf s,a)=\prod_{t\in\mathcal T_b}\kappa_M\!\bigl(y_t\mid q_t(\mathsf s)\bigr),
\qquad Y^B:=(Y_t)_{t\in\mathcal T_b}.
\end{equation}
A public seed $A=a$ (source-independent) is shared by encoder and
decoder.

\paragraph{Code, rate, distortion.}
A (delayed) block Wyner--Ziv code is a triple $(f,\gamma,\{g_t\})$ with
a randomized encoder $\mathsf M=f(\mathsf S_{b,w},\omega)\in\mathcal M_B$,
private randomness $\omega\perp(Z^B,Y^B)\mid(\mathsf S_{b,w},A)$, a
\emph{de-binning} map $\hat U=\gamma(\mathsf M,Y^B,A)$ that may use the
whole block side information, and \emph{local} per-coordinate
reconstructions $\phat_t=g_t(\hat U,Y_t,A)\in\Delta_{\epsilon_0}(\V)$,
$t\in\mathcal T_b$, each using only the step-$t$ query $Y_t$ (the
physically correct cache-decoding order). Here
$\Delta_{\epsilon_0}(\V)$ is the simplex with floor $\epsilon_0\le 1/V$.
All rate functions and the operational optimum below are taken over
this single decoder class; the locality is in the
\emph{reconstruction}, not the de-binning, and is exactly what
localizes the side-information penalty of
Theorem~\ref{thm:op-transfer} to $O(\bar\tau^Y)$ rather than
$O(B\,\bar\tau^Y)$. We use the physical rate
\begin{equation}
R_{\mathrm{op}}:=\tfrac1B\log_2|\mathcal M_B|
\qquad(\text{or }\tfrac1B\,\E[\ell(\mathsf M)]\text{ for a prefix code}),
\end{equation}
and the \emph{integrated log-loss} distortion: with world
$\bullet\in\{F,W\}$ targets $p_t^\bullet$ (full $p_t^F=p_\theta(\cdot\mid X_{1:t-1})$,
windowed $p_t^W=p_\theta(\cdot\mid X_{t-w:t-1})$),
\begin{equation}
\label{eq:op-distortion}
\bar d_B^{\bullet}(\mathsf s,\widehat{\mathbf p})
:=\frac1B\sum_{t\in\mathcal T_b}\sum_{z\in\V}p_t^\bullet(z\mid s_t)\,[-\log\phat_t(z)]
=\frac1B\sum_{t\in\mathcal T_b}\Bigl[H(p_t^\bullet)+\KL{p_t^\bullet}{\phat_t}\Bigr]\in[0,d_{\max}],
\end{equation}
$d_{\max}=\log(1/\epsilon_0)$. Writing the loss this way---integrating
the ghost target $Z_t\sim p_t^\bullet$ out of the distortion rather
than feeding $Z^B$ to the coder as a source---keeps $\mathsf S_{b,w}$
the only encoder-observed variable, as a Wyner--Ziv source must be.
The block (multi-letter) variational rate functions are
\begin{align}
\label{eq:op-rate-fns}
\mathcal R^{F}_{B}(D)&:=\tfrac1B\inf_{\substack{P_{U\mid \mathsf E_B,A},\,\gamma,\,\{g_t\}:\;
U-\mathsf E_B-(Z^B,Y^B)\mid A,\;\E[\bar d_B^F]\le D}}I(\mathsf E_B;U\mid Y^B,A),\\
\mathcal R^{W,\mathrm{suf}}_{B,w}(D)&:=\tfrac1B\inf_{\substack{P_{U\mid \mathsf S_{b,w},A},\,\gamma,\,\{g_t\}:\;
U-\mathsf S_{b,w}-(Z^B,Y^B)\mid A,\;\E[\bar d_B^W]\le D}}I(\mathsf S_{b,w};U\mid Y^B,A),
\end{align}
where $\mathsf E_B:=X_{1:\tau+B-1}$ is the full block history (the
unrestricted encoder observation). The infima range over the
two-stage decoder of \S\ref{app:op-setup}: the reconstruction entering
$\bar d_B^\bullet$ is $\phat_t=g_t\bigl(\gamma(U,Y^B,A),Y_t,A\bigr)$,
where the de-binning $\gamma$ is free to use the whole block side
information $Y^B$ and only the final map $g_t$ is local in $Y_t$. The
auxiliary $U$ obeys the Markov constraint $U-\mathsf E_B-(Z^B,Y^B)\mid A$
(it is generated from the encoder observation alone); the de-binning
$\gamma(U,Y^B,A)$ is a separate decoder-side map and is \emph{not}
required to satisfy that constraint. We retain the multi-letter form
throughout: no claim is made that
$\mathcal R^{\bullet}_B=\tfrac1B\sum_t I(\,\cdot\,;U_t\mid Y_t,A)$.

\subsection{Block converse}
\label{app:op-converse}

\begin{theorem}[Block converse]
\label{thm:op-converse}
Any delayed block code with $\E[\bar d_B^F]\le D$ satisfies
$\tfrac1B\log_2|\mathcal M_B|\ge \mathcal R^F_B(D)$ (and, for a prefix
code, $\tfrac1B\E[\ell(\mathsf M)]\ge\mathcal R^F_B(D)$). No
single-letterization is assumed.
\end{theorem}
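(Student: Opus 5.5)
The plan is the standard one-shot Wyner--Ziv converse. The key move is to take the auxiliary to be the message itself together with the encoder's private randomness, so that the code becomes a feasible point of the variational problem defining $\mathcal R^F_B(D)$.

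\emph{Step 1: the message as an auxiliary.} Fix a delayed block code $(f,\gamma,\{g_t\})$ with $\E[\bar d_B^F]\le D$. Its encoder reads $\mathsf S_{b,w}$, which is a sub-block of $\mathsf E_B=X_{1:\tau+B-1}$, so it is in particular an $\mathsf E_B$-observing code. Set $U:=\mathsf M=f(\mathsf S_{b,w},\omega)$.

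\emph{Step 2: feasibility of $U=\mathsf M$.} The Markov constraint $U-\mathsf E_B-(Z^B,Y^B)\mid A$ holds for three reasons:
\begin{itemize}
\item the private randomness satisfies $\omega\perp(Z^B,Y^B)\mid(\mathsf S_{b,w},A)$ by assumption;
\item $Y^B$ depends on the source only through $\mathsf E_B$, via the queries and the independent dither kernel \eqref{eq:op-cond-product};
\item the ghost targets $Z_t\sim p_t^F(\cdot\mid X_{1:t-1})$ are functions of $\mathsf E_B$ plus independent sampling.
\end{itemize}
One point needs care: the conditional-independence hypothesis on $\omega$ is stated given $\mathsf S_{b,w}$, not given $\mathsf E_B$. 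To upgrade it, I will read the code model so that $\omega$ is generated independently of everything given $(\mathsf S_{b,w},A)$, including the deep past. This gives $\omega\perp(X_{1:\tau-w},Z^B,Y^B)\mid(\mathsf S_{b,w},A)$, and hence the Markov chain given $\mathsf E_B$.

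The decoder pair $(\gamma,\{g_t\})$ is exactly the two-stage decoder allowed in the infimum, and it achieves $\E[\bar d_B^F]\le D$. So $P_{U\mid\mathsf E_B,A}$, $\gamma$ and $\{g_t\}$ together form a feasible point, and therefore $B\,\mathcal R^F_B(D)\le I(\mathsf E_B;\mathsf M\mid Y^B,A)$.

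\emph{Step 3: entropy chain.} The remaining estimate is
\[
I(\mathsf E_B;\mathsf M\mid Y^B,A)\le H(\mathsf M\mid Y^B,A)\le H(\mathsf M)\le \log_2|\mathcal M_B|.
\]
For a prefix code, Kraft's inequality gives $H(\mathsf M)\le\E[\ell(\mathsf M)]$. Dividing by $B$ yields both claims. No single-letterization is used anywhere, because the rate function is itself the multi-letter quantity $\tfrac1B I(\mathsf E_B;U\mid Y^B,A)$.

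\emph{Main obstacle.} The one delicate point is the Markov verification in Step 2 with a randomized encoder: it needs $\omega$ to be conditionally independent of the deep past. I would state this as part of the code model; it is implicit in ``private randomness''. An alternative is to absorb $\omega$ into $A$-style public randomness, which only enlarges the conditioning. If neither reading is granted, I would instead take $U:=(\mathsf M,\omega)$, but that would spoil the entropy bound, so the independence reading is the route to pursue first.
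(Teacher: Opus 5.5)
Your proof is correct and is essentially the paper's argument: take $U:=\mathsf M$, check the Markov chain $\mathsf M-\mathsf E_B-(Z^B,Y^B)\mid A$ and the feasibility of the code's own two-stage decoder $(\gamma,\{g_t\})$, then chain $\log_2|\mathcal M_B|\ge H(\mathsf M)\ge H(\mathsf M\mid Y^B,A)\ge I(\mathsf E_B;\mathsf M\mid Y^B,A)\ge B\,\mathcal R^F_B(D)$, using Kraft for the prefix-code case. You go beyond the paper in one place: it simply asserts $\omega\perp(Z^B,Y^B)\mid(\mathsf E_B,A)$, whereas you note that the setup grants this only given $(\mathsf S_{b,w},A)$, and you justify the upgrade by requiring $\omega$ to be independent of the deep past too (weak union then gives the chain given $\mathsf E_B$), which is sound and slightly more careful.
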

\begin{proof}
Take the auxiliary $U:=\mathsf M$. Since $\mathsf
M=f(\mathsf E_B,\omega)$ with $\omega\perp(Z^B,Y^B)\mid(\mathsf E_B,A)$,
the chain $\mathsf M-\mathsf E_B-(Z^B,Y^B)\mid A$ holds, and the code's
own de-binning $\gamma$ together with the local maps $\{g_t\}$
certifies $\E[\bar d_B^F]\le D$, so $U=\mathsf M$---paired with this
two-stage decoder---is feasible for the infimum in
\eqref{eq:op-rate-fns}. Hence
$\log_2|\mathcal M_B|\ge H(\mathsf M)\ge H(\mathsf M\mid Y^B,A)\ge
I(\mathsf E_B;\mathsf M\mid Y^B,A)\ge B\,\mathcal R^F_B(D)$. For a
prefix code $\E[\ell(\mathsf M)]\ge H(\mathsf M)$ by Kraft.
\end{proof}

\subsection{One-shot achievability over a block}
\label{app:op-ach}

We use the one-shot Wyner--Ziv bound obtained from the Poisson
matching lemma; it holds for an arbitrary joint law and an arbitrary
test channel, which is exactly what a dependent block source
requires.

\begin{lemma}[One-shot Wyner--Ziv, {\citealp[Thm.~3]{liananantharam2021poisson}}, relaxed]
\label{lem:op-oneshot}
Let $(\mathsf X,\mathsf T)\sim P_{\mathsf X,\mathsf T}$, test channel
$P_{U\mid\mathsf X}$ and reconstruction $z(\cdot,\cdot)$. For every
integer $L$ there is a code with $|\mathcal M|=L$ and excess-distortion
probability
\[
P_e\;\le\;\E\Bigl[\min\bigl\{\mathbf 1\{d(\mathsf X,z(U,\mathsf T))>\mathsf D\}
+L^{-1}2^{\,\imath(U;\mathsf X)-\imath(U;\mathsf T)},\,1\bigr\}\Bigr],
\]
where $\imath(u;x)=\log_2\frac{dP_{U\mid\mathsf X}(u\mid x)}{dP_U(u)}$,
$\imath(u;t)=\log_2\frac{dP_{U\mid\mathsf T}(u\mid t)}{dP_U(u)}$, and
the side information $\mathsf T$ is available noncausally at the
decoder.
\end{lemma}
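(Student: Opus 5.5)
The plan is to prove Lemma~\ref{lem:op-oneshot} by the Poisson functional representation with random binning, following \citet{liananantharam2021poisson}, and then weaken their exact error expression to the stated $\min\{\cdot,1\}$ form. Encoder and decoder share (through the public seed $A$) a marked Poisson process $\{(\bar U_i,\Theta_i)\}_{i\ge1}$: the $\Theta_1<\Theta_2<\cdots$ are the points of a rate-one Poisson process on $[0,\infty)$, and the marks $\bar U_i$ are i.i.d.\ $P_U$, independent of the arrivals. The seed also supplies i.i.d.\ uniform bin labels $\beta_i\in[L]$.

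\textbf{Encoder.} Given $\mathsf X=x$, the encoder selects
\[
K:=\arg\min_i \Theta_i\Big/\tfrac{dP_{U\mid\mathsf X}}{dP_U}(\bar U_i\mid x)
\]
and transmits $\mathsf M=\beta_K$, so that $|\mathcal M|=L$. By the Poisson functional representation, $\bar U_K\sim P_{U\mid\mathsf X}(\cdot\mid x)$. Because the encoder never sees $\mathsf T$, the chosen auxiliary satisfies the Markov chain $\bar U_K-\mathsf X-\mathsf T$. Consequently $(\mathsf X,\mathsf T,\bar U_K)$ has exactly the joint law $P_{\mathsf X,\mathsf T}P_{U\mid\mathsf X}$ that appears in the bound.

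\textbf{Decoder.} The decoder restricts attention to the indices whose label equals $\mathsf M$ and outputs
\[
\hat K:=\arg\min_{i:\beta_i=\mathsf M}\Theta_i\Big/\tfrac{dP_{U\mid\mathsf T}}{dP_U}(\bar U_i\mid\mathsf T).
\]
It then reconstructs with $z(\bar U_{\hat K},\mathsf T)$.

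\textbf{Error probability.} The central step is the Poisson matching lemma, applied conditionally on $(\mathsf X,\mathsf T,\bar U_K)$. The points sharing the bin of $K$, other than $K$ itself, form an independent thinned Poisson process of rate $1/L$. The matching lemma then gives
\[
\Pr\{\hat K\neq K\mid \mathsf X,\mathsf T,\bar U_K\}\le 1-\bigl(1+L^{-1}2^{\imath(\bar U_K;\mathsf X)-\imath(\bar U_K;\mathsf T)}\bigr)^{-1}.
\]
The elementary inequality $1-(1+a)^{-1}\le\min\{a,1\}$ turns this into the stated exponential term. On the event $\hat K=K$, the decoder outputs $z(\bar U_K,\mathsf T)$, so an excess-distortion event requires either $d(\mathsf X,z(\bar U_K,\mathsf T))>\mathsf D$ or $\hat K\neq K$. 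A union bound, capped at $1$ and followed by an expectation over $(\mathsf X,\mathsf T,\bar U_K)$, yields the claimed $P_e$.

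\textbf{Main obstacle.} I expect the delicate step to be the rate-$1/L$ thinning. One must justify that, conditional on the selected index $K$ and its mark, the competing points in the same bin still form a Poisson process of rate $1/L$ independent of $(\mathsf X,\mathsf T)$. This is where the bin labels must be independent of the process and of the source. The proof of the matching lemma in \citet{liananantharam2021poisson} handles exactly this by Palm calculus, so I would cite it there rather than rederive it.

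Nothing in the argument uses independence across coordinates of $\mathsf X$ or $\mathsf T$, so the bound holds for an arbitrary block joint law. This is what allows its later application to the dependent source $\mathsf S_{b,w}$ with side information $Y^B$.
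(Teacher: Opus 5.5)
Your proposal is correct and follows the same route as the paper. The paper invokes Li--Anantharam's Theorem~3 (Poisson functional representation at the encoder with a uniform bin mark, and a matching-lemma decoder restricted to the received bin) in the form $P_e\le\E[1-\mathbf 1\{d\le\mathsf D\}(1+a)^{-1}]$ with $a=L^{-1}2^{\imath(U;\mathsf X)-\imath(U;\mathsf T)}$, then relaxes via $a/(1+a)\le\min\{a,1\}$, which is exactly the scheme and relaxation you reconstruct. One small precision about your ``thinned Poisson process'' claim: conditional on the selected point, the other same-bin points form a Poisson process of intensity $\tfrac1L P_U\times\lambda$ \emph{restricted} to the region the encoder's selection leaves open, so it is not independent of $\mathsf X$. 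This restriction only removes competitors, so the matching-lemma bound you cite is unaffected.
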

The displayed inequality follows from the matching-lemma form
$P_e\le\E[1-\mathbf 1\{d\le\mathsf D\}(1+a)^{-1}]$ of
\citet{liananantharam2021poisson} and $\frac{a}{1+a}\le\min\{a,1\}$ for
$a\ge0$; an equivalent resolvability-based finite-blocklength bound is
in \citet{watanabe2015nonasymptotic}. The likelihood-encoder
interpretation and soft-covering analysis are due to
\citet{songcuffpoor2016likelihood} and \citet{cover2006elements}.

\begin{theorem}[Delayed-block suffix-only achievability]
\label{thm:op-ach}
Apply Lemma~\ref{lem:op-oneshot} with
$\mathsf X=\mathsf S_{b,w}$, $\mathsf T=Y^B$, reconstruction
$\widehat{\mathbf p}=g(U,Y^B,A)$ and the block distortion
\eqref{eq:op-distortion}. Let $\jmath_B:=\imath(U;\mathsf S_{b,w}\mid A)-\imath(U;Y^B\mid A)$,
so $\E[\jmath_B]=I(U;\mathsf S_{b,w}\mid Y^B,A)$ under
$U-\mathsf S_{b,w}-Y^B\mid A$. For any threshold margin $\delta_B>0$
and $\epsilon'>0$ there is a suffix-only block code with
\begin{equation}
\label{eq:op-ach-rate}
\frac1B\log_2|\mathcal M_B|\;\le\;\mathcal R^{W,\mathrm{suf}}_{B,w}(D-\gamma)
\;+\;\delta_B\;+\;\frac{\log_2(1/\epsilon')}{B}\;+\;O(B^{-1}),
\end{equation}
\begin{equation}
\label{eq:op-ach-dist}
\E[\widehat d_B^{W}]\;\le\;D\;+\;d_{\max}\bigl(\eta_B+\eta_B'+\epsilon'\bigr),
\quad
\eta_B:=\Prob\{\jmath_B-\E\jmath_B>B\delta_B\},\ \ \eta_B':=\Prob\{\bar d_B^{\mathrm{tc}}-\E\bar d_B^{\mathrm{tc}}>\gamma\},
\end{equation}
where the test channel is chosen with $\E[\bar d_B^{\mathrm{tc}}]\le D-\gamma$.
\end{theorem}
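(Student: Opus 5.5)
The plan is to instantiate Lemma~\ref{lem:op-oneshot} with a near-optimal test channel for $\mathcal R^{W,\mathrm{suf}}_{B,w}(D-\gamma)$. Then choose the codebook size $L$ so that the information-density term is small except on a large-deviation event, and read off rate and distortion separately.

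\textbf{Choosing the test channel.} Fix a test channel $P_{U\mid\mathsf S_{b,w},A}$ together with maps $\gamma,\{g_t\}$ that are feasible for \eqref{eq:op-rate-fns} at level $D-\gamma$. Take it to be optimal up to an additive slack of order $O(B^{-1})$ per letter; this slack is absorbed into the $O(B^{-1})$ term of \eqref{eq:op-ach-rate}. Its expected test-channel distortion then satisfies $\E[\bar d_B^{\mathrm{tc}}]\le D-\gamma$. The Markov chain $U-\mathsf S_{b,w}-Y^B\mid A$ holds, so $\E[\jmath_B]=I(U;\mathsf S_{b,w}\mid Y^B,A)\le B\,\mathcal R^{W,\mathrm{suf}}_{B,w}(D-\gamma)+O(1)$. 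Everything is carried out conditionally on the public seed $A$, so the lemma is applied for each $a$ and then averaged.

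\textbf{Setting the codebook size.} Put $\log_2 L:=\lceil \E[\jmath_B]+B\delta_B+\log_2(1/\epsilon')\rceil$. Dividing by $B$ gives \eqref{eq:op-ach-rate}; the ceiling contributes only the $O(B^{-1})$ term.

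\textbf{Applying the lemma with threshold $\mathsf D:=D$.} Take as reconstruction the two-stage decoder $g_t(\gamma(U,Y^B,A),Y_t,A)$. This is a function of $(U,\mathsf T)$ with $\mathsf T=Y^B$, so the lemma applies with this map as $z(U,\mathsf T)$. Split the excess-probability bound on the events $\{\jmath_B-\E\jmath_B>B\delta_B\}$ and $\{\bar d_B^{\mathrm{tc}}>D\}$.
\begin{itemize}
\item On the complement of the first event, $L^{-1}2^{\jmath_B}\le\epsilon'$.
\item The first event has probability $\eta_B$.
\item The second event is contained in $\{\bar d_B^{\mathrm{tc}}-\E\bar d_B^{\mathrm{tc}}>\gamma\}$, because $\E[\bar d_B^{\mathrm{tc}}]\le D-\gamma$; it therefore has probability at most $\eta_B'$.
\end{itemize}
Together these give $P_e\le\eta_B+\eta_B'+\epsilon'$.

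\textbf{Converting to expected distortion.} The distortion is bounded by $d_{\max}=\log(1/\epsilon_0)$ because every reconstruction lies in $\Delta_{\epsilon_0}(\V)$. Hence $\E[\widehat d_B^W]\le D\,\Prob(\text{no excess})+d_{\max}P_e\le D+d_{\max}(\eta_B+\eta_B'+\epsilon')$, which is \eqref{eq:op-ach-dist}.

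\textbf{Remaining consistency checks.} The encoder sees only $\mathsf S_{b,w}$, since the lemma's encoder uses only $\mathsf X$; this makes the code suffix-only. The decoder uses $Y^B$ only through $\gamma$ and uses $Y_t$ locally in $g_t$, so the code stays in the prescribed decoder class.

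\textbf{Main obstacle.} I expect the main difficulty to be matching the lemma's single excess-distortion event to the integrated log-loss $\bar d_B^W$. The issue is that the target $p_t^W$ is integrated out rather than being part of $\mathsf X$. I would handle this by treating $\bar d_B^W(\mathsf s,\widehat{\mathbf p})$ as a deterministic function of $(\mathsf X,z(U,\mathsf T))$, since each $p_t^W$ depends only on the suffix contained in $\mathsf S_{b,w}$. A second point needing care is that the conditional-product form \eqref{eq:op-cond-product} of the side information is what lets $\imath(U;Y^B\mid A)$ be well defined through $P_{U\mid Y^B,A}$. With these two points settled, the remaining bookkeeping should be routine.
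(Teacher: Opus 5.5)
Your proposal is correct and follows the paper's proof essentially step for step. It uses the same choices $\theta=\E[\jmath_B]+B\delta_B$ and $\log_2 L=\theta+\log_2(1/\epsilon')$ with the ceiling absorbed into $O(B^{-1})$, the same event split giving $P_e\le\eta_B+\eta_B'+\epsilon'$, and the same bounded-loss conversion $\E[\widehat d_B^{W}]\le D+d_{\max}P_e$. Your containment $\{\bar d_B^{\mathrm{tc}}>D\}\subseteq\{\bar d_B^{\mathrm{tc}}-\E\bar d_B^{\mathrm{tc}}>\gamma\}$ is slightly more careful than the paper's equality, as is your near-optimal (rather than exactly optimal) test channel. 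One small correction to your closing remark: only the Markov chain $U-\mathsf S_{b,w}-Y^B\mid A$ is needed here, not the product form of the side information, since the information density is well defined for any joint law.
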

\begin{proof}
For any $\theta$,
$\E[\min\{L^{-1}2^{\jmath_B},1\}]\le\Prob\{\jmath_B>\theta\}+L^{-1}2^{\theta}$:
on $\{\jmath_B\le\theta\}$ the term is at most $L^{-1}2^{\theta}$, off
it bound the minimum by $1$. Take
$\theta=\E[\jmath_B]+B\delta_B$ and $\log_2 L=\theta+\log_2(1/\epsilon')$,
so $L^{-1}2^{\theta}=\epsilon'$ and the covering contribution is at
most $\eta_B+\epsilon'$; \eqref{eq:op-ach-rate} follows after
minimizing $\tfrac1B\E[\jmath_B]=\tfrac1B I(U;\mathsf S_{b,w}\mid Y^B,A)$
over feasible test channels (the $O(B^{-1})$ absorbs the integer
ceiling of $L$). The first term of Lemma~\ref{lem:op-oneshot} is the
test-channel excess probability
$\Prob\{\bar d_B^{\mathrm{tc}}>D\}=\Prob\{\bar d_B^{\mathrm{tc}}-\E>\gamma\}=\eta_B'$.
Hence the code's excess probability is $P_e\le\eta_B'+\eta_B+\epsilon'$,
and since $\widehat d_B^W\in[0,d_{\max}]$,
$\E[\widehat d_B^W]\le D+(d_{\max}-D)P_e\le D+d_{\max}P_e$, giving
\eqref{eq:op-ach-dist}.
\end{proof}

\subsection{Coordinate-hybrid transfer}
\label{app:op-hybrid}

The next lemma is the operational counterpart of the rate-continuity
step. The naive bound, treating $Y^B$ as one symbol of alphabet
$M^B$, gives a continuity constant $\propto\log(M^B)=B\log M$ and is
vacuous per symbol. A coordinate-by-coordinate hybrid removes the
exponential alphabet, leaving the per-coordinate alphabet $M$.

\begin{lemma}[Coordinate-hybrid side-information rate continuity]
\label{lem:op-hybrid}
Under the conditionally product side information
\eqref{eq:op-cond-product}, with a common base law
$P_{\mathsf E_B,U,A}$ and $U-\mathsf E_B-Y^B\mid A$,
\[
\frac1B\bigl|I_F(\mathsf E_B;U\mid Y^B,A)-I_W(\mathsf E_B;U\mid Y^B,A)\bigr|
\;\le\;\frac1B\sum_{r=1}^B\Omega_M(\tau_r^Y)\;\le\;\Omega_M(\bar\tau^Y_{B,w}),
\]
where $\tau_r^Y:=\E_{\mathsf E_B}[\TV(\kappa_M(\cdot\mid q_r^F),\kappa_M(\cdot\mid q_r^W))]$,
$\bar\tau^Y_{B,w}:=\tfrac1B\sum_r\tau_r^Y$, and
$\Omega_M(\tau)=2\bar f_M(\tau)$ with
$\bar f_M(\tau)=f_M(\min\{\tau,1-\tfrac1M\})$,
$f_M(\tau)=\tau\log(M-1)+h_2(\tau)$ the
equivocation modulus of \citet{alhejjismith2020equivocation}.
\end{lemma}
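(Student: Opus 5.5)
The plan is a telescoping hybrid over the $B$ side-information coordinates, with one coordinate swapped at a time, so that the continuity constant only ever sees a single coordinate's alphabet $[M]$ and never the product alphabet $M^B$.

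\textbf{Step 1 (reduce to an entropy difference).} Since $U-\mathsf E_B-Y^B\mid A$ holds in both worlds and the base law $P_{\mathsf E_B,U,A}$ is common, write
\[
I_\bullet(\mathsf E_B;U\mid Y^B,A)=H_\bullet(U\mid Y^B,A)-H(U\mid \mathsf E_B,A),\qquad \bullet\in\{F,W\}.
\]
The second term is the same in both worlds, because given $(\mathsf E_B,A)$ the side information carries no further information about $U$. So it suffices to bound $|H_F(U\mid Y^B,A)-H_W(U\mid Y^B,A)|$.

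\textbf{Step 2 (hybrids).} For $r=0,\dots,B$ let world $\mathsf H_r$ draw $Y_1,\dots,Y_r$ from $\kappa_M(\cdot\mid q^W_\cdot)$ and $Y_{r+1},\dots,Y_B$ from $\kappa_M(\cdot\mid q^F_\cdot)$. Each coordinate uses independent dither given the source, as in \eqref{eq:op-cond-product}. Then $\mathsf H_0=F$, $\mathsf H_B=W$, and the triangle inequality reduces the claim to bounding the difference $\Delta_r$ between consecutive hybrids $\mathsf H_{r-1}$ and $\mathsf H_r$.

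\textbf{Step 3 (one-coordinate coupling).} Consecutive hybrids differ only in the kernel driving $Y_r$. On a common space, generate the other coordinates $Y_{-r}$ identically. Given $\mathsf E_B$, draw $(Y_r^F,Y_r^W)$ from a maximal coupling of $\kappa_M(\cdot\mid q_r^F)$ and $\kappa_M(\cdot\mid q_r^W)$, independently of $(U,Y_{-r})$. This keeps each hybrid's joint law with $U$ correct, and gives $\Prob\{Y_r^F\neq Y_r^W\}=\tau_r^Y$.

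With $Z:=(Y_{-r},A)$, the chain rule gives
\[
H(U\mid Y_r^F,Z)\le H(U\mid Y_r^W,Z)+H(Y_r^W\mid Y_r^F,Z),
\]
and symmetrically with $F$ and $W$ exchanged. Fano's inequality on the alphabet $[M]$ bounds each conditional entropy $H(Y_r^\bullet\mid Y_r^\circ,Z)$ by $f_M(\Prob\{Y_r^F\neq Y_r^W\})$. Because the mismatch probability is itself an expectation over $\mathsf E_B$, I will use concavity and monotonicity of $\bar f_M$ (truncated at $1-1/M$, where $f_M$ peaks) rather than a pointwise bound, obtaining $f_M(\cdot)\le\bar f_M(\tau_r^Y)$. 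The two one-sided bounds together give $|\Delta_r|\le \Omega_M(\tau_r^Y)=2\bar f_M(\tau_r^Y)$, with room to spare.

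\textbf{Step 4 (averaging).} Summing over $r$ and dividing by $B$ gives the first inequality. Jensen's inequality for the concave function $\bar f_M$ then gives $\frac1B\sum_r\Omega_M(\tau_r^Y)\le\Omega_M(\bar\tau^Y_{B,w})$.

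\textbf{Main obstacle.} I expect the delicate point to be Step 3: the coupling must leave the Markov structure and the marginal law of $(U,Y_{-r})$ in each hybrid unchanged, so that the swapped entropies really are the hybrid entropies. I would construct the coupling explicitly, conditionally on $(\mathsf E_B,A)$ and with fresh randomness, and verify this before invoking Fano.
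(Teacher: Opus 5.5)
Your argument is correct. It shares the paper's skeleton: the reduction to the $Y^B$-dependent term via the Markov chain and the common base law, the coordinate-by-coordinate hybrid, and Jensen for the concave $\bar f_M$. It differs from the paper in the key lemma used for the one-coordinate step.

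The paper isolates the changing piece with the chain rule $I(U;Y^B\mid A)=I(U;Y_{-r}\mid A)+I(U;Y_r\mid Y_{-r},A)$. It writes the second term as $H(Y_r\mid Y_{-r},A)-H(Y_r\mid U,Y_{-r},A)$ and observes that the two hybrids' joint laws are within $\tau_r^Y$ in total variation. It then applies the Alhejji--Smith uniform equivocation continuity bound to each of the two conditional entropies separately. That double application is where the factor $2$ in $\Omega_M=2\bar f_M$ comes from.

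You instead exploit the fact that adjacent hybrids differ only in the kernel generating $Y_r$ from the source. A maximal coupling of $(Y_r^F,Y_r^W)$ given $(\mathsf E_B,A)$, independent of $(U,Y_{-r})$, reduces the comparison to $H(U\mid Y_r^F,Z)\le H(U\mid Y_r^W,Z)+H(Y_r^W\mid Y_r^F,Z)$ plus Fano. This is self-contained and gives $|\Delta_r|\le\bar f_M(\tau_r^Y)=\tfrac12\Omega_M(\tau_r^Y)$, i.e.\ the lemma with the constant halved.

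What the paper's route buys is robustness. It needs only a total-variation bound between the joint laws, not a coupling that holds $(U,Y_{-r},A)$ fixed. It also works only with entropies of the $M$-ary $Y_r$, so it makes sense for an arbitrary (e.g.\ continuous) auxiliary $U$. Your version, written with $H(U\mid\cdot)$, tacitly assumes $H(U)<\infty$. You can remove that by running the same argument on mutual informations: $I(U;Y_r^W\mid Z)\le I(U;Y_r^F\mid Z)+H(Y_r^W\mid Y_r^F,Z)$.

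The concavity detour in your Step 3 is unnecessary. Under the maximal coupling the unconditional mismatch probability is exactly $\tau_r^Y$. Fano together with the trivial bound $\log M$ therefore gives $\bar f_M(\tau_r^Y)$ directly, and concavity is needed only for Jensen in Step 4.
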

\begin{proof}
By the chain rule under $U-\mathsf E_B-Y^B\mid A$,
$I(\mathsf E_B;U\mid Y^B,A)=I(\mathsf E_B;U\mid A)-I(U;Y^B\mid A)$.
The base law $P_{\mathsf E_B,U,A}$ is common to both worlds, so
$I(\mathsf E_B;U\mid A)$ is identical in $F$ and $W$ and the
difference reduces to the $Y^B$ term. For $r=0,\dots,B$ define the
hybrid channel $P^{(r)}(y^B\mid e,a)=\prod_{t\le r}\kappa_M(y_t\mid q_t^W)\prod_{t>r}\kappa_M(y_t\mid q_t^F)$,
so $P^{(0)}=F$ and $P^{(B)}=W$. Adjacent hybrids differ only in the
$r$-th coordinate channel. Writing
$I(U;Y^B\mid A)=I(U;Y_{-r}\mid A)+I(U;Y_r\mid Y_{-r},A)$, the first
term is unchanged between $P^{(r-1)}$ and $P^{(r)}$ because the joint
of $(U,Y_{-r},A)$ depends only on the unchanged coordinates (the
product structure \eqref{eq:op-cond-product} gives $Y_r\perp Y_{-r}\mid(\mathsf E_B,A)$).
Only $I(U;Y_r\mid Y_{-r},A)=H(Y_r\mid Y_{-r},A)-H(Y_r\mid U,Y_{-r},A)$
changes; here $Y_r\in[M]$. The joints
$(Y_r,Y_{-r},A)$ and $(Y_r,U,Y_{-r},A)$ differ between the two hybrids
in total variation by at most $\tau_r^Y$ (appending the common
variables preserves or reduces total variation). The equivocation
continuity bound of \citet{alhejjismith2020equivocation}, which
depends on the alphabet of $Y_r$ but not on the conditioning
alphabet, gives
$|I_{P^{(r)}}(U;Y^B\mid A)-I_{P^{(r-1)}}(U;Y^B\mid A)|\le2\bar f_M(\tau_r^Y)=\Omega_M(\tau_r^Y)$.
Summing over $r$ and applying Jensen's inequality to the concave
$\Omega_M$ yields the claim.
\end{proof}

\begin{lemma}[Block suffixization]
\label{lem:op-suffix}
In the windowed world, for any full-history kernel
$K(du\mid h,\mathsf S_{b,w},A)$ the suffix average
$\widetilde K(du\mid \mathsf S_{b,w},A):=\int K(du\mid h,\mathsf S_{b,w},A)\,P(dh\mid\mathsf S_{b,w},A)$
leaves the joint $(\mathsf S_{b,w},U,Y^B)$ and the local distortion
\eqref{eq:op-distortion} unchanged and does not increase the rate;
hence $\mathcal R^{W,\mathrm{suf}}_{B,w}=\mathcal R^{W,\mathrm{unr}}_{B,w}$.
\end{lemma}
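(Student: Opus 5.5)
The plan is a standard ``conditional averaging'' (suffixization) argument in the windowed world $W$, where every target depends on the past only through the suffix. Given an arbitrary full-history test channel $K(du\mid h,\mathsf S_{b,w},A)$, with $h$ the deep past $X_{1:\tau-w}$, we form $\widetilde K$ by integrating out $h$ against $P(dh\mid \mathsf S_{b,w},A)$. Three things must be checked: (i) the joint law of $(\mathsf S_{b,w},U,Y^B,A)$ is unchanged; (ii) the distortion \eqref{eq:op-distortion} is unchanged; and (iii) $I(\cdot\,;U\mid Y^B,A)$ does not increase. Given these, the two infima coincide: suffix-only channels are a subset of unrestricted ones, which gives $\mathcal R^{W,\mathrm{unr}}\le\mathcal R^{W,\mathrm{suf}}$, and (i)--(iii) give the reverse inequality.

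\textbf{Steps (i) and (ii).} In world $W$ the quantities that matter are determined by $\mathsf S_{b,w}$ alone. The target $p^W_t=p_\theta(\cdot\mid X_{t-w:t-1})$ is a function of $\mathsf S_{b,w}$. We would also posit that the windowed query $q^W_t$ is a function of $\mathsf S_{b,w}$; this should be the defining feature of the windowed world. By \eqref{eq:op-cond-product}, $Y^B$ is then generated from $(\mathsf S_{b,w},A)$ by a product kernel that does not involve $h$. Consequently $h-(\mathsf S_{b,w},A)-Y^B$ holds, and $U$ depends on $Y^B$ only through $(h,\mathsf S_{b,w},A)$.

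Integrating $h$ out of $P(h\mid \mathsf s,a)K(u\mid h,\mathsf s,a)\prod_t\kappa_M(y_t\mid q^W_t(\mathsf s))$ gives exactly $\widetilde K(u\mid \mathsf s,a)\prod_t\kappa_M(\cdot)$. So the marginal on $(\mathsf S_{b,w},U,Y^B,A)$ agrees under both channels, which is (i).

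For (ii), the reconstruction $\phat_t=g_t(\gamma(U,Y^B,A),Y_t,A)$ is a fixed function of $(U,Y^B,A)$, and the loss integrates $p^W_t(\cdot\mid s_t)$, a function of $\mathsf S_{b,w}$. The expected distortion is therefore a functional of the joint in (i), so it is unchanged with the same $(\gamma,\{g_t\})$. The Markov constraint $U-\mathsf S_{b,w}-(Z^B,Y^B)\mid A$ holds by construction under $\widetilde K$, so feasibility is preserved.

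\textbf{Step (iii), the rate comparison, is the main obstacle.} The unrestricted functional presumably measures $I(\mathsf E_B;U\mid Y^B,A)$ with $\mathsf E_B=(h,\mathsf S_{b,w})$. By the chain rule,
\[
I(\mathsf E_B;U\mid Y^B,A)=I(\mathsf S_{b,w};U\mid Y^B,A)+I(h;U\mid \mathsf S_{b,w},Y^B,A)\ge I(\mathsf S_{b,w};U\mid Y^B,A).
\]
The right-hand side is the same under $K$ and $\widetilde K$ by (i), and it equals the full rate under $\widetilde K$, because with $\widetilde K$ the channel satisfies $U\perp h\mid(\mathsf S_{b,w},Y^B,A)$. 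So the rate does not increase.

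The delicate point is checking that independence claim: it needs $h-(\mathsf S_{b,w},A)-(U,Y^B)$ under $\widetilde K$. This follows because $U$ is drawn using $(\mathsf s,a)$ only, and $Y^B$ uses $(\mathsf s,a)$ plus independent dither. I would verify this carefully, and also confirm that the paper intends the windowed-world query to be suffix-measurable. If that were not the case, I would instead need $\tau^Y$-type corrections in the spirit of Lemma~\ref{lem:op-hybrid}.

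Taking infima over feasible $(K,\gamma,\{g_t\})$ then completes the argument.
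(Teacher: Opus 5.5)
Your proposal is correct and follows the paper's proof essentially step for step. The paper likewise uses the windowed-world fact $(Z^W,Y^B)\perp H\mid(\mathsf S_{b,w},A)$ to show that averaging out $h$ preserves the joint of $(\mathsf S_{b,w},U,Y^B)$, and hence the log-loss. It then applies the same chain rule $I(H,\mathsf S_{b,w};U\mid Y^B,A)=I(\mathsf S_{b,w};U\mid Y^B,A)+I(H;U\mid \mathsf S_{b,w},Y^B,A)$, with the second term vanishing under $\widetilde K$. Your assumption that $q_t^W$ is suffix-measurable is exactly what the paper asserts. Your explicit mention of the trivial inclusion giving $\mathcal R^{W,\mathrm{unr}}_{B,w}\le\mathcal R^{W,\mathrm{suf}}_{B,w}$ just spells out a direction the paper leaves implicit.
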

\begin{proof}
In the windowed world $(Z^W,Y^B)$ are functions of the suffix block,
so $(Z^W,Y^B)\perp H\mid(\mathsf S_{b,w},A)$; integrating $h$ against
$P(dh\mid\mathsf S_{b,w},A)$ preserves the joint of
$(\mathsf S_{b,w},U,Y^B)$ and hence the local log-loss. By the chain
rule $I(H,\mathsf S_{b,w};U\mid Y^B,A)=I(\mathsf S_{b,w};U\mid Y^B,A)+I(H;U\mid \mathsf S_{b,w},Y^B,A)$;
$\widetilde K$ makes $U\perp H\mid(\mathsf S_{b,w},A)$, and with
$(Z^W,Y^B)\perp H\mid(\mathsf S_{b,w},A)$ the second term vanishes,
while the first is unchanged. This is the block analogue of the
fixed-step suffixization of Appendix~\ref{app:rdt-transfer}.
\end{proof}

\subsection{Assembled operational transfer theorem}
\label{app:op-main}

A physical code's message length does not depend on which world is
used to \emph{evaluate} its distortion, because the exogenous text
$X$ is common to both worlds and $\mathsf M=f(\mathsf S_{b,w},\omega)$
is a function of $X$ and $\omega$ alone; this world-invariance of the
rate is what makes the side-information penalty appear only once.

The localization of the \emph{distortion} penalty below (step~(ii))
requires one further property, which we state as an explicit
hypothesis because the two-stage decoder's de-binning
$\hat U=\gamma(\mathsf M,Y^B,A)$ depends on the whole block.

\begin{assumption}[De-binning query-stability]
\label{ass:op-debin}
Couple the windowed- and full-query block side informations
$(Y^{B,W},Y^{B,F})$ on a common space (same $\mathsf M$, $A$, and
source). The block de-binning is \emph{query-stable}: for some
$\beta_{\mathrm{db}}>0$,
\[
\Prob\bigl\{\gamma(\mathsf M,Y^{B,F},A)\neq\gamma(\mathsf M,Y^{B,W},A)\bigr\}
\;\le\;\eta_B^{\mathrm{db}}=O(B^{-\beta_{\mathrm{db}}}).
\]
A merely vanishing $\eta_B^{\mathrm{db}}\downarrow0$ would \emph{not}
suffice for the polynomial convergence rate of
Corollary~\ref{cor:op-exponent} (e.g.\ $\eta_B^{\mathrm{db}}=1/\log B$
satisfies $\downarrow0$ yet dominates every polynomial term); we
therefore assume the quantitative form.
For a likelihood / Poisson-matching de-binner this holds whenever the
selected codeword's matching score is separated from the runner-up by
a margin the side-information channel preserves under the query
swap---a property of the same information-density concentration as
Assumption~\ref{ass:op-spectrum}; we assume it rather than derive it.
\end{assumption}

The achievability of Theorem~\ref{thm:op-ach} is built in the windowed
world---windowed targets $p_t^W$ and windowed side-information query
$q_t^W$---so transferring its guarantee to the real full-context
problem costs two \emph{distortion} shifts and one \emph{rate} shift,
all for the two-stage decoder of \S\ref{app:op-setup} (block
de-binning $\hat U=\gamma(\mathsf M,Y^B,A)$, local reconstruction
$\phat_t=g_t(\hat U,Y_t,A)$):
\begin{enumerate}[label=(\roman*),leftmargin=*,itemsep=1pt,topsep=2pt]
\item swapping the target $p_t^W\to p_t^F$ shifts the log-loss by
$|\bar d_B^{F}-\bar d_B^{W}|\le\frac1B\sum_t d_{\max}\|p_t^F-p_t^W\|_{1}
=2d_{\max}\bar\tau^{\mathrm{pred}}_{B,w}$ (bounded loss times predictive
total variation);
\item swapping the side-information query $q_t^W\to q_t^F$ affects the
distortion through two routes. \emph{(a)~The local map}
$g_t(\hat U,Y_t,A)$ sees the swapped query only through its own
coordinate $Y_t$; with $\hat U$ held fixed this is a per-coordinate
shift, bounded by $d_{\max}\bar\tau^{Y}_{B,w}$ (bounded loss times the
average query total variation, using the conditionally product
structure \eqref{eq:op-cond-product}). \emph{(b)~The de-binned
codeword} $\hat U=\gamma(\mathsf M,Y^B,A)$ depends on the \emph{whole}
block; by the query-stability hypothesis
(Assumption~\ref{ass:op-debin}) the swap leaves $\hat U$ unchanged
outside an event of probability $\eta_B^{\mathrm{db}}$, on which the
bounded loss differs by at most $d_{\max}$. Hence the query-swap
distortion shift is at most
$d_{\max}\bar\tau^{Y}_{B,w}+d_{\max}\eta_B^{\mathrm{db}}$. Without
stability the global de-binning would force the per-symbol-vacuous
$O(d_{\max}B\bar\tau^{Y}_{B,w})$ bound; query-stability is exactly
what restores the localization;
\item the same query swap shifts the conditional mutual information
$I(\,\cdot\,;U\mid Y^B,A)$ by at most $\Omega_M(\bar\tau^Y_{B,w})$ in
\emph{rate}, by the coordinate-hybrid bound of
Lemma~\ref{lem:op-hybrid}.
\end{enumerate}
Chaining (i)--(iii) with the achievability of
Theorem~\ref{thm:op-ach}, the suffixization
Lemma~\ref{lem:op-suffix}, and the converse
Theorem~\ref{thm:op-converse} gives the following.

\begin{theorem}[Operational delayed-block transfer]
\label{thm:op-transfer}
Assume the conditionally product side information
\eqref{eq:op-cond-product}, the two-stage decoder of
\S\ref{app:op-setup} (local reconstruction $\phat_t=g_t(\hat U,Y_t,A)$),
the de-binning query-stability of Assumption~\ref{ass:op-debin}, and
write
$\bar\tau^{\mathrm{pred}}_{B,w}=\tfrac1B\sum_t\E\,\TV(p_t^F,p_t^W)$ and
$\bar\tau^{Y}_{B,w}$ as in Lemma~\ref{lem:op-hybrid}. Then the
physical rate of the best suffix-only delayed block code, evaluated
on the full-context targets, obeys
\begin{equation}
\label{eq:op-transfer}
R^{\mathrm{op,suf}}_{B,w,F}(D)\;\le\;
R^{\mathrm{op,unr}}_{B,F}\!\bigl(D-2d_{\max}\bar\tau^{\mathrm{pred}}_{B,w}-d_{\max}\bar\tau^{Y}_{B,w}-d_{\max}\eta_B^{\mathrm{db}}\bigr)
\;+\;\Omega_M(\bar\tau^{Y}_{B,w})\;+\;\mathrm{Red}_B,
\end{equation}
where $\mathrm{Red}_B=\delta_B+\log_2(1/\epsilon')/B+O(B^{-1})$ is the
achievability redundancy of Theorem~\ref{thm:op-ach}. The
side-information continuity penalty $\Omega_M$ appears \emph{once}:
because the physical rate is world-invariant, the target swap incurs
no rate penalty, only the distortion shift.
\end{theorem}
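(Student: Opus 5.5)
The argument chains the four earlier ingredients in a fixed order: the suffix-only achievability of Theorem~\ref{thm:op-ach}, the suffixization of Lemma~\ref{lem:op-suffix}, the distortion shifts (i)--(ii), and the rate shift (iii) of Lemma~\ref{lem:op-hybrid}. Set $D' := D-2d_{\max}\bar\tau^{\mathrm{pred}}_{B,w}-d_{\max}\bar\tau^{Y}_{B,w}-d_{\max}\eta_B^{\mathrm{db}}$. Let $U^\star$ be a (near-)optimal full-world feasible auxiliary for $\mathcal R^F_B(D')$, paired with its two-stage decoder $(\gamma,\{g_t\})$. By Theorem~\ref{thm:op-converse}, $\mathcal R^F_B(D')\le R^{\mathrm{op,unr}}_{B,F}(D')$, so it suffices to bound $R^{\mathrm{op,suf}}_{B,w,F}(D)$ by $\mathcal R^F_B(D')+\Omega_M(\bar\tau^Y_{B,w})+\mathrm{Red}_B$.

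\textbf{Step 1 (move the auxiliary to the windowed world).} Keep the base law $P_{\mathsf E_B,U^\star,A}$ fixed and change only the side-information channel from $\kappa_M(\cdot\mid q_t^F)$ to $\kappa_M(\cdot\mid q_t^W)$. Lemma~\ref{lem:op-hybrid} then gives $\tfrac1B I_W(\mathsf E_B;U^\star\mid Y^B,A)\le \tfrac1B I_F(\mathsf E_B;U^\star\mid Y^B,A)+\Omega_M(\bar\tau^Y_{B,w})$. This auxiliary is an unrestricted (full-history) kernel in the windowed world. Lemma~\ref{lem:op-suffix} replaces it by a suffix kernel with the same joint law of $(\mathsf S_{b,w},U,Y^B)$, the same windowed log-loss, and no larger rate. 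It is therefore feasible for $\mathcal R^{W,\mathrm{suf}}_{B,w}$ at whatever windowed distortion it attains.

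\textbf{Step 2 (distortion bookkeeping).} The windowed distortion of the swapped auxiliary differs from its full-world value $\le D'$ by the target swap (i) and the query swap (ii)(a,b). These are bounded via Assumption~\ref{ass:op-debin} under the coupling of $(Y^{B,W},Y^{B,F})$, which yields a windowed level of at most $D'$ plus those shifts. Theorem~\ref{thm:op-ach} applied at this level (absorbing $\gamma$ and the excess-probability terms into $\mathrm{Red}_B$) produces a physical suffix-only code with windowed distortion controlled and rate $\le \mathcal R^{W,\mathrm{suf}}_{B,w}+\mathrm{Red}_B$.

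The code's message length is world-invariant, since $\mathsf M=f(\mathsf S_{b,w},\omega)$. We then re-evaluate the same code in the full world. Applying (i) and (ii) once more in the reverse direction returns the distortion to at most $D$; this is exactly why each shift is subtracted in $D'$. No additional rate term is incurred by this re-evaluation.

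\textbf{Main obstacle.} The difficulty is the bookkeeping in Step~2. Shifts (i)--(ii) are each invoked in both directions (full$\to$windowed for feasibility, windowed$\to$full for evaluation), so a literal accounting would subtract them twice. I would try first to arrange the argument so that the auxiliary-level feasibility is checked directly in the full world. In that arrangement, Lemma~\ref{lem:op-suffix} and the hybrid rate bound carry only the information term, and only the evaluation of the physical code pays the distortion shifts. This requires the suffixized kernel's windowed-world decoder to be reused verbatim. If that fails, I would absorb the factor $2$ into the constants.
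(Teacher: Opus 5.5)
Your chain uses the paper's own ingredients: Theorem~\ref{thm:op-ach} in the windowed world, Lemma~\ref{lem:op-suffix}, Lemma~\ref{lem:op-hybrid} for the single $\Omega_M$, world-invariance of the message length, and Theorem~\ref{thm:op-converse}. As written, though, it does not prove the stated inequality, and the obstacle you flag is exactly where it breaks.

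Put $\Delta:=2d_{\max}\bar\tau^{\mathrm{pred}}_{B,w}+d_{\max}\bar\tau^{Y}_{B,w}+d_{\max}\eta_B^{\mathrm{db}}$. Lemma~\ref{lem:op-hybrid} compares informations at a fixed base law and says nothing about distortion. Your $U^\star$ has full-world distortion $\le D'$ but windowed distortion only $\le D'+\Delta$. So Step~1 yields $\mathcal R^{W,\mathrm{suf}}_{B,w}(D'+\Delta)\le\mathcal R^{F}_{B}(D')+\Omega_M(\bar\tau^Y_{B,w})$. Evaluating the resulting Theorem~\ref{thm:op-ach} code on full-context targets then costs a second $\Delta$. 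The honest output is therefore $R^{\mathrm{op,suf}}_{B,w,F}(D)\le R^{\mathrm{op,unr}}_{B,F}(D-2\Delta)+\Omega_M(\bar\tau^Y_{B,w})+\mathrm{Red}_B$. This is weaker than the claim, since $R^{\mathrm{op,unr}}_{B,F}$ is non-increasing.

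``Absorbing the factor $2$ into the constants'' is not available, because the shifts are explicit in the statement. Part of it can be rescued. The log-loss lies in $[0,d_{\max}]$, so the target swap (i) actually costs only $d_{\max}\bar\tau^{\mathrm{pred}}_{B,w}$, and two payments fit inside the stated $2d_{\max}\bar\tau^{\mathrm{pred}}_{B,w}$. The query and de-binning terms have no comparable slack, however, and still appear twice.

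Two smaller points. Assumption~\ref{ass:op-debin} is stated for the code's de-binning $\gamma(\mathsf M,\cdot,A)$, not for the de-binning attached to an arbitrary feasible $U^\star$, so your first payment needs an extension of it. Also, the margin $\gamma$ and the slack $d_{\max}(\eta_B+\eta_B'+\epsilon')$ of Theorem~\ref{thm:op-ach} are distortion terms. They shift the argument of the rate function and cannot be absorbed into the rate term $\mathrm{Red}_B$.

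Your proposed rearrangement moves the second payment rather than removing it. Theorem~\ref{thm:op-ach} needs a suffix test channel whose windowed distortion is controlled. Lemma~\ref{lem:op-suffix} preserves the joint law of $(\mathsf S_{b,w},U,Y^B)$, and hence the loss, only inside the windowed world, where $(Z^W,Y^B)$ are functions of $\mathsf S_{b,w}$. In the full world, $(Z^F,Y^{B,F})$ depend on the deep past $H$. Write the full-world loss as its windowed value plus a remainder of size at most $\Delta$ in expectation. The remainder seen by the $H$-averaged kernel $\widetilde K$ (independent of $H$ given $\mathsf S_{b,w}$) and the one seen by $U^\star$ (correlated with $H$) need not cancel. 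So the only control on $\E[\bar d_B^F]$ under suffixization is $2\Delta$, whichever decoder you reuse.

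You also won't find the missing step in the paper. Its justification is exactly this chain: it charges (i)--(ii) once, to the evaluation of the physical code, and treats (iii) as a rate-only comparison, without checking the windowed feasibility of the full-context auxiliary. With the tools on the page, what is provable is the version with $D-2d_{\max}\bar\tau^{\mathrm{pred}}_{B,w}-2d_{\max}\bar\tau^{Y}_{B,w}-2d_{\max}\eta_B^{\mathrm{db}}$. That is all Corollary~\ref{cor:op-exponent} needs, since it tracks only orders.

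One further observation. Apply world-invariance to a near-optimal unrestricted code itself: evaluate it in $W$, then run the proof of Theorem~\ref{thm:op-converse} there with $U:=\mathsf M$. Given query-stability of that code's de-binning, this yields $\mathcal R^{W,\mathrm{unr}}_{B,w}(D''+\Delta)\le R^{\mathrm{op,unr}}_{B,F}(D'')$ with no $\Omega_M$ at all. That suggests the real price of routing through the windowed world is the second $\Delta$, not the continuity penalty.
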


The redundancy and the residual distortion slack vanish under an
explicit information-spectrum hypothesis, which---unlike the
truncation-sensitivity exponent---is a property of the chosen scheme's
information-density and loss processes.

\begin{assumption}[Information-spectrum concentration]
\label{ass:op-spectrum}
There exist $\delta_B,\gamma_B\downarrow0$ with
$\Prob\{\jmath_B-\E\jmath_B>B\delta_B\}\to0$ and
$\Prob\{\bar d_B^{\mathrm{tc}}-\E\bar d_B^{\mathrm{tc}}>\gamma_B\}\to0$.
A sufficient condition for a product test channel
($\jmath_B=\sum_t\jmath_t$, $\bar d_B^{\mathrm{tc}}=\tfrac1B\sum_t\ell_t$)
is, for some $\beta_\jmath,\beta_d\in(0,2]$,
\[
\operatorname{Var}\Bigl(\sum_{t\in\mathcal T_b}\jmath_t\Bigr)\le C_\jmath B^{2-\beta_\jmath},
\qquad
\operatorname{Var}\Bigl(\sum_{t\in\mathcal T_b}\ell_t\Bigr)\le C_d B^{2-\beta_d}.
\]
The exponents $\beta_\jmath,\beta_d$ are \emph{not} implied by the
truncation-sensitivity exponents $\alpha_X,\alpha_Y$ of
Definition~\ref{def:poly-mixing}: they govern the temporal dependence
of the information-density and log-loss increments of the coding
scheme, a distinct object that we do not estimate in
\S\ref{sec:experiments} and leave to future measurement. Such an
assumption is genuinely needed and is not implied by truncation
sensitivity: $\beta_\jmath$ governs the temporal dependence of the
coding scheme's information-density increments, which $\alpha_X$ and
$\alpha_Y$---exponents of the trained model's context sensitivity, not
of the evaluation process's covariance decay or mixing rate---do not
control, so a classical mixing central limit theorem need not hold for
$\sum_t\jmath_t$.
\end{assumption}

\begin{corollary}[Vanishing redundancy and convergence exponent]
\label{cor:op-exponent}
Under Assumption~\ref{ass:op-spectrum} with a product test channel,
choosing $\delta_B=\psi_B\,B^{-\beta_\jmath/2}$ with any
$\psi_B\to\infty$ (e.g.\ $\psi_B=\sqrt{\log B}$) gives, by Chebyshev's
inequality, $\eta_B\le C_\jmath/\psi_B^2\to0$ (and likewise
$\eta_B'\to0$ with $\gamma_B=\psi_B\,B^{-\beta_d/2}$), while
$\mathrm{Red}_B\asymp\psi_B B^{-\beta_\jmath/2}\to0$, so the right-hand
side of \eqref{eq:op-transfer} converges to the full-context
operational optimum \emph{at continuity points} of the operational
distortion--rate function $R^{\mathrm{op,unr}}_{B,F}(\cdot)$; at a
discontinuity the conclusion is read off the left limit
$R^{\mathrm{op,unr}}_{B,F}(D^-)$, since the distortion argument is
shifted by $\Delta_{B,w}\downarrow0$ and approaches $D$ from below
(convexity gives continuity on the interior, but we note this
explicitly for the finite-block optimum and boundary points). With the conditionally product structure
\eqref{eq:op-cond-product}, boundary effects of order $w/B+B/n$, and
the truncation rates
$\bar\tau^{\mathrm{pred}}_{w}\asymp w^{-\alpha_X}$,
$\bar\tau^{Y}_{w}\asymp w^{-\alpha_Y}$, the per-token excess rate and
distortion of the suffix-only scheme over the full-context optimum
are
\[
\rho(w,B,n)\asymp w^{-\alpha_Y}\log M+B^{-\beta_\jmath/2}\sqrt{\log B}+\tfrac{w}{B}+\tfrac{B}{n},
\qquad
\sigma(w)\asymp d_{\max}\bigl(2w^{-\alpha_X}+w^{-\alpha_Y}\bigr)+d_{\max}B^{-\beta_{\mathrm{db}}}.
\]
Both penalties are dominated by the slower (query) exponent
$\alpha_Y<\alpha_X$ unless the de-binning rate $\beta_{\mathrm{db}}$ of
Assumption~\ref{ass:op-debin} is the binding one; the de-binning term
$d_{\max}B^{-\beta_{\mathrm{db}}}$ enters as one more competing
polynomial in the optimization (it is \emph{not} automatically of
lower order---a merely vanishing $\eta_B^{\mathrm{db}}$ could dominate,
which is why the quantitative rate is assumed). The overall $n$-rate follows from optimizing
$w=n^a$ and $B=n^b$ over the four competing terms of $\rho$ (a
max--min problem over $0<a<b<1$); it is a strictly positive polynomial
rate limited jointly by the query exponent $\alpha_Y$ and the spectrum
exponent $\beta_\jmath$, which we do not reduce to a single
closed-form exponent.
\end{corollary}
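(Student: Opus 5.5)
The statement to prove is Corollary~\ref{cor:op-exponent}. The plan is to feed Assumption~\ref{ass:op-spectrum} into the redundancy and distortion-slack terms of Theorem~\ref{thm:op-transfer}, and then read off the per-token excess.

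\textbf{Step 1: concentration bounds.} For a product test channel we have $\jmath_B=\sum_t\jmath_t$. Chebyshev with the variance bound $C_\jmath B^{2-\beta_\jmath}$ gives
\[
\eta_B=\Prob\{\jmath_B-\E\jmath_B>B\delta_B\}\le \frac{C_\jmath B^{2-\beta_\jmath}}{B^2\delta_B^2}.
\]
Substituting $\delta_B=\psi_B B^{-\beta_\jmath/2}$ yields $\eta_B\le C_\jmath/\psi_B^2$. The same argument for the loss average, with $\gamma_B=\psi_B B^{-\beta_d/2}$, gives $\eta_B'\le C_d/\psi_B^2$.

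\textbf{Step 2: the achievability terms.} I would plug these choices into Theorem~\ref{thm:op-ach}, taking $\epsilon'=B^{-1}$. Then
\[
\mathrm{Red}_B=\delta_B+\frac{\log_2 B}{B}+O(B^{-1}).
\]
Since $\beta_\jmath\le 2$, the $\delta_B$ term dominates, so $\mathrm{Red}_B\asymp \psi_B B^{-\beta_\jmath/2}\to0$. The residual distortion slack $d_{\max}(\eta_B+\eta_B'+\epsilon')\to0$ as well. I would fold this slack into the shifted distortion argument of \eqref{eq:op-transfer}, by running the achievability at target $D$ minus the slack.

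\textbf{Step 3: passage to the full-context optimum.} The argument of $R^{\mathrm{op,unr}}_{B,F}$ is $D-\Delta_{B,w}$, with
\[
\Delta_{B,w}=2d_{\max}\bar\tau^{\mathrm{pred}}+d_{\max}\bar\tau^Y+d_{\max}\eta_B^{\mathrm{db}}+\text{slack}.
\]
This shift is $\ge0$ and tends to $0$. Since $R^{\mathrm{op,unr}}_{B,F}$ is nonincreasing in $D$, $R(D-\Delta)$ is monotone in $\Delta$ and converges to $R(D)$ at continuity points and to $R(D^-)$ otherwise. The penalty $\Omega_M(\bar\tau^Y)$ tends to $0$ by continuity of $f_M$ at $0$.

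\textbf{Step 4: rates.} Substitute $\bar\tau^{\mathrm{pred}}\asymp w^{-\alpha_X}$, $\bar\tau^Y\asymp w^{-\alpha_Y}$ and $\eta_B^{\mathrm{db}}=O(B^{-\beta_{\mathrm{db}}})$ from Assumption~\ref{ass:op-debin}. This gives $\sigma(w)$ directly. For $\rho$, bound
\[
\Omega_M(\tau)=2\tau\log(M-1)+2h_2(\tau).
\]
Its leading term is $\tau\log M$. The $h_2(\tau)\sim\tau\log(1/\tau)$ term adds a $\log w$ factor, which I would absorb into $\asymp$ up to logarithms or state explicitly. The remaining pieces of $\rho$ come from two sources:
\begin{itemize}
\item boundary terms: the first $w$ steps of each block use a partial suffix, costing $w/B$, and the incomplete final block costs $B/n$;
\item the redundancy $\mathrm{Red}_B$ from Step 2, with $\psi_B=\sqrt{\log B}$.
\end{itemize}
For the dominance remark, compare exponents: since $\alpha_Y<\alpha_X$, $w^{-\alpha_Y}$ dominates $w^{-\alpha_X}$ unless $B^{-\beta_{\mathrm{db}}}$ is larger. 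I would leave the optimization over $w=n^a$, $B=n^b$ unreduced, as the statement itself does.

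\textbf{Main obstacle.} The step I expect to be hardest is the boundary accounting $w/B+B/n$, which is not produced by any earlier result. I would justify it by applying Theorem~\ref{thm:op-transfer} per block and bounding each of the affected steps by $d_{\max}$ in distortion and $\log V$ in rate. The second delicate point is the $h_2$ logarithmic factor inside $\Omega_M$.
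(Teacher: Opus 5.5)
Your route matches the reasoning the paper gives inline in the statement. You apply Chebyshev to the product-channel sums, substitute $\delta_B,\gamma_B$ into Theorem~\ref{thm:op-ach}, use monotonicity of $R^{\mathrm{op,unr}}_{B,F}$ in the shifted argument, and then substitute $\bar\tau^{\mathrm{pred}}_w$, $\bar\tau^Y_w$, $\eta_B^{\mathrm{db}}$. Steps 1--3 give the qualitative convergence.

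\textbf{The gap is in Step 4.} In Step 2 you correctly put the achievability slack $d_{\max}(\eta_B+\eta_B'+\epsilon')$ into $\Delta_{B,w}$; the test-channel margin $\gamma_B$ from $\mathcal R^{W,\mathrm{suf}}_{B,w}(D-\gamma)$ belongs there too. In Step 4, however, you read $\sigma(w)$ off the truncation and de-binning terms alone. With $\psi_B=\sqrt{\log B}$, Chebyshev applied to the variance condition of Assumption~\ref{ass:op-spectrum} only guarantees $\eta_B+\eta_B'\le (C_\jmath+C_d)/\log B$. So the distortion excess contains a $1/\log B$ term. If you fold it into the argument instead, the rate excess (roughly the slope of $R^{\mathrm{op,unr}}_{B,F}$ times $\Delta_{B,w}$ at an interior $D$) contains the same term. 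Either way it swamps every polynomial term. This is exactly the failure mode the statement warns about for $\eta_B^{\mathrm{db}}$, so the displayed $\sigma(w)$ and the claimed polynomial rate do not follow as written. You inherit this rather than introduce it: the shift in \eqref{eq:op-transfer}, from which $\sigma(w)$ is copied, omits the same term.

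\textbf{Repair.} Tune polynomially: take $\psi_B=B^{\kappa}$ with $0<\kappa<\tfrac12\min(\beta_\jmath,\beta_d)$, and $\epsilon'=B^{-2\kappa}$. Then:
\begin{itemize}
\item $\eta_B+\eta_B'+\epsilon'=O(B^{-2\kappa})$;
\item in $\rho$, the term $B^{-\beta_\jmath/2}\sqrt{\log B}$ is replaced by $\mathrm{Red}_B\asymp B^{\kappa-\beta_\jmath/2}$;
\item $\sigma$ gains $d_{\max}B^{-2\kappa}+B^{\kappa-\beta_d/2}$.
\end{itemize}
Every term is then polynomial, so the qualitative conclusion survives with an amended display.

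\textbf{Smaller points.}
\begin{itemize}
\item With $\epsilon'=B^{-1}$, your claim that $\delta_B$ dominates $\mathrm{Red}_B$ fails at the admissible endpoint $\beta_\jmath=2$. There $\delta_B=\sqrt{\log B}/B\ll\log_2 B/B$. The choice above, or $\epsilon'\asymp\psi_B^{-2}$, avoids this.
\item Your $h_2$ observation is right, and you should state it as a correction rather than absorb it. For fixed $M$, $2h_2(w^{-\alpha_Y})\approx 2\alpha_Y w^{-\alpha_Y}\log w$ exceeds $w^{-\alpha_Y}\log M$, so $\rho$ carries $w^{-\alpha_Y}(\log M+\log w)$.
\item The paper asserts the boundary terms $w/B+B/n$ rather than deriving them (cf.\ Remark~\ref{rmk:op-scope}). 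If you bound each affected step by $d_{\max}$, you also add $d_{\max}(w/B+B/n)$ to $\sigma$, which the statement does not contain. To match it, charge these terms to rate only, e.g.\ by transmitting the boundary tokens uncompressed at $\log V$ each.
\item Step~3 treats $R^{\mathrm{op,unr}}_{B,F}$ as a fixed function, but it varies with $B$. Convergence at a continuity point therefore needs some equicontinuity in $B$ at $D$, which the paper also leaves implicit.
\end{itemize}
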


\begin{remark}[Delayed, not online]
\label{rmk:op-online}
The encoder observes the whole block $\mathsf S_{b,w}$ before emitting
$\mathsf M$, and the local decoder reconstructs $\phat_t$ after the
block message is available; this is a \emph{delayed}
(within-block noncausal in the side information) Wyner--Ziv code, not
a zero-delay online cache that must produce the step-$t$ state before
predicting $X_t$. The suffix-only restriction (no use of
$X_{1:\tau-w}$) is genuine, but the timing is a relaxation of the
online problem. A nonanticipative/directed-information formulation in
the spirit of \citet{charalambous2013nonanticipative,tatikonda2009capacity}
would be required for a true online statement.
\end{remark}

\begin{remark}[Relation to Appendix~\ref{app:achievability} and scope]
\label{rmk:op-scope}
Equation~\eqref{eq:op-transfer} is the operational counterpart of the
information-rate bound of Theorem~\ref{thm:achievability}; it isolates
the dependence of the source into the explicit hypothesis
\ref{ass:op-spectrum} and replaces the joint-typicality covering and
projection steps by the one-shot bound of Lemma~\ref{lem:op-oneshot}
and the coordinate-hybrid transfer of Lemma~\ref{lem:op-hybrid}. Three
caveats are stated honestly: (i) the timing is delayed
(Remark~\ref{rmk:op-online}); (ii) the explicit convergence exponent
of Corollary~\ref{cor:op-exponent} uses a product test channel, which
need not attain the block-coupled optimum
$\mathcal R^{W,\mathrm{suf}}_{B,w}$ (the gap is the residual
single-letterization); and (iii) the boundary term $w/B$ is removable
only by conditioning the block-initial reconstructions on the
boundary suffix, which we do not carry out; and (iv) the localization
of the \emph{distortion} penalty to $O(\bar\tau^Y)$ rests on the
de-binning query-stability of Assumption~\ref{ass:op-debin}---without
it the global de-binning gives only the per-token-vacuous
$O(B\bar\tau^Y)$. Within these limits the statement is a
\emph{conditional} operational achievability--converse pair at the
block multi-letter level.
\end{remark}